\setlist[itemize]{noitemsep}
\setlist{nolistsep}
\newtheorem{theorem}{Theorem}[section]
\newtheorem{lemma}[theorem]{Lemma}
\newtheorem{corollary}[theorem]{Corollary}
\newtheorem{claim}{Claim}
\newtheorem{observation}{Observation}
\newenvironment{claimproof}{\begin{proof}\renewcommand{\qedsymbol}{\claimqed}}{\end{proof}\renewcommand{\qedsymbol}{\plainqed}}
\let\plainqed\qedsymbol
\newcommand{\yes}{\textsc{Yes}}
\newcommand{\no}{\textsc{No}}
\tikzset{
  circ/.style = {circle,draw,fill,inner sep=1.2pt},
  cir/.style = {circle,draw,fill,inner sep=1.2pt},
  circR/.style = {circle,draw=red,fill=red,text=red,inner sep=1.3pt},
  circb/.style = {circle,draw=blue,fill=blue,text=blue,inner sep=1.1pt},
  circr/.style = {circle,draw=red,fill=red,inner sep=1pt},
  scirc/.style = {circle,draw,fill,inner sep=.8pt},
  invisible/.style = {draw=none,inner sep=0pt,font=\tiny},
  nonedge/.style={decorate,decoration={snake,amplitude=.3mm,segment length=1mm},draw}
}
\newcommand{\longPaw}[5]{
\node[circ,label=below:{\tiny #4}] (a) at (#1+0,#2) {};
\node[circ,label=below:{\tiny #5}] (b) at (#1+1,#2) {};
\node[circ] (c) at (#1+.5,#2+.3) {};
\node[circ,label=right:{\tiny #3}] at (#1+.5,#2+.6) {};
\node[circ] (d) at (#1+.5,#2+.9) {};                                                                                                                                                                                                                                                                                                                                                                                                                                                            

\draw (a) -- (b)
(a) -- (c) 
(b) -- (c) 
(c) -- (d);
}
\newcommand{\shortPaw}[6]{
\node[circ,label=below:{\tiny #3}] (a) at (#1,#2) {};
\node[circ,label=below:{\tiny #4}] (b) at (#1+1,#2) {};
\node[circ,label=right:{\tiny #5}] (c) at (#1+.5,#2+.4) {};
\node[circ,label=right:{\tiny #6}] (d) at (#1+.5,#2+.8) {};

\draw (a) -- (b)
(a) -- (c)
(b) -- (c)
(c) -- (d);
}
\begin{document}

\title{The complexity of blocking (semi)total dominating sets with edge contractions}
\date{}
\author[1]{Esther Galby}
\affil[1]{CISPA Helmholtz Center for Information Security, Saarbr\"ucken, Germany}

\maketitle

\begin{abstract}
We consider the problem of reducing the (semi)total domination number of graph by one by contracting edges. It is known  that this can always be done with at most three edge contractions and that deciding whether one edge contraction suffices is an $\mathsf{NP}$-hard problem. We show that for every fixed $k \in \{2,3\}$, deciding whether exactly $k$ edge contractions are necessary is $\mathsf{NP}$-hard and further provide for $k=2$ complete complexity dichotomies on monogenic graph classes.
\end{abstract}

%------------------------------------------------------------------------------------------------------------------------------------------------------------------------------------

\section{Introduction}

A \emph{blocker problem} asks whether given a graph $G$, a graph a parameter $\pi$, a set $\mathcal{O}$ of one or more graph operations and an integer $k \geq 1$, $G$ can be transformed into a graph $H$ such that $\pi(H) \leq \pi(G) - d$ for some \emph{threshold} $d \geq 1$, by using at most $k$ operations from $\mathcal{O}$. These problems are so called because the set of vertices or edges involved can be seen as "blocking" the parameter $\pi$. Identifying such sets may provide important information on the structure of the input graph: for instance, if $\pi = \alpha$, $k = d = 1$ and $\mathcal{O} = \{\text{vertex deletion}\}$, then the problem is equivalent to testing whether the input graph contains a vertex which belongs to every maximum independent set \cite{paulusma2017blocking}. While the set $\mathcal{O}$ generally consists of a single operation (namely vertex deletion, edge deletion, edge addition or edge contraction), a variety of parameters have been considered in the literature including the chromatic number \cite{bazgan2015blockers,DPPR15,diner2018contraction,PPR16,paulusma2018critical}, the stability number \cite{BTT11,paulusma2017blocking}, the clique number \cite{nasirian2019exact,PBP}, the matching number \cite{RBPDCZ10,ZENKLUSEN2}, domination-like parameters \cite{isaac2019,domcontract,semitotcon,GALBY202118,HX10,pajouh2015minimum} and others \cite{CHEN,keller2018blockers,MFCS2020,rautenbach,ZENKLUSEN1}. In this paper, we focus on two well-known variants of the domination number, namely the \emph{total domination number} and the \emph{semitotal domination number}, let $\mathcal{O}$ consists of an edge contraction and set the threshold $d$ to one. 

Formally, let $G$ be a graph. The \emph{contraction} of an edge $xy \in E(G)$ removes vertices $x$ and $y$ from $G$ and replaces them with a new vertex that is made adjacent to precisely those vertices which were adjacent to $x$ or $y$ (without introducing self-loops nor multiple edges). For a parameter $\pi$, we denote by $ct_{\pi}(G)$ the minimum number of edge contractions required to modify $G$ into a graph $H$ such that $\pi (H) = \pi (G) -1$. A set $D \subseteq V(G)$ is a \emph{total dominating set} of $G$ if every vertex in $V(G)$ has a neighbour in $D$, and the \emph{total domination number $\gamma_t(G)$} of $G$ is the size of a minimum total dominating set of $G$. A set $D \subseteq V(G)$ is a \emph{semitotal dominating set} of $G$ if every vertex in $V(G) \setminus D$ has a neighbour in $D$ and every vertex in $D$ is at distance at most two from another vertex in $D$. The \emph{semitotal domination number $\gamma_{t2}(G)$} of $G$ is the size of a minimum semitotal dominating set of $G$. We are interested in the following problem for $\pi \in \{\gamma_t,\gamma_{t2}\}$.

\begin{center}
\fbox{
\begin{minipage}{5.5in}
\textsc{Contraction Number($\pi,k$)}
\begin{description}
\item[Instance:] A graph $G$.
\item[Question:] Is $ct_{\pi}(G) = k$?
\end{description}
\end{minipage}}
\end{center}

It is known \cite{semitotcon,HX10} that, contrary to other parameters such as the chromatic number, the stability number or the clique number\footnote{To see that the minimum number of edge contractions required to decrease the value of the parameter may be arbitrarily large for (1) the chromatic number, consider e.g. stars, (2) the stability number, consider e.g., graphs obtained by identifying one vertex in two otherwise disjoint cliques, (3) the clique number, consider e.g. paths.}, $ct_{\pi}$ is bounded for $\pi \in \{\gamma_t,\gamma_{t2}\}$, namely by three in both cases. It follows that for $\pi \in \{\gamma_t,\gamma_{t2}\}$ and $k >3$, every instance of the above problem is always negative. Similarly, for $\pi \in \{\gamma_t,\gamma_{t2}\}$ and $k \geq3$, deciding $ct_{\pi}(G) \leq k$ (the so-called \textsc{$k$-Edge Contraction($\pi$)} problem \cite{domcontract}) is trivial. In contrast, it was shown \cite{semitotcon,GALBY202118} that for $\pi \in \{\gamma_t,\gamma_{t2}\}$, \textsc{$1$-Edge Contraction($\pi$)} (or, equivalently, \textsc{Contraction Number($\pi,1$)}) is $\mathsf{NP}$-hard; the complexity status for $k =2,3$ remained open. In this paper, we settle these questions and show that for $\pi \in \{\gamma_t,\gamma_{t2}\}$ and $k =2,3$, \textsc{Contraction Number($\pi,k$)} is $\mathsf{NP}$-hard. Thus, combined with \cite{semitotcon,GALBY202118}, the following dichotomy holds.

\begin{theorem}
For $\pi \in \{\gamma_t,\gamma_{t2}\}$, \textsc{Contraction Number($\pi,k$)} is $\mathsf{NP}$-hard if and only if $k\leq 3$.
\end{theorem}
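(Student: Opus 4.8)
The plan is to split the biconditional and treat the two directions asymmetrically, since one is essentially free while the other carries all of the work. For the ``only if'' direction I would argue the contrapositive: if $k > 3$, then the language $\{G : ct_{\pi}(G) = k\}$ is empty, because the boundedness result quoted above guarantees $ct_{\pi}(G) \leq 3$ for every input $G$. An empty language admits no many-one reduction from a problem possessing a yes-instance (such as \textsc{SAT}), so \textsc{Contraction Number($\pi,k$)} fails to be $\mathsf{NP}$-hard for every $k > 3$, unconditionally; this disposes of one direction in a single line. For the ``if'' direction, note first that $ct_{\pi}(G) \geq 1$ whenever $\pi$ can be decreased at all (contracting no edge leaves $\pi$ unchanged), so the cases to treat are exactly $k \in \{1,2,3\}$. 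The case $k=1$ is immediate, since \textsc{Contraction Number($\pi,1$)} coincides with \contractpi, which is $\mathsf{NP}$-hard by \cite{semitotcon,GALBY202118}. The substance of the theorem is thus the $\mathsf{NP}$-hardness of \textsc{Contraction Number($\pi,2$)} and \textsc{Contraction Number($\pi,3$)} for both $\pi \in \{\gamma_t,\gamma_{t2}\}$, which I would isolate as four lemmas and then assemble with the two observations above.

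To prove each lemma I would give a polynomial-time reduction from a suitable $\mathsf{NP}$-hard problem -- a restricted satisfiability or a domination/covering problem is the natural candidate, ideally the same source underlying the known $k=1$ hardness so that its gadgets can be recycled. The construction would attach to an instance-encoding gadget a fixed ``calibration'' gadget whose purpose is to pin the contraction number to the target value in the yes-case. For $k=2$ the goal is a graph $G'$ with $ct_{\pi}(G') = 2$ on yes-instances and $ct_{\pi}(G')$ forced away from $2$ (to $1$ or to $3$ by the calibration gadget) on no-instances; for $k=3$, recalling that ``$ct_{\pi} = 3$'' is merely the negation of ``$ct_{\pi} \leq 2$'', I would arrange that a solution of the base instance makes the graph maximally rigid so that no two contractions suffice, whereas its absence reveals two edges whose sequential contraction already decreases $\pi$. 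Each lemma then factors into an upper-bound claim (the intended number of contractions suffices, built directly from a base solution) and a lower-bound claim (any smaller number fails).

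The main obstacle is the lower-bound side of these reductions. Edge contraction interacts with (semi)total domination globally: a single contraction can rearrange domination relationships far from the contracted edge, so excluding \emph{every} sequence of $k-1$ contractions demands a tight understanding of exactly when one or two contractions can decrease $\pi$. I would therefore first develop characterisations -- or at least sharp necessary conditions -- for $ct_{\pi}(G) \leq 1$ and $ct_{\pi}(G) \leq 2$, phrased in terms of how minimum (semi)total dominating sets behave under vertex identification, and then engineer the gadgets so that these conditions provably fail unless the base instance is solved. I expect the semitotal case to be the more delicate: the distance-two slack in the definition of $\gamma_{t2}$ gives contractions additional ways to create savings, so those gadgets will need extra padding to neutralise spurious reductions of the parameter, and this is where I anticipate most of the technical care to be concentrated.
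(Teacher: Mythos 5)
Your proposal follows essentially the same route as the paper: the $k>3$ cases are dismissed exactly as you say (the language is empty since $ct_{\pi}(G)\leq 3$ always, so no many-one reduction from a nonempty problem exists), $k=1$ is cited from \cite{semitotcon,GALBY202118}, and the four hardness lemmas for $k\in\{2,3\}$ are proved by reductions from restricted satisfiability variants whose gadgets (for $k=2$) literally recycle the known $k=1$ constructions, with the yes-case pinned to the target value and the no-case forced to $ct_{\pi}=1$, just as you describe. The only simplification over your plan is that the characterisations of $ct_{\pi}\leq 1$ and $ct_{\pi}\leq 2$ you propose to develop already exist in the literature (\Cref{theorem:1totcontracdom} and \Cref{thm:friendlytriple}, due to Huang and Xu and to Galby et al.), and the paper's lower-bound arguments invoke them directly; your anticipation that the semitotal case demands the most care is also borne out.
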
 

Let us note that since for $\pi \in \{\gamma_t,\gamma_{t2}\}$, a graph $G$ is a \no-instance for \textsc{Contraction Number($\pi,\allowbreak 3$)} if and only if $G$ is a \yes-instance for \textsc{$2$-Edge Contraction($\pi$)}, the above implies the following.  

\begin{theorem}
For $\pi \in \{\gamma_t,\gamma_{t2}\}$, \textsc{$k$-Edge Contraction($\pi$)} is $\mathsf{(co)NP}$-hard if and only if $k \leq 2$.
\end{theorem}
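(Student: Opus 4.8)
The plan is to derive this statement as a corollary of the preceding theorem, using only the boundedness fact that $ct_\pi(G) \in \{1,2,3\}$ for every graph $G$ and every $\pi \in \{\gamma_t,\gamma_{t2}\}$: the lower bound holds because decreasing $\pi$ requires at least one contraction, and the upper bound is the stated boundedness result. I would understand all hardness with respect to polynomial-time many-one reductions and split the argument according to the value of $k$.

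For $k = 1$, the inequality $ct_\pi(G) \leq 1$ is equivalent to $ct_\pi(G) = 1$ since $ct_\pi(G) \geq 1$; hence \contractpi{} coincides with \textsc{Contraction Number($\pi,1$)}, which is $\mathsf{NP}$-hard by \cite{semitotcon,GALBY202118}. For $k = 2$, I would exploit the upper bound: because $ct_\pi(G) \leq 3$, the condition $ct_\pi(G) \leq 2$ is equivalent to $ct_\pi(G) \neq 3$, so the \yes-instances of \textsc{$2$-Edge Contraction($\pi$)} are precisely the \no-instances of \textsc{Contraction Number($\pi,3$)}, as observed just above. The two problems are thus complementary, and since the complement of an $\mathsf{NP}$-hard language is $\mathsf{coNP}$-hard under many-one reductions, the $\mathsf{NP}$-hardness of \textsc{Contraction Number($\pi,3$)} established in the preceding theorem transfers to give that \textsc{$2$-Edge Contraction($\pi$)} is $\mathsf{coNP}$-hard; this is the origin of the "$\mathsf{co}$" in the statement.

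For the converse direction it suffices to recall that for every $k \geq 3$ the bound $ct_\pi(G) \leq 3 \leq k$ makes \kcontracpi{} trivial, every input being a \yes-instance, so the problem is polynomial-time solvable and hence neither $\mathsf{NP}$-hard nor $\mathsf{coNP}$-hard unless $\mathsf{P} = \mathsf{NP}$. Combining the three regimes yields the claimed equivalence. Since the genuine combinatorial difficulty has already been discharged in the preceding theorem, I expect no real obstacle here; the only points needing care are the correct use of the bound $ct_\pi \leq 3$ to identify \textsc{$2$-Edge Contraction($\pi$)} as the complement of \textsc{Contraction Number($\pi,3$)}, and the explicit use of many-one reductions so that the complementation step goes through cleanly.
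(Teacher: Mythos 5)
Your proposal is correct and takes essentially the same route as the paper: the paper derives this theorem from the $\mathsf{NP}$-hardness of \textsc{Contraction Number($\pi,k$)} for $k \leq 3$ in exactly the way you describe, using the bound $ct_{\pi}(G) \leq 3$ to identify the \yes-instances of \textsc{$2$-Edge Contraction($\pi$)} with the \no-instances of \textsc{Contraction Number($\pi,3$)}, with \textsc{$1$-Edge Contraction($\pi$)} coinciding with \textsc{Contraction Number($\pi,1$)} and the cases $k \geq 3$ being trivially solvable. Your explicit treatment of many-one reductions and the $\mathsf{P} \neq \mathsf{NP}$ caveat merely spells out details the paper leaves implicit.
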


It is, however, not difficult to find cases which are easy to solve: for instance, in any graph class closed under edge contraction and where $\pi$ can be efficiently computed, the {\sc Contraction Number($\pi$,$k$)} problem can efficiently be solved through a simple brute force approach. Based on this observation, a natural question is whether there are other efficiently solvable instances for which computing $\pi$ is in fact hard. Motivated by such questions, we consider these problems on monogenic graph classes (that is, graph classes excluding one graph as an induced subgraph) for which the complexity status of {\sc Total Domination} and {\sc Semitotal Domination} \cite{semitot} is known: both problems are polynomial-time solvable on $H$-free graph if $H \subseteq_i P_4+tK_1$ for some $t \geq 0$, and $\mathsf{NP}$-hard otherwise. This investigation led us to establish complete complexity dichotomies for $\pi \in \{\gamma_t,\gamma_{t2}\}$ and $k =2$ on monogenic graph classes, which read as follows.  

\begin{theorem}
\label{thm:dictd2}
{\sc Contraction Number($\gamma_t$,2)} is polynomial-time solvable on $H$-free graphs if $H \subseteq_i P_5+tK_1$ for some $t \geq 0$ or $H \subseteq_i P_4+tP_3$ for some $t \geq 0$, and $\mathsf{(co)NP}$-hard otherwise.
\end{theorem}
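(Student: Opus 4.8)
The statement is a dichotomy, so the plan is to prove separately a tractability result for the two families $H \subseteq_i P_5 + tK_1$ and $H \subseteq_i P_4 + tP_3$, and $\mathsf{(co)NP}$-hardness for every other $H$. For the hardness side I would first exploit the standard monotonicity of forbidden induced subgraphs: if $H' \subseteq_i H$ then every $H'$-free graph is $H$-free, so any hardness reduction whose output is $H'$-free also proves hardness on $H$-free graphs. It therefore suffices to handle a short list of \emph{minimal} graphs lying outside both tractable families.

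First I would determine this list. A direct case analysis on linear forests shows that a linear forest embeds into $P_5 + tK_1$ or into $P_4 + tP_3$ (for a suitable $t$) unless it contains one of $P_6$, $P_5 + P_2$ or $2P_4$ as an induced subgraph: being $P_6$-free bounds each component by five vertices, being $2P_4$-free allows at most one component on four or more vertices, and being $(P_5+P_2)$-free forbids a $P_5$-component alongside any further edge. A graph that is \emph{not} a linear forest contains an induced $K_{1,3}$ or an induced cycle. Consequently the hardness part reduces to proving {\sc Contraction Number($\gamma_t$,2)} $\mathsf{(co)NP}$-hard on (i) $\{P_6, P_5+P_2, 2P_4\}$-free graphs, (ii) $K_{1,3}$-free graphs, and (iii) graphs of arbitrarily large girth. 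For (iii) I would make the reduction girth-robust, producing instances of girth exceeding $|V(H)|$; such instances contain no short induced cycle and are hence $H$-free whenever $H$ contains a cycle. Note that one cannot simply reduce to forests here, since $\gamma_t$, and therefore $ct_{\gamma_t}$, are computable in polynomial time on forests. The cleanest route is to start from the general construction behind Theorem 1 and tune it so that its output simultaneously avoids $P_6$, $P_5+P_2$ and $2P_4$, is claw-free, or has prescribed girth, as required for each case.

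For the algorithmic side the crucial observation is that $ct_{\gamma_t}(G) \in \{1,2,3\}$, so $ct_{\gamma_t}(G) = 2$ if and only if $ct_{\gamma_t}(G) \neq 1$ and $ct_{\gamma_t}(G) \le 2$; equivalently, no single contraction decreases $\gamma_t$ while some pair of contractions does. The plan is to decide each of these two conditions in polynomial time on the tractable classes by means of structural characterizations of ``one contraction suffices'' and ``three contractions are necessary'', phrased through local adjacency and domination patterns rather than through the value of $\gamma_t(G)$. I would then argue that in a $(P_5+tK_1)$-free or $(P_4+tP_3)$-free graph the excluded induced subgraph forces enough structure --- only a bounded number of mutually distant components, or a dominating substructure of controlled shape --- that every such pattern is witnessed by a certificate of bounded size that can be located by enumeration, with the unbounded parameter $t$ handled by a separate, uniform treatment of the isolated (respectively $P_3$) components it permits.

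The main obstacle, and the real content of the tractability proof, is precisely this decoupling of the decision ``$ct_{\gamma_t}(G)=2$'' from the computation of $\gamma_t(G)$: {\sc Total Domination} is $\mathsf{NP}$-hard on all of these classes (its tractability boundary being the strictly smaller family $P_4 + tK_1$), so the characterizations must be certified without ever evaluating the total domination number. I expect the heart of the argument to be a case analysis showing that, in an $H$-free graph, a reducing single contraction (respectively a reducing pair) exists if and only if a constant-size local configuration is present, together with a proof that the presence of such a configuration can be tested greedily even when $\gamma_t$ itself is out of reach.
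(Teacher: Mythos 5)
Your skeleton matches the paper's at the top level: the same list of minimal obstructions ($P_6$, $P_5+P_2$, $2P_4$, $K_{1,3}$, cycles) and the same decomposition $ct_{\gamma_t}(G)=2 \Leftrightarrow ct_{\gamma_t}(G)\neq 1 \wedge ct_{\gamma_t}(G)\leq 2$, where the first conjunct can be delegated to the known polynomial algorithm of \Cref{thm:dictd1}. But on the tractability side your plan has a genuine gap, and it sits exactly where you place "the real content". You insist that $ct_{\gamma_t}(G)\leq 2$ must be certified by a \emph{constant-size local configuration}, "without ever evaluating the total domination number". By \Cref{theorem:1totcontracdom}, $ct_{\gamma_t}(G)\leq 2$ is equivalent to the existence of a \emph{minimum} TD set containing a $P_3$, or a TD set of size $\gamma_t(G)+1$ containing a $P_4$, $K_{1,3}$ or $2P_3$; both conditions quantify over (near-)minimum TD sets, and minimality is a global property that no bounded-radius pattern can witness, so the equivalence you "expect" a case analysis to deliver is not available. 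The paper resolves the tension you identify in the opposite way: it proves (\Cref{lem:ectdp6kp3}) that every $P_6$-free graph with $\gamma_t\geq 3$ satisfies $ct_{\gamma_t}\leq 2$ unconditionally, and, by induction on $k$, that any $(P_6+kP_3)$-free \textsc{No}-instance of {\sc 2-Edge Contraction($\gamma_t$)} containing an induced $P_6+(k-1)P_3$ has $\gamma_t(G)\leq f(k)=k^4+4k^2+21k+19$. So the algorithm either answers \textsc{Yes} outright or finds itself in a regime where $\gamma_t$ \emph{is} computed, by brute force in time $|V(G)|^{O(f(k))}$, after which the characterisation of \Cref{theorem:1totcontracdom} is checked by enumeration. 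This is perfectly consistent with {\sc Total Domination} being $\mathsf{NP}$-hard on these classes, since the hard instances have unbounded $\gamma_t$ — precisely where the answer is forced. Your proposal contains no analogue of this bounded-$\gamma_t$ window, and without it the tractability argument does not go through.

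On the hardness side there are two further soft spots. First, monotonicity only requires hardness on each of the classes of $P_6$-free, $(P_5+P_2)$-free and $2P_4$-free graphs \emph{separately}; demanding a single construction that is simultaneously $P_6$-, $(P_5+P_2)$- and $2P_4$-free is both unnecessary and probably unattainable (the paper's $2P_4$-free construction in \Cref{lem:2p4td2} contains induced $P_6$'s, for instance). More importantly, for $P_6$ and $P_5+P_2$ the paper gives no direct reduction at all: on $P_6$-free graphs $ct_{\gamma_t}\leq 2$ always holds, and on $(P_5+P_2)$-free graphs {\sc Contraction Number($\gamma_t$,3)} is polynomial via the $(P_6+P_3)$-free algorithm, so in both cases {\sc Contraction Number($\gamma_t$,2)} is the complement of {\sc Contraction Number($\gamma_t$,1)} and inherits $\mathsf{coNP}$-hardness via \Cref{obs:hard12} — which is why the theorem says $\mathsf{(co)NP}$-hard; note that this transfer again relies on the structural base case you have not proved. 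Second, for $H$ containing a cycle, "making the reduction girth-robust" is exactly the content of \Cref{lem:cyclestd}: that 4-subdividing every edge preserves $ct_{\gamma_t}$ exactly is a substantial lemma (the delicate direction being the reconstruction of near-minimum TD sets of $G$ from those of the subdivision, with control over which $P_4$/$K_{1,3}$/$2P_3$ certificates survive), and asserting that the construction can be "tuned" does not substitute for it. Only the $K_{1,3}$-free and $2P_4$-free cases proceed by the bespoke reductions you envisage.
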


\begin{theorem}
\label{thm:dicstd2}
{\sc Contraction Number($\gamma_{t2}$,2)} is polynomial-time solvable on $H$-free graphs if $H \subseteq_i P_5+tK_1$ for some $t \geq 0$ or $H \subseteq_i P_3+tP_2$ for some $t \geq 0$, and $\mathsf{(co)NP}$-hard otherwise.
\end{theorem}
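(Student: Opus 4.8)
The plan is to prove the two directions of the dichotomy separately, and to reduce the hardness direction (which a priori must handle infinitely many graphs $H$) to a finite list of obstructions. Concretely, I would first establish the following classification: if $H$ is neither an induced subgraph of $P_5 + tK_1$ nor of $P_3 + tP_2$ for any $t \ge 0$, then $H$ contains, as an induced subgraph, at least one of $C_3,\, C_4,\, C_5,\, C_6,\, K_{1,3},\, P_6,\, P_4 + P_2$ or $2P_3$. The argument is a short case distinction on the shape of $H$. If $H$ has a cycle, take a shortest induced cycle $C_\ell$: for $\ell \le 6$ it is already listed, and for $\ell \ge 7$ six consecutive vertices induce a $P_6$. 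If $H$ is a forest with a vertex of degree at least three, it contains $K_{1,3}$. Finally, if $H$ is a linear forest, a direct analysis of its components shows that avoiding all of $P_6$, $P_4+P_2$ and $2P_3$ forces $H$ into one of the two tractable families: either the components of size at least two embed as an induced subgraph of $P_5$ (so $H \subseteq_i P_5 + tK_1$), or every component has at most three vertices with at most one of size three (so $H \subseteq_i P_3 + tP_2$).

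Since $F$-free graphs form a subclass of $H$-free graphs whenever $F \subseteq_i H$, and hardness on a subclass implies hardness on the superclass, it suffices to prove $\mathsf{(co)NP}$-hardness on $F$-free graphs for each obstruction $F$ in the list. Moreover, each of $P_6$, $P_4+P_2$ and $2P_3$ contains an induced $2P_2$, so hardness on $2P_2$-free graphs settles all three of these at once. The whole hardness direction thus reduces to three reductions: one producing graphs of girth at least $7$ (which are simultaneously $C_3$-, $C_4$-, $C_5$- and $C_6$-free), one producing claw-free graphs, and one producing $2P_2$-free graphs. For each I would start from the reductions already establishing that \textsc{Contraction Number($\gamma_{t2}$,2)} is $\mathsf{(co)NP}$-hard in general and re-engineer the gadgets so that the output lands in the required class while preserving the gap between $ct_{\gamma_{t2}} = 2$ and $ct_{\gamma_{t2}} \in \{1,3\}$: the girth requirement is obtained by replacing edges of the gadget with long induced paths, while the claw-free and $2P_2$-free versions call for turning independent sets of the construction into cliques and recomputing how $\gamma_{t2}$ reacts to one and to two contractions.

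For the polynomial-time direction, the subtlety — and the reason the statement is interesting — is that \textsc{Semitotal Domination} is itself $\mathsf{NP}$-hard on $(P_5+tK_1)$-free and on $(P_3+tP_2)$-free graphs (neither family is contained in $P_4 + tK_1$), so we cannot simply compute $\gamma_{t2}$ by brute force over pairs of contracted edges. Instead I would decide $ct_{\gamma_{t2}}(G) = 2$ as the conjunction of $ct_{\gamma_{t2}}(G) \le 2$ and $ct_{\gamma_{t2}}(G) \ne 1$, and provide, on each of the two classes, polynomial-time tests for the events $ct_{\gamma_{t2}}(G) \le 1$ and $ct_{\gamma_{t2}}(G) \le 2$ that rely only on combinatorial characterizations of these events rather than on the value of $\gamma_{t2}$. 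The strong structure of the classes is what makes such characterizations checkable efficiently: in a connected $(P_5+tK_1)$-free graph the non-isolated part of any neighbourhood is highly constrained, while $(P_3+tP_2)$-free graphs admit only boundedly many essentially different edges up to the disjoint copies of $P_2$ one can attach.

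I expect the main obstacle to be exactly this polynomial-time direction: turning the abstract criteria $ct_{\gamma_{t2}}(G) \le 1$ and $ct_{\gamma_{t2}}(G) \le 2$ into conditions that are both correct and efficiently testable on classes where $\gamma_{t2}$ is intractable. In particular, recognising the (rare) graphs with $ct_{\gamma_{t2}}(G) = 3$ inside these classes, and doing so without ever evaluating $\gamma_{t2}$, is the delicate point. A plausible route is to show that on each class the graphs with $ct_{\gamma_{t2}}(G) = 3$ form a structurally restricted family recognisable directly, so that all remaining graphs satisfy $ct_{\gamma_{t2}}(G) \le 2$ and only the test $ct_{\gamma_{t2}}(G) \ne 1$ remains.
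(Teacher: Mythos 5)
Your classification of the obstructions is fine, but your plan for discharging the linear-forest obstructions contains a fatal error: you propose to settle $P_6$, $P_4+P_2$ and $2P_3$ at once by a single hardness reduction into $2P_2$-free graphs. No such reduction can exist (unless $\mathsf{P}=\mathsf{NP}$), because $2P_2 \subseteq_i P_3+tP_2$ for $t \geq 2$, so $2P_2$-free graphs lie inside the \emph{tractable} side of the very dichotomy you are proving. You have the monotonicity direction right in the abstract ($F$-free graphs form a subclass of $H$-free graphs when $F \subseteq_i H$), but you picked an $F$ that sits below both tractable families. In fact the hardness on these three classes cannot come from any fresh gadget construction of the kind you describe: by \Cref{lem:p8std2}, every $P_8$-free graph — hence every $P_6$-free, $(P_4+P_2)$-free and $2P_3$-free graph — satisfies $ct_{\gamma_{t2}}(G) \leq 2$, so on these classes a graph is a \yes-instance for {\sc Contraction Number($\gamma_{t2}$,2)} if and only if it is a \no-instance for {\sc Contraction Number($\gamma_{t2}$,1)}. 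The paper therefore obtains $\mathsf{coNP}$-hardness (not $\mathsf{NP}$-hardness) on each of these classes by complementing the known $k=1$ hardness of \Cref{thm:dicstd1} (this is \Cref{lem:linforeststd2}); your phrasing about "preserving the gap between $ct_{\gamma_{t2}}=2$ and $ct_{\gamma_{t2}} \in \{1,3\}$" overlooks that $ct_{\gamma_{t2}}=3$ never occurs there. Your girth-$\geq 7$ plan for the cycle obstructions is not contradicted by anything, but it is also unsupported: the paper needs two separate constructions for $C_3$-free and $C_4$-free graphs precisely because a single high-girth construction was not achieved, and no subdivision lemma preserving $ct_{\gamma_{t2}}$ (the analogue of the one for $\gamma_t$) is available.

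On the polynomial-time side your architecture matches the paper — decide $ct_{\gamma_{t2}}(G)=2$ as the conjunction of $ct_{\gamma_{t2}}(G)\leq 2$ and $ct_{\gamma_{t2}}(G)\neq 1$, with the latter delegated to \Cref{thm:dicstd1} — but the crucial idea for testing $ct_{\gamma_{t2}}(G)\leq 2$ is missing from your sketch. The paper's route (\Cref{lem:hp3std} together with Lemmas~\ref{lem:p8std2} and \ref{lem:2p4std2}, yielding \Cref{lem:easystd3}) is an induction on the number of $P_3$-components: one shows that any \emph{\no-instance} for {\sc 2-Edge Contraction($\gamma_{t2}$)} that contains an induced copy of $H$ has $\gamma_{t2}(G) \leq 8|V(H)|$, because outside the neighbourhood of that copy the graph decomposes into cliques and \Cref{lem:edge,lem:noedge} tightly constrain how a minimum SD set can meet them. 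This is exactly what reconciles your correct observation that $\gamma_{t2}$ is intractable on these classes with a brute-force step: one never computes $\gamma_{t2}$ in general, only on the residual instances where it is provably bounded by a constant. Your alternative — characterising the $ct_{\gamma_{t2}}=3$ graphs "structurally, recognisable directly" — names the right target but offers no mechanism, so as written the tractability half remains a plan rather than a proof.
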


\noindent
\textbf{Related work.} The study of the blocker problem for $\pi = \gamma_t$ and $\mathcal{O} = \{\text{edge contraction}\}$ was initiated by Huang and Xu \cite{HX10} who characterised for every $k \in[3]$, the graphs for which $ct_{\gamma_t}(G) =k$ in terms of the structure of their total dominating sets. More specifically, they proved the following theorem (see \Cref{sec:prelim} for missing definitions).

\begin{theorem}[\cite{HX10}]
\label{theorem:1totcontracdom}
For any graph $G$, the following holds.
\begin{itemize}
\item[(i)] $ct_{\gamma_t} (G)=1$ if and only if there exists a minimum total dominating set of $G$ that contains a (not necessarily induced) $P_3$.
\item[(ii)]  $ct_{\gamma_t} (G)=2$ if and only if no minimum total dominating set of $G$ contains a $P_3$ and there exists a total dominating set of $G$ of size $\gamma_t(G)+1$ that contains a (not necessarily induced) $P_4$, $K_{1,3}$ or $2P_3$.
\end{itemize}
\end{theorem}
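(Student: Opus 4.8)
The plan is to build everything on one monotonicity lemma: for every edge $e$, $\gamma_t(G)-1 \le \gamma_t(G/e) \le \gamma_t(G)+1$. The upper bound is what makes the reverse liftings work and the lower bound is what pins down the parameter. For the lower bound I would take a minimum total dominating set $D'$ of $G/e$ with contracted vertex $w$ and lift it: if $w \in D'$, replacing $w$ by the two endpoints of $e$ gives a total dominating set of $G$ of size $|D'|+1$; if $w \notin D'$, then every vertex of $G$ other than the endpoints of $e$ is already dominated by $D'$, and $w$ being dominated forces at least one endpoint of $e$ to have a neighbour in $D'$, so adding that endpoint (which dominates the other endpoint and has a partner in $D'$) again yields a total dominating set of size $|D'|+1$. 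Hence $\gamma_t(G)\le\gamma_t(G/e)+1$. The decisive consequence is: if $ct_{\gamma_t}(G)\ge 2$, then no single contraction drops the parameter, so $\gamma_t(G/e)\ge\gamma_t(G)$ for every $e$, and therefore after any two contractions the resulting graph $H$ satisfies $\gamma_t(H)\ge\gamma_t(G)-1$.

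For part (i): if a minimum total dominating set $D$ contains a (not necessarily induced) $P_3$, say $a-b-c$, then contracting $ab$ and replacing $\{a,b\}$ by the contracted vertex $w$ gives a total dominating set of $G/ab$ of size $|D|-1$, because $w$ is adjacent to $c\in D$; thus $\gamma_t(G/ab)=\gamma_t(G)-1$ and $ct_{\gamma_t}(G)=1$. Conversely, if $ct_{\gamma_t}(G)=1$, I lift a minimum total dominating set $D'$ of $G/e$ as above to a total dominating set $D$ of size $\gamma_t(G)$; since $D'$ is total, $w$ has a neighbour in $D'$, and combining this neighbour with the endpoints of $e$ (or, in the $w\notin D'$ case, the added endpoint with the partner chain of its neighbour in $D'$) produces a $P_3$ inside the minimum set $D$.

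For part (ii) the first condition is immediate from (i): $ct_{\gamma_t}(G)\ge 2$ is equivalent to ``no minimum total dominating set contains a $P_3$''. Granting this, it remains to show $ct_{\gamma_t}(G)=2$ iff there is a total dominating set of size $\gamma_t(G)+1$ containing $P_4$, $K_{1,3}$ or $2P_3$. The backward direction is the easy one: given such a set $D$, I contract two edges inside it --- the two end-edges of the $P_4$, two edges at the centre of the $K_{1,3}$, or one middle edge of each $P_3$ --- and check that in each case the image of $D$ has size $\gamma_t(G)-1$ and retains no isolated vertex, hence is a total dominating set of the contracted graph; together with the consequence of the monotonicity lemma this forces $\gamma_t(H)=\gamma_t(G)-1$ and so $ct_{\gamma_t}(G)=2$. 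The exact role of these three graphs is clarified by a purely combinatorial lemma I would isolate: a graph $F$ without isolated vertices admits two edge contractions decreasing $|V(F)|$ by two and keeping $F$ free of isolated vertices if and only if $F$ contains $P_4$, $K_{1,3}$ or $2P_3$ as a subgraph (split on whether the two contracted edges are adjacent, giving $K_{1,3}$ or $P_4$, or disjoint, giving $P_4$ or $2P_3$).

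The forward direction of (ii) is where the real work lies and is the step I expect to be the main obstacle. Here I contract two edges $e_1,e_2$ yielding $H$ with $\gamma_t(H)=\gamma_t(G)-1$, take a minimum total dominating set $D_H$ of $H$, and must lift it to a total dominating set of $G$ of size exactly $\gamma_t(G)+1$ whose induced subgraph realises one of the three patterns. The difficulty is that the contracted edges need not have both endpoints among the active vertices: a contracted vertex $w_i$ may lie outside $D_H$, in which case the naive lift fails to dominate the far endpoint of $e_i$. I would organise the argument by the two possible shapes of the contracted region --- two disjoint edges, or a single connected triple (a $P_3$ or a triangle in $G$) --- and within each by whether each contracted vertex is in $D_H$. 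When a contracted vertex lies in $D_H$, its branch set expands inside the lifted set and, together with the neighbour guaranteed by totality of $D_H$, directly yields a $P_4$, $K_{1,3}$ or $2P_3$. When it lies outside $D_H$, I would show the corresponding far endpoint is necessarily undominated by the partial lift --- for otherwise the lift would be a \emph{minimum} total dominating set containing a $P_3$, contradicting the first condition --- and then add back the dominated endpoint, which simultaneously restores domination, brings the size to exactly $\gamma_t(G)+1$, and supplies the extra edge completing the pattern. Indeed, the first condition is used precisely to rule out the scenario where only one vertex needs to be added (which would produce a minimum set with a $P_3$), thereby forcing the final size to be $\gamma_t(G)+1$. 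The fiddly part is the bookkeeping that all produced vertices are distinct --- so that, for instance, two stray $P_3$'s genuinely form a $2P_3$ rather than overlapping into a single path (which would instead exhibit a $P_4$) --- but each configuration resolves into exactly one of the three allowed subgraphs.
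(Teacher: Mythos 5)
This theorem is imported by the paper from \cite{HX10} and is never proved in the text, so there is no internal proof to compare against; judged on its own merits, your plan is sound and is, in essence, the Huang--Xu argument. The load-bearing step is exactly right: lifting a minimum total dominating set of $G/e$ back to $G$ (splitting the contracted vertex, or adding one dominated endpoint when the contracted vertex is outside the set) gives $\gamma_t(G)\leq\gamma_t(G/e)+1$, whence $ct_{\gamma_t}(G)\geq 2$ forces $\gamma_t(G/e)\geq\gamma_t(G)$ for every edge and thus $\gamma_t(H)\geq\gamma_t(G)-1$ after any two contractions; this is also precisely how the paper consumes the theorem downstream (e.g.\ in \Cref{clm:condct2}, exhibiting a pattern inside a (near-)minimum set certifies $ct_{\gamma_t}(G)\leq2$). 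Both directions of (i), the backward direction of (ii) with your choices of contracted edges in each of $P_4$, $K_{1,3}$, $2P_3$, and your mechanism for the forward direction of (ii) --- splitting on whether the contracted region is two disjoint edges of $G$ or a connected triple (when the second contraction hits the first contracted vertex), splitting again on membership of each contracted vertex in $D_H$, and invoking the condition that no minimum TD set contains a $P_3$ to rule out lifts that close up after adding a single vertex, thereby forcing size exactly $\gamma_t(G)+1$ --- all check out; I verified representative subcases (shared witnesses collapsing two stray $P_3$'s into a $P_4$ or $K_{1,3}$, triangle triples) and each resolves into one of the three patterns as you predict. Two blemishes, neither fatal: first, your ``monotonicity lemma'' asserts the upper bound $\gamma_t(G/e)\leq\gamma_t(G)+1$ and claims it drives the reverse liftings, but in fact $\gamma_t(G/e)\leq\gamma_t(G)$ always holds (the image of a minimum TD set, repaired by at most one vertex when both endpoints of $e$ lie in it and witness only each other, is total), and no upper bound is ever needed --- in (i) and in the backward direction of (ii) you exhibit explicit TD sets of the contracted graph, and equality follows from the lower bound alone, so that sentence misattributes the roles of the two bounds without damaging anything; second, the forward direction of (ii) remains a sketch whose full case analysis (in particular the bookkeeping that the produced $P_3$'s are vertex-disjoint or else merge into a $P_4$ or $K_{1,3}$, and the verification that the image of a non-member contracted vertex always leaves the ``far'' endpoint undominated) must still be written out, though the skeleton you give does close correctly.
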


On the other hand, the algorithmic study of {\sc Contraction Number($\gamma_t$,1)} was initiated in \cite{GALBY202118} where the following dichotomy theorem was established.

\begin{theorem}[\cite{GALBY202118}]
\label{thm:dictd1}
{\sc Contraction Number($\gamma_t$,1)} is polynomial-time solvable on $H$-free graphs if $H \subseteq_i P_5+tK_1$ for some $t \geq 0$ or $H \subseteq_i P_4+tP_3$ for some $t \geq 0$, and $\mathsf{(co)NP}$-hard otherwise.
\end{theorem}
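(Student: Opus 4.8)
The plan is to rely throughout on \Cref{theorem:1totcontracdom}(i). Since $ct_{\gamma_t}(G)\ge 1$ for every graph $G$, the problem coincides with \contractd{} and asks whether some minimum total dominating set contains a (not necessarily induced) $P_3$. In any total dominating set $D$ the graph $G[D]$ has minimum degree at least one, so $D$ avoids a $P_3$ exactly when $G[D]$ is a disjoint union of edges; hence $ct_{\gamma_t}(G)\ge 2$ if and only if every minimum total dominating set of $G$ induces a perfect matching. Writing $\gamma^{P_3}_t(G)$ for the least size of a total dominating set containing a $P_3$, I would record that $\gamma_t(G)\le\gamma^{P_3}_t(G)\le\gamma_t(G)+1$ (extend a matching-type set by one suitable vertex) and that $\gamma_t(G)$ odd already forces $ct_{\gamma_t}(G)=1$; a graph is thus a \no-instance precisely when the cheapest matching-type total dominating set is strictly smaller than the cheapest $P_3$-containing one.

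For the polynomial cases, $H\subseteq_i P_5+tK_1$ (resp.\ $H\subseteq_i P_4+tP_3$) makes the $H$-free graphs a subclass of the $(P_5+tK_1)$-free (resp.\ $(P_4+tP_3)$-free) graphs, so it suffices to handle these two maximal classes. The point that makes the statement interesting is that computing $\gamma_t$ is itself $\mathsf{NP}$-hard on both of them, since neither $P_5+tK_1$ nor $P_4+tP_3$ is an induced subgraph of any $P_4+t'K_1$; the algorithm must therefore resolve the matching-versus-$P_3$ dichotomy structurally, without ever optimising $\gamma_t$. I would use the exclusion of $P_5+tK_1$ (resp.\ $P_4+tP_3$) to limit the possible shape of minimum total dominating sets: either the graph visibly admits a minimum total dominating set with a vertex of degree at least two inside it (answer \yes), or the forbidden induced subgraph forces the relevant configurations into boundedly many enumerable types on which the comparison can be checked directly. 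I expect this step to be the main obstacle of the whole proof.

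For the hardness cases I would argue by forbidden-subgraph obstructions. Any $H$ that is neither an induced subgraph of $P_5+tK_1$ nor of $P_4+tP_3$ is of exactly one of three kinds: (a) it contains a cycle; (b) it is a forest with a vertex of degree at least three, hence contains an induced $K_{1,3}$; or (c) it is a linear forest, and a short case analysis then shows it must contain one of $P_6$, $P_5+P_2$ or $2P_4$. Since $F\subseteq_i H$ makes the $F$-free graphs a subclass of the $H$-free graphs, it is enough to prove $\mathsf{NP}$-hardness of \contractd{} on graphs of arbitrarily large girth (covering (a), as the instances of girth exceeding $|V(H)|$ are $H$-free), on claw-free graphs (covering (b)), and on $F$-free graphs for each $F\in\{P_6,\,P_5+P_2,\,2P_4\}$ (covering (c)). For each target class I would reduce from a suitable $\mathsf{NP}$-hard problem — for instance a restricted \dom{} or a \textsc{Sat} variant — using gadgets whose total domination number, together with the presence or absence of a $P_3$ in a minimum total dominating set, encodes the source answer through the parity/matching criterion above. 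The delicate part is designing these gadgets so that they simultaneously land in the prescribed class while still pinning down the parity and the matching structure of all minimum total dominating sets; for the three linear-forest reductions this means truncating the gadgets so that none of $P_6$, $P_5+P_2$, $2P_4$ survives as an induced subgraph.
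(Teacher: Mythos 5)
First, context: the paper does not prove this statement at all --- \Cref{thm:dictd1} is imported verbatim from \cite{GALBY202118} and used as a black box (e.g.\ in \Cref{lem:easytd2}, \Cref{lem:p6td2} and \Cref{lem:p5p2td2}), so there is no in-paper proof to compare against and your attempt must be judged as a reconstruction of the cited one. Your scaffolding is sound and matches the architecture used both in \cite{GALBY202118} and in this paper's own proof of the analogous \Cref{thm:dictd2} (\Cref{sec:dictd2}): the identification of {\sc Contraction Number($\gamma_t$,1)} with \contractd{}, the reformulation via \Cref{theorem:1totcontracdom}(i) (a minimum TD set avoids a $P_3$ iff it induces a disjoint union of edges, whence the correct parity remark), and the exhaustive split of the non-polynomial $H$ into graphs with a cycle (attacked via large-girth instances), forests with a degree-$3$ vertex (induced $K_{1,3}$), and linear forests, which you correctly show must contain $P_6$, $P_5+P_2$ or $2P_4$ --- precisely the five hardness targets appearing in \Cref{sec:dictd2}.

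The genuine gap is that the proposal stops where the proof begins, in both halves. For the polynomial cases you explicitly defer the main step (``I expect this step to be the main obstacle''): no algorithm is given, and your alternative --- ``the graph visibly admits a minimum total dominating set with a vertex of degree at least two, or the configurations fall into boundedly many enumerable types'' --- is not a decidable criterion without further work, since as you yourself note $\gamma_t$ cannot be computed on these classes. The known technique, visible in this paper's \Cref{lem:ectdp6kp3}, is a bounding lemma proved by induction on $t$: if $G$ contains the forbidden pattern and is a \no-instance, then $\gamma_t(G)$ is bounded by a function of $t$ alone, after which one brute-forces over all TD sets of bounded size and applies \Cref{theorem:1totcontracdom}; nothing in your text establishes, or even conjectures, such a bound. (Minor slip in the same part: your claim that $\gamma_t$ is $\mathsf{NP}$-hard on both maximal classes fails for $P_4+tP_3$ at $t=0$, since $P_4 \subseteq_i P_4+t'K_1$.) For the hardness half, all five reductions are left as design intentions rather than constructions --- the actual proofs (reused in \Cref{lem:clawtd2} and \Cref{lem:2p4td2}) require concrete variable/clause gadgets from {\sc Positive Exactly 3-Bounded 1-In-3 3-Sat} and {\sc 3-Sat} together with a full analysis pinning down all minimum TD sets, none of which you supply. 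Moreover, for the cycle case your plan needs a girth-raising preservation lemma (the analogue of \Cref{clm:4sub}: $ct_{\gamma_t}(G)=1$ iff $ct_{\gamma_t}(H)=1$ for the $4$-subdivision $H$, itself proved via the nontrivial $T(D)$/$T^{-1}(D)$ correspondence of \Cref{lem:cyclestd}), which you neither state nor prove; without it, ``gadgets of arbitrarily large girth'' is a wish, not a method. In short: correct case analysis and correct reduction to the $P_3$-criterion, but no proof of either the algorithmic or the hardness statements.
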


Similarly to the total domination parameter, Galby et al. \cite{semitotcon} later characterised for every $k \in [3]$ the graphs for which $ct_{\gamma_{t2}}(G) = k$ in terms of the structure of their semitotal dominating sets, as follows (see \Cref{sec:prelim} for missing definitions).

\begin{theorem}[\cite{semitotcon}]
\label{thm:friendlytriple}
For any graph $G$, the following holds.
\begin{itemize}
\item[(i)] $ct_{\gamma_{t2}}(G) = 1$ if and only if there exists a minimum semitotal dominating set of $G$ that contains a friendly triple.
\item[(ii)] $ct_{\gamma_{t2}}(G) = 2$ if and only if no minimum semitotal dominating set of $G$ contains a friendly triple and there exists a semitotal dominating set of $G$ of size $ \gamma_{t2}(G) + 1$ that contains a (not necessarily induced) ST-configuration.
\end{itemize}
\end{theorem}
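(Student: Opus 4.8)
The plan is to reduce both statements to a correspondence between edge contractions that lower $\gamma_{t2}$ and prescribed substructures inside (near-)minimum semitotal dominating sets, mirroring the argument behind \Cref{theorem:1totcontracdom}. The backbone is a monotonicity/lifting lemma asserting that a single edge contraction decreases $\gamma_{t2}$ by at most one: given a minimum semitotal dominating set $D'$ of $G/xy$ with contracted vertex $z$, one lifts it to a semitotal dominating set $D$ of $G$ of size at most $|D'|+1$ by re-expanding $z$ into $\{x,y\}$ (adding, if $z \notin D'$, one endpoint to dominate the side left uncovered). The point to verify carefully is that the distance-two condition survives: distances can only shrink under contraction, so a witness path through $z$ in $G/xy$ may lengthen in $G$, but because $x$ and $y$ are adjacent and now both lie in $D$, any vertex that relied on $z$ as a stepping stone is within distance two of $\{x,y\}$, which repairs the condition. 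Hence $ct_{\gamma_{t2}}(G)=1$ if and only if some edge $e$ satisfies $\gamma_{t2}(G/e)=\gamma_{t2}(G)-1$, and, more generally, contracting $k$ edges lowers $\gamma_{t2}$ by at most $k$.

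For part (i), the backward direction would take a minimum semitotal dominating set $D$ containing a friendly triple and exhibit the edge whose contraction merges two of its vertices; the definition of friendly triple is tailored precisely so that the resulting set $(D \setminus \{x,y\})\cup\{z\}$ of size $\gamma_{t2}(G)-1$ stays dominating and still satisfies the distance-two condition, the third vertex of the triple supplying the witness for $z$ and vice versa. For the forward direction, given an edge $e=xy$ with $\gamma_{t2}(G/e)=\gamma_{t2}(G)-1$, I would take a minimum semitotal dominating set $D'$ of $G/e$, lift it as above to a set $D$ of size exactly $\gamma_{t2}(G)$ — hence minimum — and argue that the two re-expanded vertices $x,y$, together with an appropriate distance-two witness drawn from $D$, form a friendly triple.

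Part (ii) follows the same template one level up. Using the known bound $ct_{\gamma_{t2}}(G)\le 3$, the statement ``$ct_{\gamma_{t2}}(G)=2$'' is equivalent to ``$ct_{\gamma_{t2}}(G)\neq 1$ and $ct_{\gamma_{t2}}(G)\le 2$,'' and the first hypothesis (no minimum semitotal dominating set contains a friendly triple) is exactly the negation of part (i). For the backward direction I would start from a semitotal dominating set of size $\gamma_{t2}(G)+1$ containing an ST-configuration and contract two carefully chosen edges, each merging two vertices and thus lowering the set size by one, reaching a semitotal dominating set of $G/e_1/e_2$ of size $\gamma_{t2}(G)-1$. For the forward direction, I would lift a minimum semitotal dominating set of $G/e_1/e_2$ back across the two contractions to a semitotal dominating set of $G$ of size $\gamma_{t2}(G)+1$ and show it must contain one of the ST-configurations, the case split depending on whether $e_1$ and $e_2$ share a vertex (yielding the path-like configurations) or are independent (yielding the disjoint ones).

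The main obstacle I anticipate lies in the enumeration and verification underlying part (ii). Unlike total domination, where the relevant structures are purely adjacency patterns ($P_4$, $K_{1,3}$, $2P_3$), the semitotal condition may be fulfilled through distance-two witnesses rather than edges, so each ST-configuration must simultaneously encode the edges to be contracted and the distance-two relationships that keep every vertex covered after contraction. Proving that the list of ST-configurations is both sufficient (each one admits two size-reducing contractions) and necessary (every lifted size-$(\gamma_{t2}(G)+1)$ set realising a two-contraction drop contains one of them) will require a careful case analysis tracking how the two contractions shorten distances; this is the technical heart of the argument, and the place where the extra flexibility of semitotal domination forces genuinely more configurations than in the total-domination case.
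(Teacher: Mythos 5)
This theorem is stated in the paper as a quoted result from \cite{semitotcon}; the paper itself contains no proof of it, so your attempt can only be judged on its own merits. On those merits, what you have written is an architecture rather than a proof. The scaffolding is correct and does mirror the known development: the lifting lemma (one contraction changes $\gamma_{t2}$ by at most one, and never increases it), the consequent equivalence ``$ct_{\gamma_{t2}}(G)=1$ iff $\gamma_{t2}(G/e)=\gamma_{t2}(G)-1$ for some edge $e$,'' the reduction of (ii) to ``$ct_{\gamma_{t2}}(G)\neq 1$ and $ct_{\gamma_{t2}}(G)\le 2$'' via the bound $ct_{\gamma_{t2}}\le 3$, and the easy backward direction of (i): contracting the edge $xy$ of a friendly triple in a minimum SD set yields a set of size $\gamma_{t2}(G)-1$ whose merged vertex is witnessed by the third triple vertex, all other witness distances only shrinking.

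The genuine gaps are two. First, in the forward direction of (i) your lift produces a friendly triple only when the contracted vertex $z$ belongs to the minimum SD set $D'$ of $G/e$ (then $x,y\in D$, adjacent, and the witness of $z$ lifts to a vertex within distance two of $\{x,y\}$). When $z\notin D'$, the lifted set is $D'\cup\{x\}$ (say) and need not contain any edge at all; one must instead extract the triple from a \emph{broken witness pair} --- two vertices $u,v\in D'$ with $d_{G/e}(u,v)=2$ whose only length-two path passes through $z$, forcing (since otherwise $D'$ itself would be an SD set of $G$ of size $\gamma_{t2}(G)-1$) an edge from $u$ or $v$ into $\{x,y\}$ that, together with the other endpoint, forms the triple. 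Your proposal does not address this case. Second, and more seriously, for (ii) the entire content of the theorem is the claim that the seven configurations $O_1,\ldots,O_7$ of \Cref{fig:conf} are \emph{exactly} the structures that arise when a minimum SD set of a twice-contracted graph is lifted back to a set of size $\gamma_{t2}(G)+1$: the sufficiency check is routine but per-configuration (for $O_2,O_4,O_5,O_6$ the two contracted edges share a vertex; for $O_1,O_3,O_7$ one contracts one edge from each part and the dashed distance-two relations supply the witnesses of the merged vertices; one also needs the no-friendly-triple hypothesis to pin $\gamma_{t2}(G/e_1)$ at $\gamma_{t2}(G)$, so the two-contraction drop lands exactly at $\gamma_{t2}(G)-1$), while the necessity direction requires the full case analysis over how two contractions shorten distances and repair witness pairs. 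You explicitly defer this enumeration as ``the technical heart,'' so the theorem is not proved; what you have is a plausible plan whose correctness rests entirely on the part left undone.
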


\begin{figure}
\begin{minipage}[b]{0.23\textwidth}
\begin{subfigure}[b]{\linewidth}
\centering
\begin{tikzpicture}[node distance=0.55cm]
\node[scirc] (1) at (0,0) {};
\node[scirc,right of=1] (2) {};
\node[scirc,right of=2] (3) {};
\node[scirc,above of=1] (4) {};
\node[scirc,right of=4] (5) {};
\node[scirc,right of=5] (6) {};

\draw[-] (2) -- (3)
(5) -- (6)
(1) -- (2)
(4) -- (5);
\end{tikzpicture}
\caption*{$O_1$}
\end{subfigure}

\vspace*{1.5mm}

\begin{subfigure}[b]{\linewidth}
\centering
\begin{tikzpicture}[node distance=0.55cm]
\node[scirc] (1) at (0,0) {};
\node[scirc,right of=1] (2) {};
\node[scirc,right of=2] (3) {};
\node[scirc,above of=1] (4) {};
\node[scirc,right of=4] (5) {};
\node[scirc,right of=5] (6) {};

\draw[-] (5) -- (6);
\draw[dashed] (2) -- (3);
\draw (1) -- (2)
(4) -- (5) ;
\end{tikzpicture}
\caption*{$O_2$}
\end{subfigure}
\end{minipage}
\begin{minipage}[b]{0.23\textwidth}
\begin{subfigure}[b]{\linewidth}
\centering
\begin{tikzpicture}[node distance=0.55cm]
\node[scirc] (1) at (0,0) {};
\node[scirc,right of=1] (2) {};
\node[scirc,right of=2] (3) {};
\node[scirc,above of=1] (4) {};
\node[scirc,right of=4] (5) {};
\node[scirc,right of=5] (6) {};

\draw (1) -- (2)
(4) -- (5);
\draw[dashed] (2) -- (3);
\draw[dashed] (5) -- (6);
\draw (3) edge[nonedge] (6);
\end{tikzpicture}
\caption*{$O_3$}
\end{subfigure}

\vspace*{1.5mm}

\begin{subfigure}[b]{\linewidth}
\centering
\begin{tikzpicture}[node distance=0.55cm]
\node[scirc] (1) at (0,0) {};
\node[scirc,right of=1] (2) {};
\node[scirc,right of=2] (3) {};
\node[scirc,right of=3] (4) {};
\node[invisible] at (1,-.3) {};
\node[invisible] at (1,.33) {};

\draw (1) -- (2)
(2) -- (3);
\draw[dashed] (3) -- (4);
\end{tikzpicture}
\caption*{$O_4$}
\end{subfigure}
\end{minipage}
\begin{minipage}[b]{0.23\textwidth}
\begin{subfigure}[b]{\linewidth}
\centering
\begin{tikzpicture}[node distance=0.48cm]
\node[scirc] (1) at (0,0) {};
\node[scirc,above of=1] (2) {};
\node[scirc,below right of =1] (3) {};
\node[scirc,below left of=1] (4) {};

\draw (1) -- (2)
(1) -- (3);
\draw[-](1) -- (4);
\end{tikzpicture}
\caption*{$O_5$}
\end{subfigure}

\vspace*{1.5mm}

\begin{subfigure}[b]{\linewidth}

\end{subfigure}
\end{minipage}
\begin{minipage}[b]{0.23\textwidth}
\begin{subfigure}[b]{\linewidth}
\centering
\begin{tikzpicture}[node distance=0.55cm]
\node[scirc] (1) at (0,0) {};
\node[scirc,right of=1] (2) {};
\node[scirc,right of=2] (3) {};
\node[scirc,above of=2] (4) {};

\draw (1) -- (2)
(2) -- (4);
\draw[dashed] (2) -- (3);
\end{tikzpicture}
\caption*{$O_6$}
\end{subfigure}

\vspace*{1.5mm}

\begin{subfigure}[b]{\linewidth}
\centering
\begin{tikzpicture}[node distance=0.55cm]
\node[scirc] (1) at (0,0) {};
\node[scirc,right of=1] (2) {};
\node[scirc,right of=2] (3) {};
\node[scirc,right of=3] (4) {};
\node[invisible] at (1,-.3) {};
\node[invisible] at (1,.33) {};

\draw (1) -- (2)
(3) -- (4);
\draw[dashed] (2) -- (3);
\end{tikzpicture}
\caption*{$O_7$}
\end{subfigure}
\end{minipage}
\caption{The ST-configurations (the dashed lines indicate that the corresponding vertices are at distance two and the serpentine line indicates that the corresponding vertices can be identified).} 
\label{fig:conf}
\end{figure}
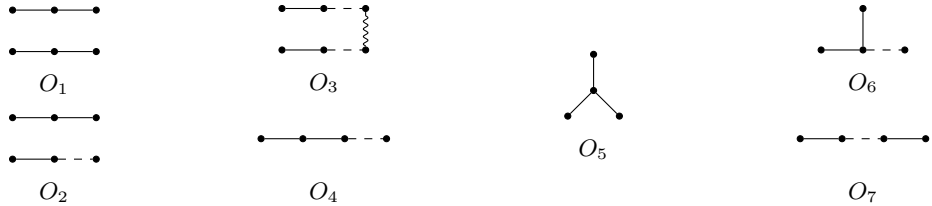

In the same paper \cite{semitotcon}, the authors further established the following dichotomy theorem on monogenic graph classes.

\begin{theorem}
\label{thm:dicstd1}
{\sc Contraction Number($\gamma_{t2}$,1)} is polynomial-time solvable for $H$-free graphs if $H \subseteq_i P_5+tK_1$ for some $t \geq 0$ or $H \subseteq_i P_3+tP_2$ for some $t\geq 0$, and $\mathsf{(co)NP}$-hard otherwise.
\end{theorem}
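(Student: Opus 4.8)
The plan is to handle the two directions of the dichotomy separately, with \Cref{thm:friendlytriple}(i) serving as the bridge throughout. Since zero contractions leave $\gamma_{t2}$ unchanged while three always suffice, every graph satisfies $ct_{\gamma_{t2}}(G)\ge 1$, so deciding $ct_{\gamma_{t2}}(G)=1$ is the same as deciding $ct_{\gamma_{t2}}(G)\le 1$, which by \Cref{thm:friendlytriple}(i) is exactly the question of whether \emph{some} minimum semitotal dominating set of $G$ contains a friendly triple. The whole argument is organised around this reformulation.

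For the polynomial side I would prove, for each of the two families, a single structural lemma of ``bounded-or-trivial'' type: there is a constant $c=c(t)$ such that every $(P_5+tK_1)$-free (respectively $(P_3+tP_2)$-free) graph $G$ with $\gamma_{t2}(G)>c$ already satisfies $ct_{\gamma_{t2}}(G)=1$, i.e.\ some minimum semitotal dominating set contains a friendly triple. Granting this, the algorithm first tries all $O(n^{c})$ subsets of size at most $c$: if none is a semitotal dominating set then $\gamma_{t2}(G)>c$ and we return \yes; otherwise we have recovered the constant value $k=\gamma_{t2}(G)\le c$, and we enumerate all $O(n^{k})$ subsets of size $k$, retain those that are minimum semitotal dominating sets, and return \yes\ if and only if one of them contains a friendly triple. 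All steps run in polynomial time because every exponent is a constant depending only on $t$. Note that this decides the problem even though \textsc{Semitotal Domination} is itself $\mathsf{NP}$-hard on these classes, which is precisely the phenomenon highlighted in the introduction.

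For the hardness side I would start from the known hardness of \contracstd\ on general graphs and argue that it persists on $H$-free graphs whenever $H$ lies in neither family. The first step is purely combinatorial: if $H\not\subseteq_i P_5+tK_1$ and $H\not\subseteq_i P_3+tP_2$ for every $t\ge 0$, then either (i) $H$ contains a cycle, or (ii) $H$ is a forest with a vertex of degree at least three and hence contains an induced $K_{1,3}$, or (iii) $H$ is a linear forest, in which case a short case analysis shows it must contain $P_6$, $P_4+P_2$ or $2P_3$ as an induced subgraph --- these three being exactly the minimal forbidden linear forests. It then suffices to provide three reductions whose output graphs are, respectively, of arbitrarily large girth (in particular larger than $|V(H)|$, settling (i)), claw-free (settling (ii)), and simultaneously $\{P_6,P_4+P_2,2P_3\}$-free (settling (iii)); in each case the produced instances are $H$-free for the relevant $H$, so the general hardness transfers to the $H$-free class, yielding the claimed $\mathsf{(co)NP}$-hardness.

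The difficulty is concentrated in two places. First, the structural lemma is the real content of the polynomial side: one must show that in these restricted classes a minimum semitotal dominating set cannot be at the same time large and free of friendly triples, which amounts to bounding the number of pairwise far-apart dominating vertices forced by the excluded induced subgraph and then matching this against the precise definition of a friendly triple. Second, reduction (iii) is the most delicate: forbidding $P_6$, $P_4+P_2$ and $2P_3$ simultaneously forces the gadgets to be very compact --- no long induced paths and only bounded induced matchings --- so the challenge is to design gadgets that respect all three constraints while still encoding an $\mathsf{NP}$-hard problem and keeping tight control over both $\gamma_{t2}$ and the existence of friendly triples.
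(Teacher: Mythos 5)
First, a framing remark: this paper does not prove \Cref{thm:dicstd1} at all --- it is quoted from \cite{semitotcon} --- so your proposal can only be measured against the methodology of that line of work, which the present paper reproduces at $k=2$. Your overall architecture is the right one: the reformulation via \Cref{thm:friendlytriple}(i) is exactly correct; your ``bounded-or-trivial'' lemma on the polynomial side mirrors \Cref{lem:hp3std} and is essentially forced, since (as you rightly observe) {\sc Semitotal Domination} itself is $\mathsf{NP}$-hard on $(P_5+tK_1)$-free and $(P_3+tP_2)$-free graphs, so the algorithm cannot afford to compute $\gamma_{t2}$ on instances of unbounded semitotal domination number; and your identification of $P_6$, $P_4+P_2$ and $2P_3$ as the minimal linear forests lying outside both polynomial families is correct (that this is the right list is confirmed by \Cref{lem:linforeststd2}, which inherits $k=1$ hardness on precisely these three classes from \Cref{thm:dicstd1}). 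That the structural lemma itself is only asserted is acceptable in a proposal, since you flag it as the core difficulty.

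The genuine gap is in the hardness plan, in two places. For case (iii) you demand a \emph{single} reduction whose outputs are simultaneously $\{P_6,P_4+P_2,2P_3\}$-free. This is strictly stronger than what the dichotomy requires, is not known to be achievable, and may well be impossible --- you yourself note how constrained such gadgets would be. What suffices, and what the actual proof does, is one reduction per minimal forbidden graph: if $F \subseteq_i H$ then every $F$-free graph is $H$-free, so $\mathsf{NP}$-hardness on $P_6$-free, on $(P_4+P_2)$-free and on $2P_3$-free graphs \emph{separately} already covers every linear forest $H$ in case (iii). The same over-claim appears in case (i): you posit reductions of arbitrarily large girth. For $\gamma_t$ this is available via the $4$-subdivision invariance of \Cref{lem:cyclestd}, but no analogous subdivision lemma is known for $\gamma_{t2}$ (distance-two witnesses do not survive subdivision), and the semitotal constructions in this paper (\Cref{lem:c5std2,lem:c3c4std2}, for $k=2$) instead build a separate $C_\ell$-free gadget for each fixed $\ell$, with $C_3$ and $C_4$ requiring their own tailored treatment. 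Again per-length hardness suffices, because the shortest cycle of a cyclic $H$ is induced, so $C_\ell$-free graphs are $H$-free for that $\ell$. As written, these two steps of your plan rest on constructions that are unsupported and plausibly nonexistent; replacing them by the per-forbidden-subgraph scheme (three linear-forest reductions plus one reduction per cycle length) restores the standard and correct route.
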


The paper is organised as follows. In \Cref{sec:prelim}, we present the necessary definitions and notations, as well as some preliminary results. In \Cref{sec:TD}, we study the complexity of \textsc{Contraction Number($\gamma_t,k$)} for $k = 2,3$, and in \Cref{sec:STD}, that of \textsc{Contraction Number($\gamma_{t2},k$)} for $k = 2,3$. The proof of \Cref{thm:dictd2} can be found in \Cref{sec:dictd2} and that of \Cref{thm:dicstd2}, in \Cref{sec:dicstd2}.

%------------------------------------------------------------------------------------------------------------------------------------------------------------------------------------

\section{Preliminaries}
\label{sec:prelim}

\subsection{Definitions and notations}

Throughout this paper, we only consider finite and simple graphs. Furthermore, since for any parameter $\pi$ and any $k \geq 0$, a non-connected graph $G$ is a \yes-instance for {\sc Contraction Number($\pi$,$k$)} if and only if at least one connected component of $G$ is a \yes-instance for {\sc Contraction Number($\pi$,$k$)}, we restrict our study to connected graphs.

For a graph $G$, we denote its vertex set by $V(G)$ and its edge set by $E(G)$. For any edge $e=\{x,y\} \in E(G)$, we call $x$ and $y$ the \emph{endvertices} of $e$. We may also denote the edge $e$ by $xy$ instead of $\{x,y\}$. Given a set $S \subseteq V(G)$, we let $G[S]$ denote the graph \emph{induced} by $S$, that is, the graph with vertex set $S$ and edge set $\{xy \in E(G)~|~x,y \in S\}$. A graph $H$ is an \emph{induced subgraph} of $G$, which we denote by $H \subseteq_i G$, if there exists a set $S \subseteq V(G)$ such that $H$ is isomorphic to $G[S]$. The \emph{neighbourhood} $N(x)$ of a vertex $x \in V(G)$ is the set $\{y \in V(G)~|~xy \in E(G)\}$ and the \emph{closed neighbourhood} $N[x]$ of $x$ is the set $N(x) \cup \{x\}$. Similarly, for a set $X \subseteq V(G)$, the \emph{neighbourhood} $N(X)$ of $X$ is the set $\{y \in V(G)~|~\exists x \in X \text{ s.t. } xy \in E(G)\} \setminus X$ and the \emph{closed neighbourhood} $N[X]$ of $X$ is the set $N(X) \cup X$. The \emph{degree} $d(x)$ of a vertex $x \in V(G)$ is the size of its neighbourhood. For any two vertices $x,y \in V(G)$, the \emph{distance between $x$ and $y$} is the number of edges in a shortest path from $x$ to $y$ and is denoted $d(x,y)$. Similarly, for any two sets $X,Y\subseteq V(G)$, the distance $d(X,Y)$ from $X$ to $Y$ is the number of edges in a shortest path from $X$ to $Y$, that is, $d(X,Y) = \min_{x \in X,y \in Y} d(x,y)$. If $X = \{x\}$, we may write $d(x,Y)$ instead of $d(\{x\},Y)$. For any two sets $X,Y \subseteq V(G)$, $X$ is \emph{complete} (\emph{anticomplete}, respectively) to $Y$, which we denote by $X - Y$ ($X \cdots Y$, respectively) if every vertex in $X$ is adjacent (nonadjacent, respectively) to every vertex in $Y$.

For $n \geq 1$, we denote by $P_n$ the path on $n$ vertices and by $K_{1,n}$ the complete bipartite graph with partitions of size one and $n$. If $H$ is a graph and $k \in \mathbb{N}$, we denote by $kH$ the graph consisting of $k$ disjoint copies of $H$.  The \emph{paw} is the graph depicted in \Cref{fig:paw} and unless specified otherwise, the vertices of a paw $P$ will be denoted by $P(1),\ldots,P(4)$ (as indicated in \Cref{fig:paw}) throughout the paper. The \emph{long paw} is the graph depicted in \Cref{fig:longpaw} and unless specified otherwise, the vertices of a long paw $P$ will be denoted by $P(1),\ldots,P(5)$ (as indicated in \Cref{fig:longpaw}) throughout the paper.

\begin{figure}
\centering
\begin{subfigure}[b]{.45\textwidth}
\centering
\begin{tikzpicture}
\node[circ,label=left:{\small $P(1)$}] (x) at (0,0) {};
\node[circ,label=right:{\small $P(2)$}] (xbar) at (1,0) {};
\node[circ,label=right:{\small $P(3)$}] (u) at (.5,.7) {};
\node[circ,label=right:{\small $P(4)$}] (w) at (.5,1.3) {};

\draw (x) -- (xbar)
(x) -- (u)
(xbar) -- (u) -- (w);
\end{tikzpicture}
\caption{A paw $P$.}
\label{fig:paw}
\end{subfigure}
\vspace*{.5cm}
\begin{subfigure}[b]{.45\textwidth}
\centering
\begin{tikzpicture}
\node[circ,label=left:{\small $P(1)$}] (x) at (0,0) {};
\node[circ,label=right:{\small $P(2)$}] (xbar) at (1,0) {};
\node[circ,label=right:{\small $P(3)$}] (u) at (.5,.7) {};
\node[circ,label=right:{\small $P(4)$}] at (.5,1.3) {};
\node[circ,label=right:{\small $P(5)$}] (w) at (.5,1.8) {};

\draw (x) -- (xbar)
(x) -- (u)
(xbar) -- (u)
(u) -- (w);
\end{tikzpicture}
\caption{A long paw $P$.}
\label{fig:longpaw}
\end{subfigure}
\caption{Some special graphs.}
\end{figure}
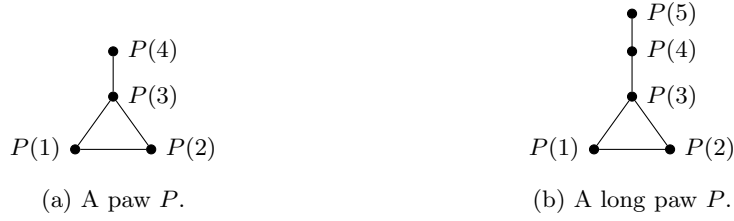 

Let $G$ be a graph with at least two vertices. A set $D \subseteq V(G)$ is \emph{dominating} if every vertex in $V(G) \setminus D$ has a neighbour in $D$. A set $D \subseteq V(G)$ is a \emph{total dominating set} (\emph{TD set} for short) of $G$ if every vertex in $V(G)$ has a neighbour in $D$. The \emph{total domination number $\gamma_t(G)$} of $G$ is the size of a minimum total dominating set of $G$. Note that $\gamma_t(G) \geq 2$ by definition. For every $x \in V(G)$ and every neighbour $y \in N(x) \cap D$, we say that \emph{$y$ dominates $x$} or that \emph{$x$ is dominated by $y$}. A set $D \subseteq V(G)$ is a \emph{semitotal dominating set} (\emph{SD set} for short) of $G$ if $D$ is dominating and every vertex in $D$ is at distance at most two from another vertex in $D$. The \emph{semitotal domination number $\gamma_{t2}(G)$} is the size of a minimum semitotal dominating set of $G$. Note that $\gamma_{t2}(G) \geq 2$ by definition. For every $x \in V(G)$ and every $y \in N[x] \cap D$, we say that \emph{$y$ dominates $x$} or that \emph{$x$ is dominated by $y$} (note that contrary to a total dominating set, $x$ can dominate itself). If $x,y \in D$ and $d(x,y) \leq 2$ then $y$ is called a \emph{witness for $x$}. The set of witnesses of a vertex $x \in D$ is denoted by $w_D(x)$. Note that with this terminology, a semitotal dominating set $D$ is equivalently a dominating set where every vertex in $D$ has a witness. For every $x \in D$, we denote by $PN_D(x) = \{y\in V(G)~|~N(y) \cap D = \{x\}\}$ the \emph{private neighbourhood} of $x$ with respect to $D$. A \emph{friendly triple} is a subset of three vertices $x,y$ and $z$ such that $xy \in E(G)$ and $d_G(y,z) \leq 2$.

Finally, let us define the {\sc Positive Exactly 3-Bounded 1-In-3 3-Sat} problem which will be used in several of our hardness reductions: it is an $\mathsf{NP}$-hard \cite{1IN3} variant of the {\sc 3-Sat} problem where given a formula $\Phi$ in which all literals are positive, every clause contains exactly three literals and every variable appears in exactly three clauses, the problem is to determine whether there exists a truth assignment for $\Phi$ such that each clause contains exactly one true literal.

\subsection{Preliminary results}

In this section, we present some useful technical results.

\begin{lemma}
\label{clm:condct2}
Let $G$ be a graph such that $\gamma_t(G) \geq 3$. Then for any minimum TD set $D$ of $G$, if there exist $x,y,z \in D$ such that $xy \in E(G)$ and $d(z,\{x,y\}) \leq 2$, then $ct_{\gamma_t}(G) \leq 2$.
\end{lemma}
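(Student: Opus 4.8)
The plan is to leverage the characterisation of $ct_{\gamma_t}$ from \Cref{theorem:1totcontracdom}. If some minimum TD set of $G$ contains a (not necessarily induced) $P_3$, then $ct_{\gamma_t}(G) = 1 \leq 2$ by item~(i) and we are done. Hence I would assume henceforth that no minimum TD set of $G$ contains a $P_3$; in particular, $D$ itself contains no $P_3$. By item~(ii), it then suffices to exhibit a TD set of size $\gamma_t(G)+1$ containing a (not necessarily induced) $P_4$, $K_{1,3}$ or $2P_3$.

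First I would rule out that $z$ is adjacent to $x$ or to $y$: if, say, $zx \in E(G)$, then the three distinct vertices $z,x,y$ of $D$ satisfy $zx,xy \in E(G)$ and thus form a $P_3$ in $D$, contradicting the standing assumption (and symmetrically for $zy \in E(G)$). Together with $d(z,\{x,y\}) \leq 2$, this forces $d(z,\{x,y\}) = 2$, so there exist a vertex $w$ and a vertex $t \in \{x,y\}$ with $zw, wt \in E(G)$. The same argument shows $w \notin D$: otherwise $z,w,t$ would be three distinct vertices of $D$ forming a $P_3$.

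I would then consider $D' = D \cup \{w\}$. Since $w \notin D$, we have $|D'| = |D|+1 = \gamma_t(G)+1$, and $D'$ is a TD set: every vertex of $G$ already has a neighbour in $D \subseteq D'$, so adding $w$ preserves total domination. Finally, $D'$ contains a $P_4$: taking $t = y$ (the case $t = x$ being symmetric), the four distinct vertices $x,y,w,z$ satisfy $xy, yw, wz \in E(G)$ and hence form the (not necessarily induced) path $x - y - w - z$. Applying \Cref{theorem:1totcontracdom}(ii) to $D'$ then yields $ct_{\gamma_t}(G) = 2$, so $ct_{\gamma_t}(G) \leq 2$ in all cases.

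I expect the argument to be essentially this short; the only points requiring care are the repeated appeal to the standing assumption that no minimum TD set contains a $P_3$ — which simultaneously forbids $z$ from being adjacent to $\{x,y\}$ and forces the distance-two witness $w$ to lie outside $D$ — and verifying that $x,y,w,z$ are genuinely four distinct vertices, so that they form a bona fide $P_4$ rather than a degenerate configuration. The hypothesis $\gamma_t(G) \geq 3$ is what ensures that decreasing the total domination number by one is achievable at all (so that $ct_{\gamma_t}(G)$ is well defined and at most three), which is implicitly required for \Cref{theorem:1totcontracdom} to apply.
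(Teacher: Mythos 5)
Your proof is correct and follows essentially the same route as the paper's: case on whether $z$ is at distance one or two from $\{x,y\}$, use a $P_3$ inside $D$ together with \Cref{theorem:1totcontracdom}(i) in the first case (and when the common neighbour lies in $D$), and otherwise add the common neighbour $w \notin D$ to obtain a TD set of size $\gamma_t(G)+1$ containing the $P_4$ $xywz$, concluding via \Cref{theorem:1totcontracdom}(ii). Your explicit handling of the logical structure of the iff in item~(ii) (first discharging the case where some minimum TD set contains a $P_3$) is slightly more careful than the paper's phrasing, but the argument is the same.
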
 

\begin{proof}
Suppose that there exist three such vertices $x,y,z \in D$. If $d(z,\{x,y\}) =1$, say $z$ is adjacent to $y$ without loss of generality, then $D$ contains the $P_3$ $xyz$ and thus, $ct_{\gamma_t}(G) =1$ by \Cref{theorem:1totcontracdom}. Suppose next that $d(z,\{x,y\}) =2$, say $d(z,\{x,y\}) = d(z,y)$ without loss of generality, and let $t$ be a common neighbour of $y$ and $z$. If $t$ belongs to $D$ as well, then $D$ contains the $P_3$ $ytz$ and we conclude as previously. Otherwise, $t \notin D$ in which case $D \cup \{t\}$ contains the $P_4$ $xytz$ and thus, $ct_{\gamma_t}(G) \leq 2$ by \Cref{theorem:1totcontracdom}.
\end{proof}

\begin{lemma}
\label{lem:edge}
Let $G$ be a graph such that $\gamma_{t2}(G) \geq 3$. If $G$ has a minimum SD set $D$ such that one of the following holds:
\begin{itemize}
\item[(i)] $D$ contains an edge, or
\item[(ii)] there exists $x \in D$ such that $|w_D(x)| \geq 2$,
\end{itemize}
then $ct_{\gamma_{t2}}(G) \leq 2$.
\end{lemma}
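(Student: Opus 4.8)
The plan is to reduce everything to \Cref{thm:friendlytriple}, using that $ct_{\gamma_{t2}}(G)\le 3$ always holds: it then suffices to exhibit either a minimum SD set containing a friendly triple (giving $ct_{\gamma_{t2}}(G)=1$) or an SD set of size $\gamma_{t2}(G)+1$ containing an ST-configuration (giving $ct_{\gamma_{t2}}(G)\le 2$, since if no minimum SD set contains a friendly triple this yields $ct_{\gamma_{t2}}(G)=2$ by part (ii), and otherwise $ct_{\gamma_{t2}}(G)=1$ by part (i)). I expect the configuration to be $O_4$ of \Cref{fig:conf} in every case, obtained by adjoining a single vertex to $D$; first I would record that $D\cup\{v\}$ is an SD set of size $\gamma_{t2}(G)+1$ for any $v\notin D$ having a neighbour in $D$.

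For hypothesis (ii), I would take $x\in D$ with two distinct witnesses $y,z\in D$. First I would dispose of the easy case: if $x$ is adjacent to one of them, say $y$, then $\{x,y,z\}$ is a friendly triple (the edge $xy$ together with $d(x,z)\le 2$) and $ct_{\gamma_{t2}}(G)=1$. Otherwise both witnesses lie at distance exactly two and $x$ has no neighbour in $D$ (any such neighbour would be a distance-one witness); I would then pick a common neighbour $p\notin D$ of $x$ and $y$ and check that the distinct vertices $y,p,x,z$ realise an $O_4$ in $D\cup\{p\}$, namely the path $y\,p\,x$ (edges $yp$, $px$) together with $z$ at distance two from the endpoint $x$.

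For hypothesis (i), I would take $xy\in E(G)$ with $x,y\in D$ and note that $x,y$ witness one another. If either endpoint has a further witness $z$, then $\{x,y,z\}$ is a friendly triple and we are done; so the substantial case is $w_D(x)=\{y\}$ and $w_D(y)=\{x\}$, meaning the edge is ``isolated'' in the sense that every vertex of $D\setminus\{x,y\}$ lies at distance at least three from both $x$ and $y$. Here I would use $\gamma_{t2}(G)\ge 3$ to guarantee $D\setminus\{x,y\}\neq\emptyset$, set $\ell=d(\{x,y\},D\setminus\{x,y\})\ge 3$, and fix $w\in D\setminus\{x,y\}$ and $s\in\{x,y\}$ realising it (so the other endpoint $s'$ satisfies $d(s',w)\ge\ell$), together with a shortest path $s=v_0,v_1,\dots,v_\ell=w$ whose internal vertices lie outside $D$.

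The step I expect to be the main obstacle is producing a configuration when the edge is isolated, since none of $O_1$--$O_7$ can combine $xy$ with a far-away part of $D$ directly. The device I would rely on is a ``closer endpoint'' argument applied to the internal vertex $v_2\notin D$ (which exists as $\ell\ge 3$): its $D$-neighbour $d_2$ cannot be $s$ (as $d(s,v_2)=2$) and cannot be $s'$ (else $d(s',w)\le 1+(\ell-2)=\ell-1$, contradicting $d(s',w)\ge\ell$), hence $d_2\in D\setminus\{x,y\}$, which forces $\ell\le d(s,d_2)\le d(s,v_2)+1=3$ and therefore $\ell=3$ and $d(v_1,d_2)=2$. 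I would then exhibit the $O_4$ given by the distinct vertices $s',s,v_1,d_2$ in $D\cup\{v_1\}$, namely the path $s'\,s\,v_1$ (using the edges $s's=xy$ and $sv_1$) together with $d_2$ at distance two from the endpoint $v_1$, concluding $ct_{\gamma_{t2}}(G)\le 2$.
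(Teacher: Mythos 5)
Your proof is correct and takes essentially the same route as the paper's: both reduce to \Cref{thm:friendlytriple}, dispose of the cases where a friendly triple already lies in a minimum SD set, and otherwise adjoin a single vertex (a common neighbour in case (ii), the first internal vertex of a shortest path in case (i)) to produce an $O_4$ in an SD set of size $\gamma_{t2}(G)+1$. The only cosmetic difference is in case (i): the paper first bounds $d(w,\{u,v\})\leq 3$ by observing that the distance-two vertex on a shortest path would otherwise be undominated, and then uses $w$ itself as the fourth vertex of the $O_4$, whereas you pin down $\ell=3$ via the dominator $d_2$ of $v_2$ and use $d_2$ in place of $w$ — the same underlying argument.
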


\begin{proof}
Assume first that $G$ has a minimum SD set $D$ containing an edge $uv$. Let $w \in D \setminus \{u,v\}$ be a closest vertex from the edge $uv$, that is, $d(w,\{u,v\}) = \min_{x \in D \setminus \{u,v\}} d(x, \{u,v\})$. Then $d(w,\{u,v\}) \leq 3$: indeed, if $d(w,\{u,v\}) > 3$ then the vertex at distance two from $\{u,v\}$ on a shortest path from $\{u,v\}$ to $w$ is not dominated, by the choice of $w$. Now if $d(w,\{u,v\}) \leq 2$ then $u,v,w$ is a friendly triple of $D$ and thus, $ct_{\gamma_{t2}}(G) = 1$ by \Cref{thm:friendlytriple}. Otherwise, $d(w,\{u,v\}) = 3$ in which case, denoting by $x$ the vertex at distance one from $\{u,v\}$ on a shortest path from $\{u,v\}$ to $w$, $D \cup \{x\}$ contains the $O_4$ $u,v,x,w$ and thus, $ct_{\gamma_{t2}}(G) \leq 2$ by \Cref{thm:friendlytriple} which proves item~(i). 

To prove item (ii), assume that $G$ has a minimum SD set $D$ such that there exists $x \in D$ where $|w_D(x)| \geq 2$, say $y,z \in w_D(x)$. If $d(x,\{y,z\}) = 1$ then we conclude as previously that $ct_{\gamma_{t2}}(G) \leq 2$. Otherwise, $d(x,y) = d(x,z) =2$ in which case, denoting by $t$ a common neighbour of $x$ and $y$, $D \cup \{t\}$ contains the $O_4$ $y,t,x,z$ and thus, $ct_{\gamma_{t2}}(G) \leq 2$ by \Cref{thm:friendlytriple}. 
\end{proof}

\begin{lemma}
\label{lem:noedge}
Let $G$ be a graph such that $ct_{\gamma_{t2}}(G) \leq 2$ and for every minimum SD set $D$ of $G$ the following hold.
\begin{itemize}
\item[(i)] $D$ contains no edge.
\item[(ii)] For every $x \in D$, $|w_D(x)| =1$.
\end{itemize} 
Then every SD set $D$ of $G$ of size $\gamma_{t2}(G)+1$ containing an ST-configuration is either a minimal SD set of $G$ or contains an $O_6$.
\end{lemma}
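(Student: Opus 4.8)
The plan is to prove the only nontrivial half of the dichotomy: assuming that the SD set $D$ (with $|D| = \gamma_{t2}(G)+1$ and containing an ST-configuration) is \emph{not} minimal, I would show that $D$ must contain an $O_6$; if $D$ is minimal the conclusion holds trivially. First I would use non-minimality to fix the structure of $D$. If $D$ is not minimal, some proper subset of $D$ is an SD set; since every SD set has at least $\gamma_{t2}(G)$ vertices while a proper subset of $D$ has at most $\gamma_{t2}(G)$ vertices, this subset is obtained by deleting exactly one vertex $v$. Thus $D = D' \cup \{v\}$ where $D' := D \setminus \{v\}$ is a \emph{minimum} SD set, to which hypotheses (i) and (ii) apply: $D'$ is an independent set in which every vertex has a unique witness, necessarily at distance exactly two (there being no edge inside $D'$).

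The crucial observation is that, since $D'$ is independent, every edge of $G[D]$ is incident to $v$, so $G[D]$ is a star centred at $v$ together with isolated vertices. Two consequences follow. On the configuration side, the solid edges of any ST-configuration contained in $D$ must form a subgraph of a star; this immediately excludes $O_1$, $O_2$, $O_3$ and $O_7$, whose solid edges are disconnected with an edge in each of two distinct components, and leaves only $O_4$, $O_5$ and $O_6$, each of which has a single vertex meeting all of its solid edges---this vertex must therefore be identified with $v$. On the neighbourhood side, if $v$ had three neighbours $u_1, u_2, u_3$ in $D'$, then any two of them would be at distance exactly two (joined through $v$, as $D'$ is independent), so $u_1$ would have at least two witnesses in $D'$, contradicting (ii). Hence $v$ has at most two neighbours in $D'$, which rules out $O_5$ (whose central vertex needs degree three) and forces $v$ to have exactly two neighbours $u_1, u_2$ in $D'$ whenever a configuration is present.

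The remaining and, I expect, most delicate step is to separate $O_4$ from $O_6$, since both are built on a $P_3$ centred at $v$. In an embedding of $O_4$, the two leaves of this $P_3$ are the neighbours $u_1, u_2$ of $v$, and there is a fourth configuration vertex $w \in D$ at distance two from, say, $u_2$; as $v$ is adjacent to $u_2$ we have $w \neq v$, hence $w \in D' \setminus \{u_1, u_2\}$ and $w$ is a witness of $u_2$ in $D'$. But $d(u_1, u_2) = 2$ already makes $u_1$ the unique witness of $u_2$ guaranteed by (ii), forcing $w = u_1$ and contradicting the distinctness of the configuration's vertices. Therefore $O_4$ cannot be embedded, and the ST-configuration that $D$ is assumed to contain must be $O_6$, as required. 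The degenerate cases where $v$ has zero or one neighbour in $D'$ are absorbed automatically, since every $O_i$ has at least two solid edges and hence needs $v$ to have degree at least two in $G[D]$.
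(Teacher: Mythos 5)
Your proposal is correct and follows essentially the same route as the paper's proof: non-minimality yields a removed vertex $v$ such that $D \setminus \{v\}$ is a minimum SD set to which (i) and (ii) apply, the two vertex-disjoint edges of $O_1$, $O_2$, $O_3$ and $O_7$ would force an edge inside $D \setminus \{v\}$, and the unique-witness condition (ii) rules out $O_4$ (the fourth vertex and the other leaf would be two witnesses of the same leaf) and $O_5$ (two leaves would witness the third) exactly as the paper does, leaving $O_6$. The only cosmetic difference is that you package the $O_5$ exclusion as a degree bound on $v$ in $G[D]$, which is the same witness argument in disguise.
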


\begin{proof}
Let $D$ be an SD set of size $\gamma_{t2}(G)+1$ containing an ST-configuration and suppose that $D$ is not minimal. Then there exists $x \in D$ such that $D_x = D \setminus \{x\}$ is a minimum SD set of $G$; in particular, $D_x$ contains no edge by item (i). It follows that $D$ contains no $O_1, O_2, O_3$ and $O_7$: indeed, since these ST-configurations contain two vertex-disjoint edges, if $D$ contains one of them, then $D_x$ must contain at least one edge. Now if $D$ contains an $O_4$ on vertices $u,v,w,t$ where $uv,vw \in E(G)$ and $d(w,t) =2$, necessarily $x = v$ ($D_x$ would otherwise contain an edge); but then, $u,t \in w_{D_x}(w)$, a contradiction to item (ii). Similarly, if $D$ contains an $O_5$ on vertices $u,v,w,t$ where $t$ is the vertex of degree three, necessarily $x=t$ ($D_x$ would otherwise contain an edge); but then, $u,v \in w_{D_x}(w)$, a contradiction to item (ii). Thus, $D$ contains an $O_6$.
\end{proof}

Let us finally remark the following.

\begin{observation}
\label{obs:hard12}
Let $\pi \in \{\gamma_t,\gamma_{t2}\}$ and let $\mathcal{G}$ be a graph class where {\sc Contraction Number($\pi$,3)} is polynomial-time solvable. Then {\sc Contraction Number($\pi$,1)} is $\mathsf{NP}$-hard on $\mathcal{G}$ if and only if {\sc Contraction Number($\pi$,2)} is $\mathsf{coNP}$-hard on $\mathcal{G}$.
\end{observation}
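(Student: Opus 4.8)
The plan is to exploit the fact, recalled in the introduction, that for $\pi \in \{\gamma_t,\gamma_{t2}\}$ one always has $ct_\pi(G) \in \{1,2,3\}$ on connected graphs. Consequently, the three problems {\sc Contraction Number($\pi$,1)}, {\sc Contraction Number($\pi$,2)} and {\sc Contraction Number($\pi$,3)} partition the instances of $\mathcal{G}$ according to the value taken by $ct_\pi$, and exactly one of them answers \yes{} on any given $G$. Writing $\overline{A_2}$ for the complement of {\sc Contraction Number($\pi$,2)}, the $\mathsf{coNP}$-hardness of {\sc Contraction Number($\pi$,2)} is by definition the $\mathsf{NP}$-hardness of $\overline{A_2}$, so the statement reduces to: {\sc Contraction Number($\pi$,1)} is $\mathsf{NP}$-hard on $\mathcal{G}$ if and only if $\overline{A_2}$ is $\mathsf{NP}$-hard on $\mathcal{G}$. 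I would establish both implications by exhibiting polynomial-time many-one reductions, using the assumed polynomial-time algorithm for {\sc Contraction Number($\pi$,3)} as a preprocessing step.

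The core observation is that on the set $\{G \in \mathcal{G} : ct_\pi(G) \leq 2\}$ --- which is recognisable in polynomial time, being precisely the complement of {\sc Contraction Number($\pi$,3)} --- the answers to {\sc Contraction Number($\pi$,1)} and {\sc Contraction Number($\pi$,2)} are exactly complementary, since $ct_\pi(G) = 1$ holds if and only if $ct_\pi(G) \neq 2$ once $ct_\pi(G) \in \{1,2\}$. Thus, for the forward direction, given an instance $G$ of {\sc Contraction Number($\pi$,1)} I would first decide in polynomial time whether $ct_\pi(G) = 3$; if so, $G$ is a \no-instance and I map it to a fixed graph of $\mathcal{G}$ with $ct_\pi = 2$ (a \no-instance of $\overline{A_2}$), and otherwise I output $G$ unchanged, which is correct because then $ct_\pi(G) \in \{1,2\}$ and the two answers coincide. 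The backward direction is symmetric: from an instance $G$ of $\overline{A_2}$, I map $G$ to a fixed graph of $\mathcal{G}$ with $ct_\pi = 1$ whenever $ct_\pi(G) = 3$, and to $G$ itself otherwise.

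The one point requiring care --- and the main, if modest, obstacle --- is the existence of the fixed ``gadget'' graphs inside $\mathcal{G}$: the forward reduction needs some $G_2 \in \mathcal{G}$ with $ct_\pi(G_2) = 2$, and the backward one some $G_1 \in \mathcal{G}$ with $ct_\pi(G_1) = 1$. I would argue that each is guaranteed by the very hardness hypothesis of its direction (up to the standard assumption $\mathsf{P} \neq \mathsf{NP}$, without which the claim is vacuous). Indeed, if no graph of $\mathcal{G}$ had $ct_\pi = 2$, then every $G \in \mathcal{G}$ would satisfy $ct_\pi(G) = 1$ if and only if $ct_\pi(G) \neq 3$, making {\sc Contraction Number($\pi$,1)} polynomial-time solvable and contradicting its $\mathsf{NP}$-hardness; symmetrically, the absence of a graph with $ct_\pi = 1$ would force {\sc Contraction Number($\pi$,2)} to coincide with the complement of {\sc Contraction Number($\pi$,3)}, hence be polynomial-time solvable, contradicting its $\mathsf{coNP}$-hardness. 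With these gadgets fixed, both reductions map \yes-instances to \yes-instances and \no-instances to \no-instances, which completes the argument.
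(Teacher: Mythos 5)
Your proof is correct and follows precisely the idea the paper treats as immediate (the Observation is stated without proof): since $ct_\pi(G) \in \{1,2,3\}$ always holds, the polynomial-time algorithm for \textsc{Contraction Number($\pi$,3)} filters out the $ct_\pi = 3$ instances, and on the rest the answers to \textsc{Contraction Number($\pi$,1)} and \textsc{Contraction Number($\pi$,2)} are exactly complementary. Your additional care — the fixed gadget graphs $G_1, G_2 \in \mathcal{G}$ and the $\mathsf{P} \neq \mathsf{NP}$ assumption — is exactly what is needed to make this a formal many-one reduction, and the assumption is genuinely necessary rather than making the claim merely ``vacuous'': if $\mathsf{P} = \mathsf{NP}$, a class $\mathcal{G}$ realizing only the values $ct_\pi \in \{1,3\}$ (say, two fixed graphs) makes \textsc{Contraction Number($\pi$,1)} $\mathsf{NP}$-hard while \textsc{Contraction Number($\pi$,2)}, having no \textsc{Yes}-instance, cannot be $\mathsf{coNP}$-hard, so the equivalence would actually be false.
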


%------------------------------------------------------------------------------------------------------------------------------------------------------------------------------------

\section{Total Domination}
\label{sec:TD}

In this section, we consider the {\sc Contraction Number($\gamma_t$,$k$)} problem for $k\in\{2,3\}$. In \Cref{sec:tdeasy}, we cover polynomial-time solvable cases and in \Cref{sec:tdhard}, we examine hard cases. The proof of \Cref{thm:dictd2} can be found in \Cref{sec:dictd2}. 

%------------------------------------------------------------------------------------------------------------------------------------------------------------------------------------

\subsection{Polynomial-time solvable cases}
\label{sec:tdeasy}

The algorithms for {\sc Contraction Number($\gamma_t$,2)} and {\sc Contraction Number($\gamma_t$,3)} outlined thereafter will rely on the following key result.

\begin{lemma}
\label{lem:ectdp6kp3}
For every $k \geq 0$, {\sc 2-Edge Contraction($\gamma_t$)} is polynomial-time solvable on $(P_6+kP_3)$-free graphs.
\end{lemma}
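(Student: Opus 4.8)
The plan is to combine the Huang--Xu characterisation (\Cref{theorem:1totcontracdom}) with the structure of $(P_6+kP_3)$-free graphs. Since $ct_{\gamma_t}(G)\le 3$ always holds, deciding \textsc{2-Edge Contraction($\gamma_t$)} amounts to deciding whether $ct_{\gamma_t}(G)\le 2$, and by \Cref{theorem:1totcontracdom} this is equivalent to the disjunction of (A) \emph{some minimum TD set of $G$ contains a (not necessarily induced) $P_3$}, and (B) \emph{some TD set of $G$ of size $\gamma_t(G)+1$ contains a $P_4$, $K_{1,3}$ or $2P_3$}; indeed $ct_{\gamma_t}(G)=1$ iff (A), and $ct_{\gamma_t}(G)=2$ iff $\lnot(\mathrm{A})\wedge(\mathrm{B})$, so $A\vee(\lnot A\wedge B)=A\vee B$. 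I would first record two elementary reformulations. First, a TD set $D$ contains a $P_3$ exactly when $G[D]$ has a vertex of degree at least two, so (A) fails precisely when every minimum TD set induces a perfect matching. Second, if $D$ is a minimum TD set inducing a matching $\{a_1b_1,\dots,a_mb_m\}$, then for every $v\notin D$ the set $D\cup\{v\}$ is a TD set of size $\gamma_t(G)+1$, and a short case check shows it contains a $P_4$, $K_{1,3}$ or $2P_3$ if and only if $v$ has neighbours in at least two distinct edges $a_ib_i$.

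The central difficulty, and what makes the lemma interesting, is that \textsc{Total Domination} is $\mathsf{NP}$-hard on $(P_6+kP_3)$-free graphs, so the algorithm cannot compute $\gamma_t(G)$ or even produce a single minimum TD set; both (A) and (B) refer to (near-)minimum TD sets and hence cannot be tested directly. My strategy is therefore to prove a structural dichotomy: a $(P_6+kP_3)$-free graph $G$ with $ct_{\gamma_t}(G)=3$ is confined to a family that can be \emph{recognised}, and on which $\gamma_t$ can be \emph{evaluated}, in polynomial time, while for every graph outside this family I can exhibit a certificate for (A) or (B).

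Concretely, I would analyse the two constraints defining $ct_{\gamma_t}(G)=3$: every minimum TD set induces a matching, and no TD set of size $\gamma_t(G)+1$ contains a configuration. By the second reformulation above, the latter forces every vertex of $V(G)\setminus D$ to be \emph{private} to a single matching edge of a minimum TD set $D$; writing $W_i=\{a_i,b_i\}\cup\{v : N(v)\cap D\subseteq\{a_i,b_i\}\}$, the graph decomposes into clusters $W_1,\dots,W_m$ that are pairwise almost non-adjacent. The key step is to feed this near-decomposition into $(P_6+kP_3)$-freeness: if $m\ge 2$ then, as $G$ is connected, some edge joins two clusters, and together with two matching edges this yields an induced $P_6$ of the form $b_i-a_i-u-w-a_j-b_j$ (after choosing $u,w$ adjacent to only one side of their edges); any further non-trivial cluster anticomplete to this $P_6$ then contributes an induced $P_3$, so the number of non-trivial clusters is bounded in terms of $k$ and the inter-cluster adjacencies are tightly controlled. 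The remaining work is to show that this bounded, controlled structure makes $\gamma_t(G)$, and the constrained numbers $a(G)=\min\{|D|:D\text{ is a TD set containing a }P_3\}$ and $b(G)=\min\{|D|:D\text{ is a TD set containing a }P_4,K_{1,3}\text{ or }2P_3\}$, polynomially computable by enumerating the interaction of an optimal solution with the bounded family of non-trivial clusters, after which $ct_{\gamma_t}(G)\le 2$ is read off from $a(G)\le\gamma_t(G)$ or $b(G)\le\gamma_t(G)+1$.

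I expect the main obstacle to be condition (B) in its full strength: it quantifies over \emph{all} TD sets of size $\gamma_t(G)+1$, not only those that extend a minimum TD set, so the clean ``$v$ meets two edges'' criterion handles only part of the picture. Ruling out a configuration inside a \emph{minimal} TD set of size $\gamma_t(G)+1$ that contains no minimum TD set requires an exchange argument (an analogue, for $\gamma_t$, of the reduction carried out in \Cref{lem:noedge} for $\gamma_{t2}$) showing that such a set can be transported to the controlled form without destroying its configuration; it is here that I would lean most heavily on the absence of $P_6+kP_3$ to bound how a configuration can be spread across the graph. A secondary, purely bookkeeping obstacle is to treat the finitely many degenerate situations ($\gamma_t(G)\le 3$, trivial clusters, $m=1$) separately, which I would dispatch by direct inspection.
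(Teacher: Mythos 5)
Your opening reductions are sound and match the paper's starting point: the equivalence with $(\mathrm{A})\vee(\mathrm{B})$, the observation that a \no-instance forces every minimum TD set to induce a perfect matching whose edges are pairwise far apart with private neighbourhoods (this is exactly what \Cref{clm:condct2} and \Cref{obs:nocn} give), and the ``$v$ meets two matching edges'' criterion are all correct. The genuine gap is in your central counting step. You claim that after exhibiting one induced $P_6$ across two clusters, ``any further non-trivial cluster anticomplete to this $P_6$ contributes an induced $P_3$, so the number of non-trivial clusters is bounded.'' This fails in precisely the case that constitutes the heart of the paper's proof: a cluster $W_i$ whose vertex set induces a \emph{clique} contains no induced $P_3$ at all, and nothing in the matching structure forbids unboundedly many such clique clusters. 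Moreover, your anticompleteness hypothesis is not available for free: two matching edges must only be at distance $\geq 3$, so private vertices of distinct far clusters may be pairwise adjacent, and such clusters donate no $P_3$ either. In the paper these clique clusters appear concretely as the maximal cliques of $C = V(G)\setminus N[A]$ each containing one matching edge (\Cref{obs:DcK}), and bounding their number is not a $P_3$-packing argument but an exchange argument: when private neighbourhoods of far cliques are adjacent, one builds the independent hitting sets $T_{uv}$ of \Cref{clm:sets} and swaps matching endpoints for them to produce an explicit TD set of size $\gamma_t(G)+1$ containing a $P_4$ (\Cref{clm:yesinst}), which via the iterative procedure of \Cref{clm:barKD} caps $|\overline{\mathcal{K}_D}|$. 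Your sketch contains no mechanism of this kind, and it is the main content of the lemma, not ``tightly controlled'' bookkeeping.

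The second gap is the algorithmic endgame. Your clusters are defined from an unknown minimum TD set, so ``a family that can be recognised, and on which $\gamma_t$ can be evaluated'' is circular as stated, and you give no route to testing condition (B) -- including minimal $(\gamma_t+1)$-size TD sets -- on that family; the analogue of \Cref{lem:noedge} you hope for is not established anywhere. The paper avoids both problems with a different scaffold that your proposal lacks: induction on $k$, with base case that every $P_6$-free graph with $\gamma_t(G)\geq 3$ is a \yes-instance, and inductive step showing that a \no-instance \emph{containing} an induced $P_6+(k-1)P_3$ satisfies $\gamma_t(G)\leq f(k)$ with $f(k)=k^4+4k^2+21k+19$, after which (A) and (B) are decided exactly by brute force over bounded-size sets, while anchor-free graphs are passed to the algorithm for $(P_6+(k-1)P_3)$-free graphs. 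The fixed anchor $A$ is also what makes the counting work -- all private neighbours of far cliques live in $N(A)$, so arguments through $|A|$ (as in \Cref{clm:KD} and \Cref{clm:AcK}) apply -- and your self-built $P_6$ cannot play this role, since you cannot bound how many clique clusters attach to it. (A minor additional flaw: your induced path $b_i\,a_i\,u\,w\,a_j\,b_j$ need not exist when every vertex of a cluster is adjacent to both endpoints of its matching edge; this is repairable, unlike the two gaps above.)
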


\begin{proof}
We prove the result by induction on $k$.\\

\noindent
\textbf{Base Case.} \emph{$k=0$.} Let $G$ be a $P_6$-free graph such that $\gamma_t(G) \geq 3$. We show that $ct_{\gamma_t}(G) \leq 2$, that is, $G$ is a \yes-instance for {\sc 2-Edge Contraction($\gamma_t$)}. To this end, let $D$ be a minimum TD set of $G$. Consider two adjacent vertices $u,v \in D$ and let $w \in D \setminus \{u,v\}$ be a closest vertex from $\{u,v\}$, that is, $d(w,\{u,v\}) = \min_{x \in D\setminus \{u,v\}} d(x,\{u,v\})$. Then $d(w,\{u,v\}) \leq 3$: indeed, if $d(w,\{u,v\}) > 3$ then the vertex at distance two from $\{u,v\}$ on a shortest path from $\{u,v\}$ to $w$ is not dominated. Now if $d(w,\{u,v\}) \leq 2$ then $ct_{\gamma_t}(G) \leq 2$ by \Cref{clm:condct2}; thus, we assume henceforth that $d(w,\{u,v\}) = 3$, say $d(w,\{u,v\}) = d(w,u)$ without loss of generality. 

Let $P= uxyw$ be a shortest path from $u$ to $w$ and let $t \in D$ be a neighbour of $w$. Suppose first that $t$ is nonadjacent to $y$ (note that $t$ is nonadjacent to $x$ as $d(w,\{u,v\}) = \min_{b \in D\setminus \{u,v\}} d(b,\{u,v\}) \allowbreak = 3$). Then any neighbour $a$ of $t$ must be adjacent to $w,y$ or $x$, for $atwyxv$ would otherwise induce a $P_6$ (note indeed that $a$ is nonadjacent to $v$ as $d(w,\{u,v\}) = \min_{b \in D\setminus \{u,v\}} d(b,\{u,v\}) = 3$). But then, $(D \setminus \{t\}) \cup \{x,y\}$ is a TD set of $G$ of size $\gamma_t(G) +1$ containing the $P_4$ $vxyw$ and thus, $ct_{\gamma_t}(G) \leq 2$ by \Cref{theorem:1totcontracdom}. Second, suppose that $t$ is adjacent to $y$. If every neighbour of $v$ is adjacent to $x$ or $y$, then $(D \setminus \{v\}) \cup \{x,y\}$ is a TD set of $G$ of size $\gamma_t(G)+1$ containing the $P_4$ $xywt$ and so, $ct_{\gamma_t}(G) \leq 2$ by \Cref{theorem:1totcontracdom}. Thus, assume that $v$ has a neighbour $z$ which is nonadjacent to both $x$ and $y$. If $z =u$ then, by symmetry, we conclude as in the previous case. Suppose therefore that $z \neq u$. Then every neighbour $a$ of $w$ is adjacent to $z,x$ or $y$, for $awyxvz$ would otherwise induce a $P_6$; and we conclude symmetrically that every neighbour of $t$ is adjacent to $z,x$ or $y$. It then follows that $(D \setminus \{w,t\}) \cup \{x,y,z\}$ is a TD set of $G$ of size $\gamma_t(G)+1$ containing the $P_4$ $yxvz$ and thus, $ct_{\gamma_t}(G) \leq 2$ by \Cref{theorem:1totcontracdom}.\\

\noindent
\textbf{Inductive step.} Let $G$ be a $(P_6+kP_3)$-free graph. We aim to show that if $G$ contains an induced $P_6+(k-1)P_3$ and $G$ is a \no-instance for {\sc 2-Edge Contraction($\gamma_t$)} then $\gamma_t(G)$ is bounded by some function $f$ of $k$ (and $k$ only). Assuming for now that this claim is correct, the following algorithm solves the {\sc 2-Edge Contraction($\gamma_t$)} problem on $(P_6+kP_3)$-free graphs.\\

\begin{itemize}
\item[1.] If $G$ contains no induced $P_6+(k-1)P_3$ then use the algorithm for $(P_6+(k-1)P_3)$-free graphs to determine whether $G$ is a \yes-instance for {\sc 2-Edge Contraction($\gamma_t$)} or not.
\item[2.] Check whether there exists a TD set of $G$ of size at most $f(k)$.
\begin{itemize}
\item[2.1] If the answer is no then output \yes.
\item[2.2] Check whether there exists a minimum TD set of $G$ containing a $P_3$, or a TD set of $G$ of size $\gamma_t(G)+1$ containing a $P_4,K_{1,3}$ or $2P_3$ (see \Cref{theorem:1totcontracdom}) and output the answer accordingly. 
\end{itemize}
\end{itemize}

\bigskip

Now observe that checking whether $G$ is $(P_6+(k-1)P_3)$-free can be done in time $|V(G)|^{O(k)}$ and that step 2 can be done in time $|V(G)|^{O(f(k))}$ by simple brute force. Since {\sc 2-Edge Contraction(\allowbreak $\gamma_t$)} is polynomial-time solvable on $(P_6+(k-1)P_3)$-free graphs by the induction hypothesis, the above algorithm indeed runs in polynomial time (for fixed $k$). The remainder of this proof is devoted to showing that $f(k) =k^4 + 4k^2 + 21k + 19$.\\

Assume henceforth that $G$ contains an induced $P_6+(k-1)P_3$. Let $A \subseteq V(G)$ be a set of vertices such that $G[A]$ is isomorphic to $P_6+(k-1)P_3$, let $B \subseteq V(G) \setminus A$ be the set of vertices at distance one from $A$ and let $C = V(G) \setminus (A \cup B)$. Note that since $G$ is $(P_6+kP_3)$-free, $C$ is a disjoint union of cliques; we denote by $\mathcal{K}$ the set of maximal cliques in $C$. Now let $D$ be a minimum TD set of $G$ such that $|D \cap B|$ is maximum amongst all minimum TD set of $G$. Denote by $\mathcal{K}_D \subseteq \mathcal{K}$ the set of cliques $K \in \mathcal{K}$ such that $D \cap (N(K) \cap B) \neq \varnothing$, and set $\overline{\mathcal{K}_D} = \mathcal{K} \setminus \mathcal{K}_D$. We aim to upperbound $|D \cap N[\mathcal{K}_D]|$ and $|D \cap N[\overline{\mathcal{K}_D}]|$ when $G$ is a \no-instance for {\sc 2-Edge Contraction($\gamma_t$)}. To this end, we first prove the following.

\begin{claim}
\label{obs:DK}
If $G$ is a \no-instance for {\sc 2-Edge Contraction($\gamma_t$)} then for every $K \in \mathcal{K}$, exactly one of the following holds.
\begin{itemize}
\item[(i)] $D \cap V(K) \neq \varnothing$ and $D \cap N[K] = \{x,y\}$ for some edge $xy \in E(G)$.
\item[(ii)] $N[K] \cap D \subseteq B$.
\end{itemize}
\end{claim}

\begin{claimproof}
Assume that $G$ is a \no-instance for {\sc 2-Edge Contraction($\gamma_t$)} and consider a clique $K \in \mathcal{K}$. Then $D \cap N[K] \neq \varnothing$ as the vertices of $K$ must be dominated. Now if $D \cap V(K) \neq \varnothing$, say $x \in D \cap V(K)$, then $x$ has a neighbour $y \in D$ as $x$ must be dominated; but then, $D \cap N[K] = \{x,y\}$ for otherwise $ct_{\gamma_t}(G) \leq 2$ by \Cref{clm:condct2}  (note indeed that every vertex in $N[K]$ is within distance at most two from $x$). Otherwise, $D \cap V(K) = \varnothing$ in which case $D \cap (N(K) \cap B) \neq \varnothing$.
\end{claimproof} 

\begin{claim}
\label{clm:KD}
If $G$ is a \no-instance for {\sc 2-Edge Contraction($\gamma_t$)} then $|D \cap N[\mathcal{K}_D]| \leq 4|A|$.
\end{claim}

\begin{claimproof}
Assume that $G$ is a \no-instance for {\sc 2-Edge Contraction($\gamma_t$)} and suppose for a contradiction that $|D \cap N[\mathcal{K}_D]| > 4|A|$. We contend that at least half of those vertices belong to $B$, that is, $|D \cap (N(\mathcal{K}_D) \cap B)| > 2|A|$.  Indeed, for every $x \in D \cap (N(\mathcal{K}_D) \cap B)$, denote by $\mathcal{K}_D^x$ the set of cliques $K \in \mathcal{K}_D$ such that $x \in N(K)$. Let us first show that for every $x \in D \cap (N(\mathcal{K}_D) \cap B)$, there exists at most one clique $K \in \mathcal{K}_D^x$ such that $D \cap V(K) \neq \varnothing$, and that furthermore, if such a clique exists then it in fact contains only one element of $D$. Consider $x \in D \cap (N(\mathcal{K}_D) \cap B)$. If $|\mathcal{K}_D^x| = 1$ then the result readily follows from \Cref{obs:DK}. Suppose therefore that $|\mathcal{K}_D^x| \geq 2$. Then for all but at most one of the cliques in $\mathcal{K}_D^x$, item (ii) of \Cref{obs:DK} must hold: indeed, if there exist $K_1,K_2 \in \mathcal{K}_D^x$ such that $K_1$ and $K_2$ both satisfy item (i) of \Cref{obs:DK}, then $D$ contains the $P_3$ $y_1xy_2$ where $y_1 \in V(K_1)$ and $y_2 \in V(K_2)$, a contradiction to \Cref{theorem:1totcontracdom}. 

Now observe that for any $x,y \in D \cap (N(\mathcal{K}_D) \cap B)$, if there exist $K_x \in \mathcal{K}_D^x$ and $K_y \in \mathcal{K}_D^y$ such that $D \cap V(K_x) \neq \varnothing$ and $D \cap V(K_y) \neq \varnothing$, then surely $K_x \neq K_y$ as otherwise $|D \cap N[K_x]| \geq 3$, a contradiction to \Cref{obs:DK}(i). Since for every $K \in \mathcal{K}_D$, there exists $x \in D \cap (N(\mathcal{K}_D) \cap B)$ such that $K \in \mathcal{K}_D^x$, by definition of $\mathcal{K}_D$, it follows that 
\begin{equation*}
\begin{split}
|D \cap (N(\mathcal{K}_D) \cap B)|~&\geq \sum_{x \in D \cap (N(\mathcal{K}_D) \cap B)} |D \cap \bigcup_{K \in \mathcal{K}_D^x} V(K)|\\ 
&\geq |D \cap \bigcup_{x \in D \cap (N(\mathcal{K}_D) \cap B)}  \bigcup_{K \in \mathcal{K}_D^x} V(K)|\\
&\geq |D \cap \bigcup_{K \in \mathcal{K}_D} V(K)|
\end{split}
\end{equation*}
But
\begin{equation*}
\begin{split}
|D \cap N[\mathcal{K}_D]| &= |D \cap (N(\mathcal{K}_D) \cap B)| + |D \cap \bigcup_{K \in \mathcal{K}_D} V(K)|
\end{split}
\end{equation*}
and so, we conclude that
\begin{equation*}
\begin{split}
2|D \cap (N(\mathcal{K}_D) \cap B)| &\geq |D \cap N[\mathcal{K}_D]| > 4|A|
\end{split}
\end{equation*}
as claimed. It follows that there must exist at least three vertices $x,y,z \in D \cap (N(\mathcal{K}_D )\cap B)$ such that $x,y$ and $z$ have a common neighbour in $A$: indeed, if no such three vertices exist then every vertex in $A$ has at most two neighbours in $D \cap (N(\mathcal{K}_D )\cap B)$ and so, $|D \cap (N(\mathcal{K}_D )\cap B)| \leq 2|A|$, a contradiction to the above. This implies, in particular, that for every $u,v \in \{x,y,z\}$, $d(u,v) \leq 2$; but $x$ must have a neighbour in $D$ (possibly $y$ or $z$) and so, $ct_{\gamma_t}(G) \leq 2$ by \Cref{clm:condct2}, a contradiction to our assumption. Therefore, $|D \cap N[\mathcal{K}_D]| \leq 4|A|$.
\end{claimproof}

Now note that since for every $K \in \overline{\mathcal{K}_D}$, $D \cap (N(K) \cap B) = \varnothing$ by definition, the following ensues from \Cref{obs:DK}.

\begin{observation}
\label{obs:DcK}
If $G$ is a \no-instance for {\sc 2-Edge Contraction($\gamma_t$)} then for every $K \in \overline{\mathcal{K}_D}$, $|D \cap N[K]| = |D \cap V(K)|= 2$.
\end{observation}

For every $K \in \overline{\mathcal{K}_D}$, let us denote by $x_Ky_K \in E(G)$ the edge contained in $V(K) \cap D$ when $G$ is a \no-instance for {\sc 2-Edge Contraction($\gamma_t$)}. Observe that by \Cref{clm:condct2}, the following holds.

\begin{observation}
\label{obs:nocn}
If $G$ is a \no-instance for {\sc 2-Edge Contraction($\gamma_t$)} then for every clique $K \in \overline{\mathcal{K}_D}$, $d(\{x_K,y_K\}, D \setminus \{x_K,y_K\}) > 2$.
\end{observation}

\begin{claim}
\label{clm:PN}
If $G$ is a \no-instance for {\sc 2-Edge contraction($\gamma_t$)} then for every $K \in \overline{\mathcal{K}_D}$, the following hold.
\begin{itemize}
\item[(i)] For every $u \in \{x_K,y_K\}$, $PN_D(u) \neq \varnothing$. 
\item[(ii)] For $u \in \{x_K,y_K\}$ and $v \in \{x_K,y_K\} \setminus \{u\}$, no vertex of $PN_D(u)$ is complete to $PN_D(v)$.
\end{itemize}
\end{claim}

\begin{claimproof}
Assume that $G$ is a \no-instance for {\sc 2-Edge Contraction($\gamma_t$)} and consider a clique $K \in \overline{\mathcal{K}_D}$. Observe first the $PN_D(x_K) \cup PN_D(y_K) \subseteq B$ as $K$ is a clique. Furthermore, we may assume that $(N(x_K) \cup N(y_K)) \cap B \neq \varnothing$ for it otherwise suffices to consider, in place of $D$, the TD set $(D \setminus \{x_K\}) \cup \{x\}$, where $x \in V(K)$ has at least one neighbour in $B$. Now suppose that $PN_D(u) = \varnothing$ for some $u \in \{x_K,y_K\}$, say $PN_D(x_K) = \varnothing$ without loss of generality. Then $N(y_K) \cap B \neq \varnothing$: if not, then $N(x_K) \cap B \neq \varnothing$ by the above, and since no vertex in $N(x_K) \cap B$ can be adjacent to another vertex in $D \setminus \{x_K,y_K\}$ by \Cref{obs:nocn}, necessarily $N(x_K) \cap B \subseteq PN_D(x_K)$, a contradiction to our assumption. But then, letting $y \in B$ be a neighbour of $y_K$, the set $(D \setminus \{x_K\}) \cup \{y\}$ is a minimum TD set of $G$ containing strictly more vertices from $B$ than $D$, a contradiction to the choice of $D$. Thus, $PN_D(u) \neq \varnothing$ for every $u \in \{x_K,y_K\}$. Now if for some $u \in \{x_K,y_K\}$, there exists $y \in PN_D(u)$ such that $y$ is complete to $PN_D(v)$ where $v \in \{x_K,y_K\} \setminus \{u\}$, then the set $(D \setminus \{v\}) \cup \{y\})$ is a minimum TD set of $G$ containing strictly more vertices from $B$ than $D$, a contradiction to the choice of $D$.  
\end{claimproof}

\begin{claim}
\label{clm:AcK}
If $G$ is a \no-instance for {\sc 2-Edge Contraction($\gamma_t$)} and there exists a set $S \subseteq \overline{\mathcal{K}_D}$ such that for every $K,K' \in S$, $PN_D(x_K) \cup PN_D(y_K) \cdots PN_D(x_{K'}) \cup PN_D(y_{K'})$ then $|S| \leq |A|$.
\end{claim}

\begin{claimproof}
Assume that $G$ is a \no-instance for {\sc 2-Edge Contraction($\gamma_t$)} and there exists such a set $S \subseteq \overline{\mathcal{K}_D}$. Suppose for a contradiction that $|S| > |A|$. Then there must exist $u \in PN_D(x_K) \cup PN_D(y_K)$ and $v \in PN_D(x_{K'}) \cup PN_D(y_{K'})$ for two distinct cliques $K,K' \in S$, such that $u$ and $v$ have a common neighbour $w \in A$. Then for all but at most $k-1$ cliques $K'' \in S$, $w$ is complete to $PN_D(x_{K''}) \cup PN_D(y_{K''})$: indeed, if there exist at least $k$ cliques $K_1,\ldots,K_k \in S$ such that for every $i \in [k]$, $w$ is nonadjacent to a vertex $u_i \in PN_D(x_{K_i}) \cup PN_D(y_{K_i})$, then $\{x_K,y_K,u,w,v,x_{K'},y_{K'}\} \cup \bigcup_{i \in [k]} \{u_i,x_{K_i},y_{K_i}\}$ induces a $P_7 + kP_3$, a contradiction. Since $|S| > |A| = 6+3(k-1)$, it follows that there are at least two cliques $K_1,K_2 \in S \setminus \{K,K'\}$ such that $w$ is complete to $PN_D(x_{K_1}) \cup PN_D(y_{K_1})$ and to $PN_D(x_{K_2}) \cup PN_D(y_{K_2})$. However, by \Cref{clm:PN}(i), $PN_D(y_{K_1}) \neq \varnothing$ and $PN_D(y_{K_2}) \neq \varnothing$ and so, the set $(D \setminus \{x_{K_1},x_{K_2}\}) \cup \{w,u_1,u_2\}$ where $u_1 \in PN_D(y_{K_1})$ and $u_2 \in PN_D(y_{K_2})$, is a TD set of $G$ of size $\gamma_t(G)+1$ containing the $P_4$ $y_{K_1}u_1wu_2$, a contradiction to \Cref{theorem:1totcontracdom}.
\end{claimproof}

\begin{claim}
\label{clm:sets}
Let $K,K' \in \overline{\mathcal{K}_D}$. If there exist $u \in PN_D(x_K) \cup PN_D(y_K)$ and $v \in PN_D(x_{K'}) \cup PN_D(y_{K'})$ such that $uv \in E(G)$, then there exist a set $S_{uv} \subseteq \overline{\mathcal{K}_D} \setminus \{K,K'\}$ 
and a set $T_{uv} \subseteq \bigcup_{L \in S_{uv}} PN_D(x_L) \cup PN_D(y_L)$ such that the following hold.
\begin{itemize}
\item[(i)] For every $L \in S_{uv}$, every vertex in $PN_D(x_L) \cup PN_D(y_L)$ is adjacent to at least one vertex of $T_{uv} \cup \{u,v\}$.
\item[(ii)] $|\overline{\mathcal{K}_D} \setminus S_{uv}| \leq 2 + k(k-1)/2$.
\end{itemize}
\end{claim} 

\begin{claimproof}
Assume that there exist $u \in PN_D(x_K) \cup PN_D(y_K)$ and $v \in PN_D(x_{K'}) \cup PN_D(y_{K'})$ such that $uv \in E(G)$. Let $T \subseteq \bigcup_{L \in \overline{\mathcal{K}_D} \setminus \{K,K'\}} PN_D(x_L) \cup PN_D(y_L)$ be a maximum size independent set such that $T \cdots \{u,v\}$ and for every $L \in \overline{\mathcal{K}_D}$, $|T \cap (PN_D(x_L) \cup PN_D(y_L))| \leq 1$. Further denote by $S = \{L \in \overline{\mathcal{K}_D}~|~ T \cap (PN_D(x_L) \cup PN_D(y_L)) \neq \varnothing\} \cup \{K,K'\}$. Observe that since $\{u,v\} \cup T \cup \bigcup_{L\in S} \{x_L,y_L\}$ induces $P_6+|T|P_3$, necessarily $|T| \leq k-1$ and thus, $|S| \leq k+1$. Now consider the sequence of sets constructed according to the following procedure.\\

\begin{itemize}
\item[1.] Initialise $i=1$, $T_1 = T$ and $R_1 = S_1 = S$.
\item[2.] Increase $i$ by one.
\item[3.] Let $T_i \subseteq \bigcup_{L \in \overline{\mathcal{K}_D} \setminus S_{i-1}} PN_D(x_L) \cup PN_D(y_L)$ be a maximum size independent set such that $T_i \cdots \{u,v\}$ and for every $L \in \overline{\mathcal{K}_D} \setminus S_{i-1}$, $|T_i \cap (PN_D(x_L) \cup PN_D(y_L))| \leq 1$. Set $R_i = \{L \in \overline{\mathcal{K}_D}~|~ T_i \cap (PN_D(x_L) \cup PN_D(y_L)) \neq \varnothing\}$ and $S_i = S_{i-1} \cup R_i$.
\item[4.] If $|T_i| = |T_{i-1}|$ then stop the procedure.
\item[5.] Return to step 2. 
\end{itemize}

\bigskip

\noindent
Consider the value of $i$ at the end of the procedure (note that $i \geq 2$). Let us show that we may take $$T_{uv} = T_{i-1} \cup T_i$$ and 
$$
S_{uv} = \left\{
    \begin{array}{ll}
        \overline{\mathcal{K}_D} \setminus S_{i-2} & \mbox{if } i > 2 \\
        \overline{\mathcal{K}_D} \setminus \{K,K'\} & \mbox{otherwise.}
    \end{array}
\right.
 $$
Observe first that by construction, for every $L \in \overline{\mathcal{K}_D} \setminus S_i$ and every vertex $x \in PN_D(x_L) \cup PN_D(y_L)$, $x$ is adjacent to at least one vertex in $T_i \cup \{u,v\}$ for otherwise, the procedure would have output $T_i \cup \{x\}$ in place of $T_i$. Similarly, for every $L \in R_i$ and every vertex $x \in PN_D(x_L) \cup PN_D(y_L)$, $x$ is adjacent to at least one vertex in $T_{i-1} \cup \{u,v\}$ for otherwise, the procedure would have output $T_{i-1} \cup \{x\}$ in place of $T_{i-1}$; and for $L \in R_{i-1}$ and every vertex $x \in PN_D(x_L) \cup PN_D(y_L)$, $x$ is adjacent to at least one vertex in $T_i \cup \{u,v\}$ for otherwise, the procedure would have output $T_i \cup \{x\}$ in place of $T_{i-1}$ (recall that by construction $|T_{i-1}| = |T_i|$). In particular, every vertex in $T_i$ is adjacent to at least one vertex in $T_{i-1}$ and thus, item (i) holds true. Now for every $1 \leq p < q \leq i-1$, $|T_q| < |T_p|$ by construction; and since $|T_1| \leq k-1$, it follows that $i \leq k+1$ and for every $1 \leq p \leq i-1$, $|T_p| \leq k-p$. Thus, if $i > 2$ then
\[
|S_{i-2}| = 2 + \sum_{1 \leq p \leq i-2} |T_p| \leq 2 + \sum_{1 \leq p \leq i-2} k-p \leq 2 + \frac{k(k-1)}{2}
\]
and so, item (ii) holds true.
\end{claimproof}

In the following, for any $K,K' \in \overline{\mathcal{K}_D}$, $u \in PN_D(x_K) \cup PN_D(y_K)$ and $v \in PN_D(x_{K'}) \cup PN_D(y_{K'})$ such that $uv \in E(G)$, we denote by $S_{uv} \subseteq \overline{\mathcal{K}_D}$ and $T_{uv} \subseteq B$ the two sets given by \Cref{clm:sets}. 

\begin{claim}
\label{clm:yesinst}
If there exist $K_1,K'_1,K_2,K'_2 \in \overline{\mathcal{K}_D}$ such that 
\begin{itemize}
\item[(i)] $K_2,K'_2 \in S_{u_1u'_1}$ for some $u_1 \in PN_D(x_{K_1}) \cup PN_D(y_{K_1})$ and $u'_1 \in PN_D(x_{K'_1}) \cup PN_D(y_{K'_1})$ where $u_1u'_1 \in E(G)$, and
\item[(ii)] one of $K_1$ and $K'_1$ belongs to $S_{u_2u'_2}$ for some $u_2 \in PN_D(x_{K_2}) \cup PN_D(y_{K_2})$ and $u'_2 \in PN_D(x_{K'_2}) \cup PN_D(y_{K'_2})$ where $u_2u'_2 \in E(G)$,
\end{itemize}
then $ct_{\gamma_t}(G) \leq 2$.
\end{claim}

\begin{claimproof}
Assume that four such cliques $K_1,K'_1,K_2,K'_2 \in \overline{\mathcal{K}_D}$ exist and let $u_1,u'_1,u_2,u'_2$ be the vertices given by items (i) and (ii). Then for every $z \in T_{u_1u'_1}$, there exists, by construction, a clique $L_z \in S_{u_1u'_1}$ such that $z$ is a private neighbour of $x_{L_z}$ or $y_{L_z}$: let us denote by $w_z \in \{x_{L_z},y_{L_z}\}$ the vertex such that $z$ is not a private neighbour of $w_z$. Similarly, for every $z \in T_{u_2u'_2}$, there exists a clique $L_z \in S_{u_2u'_2}$ such that $z$ is a private neighbour of $x_{L_z}$ or $y_{L_z}$: let us denote by $w_z \in \{x_{L_z},y_{L_z}\}$ the vertex such that $z$ is not a private neighbour of $w_z$. Now assume without loss of generality that for every $i \in [2]$, $u_i \in PN_D(y_{K_i})$ and $u'_i \in PN_D(y_{K'_i})$, and that furthermore, $K'_1 \in S_{u_2u'_2}$. We contend that the set $D' = (D \setminus (\{w_z~|~ z \in T_{u_1u'_1} \cup T_{u_2u'_2}\} \cup \{x_{K'_1},x_{K_2},x_{K'_2}\})) \cup T_{u_1u'_1} \cup T_{u_2u'_2} \cup \{u_1,u'_1,u_2,u'_2\}$ is a TD set of $G$. Indeed, by \Cref{clm:sets}(i), every vertex in $PN_D(w_z)$ for $z \in T_{u_1u'_1}$, is adjacent to at least one vertex in $T_{u_1u'_1} \cup \{u_1,u'_1\}$; and similarly, every vertex in $PN_D(w_z)$ for $z \in T_{u_2u'_2}$, is adjacent to at least one vertex in $T_{u_2u'_2} \cup \{u_2,u'_2\}$. Furthermore, by \Cref{clm:sets}(i), since $K_2,K'_2 \in S_{u_1u'_1}$ and $K'_1 \in S_{u_2u'_2}$, every vertex in $PN_D(x_{K_2}) \cup PN_D(x_{K'_2})$ is adjacent to at least one vertex in $T_{u_1u'_1} \cup \{u_1,u'_1\}$, and every vertex in $PN_D(x_{K'_1})$ is adjacent to at least one vertex in $T_{u_2u'_2} \cup \{u_2,u'_2\}$. It follows that $D'$ is indeed a TD set of $G$; and since $|D'| = \gamma_t(G) + 1$ and $D'$ contains the $P_4$ $y_{K_1}u_1u'_1y'_{K'_1}$, we conclude by \Cref{theorem:1totcontracdom} that $ct_{\gamma_t}(G) \leq 2$.
\end{claimproof}

\begin{claim}
\label{clm:barKD}
If $G$ is a \no-instance for {\sc 2-Edge Contraction($\gamma_t$)} then $|\overline{\mathcal{K}_D}| \leq k^2(k(k-1)/2+2)+|A| -1$.
\end{claim}

\begin{claimproof}
Assume that $G$ is a \no-instance for {\sc 2-Edge Contraction($\gamma_t$)} and suppose to the contrary that $|\overline{\mathcal{K}_D}| > k^2(k(k-1)/2+2)+|A| -1$. Let us show that there then exist four cliques in $\overline{\mathcal{K}_D}$ satisfying items (i) and (ii) of \Cref{clm:yesinst}, which would contradict the fact that $ct_{\gamma_t}(G) > 2$. Let $K_1,K'_1 \in \overline{\mathcal{K}_D}$ be two cliques for which there exist $u_1 \in PN_D(x_{K_1}) \cup PN_D(y_{K_1})$ and $u'_1 \in PN_D(x_{K'_1}) \cup PN_D(y_{K'_1})$ such that $u_1u'_1 \in E(G)$ (the existence of such cliques is guaranteed by \Cref{clm:AcK}). We claim that the algorithm below always outputs four cliques of $\overline{\mathcal{K}_D}$ satisfying items (i) and (ii) of \Cref{clm:yesinst}.\\

\begin{itemize}
\item[1.] Initialise $i=1$ and $C_1 = S_{u_1u'_1}$.
\item[2.] Increase $i$ by one.
\item[3.] If there exist $K,K' \in C_{i-1}$ with $u \in PN_D(x_K) \cup PN_D(y_K)$ and $u' \in PN_D(x_{K'}) \cup PN_D(y_{K'})$ such that 
\begin{itemize}
\item[$\cdot$] $uu' \in E(G)$ and 
\item[$\cdot$] $K_{i-1} \in S_{uu'}$ or $K'_{i-1} \in S_{uu'}$,
\end{itemize}
then output $K_{i-1},K'_{i-1},K,K'$.
\item[4.] If $|C_{i-1}| \geq |A|$ then let $K_i,K'_i \in C_{i-1}$ be two cliques for which there exist $u_i \in PN_D(x_{K_i}) \cup PN_D(y_{K_i})$ and $u'_i \in PN_D(x_{K'_i}) \cup PN_D(y_{K'_i})$ such that $u_iu'_i \in E(G)$; and set $C_i = C_{i-1} \cap S_{u_iu'_i}$.
\item[5.] If $|C_{i-1}| < |A|$ then set $C_i = C_{i-1}$.
\item[6.] Return to step 2.
\end{itemize}

\bigskip

\noindent
Before showing correctness of the above algorithm, let us first note that for every $j \geq 1$ satisfying the condition in step 4 (that is, $|C_j| \geq |A|$), the existence of the cliques $K_j$ and $K'_j$ is guaranteed by \Cref{clm:AcK}. Now let us show that if the above algorithm terminates then its output is indeed as claimed. Assume that the algorithm terminates when the counter $i$ equals some value $j \geq 1$ and let $K_{j-1},K'_{j-1},K,K'$ be the output. Observe first that since the algorithm terminates, the condition in step 5 is never satisfied during its run (the algorithm would have otherwise looped indefinitely); in particular $C_{j-1} = \bigcap_{1 \leq \ell \leq j-1} S_{u_\ell u'_\ell}$. Since by construction, $K$ and $K'$ both belong to $C_{j-1}$, they belong in particular to $S_{u_{j-1}u'_{j-1}}$ and so, item (i) of \Cref{clm:yesinst} indeed holds. Furthermore, by construction, there exist $u \in PN_D(x_K) \cup PN_D(y_K)$ and $u' \in PN_D(x_{K'}) \cup PN_D(y_{K'})$ with $uu'\in E(G)$, such that $K_{j-1} \in S_{uu'}$ or $K'_{j-1} \in S_{uu'}$, and thus, item (ii) of \Cref{clm:yesinst} holds true as well. 

There remains to show that the algorithm indeed terminates. To this end, let us first show by induction that for every $j \in [k^2+1]$, the condition in step 4 is always satisfied, that is, $|C_{j-1}| \geq |A|$. More specifically, we show that for every $j \in [k^2]$, $|C_j| \geq (k^2-j)(k(k-1)/2 + 2) + |A|$. For $j=1$, the result readily follows from the fact that $|\overline{\mathcal{K}_D}| = |C_1| + |\overline{\mathcal{K}_D} \setminus C_1| \geq k^2(k(k-1)/2+2) + |A|$ by assumption and $|\overline{\mathcal{K}_D} \setminus C_1| \leq k(k-1)/2 +2$ by \Cref{clm:sets}(ii). For $j > 1$, $|C_j| = |S_{u_ju'_j} \cap C_{j-1}|$ by definition and so, 
\begin{equation*}
\begin{split}
|C_j| &= |\overline{\mathcal{K}_D}| - |\overline{\mathcal{K}_D} \setminus C_j|\\
&= |\overline{\mathcal{K}_D}| - |(\overline{\mathcal{K}_D}\setminus S_{u_ju'_j}) \cup (\overline{\mathcal{K}_D} \setminus C_{j-1})|\\
&\geq |\overline{\mathcal{K}_D}| - (|\overline{\mathcal{K}_D}\setminus S_{u_ju'_j}| + |\overline{\mathcal{K}_D} \setminus C_{j-1}|)\\
&= |C_{j-1}| - |\overline{\mathcal{K}_D}\setminus S_{u_ju'_j}|.
\end{split}
\end{equation*}
But $|C_{j-1}| \geq (k^2-(j-1))(k(k-1)/2 +2) + |A|$ by the induction hypothesis and $|\overline{\mathcal{K}_D}\setminus S_{u_ju'_j}| \geq k(k-1)/2+2$ by \Cref{clm:sets}(ii) and thus, $|C_j| \geq (k^2-j)(k(k-1)/2+2) + |A|$ as claimed.

Now suppose to the contrary that the algorithm has not yet terminated by the time the counter $i$ reaches the value $t = k(k-1)/4 + 3$. Then for every $j < t$, $K_j,K'_j \notin S_{u_tu'_t}$: indeed, if there exist indices $j \in [t-1]$ such that one of $K_j$ and $K'_j$ belongs to $S_{u_tu'_t}$ then, letting $\ell$ be the smallest such index, we have $C_{t-1} \subseteq C_\ell$ and $K_t,K'_t \in C_{t-1}$; but then, the algorithm would have terminated after setting the counter $i$ to $\ell+1$ and output $K_\ell,K'_\ell,K=K_t,K'=K'_t$. It follows that $\{K_j,K'_j~|~j\in [t-1]\} \subseteq \overline{\mathcal{K}_D} \setminus S_{u_tu'_t}$; but $|\{K_j,K'_j~|~j\in [t-1]\}| = 2(t-1) = k(k-1)/2 + 4$, a contradiction to \Cref{clm:sets}(ii).
\end{claimproof}

To conclude, assume that $G$ is a \no-instance for {\sc 2-Edge Contraction($\gamma_t$)}. Then by \Cref{obs:DcK} and \Cref{clm:barKD}, $$|D \cap N[\overline{\mathcal{K}_D}]| = 2|\overline{\mathcal{K}_D}| \leq k^2(k(k-1)+4) + 2(|A|-1)$$ and thus, combined with \Cref{clm:KD}, we conclude that $$|D \cap N[C]| = |D \cap N[\mathcal{K}_D]| + |D \cap N[\overline{\mathcal{K}_D}| \leq k^2(k(k-1)+4) + 6|A| -2.$$ Now observe that $D'=D \setminus N[C]$ has size at most $|A|$: indeed, since $D'$ dominates (only) vertices from $A \cup B$, if $|D'| > |A|$ then $(D \setminus D') \cup A$ is a TD set of $G$ of size strictly less than that of $D$, a contradiction. It follows that $$\gamma_t(G) = |D \cap N[C]| + |D \setminus N[C]| \leq k^2(k(k-1)+4) + 7|A| -2$$ 
and so, we may take $f(k) =k^4 + 4k^2 + 21k + 19$ as claimed.
\end{proof}

Since for any graph $G$, $G$ is a \yes-instance for {\sc 2-Edge Contraction($\gamma_t$)} if and only if $G$ is a \no-instance for {\sc Contraction Number($\gamma_t$,3)}, the following ensues from \Cref{lem:ectdp6kp3}.

\begin{corollary}
\label{lem:p6kp3td}
For every $k \geq 0$, {\sc Contraction Number($\gamma_t$,3)} is polynomial-time solvable on $(P_6+kP_3)$-free graphs.
\end{corollary}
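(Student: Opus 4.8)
The plan is to prove \Cref{lem:p6kp3td} as an essentially immediate corollary of \Cref{lem:ectdp6kp3}, exploiting the combinatorial dichotomy afforded by the fact that $ct_{\gamma_t}$ is bounded by three. The starting observation is the equivalence stated just before the corollary: a graph $G$ with $\gamma_t(G) \geq 3$ satisfies $ct_{\gamma_t}(G) = 3$ if and only if $ct_{\gamma_t}(G) > 2$, i.e.\ $G$ is a \no-instance for {\sc 2-Edge Contraction($\gamma_t$)}. Thus deciding {\sc Contraction Number($\gamma_t$,3)} reduces, up to negation of the output, to deciding {\sc 2-Edge Contraction($\gamma_t$)}, which \Cref{lem:ectdp6kp3} solves in polynomial time on $(P_6+kP_3)$-free graphs for each fixed $k$.

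The steps I would carry out are as follows. First I would dispose of the degenerate cases: if $\gamma_t(G) \leq 2$ then $ct_{\gamma_t}(G) \leq 1 < 3$ (since $\gamma_t \geq 2$ always, and a graph with $\gamma_t(G)=2$ contains an edge in a minimum TD set, giving $ct_{\gamma_t}(G)=1$ by \Cref{theorem:1totcontracdom}), so such $G$ is trivially a \no-instance for {\sc Contraction Number($\gamma_t$,3)}; and $\gamma_t(G)$ can be checked to be at most $2$ in polynomial time by brute force over pairs of vertices. Having reduced to $\gamma_t(G) \geq 3$, I would invoke \Cref{lem:ectdp6kp3} to decide in polynomial time whether $G$ is a \yes- or \no-instance for {\sc 2-Edge Contraction($\gamma_t$)}. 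Finally I would output the answer to {\sc Contraction Number($\gamma_t$,3)} by negating: $G$ is a \yes-instance for {\sc Contraction Number($\gamma_t$,3)} precisely when $ct_{\gamma_t}(G) = 3$, which (given $\gamma_t(G) \geq 3$, so that $ct_{\gamma_t}$ is well-defined and lies in $\{1,2,3\}$) holds if and only if $G$ is a \no-instance for {\sc 2-Edge Contraction($\gamma_t$)}.

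There is essentially no hard part here, since all the combinatorial work lies in \Cref{lem:ectdp6kp3}; the only point requiring a moment's care is the direction of the reduction. One must verify that the complement of a \yes-instance for {\sc 2-Edge Contraction($\gamma_t$)} is exactly a \no-instance, which relies on $ct_{\gamma_t}$ taking no value larger than three. This is the content of the cited bound from \cite{semitotcon,HX10} recalled in the introduction: for every graph $G$ (with $\gamma_t(G) \geq 2$) one has $ct_{\gamma_t}(G) \leq 3$, so the events $ct_{\gamma_t}(G) \leq 2$ and $ct_{\gamma_t}(G) = 3$ partition all instances. Since negating the polynomial-time algorithm of \Cref{lem:ectdp6kp3} and prepending the constant-time check $\gamma_t(G) \leq 2$ both preserve polynomial running time (for fixed $k$), the resulting procedure decides {\sc Contraction Number($\gamma_t$,3)} in polynomial time on $(P_6+kP_3)$-free graphs, which is the claim.
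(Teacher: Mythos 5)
Your proposal is correct and is essentially the paper's own argument: the paper derives the corollary in one line from the equivalence that $G$ is a \yes-instance for {\sc 2-Edge Contraction($\gamma_t$)} if and only if it is a \no-instance for {\sc Contraction Number($\gamma_t$,3)} (valid because $ct_{\gamma_t}$ never exceeds three), combined with the polynomial-time algorithm of \Cref{lem:ectdp6kp3}. One small correction to your degenerate case: when $\gamma_t(G)=2$ it is not true that $ct_{\gamma_t}(G)=1$, since \Cref{theorem:1totcontracdom}(i) requires a $P_3$ in a minimum TD set (not merely an edge), and in fact $\gamma_t$ can never be decreased below $2$, so no sequence of contractions achieves $\gamma_t(H)=\gamma_t(G)-1$ at all; your conclusion that such a $G$ is a \no-instance for {\sc Contraction Number($\gamma_t$,3)} nevertheless stands, so the algorithm itself is unaffected.
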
 

\begin{lemma}
\label{lem:easytd2}
{\sc Contraction Number($\gamma_t$,2)} is polynomial-time solvable on $H$-free graphs if $H \subseteq_i P_5+tK_1$ for some $t \geq 0$, or $H \subseteq_i P_4+tP_3$ for some $t \geq 0$.
\end{lemma}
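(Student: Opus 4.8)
The statement asserts polynomial-time solvability of {\sc Contraction Number($\gamma_t$,2)} on two families of $H$-free graphs. Recall from \Cref{theorem:1totcontracdom} that $ct_{\gamma_t}(G) = 2$ holds if and only if no minimum TD set contains a $P_3$ (equivalently $ct_{\gamma_t}(G) \neq 1$) and some TD set of size $\gamma_t(G)+1$ contains a $P_4$, $K_{1,3}$ or $2P_3$. Thus deciding $ct_{\gamma_t}(G) = 2$ amounts to deciding both $ct_{\gamma_t}(G) \neq 1$ and $ct_{\gamma_t}(G) \leq 2$ (the latter being exactly {\sc 2-Edge Contraction($\gamma_t$)}). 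The plan is therefore to reduce the problem to two subproblems and solve each on the relevant class.

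\textbf{The two regimes.}
The classes split naturally. For $H \subseteq_i P_5 + tK_1$, I would invoke \Cref{thm:dictd1}: {\sc Contraction Number($\gamma_t$,1)} is polynomial-time solvable on such $H$-free graphs, so I can decide $ct_{\gamma_t}(G) = 1$ efficiently. Since $P_5 + tK_1 \subseteq_i P_6 + kP_3$ for suitable $k$ (indeed $tK_1 \subseteq_i tP_3$ and $P_5 \subseteq_i P_6$, so every $(P_5+tK_1)$-free graph is $(P_6+tP_3)$-free), \Cref{lem:p6kp3td} gives that {\sc Contraction Number($\gamma_t$,3)} is polynomial-time solvable, i.e.\ {\sc 2-Edge Contraction($\gamma_t$)} is decidable in polynomial time. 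Combining, I decide $ct_{\gamma_t}(G) = 2$ by checking $ct_{\gamma_t}(G) \leq 2$ and $ct_{\gamma_t}(G) \neq 1$. For $H \subseteq_i P_4 + tP_3$, the same two ingredients apply: \Cref{thm:dictd1} handles the $k=1$ case on $(P_4+tP_3)$-free graphs, and since $P_4 + tP_3 \subseteq_i P_6 + tP_3$, every $(P_4+tP_3)$-free graph is $(P_6+tP_3)$-free, so \Cref{lem:p6kp3td} again makes {\sc 2-Edge Contraction($\gamma_t$)} tractable. The decision $ct_{\gamma_t}(G) = 2$ follows identically.

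\textbf{Handling small domination number.}
A minor technical point: \Cref{lem:p6kp3td} and the lemmas feeding it assume $\gamma_t(G) \geq 3$, and \Cref{theorem:1totcontracdom} implicitly concerns graphs where the relevant configurations make sense. I would dispose of graphs with $\gamma_t(G) \leq 2$ separately. On any $H$-free class in the two families, {\sc Total Domination} is polynomial-time solvable (as noted in the introduction, since $H \subseteq_i P_4 + tK_1$ or the path-type bounds apply), so $\gamma_t(G)$ can be computed; and when $\gamma_t(G) = 2$ the minimum TD set is a single edge, which is a $P_2$ but may or may not extend, so these finitely-structured cases are checked directly. This case analysis is routine.

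\textbf{Main obstacle.}
The genuine content is entirely imported: the heavy lifting is in \Cref{lem:ectdp6kp3}/\Cref{lem:p6kp3td} (tractability of {\sc 2-Edge Contraction($\gamma_t$)}) and in \Cref{thm:dictd1} (tractability of the $k=1$ problem). The only thing I must verify carefully is the containment argument showing that each $H$-free class in the hypothesis is contained in a $(P_6+kP_3)$-free class for an appropriate constant $k$ depending only on $H$ (hence on $t$), so that \Cref{lem:p6kp3td} legitimately applies. I expect this containment check, together with bounding $t$ in terms of the fixed forbidden graph $H$, to be the subtle step; once it is in place the algorithm is simply ``decide $ct_{\gamma_t}(G) \le 2$ and $ct_{\gamma_t}(G) \neq 1$, and return whether both hold.''
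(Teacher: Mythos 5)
Your proposal is correct and matches the paper's proof essentially verbatim: the paper likewise observes that $H \subseteq_i P_6+tP_3$ in both regimes, runs the algorithm of \Cref{lem:p6kp3td} (equivalently, \Cref{lem:ectdp6kp3}) to decide whether $ct_{\gamma_t}(G) \leq 2$, and then runs the algorithm of \Cref{thm:dictd1} to rule out $ct_{\gamma_t}(G)=1$, answering \yes{} exactly when both tests pass. The containment check and the combination of the two subroutines are precisely the content of the paper's argument, so there is nothing further to add.
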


\begin{proof}
Assume that $H \subseteq_i P_5+tK_1$ for some $t \geq 0$ (the case where $H \subseteq_i P_4+tP_3$ for some $t \geq 0$ is handled similarly) and let $G$ be an $H$-free graph. Since $H$ is a fortiori an induced subgraph of $P_6+tP_3$, we may use the polynomial-time algorithm for $(P_6+tP_3)$-free graphs given by \Cref{lem:p6kp3td} to determine whether $G$ is a \yes-instance for {\sc Contraction Number($\gamma_t$,3)} or not. If the answer is yes then we output \no; otherwise, we use the polynomial-time algorithm for $(P_5+tK_1)$-free graphs given by \Cref{thm:dictd1} to determine whether $G$ is a \yes-instance for {\sc Contraction Number($\gamma_t$,1)} or not, and output the answer accordingly.
\end{proof}

%------------------------------------------------------------------------------------------------------------------------------------------------------------------------------------

\subsection{Hardness results}
\label{sec:tdhard}

In this section, we show that {\sc Contraction Number($\gamma_t$,2)} and {\sc Contraction Number($\gamma_t$,3)} are $\mathsf{NP}$-hard on a number of monogenic graph classes. We first consider the case $k=2$.

\begin{lemma}
\label{lem:clawtd2}
{\sc Contraction Number($\gamma_t$,2)} is $\mathsf{NP}$-hard on $K_{1,3}$-free graphs.
\end{lemma}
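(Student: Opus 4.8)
The plan is to give a polynomial-time reduction from \textsc{Positive Exactly 3-Bounded 1-In-3 3-Sat} to \textsc{Contraction Number($\gamma_t$,2)} that produces only $K_{1,3}$-free graphs. Given a formula $\Phi$ with variables $x_1,\dots,x_n$ and clauses $C_1,\dots,C_m$ (recall $n=m$, since every variable occurs in exactly three clauses and every clause has exactly three literals), I would build a graph $G=G_\Phi$ from clique-based variable and clause gadgets. The guiding principle is that every place where a naive construction would create an induced $K_{1,3}$ (a vertex with three pairwise non-adjacent neighbours) must instead be realised by a clique, so that each gadget is claw-free and the attachments between gadgets introduce no induced claw. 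Concretely, I would represent each variable by a small clique carrying designated ``true''/``false'' vertices, represent each clause by a clique, and join a variable gadget to a clause gadget exactly when the variable occurs in that clause, routing every connection through cliques rather than through single high-degree hubs.

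The analysis then proceeds in three stages. First, I would pin down $\gamma_t(G)$ exactly, by exhibiting a canonical minimum TD set assembled from the gadgets and proving a matching lower bound through a counting / private-neighbour argument; the balanced parameters of the $1$-in-$3$ instance keep this bookkeeping clean. Second, I would prove a structural fact independent of $\Phi$: \emph{no} minimum TD set of $G$ contains a (not necessarily induced) $P_3$. By \Cref{theorem:1totcontracdom}(i) this forces $ct_{\gamma_t}(G)\ge 2$ for every instance, so that the answer is governed purely by whether $ct_{\gamma_t}(G)\le 2$. Establishing this amounts to showing that the vertices of any minimum TD set stay ``spread out'', i.e.\ no two dominators lie within distance two in a way completable to a short path inside a minimum set, which I would verify gadget by gadget using \Cref{clm:condct2} as the main tool.

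Third, I would prove the key equivalence: $G$ admits a TD set of size $\gamma_t(G)+1$ containing a $P_4$, $K_{1,3}$ or $2P_3$ if and only if $\Phi$ has a $1$-in-$3$ satisfying assignment. For the forward direction, a satisfying assignment selects in each clause gadget the unique true literal, and I would spend the single spare vertex (the $+1$) to create exactly one of the configurations of \Cref{theorem:1totcontracdom}(ii) while re-dominating the gadgets affected by the switch; combined with stage two this gives $ct_{\gamma_t}(G)=2$. For the converse, I would argue that any $(\gamma_t(G)+1)$-sized TD set realising one of these configurations forces, through the rigidity of the clique structure, a consistent choice of exactly one true literal per clause and at most one true occurrence per variable, which reads back as a valid $1$-in-$3$ assignment. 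If $\Phi$ is unsatisfiable then no such configuration exists, and since $ct_{\gamma_t}(G)\le 3$ always and $ct_{\gamma_t}(G)\neq 1$ by stage two, we get $ct_{\gamma_t}(G)=3$; hence $ct_{\gamma_t}(G)=2$ iff $\Phi$ is satisfiable.

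The main obstacle I anticipate is the simultaneous control of two competing requirements: the gadget must be rigid enough that minimum TD sets are essentially unique up to the intended literal choices (so that stage two and the converse of stage three go through), yet flexible enough that a single extra vertex can realise one of the target configurations precisely when $\Phi$ is satisfiable, all while every gadget and inter-gadget attachment remains $K_{1,3}$-free. Replacing degree-three hubs by cliques is exactly what buys claw-freeness, but it enlarges neighbourhoods and makes the distance-two analysis underlying \Cref{clm:condct2} more delicate; consequently the bulk of the work will be a careful case analysis ensuring that no unintended short path inside a minimum set, and no unintended near-minimum configuration, sneaks into $G$.
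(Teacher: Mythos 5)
Your proposal never gets past architecture: it promises ``clique-based variable and clause gadgets'' but exhibits none, and in a reduction of this kind the gadgets and their case analysis \emph{are} the proof. All three of your stages (pinning down $\gamma_t(G)$, the no-$P_3$ rigidity claim, and the equivalence between satisfiability and the existence of a $(\gamma_t(G)+1)$-sized TD set containing a $P_4$, $K_{1,3}$ or $2P_3$) are left as intentions. By contrast, the paper avoids almost all of this work by reusing verbatim the construction of \cite{GALBY202118} for {\sc Contraction Number($\gamma_t$,1)} together with its already-established equivalence ($\Phi$ satisfiable $\Leftrightarrow$ $\gamma_t(G)=14|X|+8|C|$ $\Leftrightarrow$ $ct_{\gamma_t}(G)>1$); the only genuinely new step is to check that a satisfying assignment yields an explicit minimum TD set $D$ such that $D$ plus a single vertex contains a $P_4$, whence $ct_{\gamma_t}(G)\leq 2$ by \Cref{theorem:1totcontracdom}(ii) and so $ct_{\gamma_t}(G)=2$.

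Beyond the missing construction, your stage two rests on a property that is both unsubstantiated and strictly harder to engineer than what the lemma requires. You want \emph{every} instance, satisfiable or not, to have no minimum TD set containing a $P_3$, so that unsatisfiable formulas map to $ct_{\gamma_t}(G)=3$. In the known claw-free construction---the natural candidate---this is simply false: when $\Phi$ is unsatisfiable, $ct_{\gamma_t}(G)=1$, i.e.\ some minimum TD set \emph{does} contain a $P_3$ (the extra dominators forced by unsatisfied clauses create adjacent triples), and the paper exploits exactly this, sending unsatisfiable instances to $ct_{\gamma_t}(G)=1\neq 2$ so that the converse direction comes for free. Your design would instead have to prove, for unsatisfiable $\Phi$, that \emph{no} TD set of size $\gamma_t(G)+1$ contains any of $P_4$, $K_{1,3}$ or $2P_3$---precisely the heavy analysis the paper only undertakes for the $k=3$ case (\Cref{lem:clawtd3}, with specially designed paw-based gadgets and several pages of case distinctions)---and nothing in your sketch shows such gadgets exist while remaining $K_{1,3}$-free. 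A smaller inaccuracy: \Cref{clm:condct2} is an upper-bound tool (it concludes $ct_{\gamma_t}(G)\leq 2$), so it cannot serve as the ``main tool'' for your stage-two lower bound; there you would need \Cref{theorem:1totcontracdom}(i) plus bespoke structural arguments about your (unspecified) gadgets.
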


\begin{proof}
We use the same construction as in \cite[Theorem 6]{GALBY202118}. More precisely, we reduce from {\sc Positive Exactly 3-Bounded 1-In-3 3-Sat} (see \Cref{sec:prelim} for a precise definition of this problem): given an instance $\Phi$ of this problem, with variable set $X$ and clause set $C$, we construct an instance $G$ of {\sc Contraction Number($\gamma_t$,2)} as follows. For every variable $x \in X$ appearing in clauses $c,c'$ and $c''$, we introduce the gadget $G_x$ depicted in \Cref{fig:tdclawvargad} (where a rectangle indicates that the corresponding set of vertices is a clique). For every clause $c \in C$ containing variables $x,y$ and $z$, we introduce the gadget $G_c$ which is the disjoint union of the graph $G^T_c$ and the graph $G^F_c$ depicted in \Cref{fig:tdclawclausegad} (where a rectangle indicates that the corresponding set of vertices is a clique) and further add for every $\ell \in \{x,y,z\}$, an edge between $t^c_\ell$ and $t^\ell_c$, and an edge between $f^c_\ell$ and $f^\ell_c$. We let $G$ be the resulting graph. Let us show that $\Phi$ is satisfiable if and only if $ct_{\gamma_t}(G) = 2$. To do so, we will rely on the following key result shown in \cite{GALBY202118}.

\begin{figure}
\centering
\begin{tikzpicture}[node distance=1.3cm]
\node[cir,label=right:{\small $u_x$}] (ux) at (0,0) {};
\node[cir,label=right:{\small  $v_x$}] at ($(ux) + (0,.75)$) (vx) {};
\node[cir,below left of=ux,label=below:{\small  $T_x$}] (Tx) {};
\node[cir,below right of=ux,label=below:{\small  $F_x$}] (Fx) {};

\node[cir,left of=Tx,label=below:{\small  $a^{c'}_x$}] (b1) {};
\node[cir,above of=b1,label=below:{\small  $a^c_x$}] (a1) {};
\node[cir,below of=b1,label=below:{\small  $a^{c''}_x$}] (c1) {};

\node[cir,left of=a1,label=below:{\small  $b^c_x$}] (d1) {};
\node[cir,left of=d1,label=below:{\small  $d^c_x$}] (j1) {};
\node[cir,label=above:{\small  $c^c_x$}] at ($(j1) + (.65,.4)$) (g1) {};
\node[cir,left of=j1,label=below:{\small  $t^c_x$}] (t1x) {};

\node[cir,left of=b1,label=below:{\small  $b^{c'}_x$}] (e1) {};
\node[cir,left of=e1,label=below:{\small $d^{c'}_x$}] (k1) {};
\node[cir,label=above:{\small  $c^{c'}_x$}] at ($(k1) + (.65,.4)$) (h1) {};
\node[cir,left of=k1,label=below:{\small  $t^{c'}_x$}] (t2x) {};

\node[cir,left of=c1,label=below:{\small  $b^{c''}_x$}] (f1) {};
\node[cir,left of=f1,label=below:{\small  $d^{c''}_x$}] (l1) {};
\node[cir,label=above:{\small  $c^{c''}_x$}] at ($(l1) + (.65,.4)$) (i1) {};
\node[cir,left of=l1,label=below:{\small  $t^{c''}_x$}] (t3x) {};

\node[cir,right of=Fx,label=below:{\small  $g^{c'}_x$}] (b2) {};
\node[cir,above of=b2,label=below:{\small  $g^c_x$}] (a2) {};
\node[cir,below of=b2,label=below:{\small  $g^{c''}_x$}] (c2) {};

\node[cir,right of=a2,label=below:{\small  $h^c_x$}] (d2) {};
\node[cir,right of=d2,label=below:{\small  $j^c_x$}] (j2) {};
\node[cir,label=above:{\small  $i^c_x$}] at ($(d2) + (.65,.4)$) (g2) {};
\node[cir,right of=j2,label=below:{\small  $f^c_x$}] (f1x) {};

\node[cir,right of=b2,label=below:{\small  $h^{c'}_x$}] (e2) {};
\node[cir,right of=e2,label=below:{\small  $j^{c'}_x$}] (k2) {};
\node[cir,label=above:{\small  $i^{c'}_x$}] at ($(e2) + (.65,.4)$) (h2) {};
\node[cir,right of=k2,label=below:{\small  $f^{c'}_x$}] (f2x) {};

\node[cir,right of=c2,label=below:{\small  $h^{c''}_x$}] (f2) {};
\node[cir,right of=f2,label=below:{\small  $j^{c''}_x$}] (l2) {};
\node[cir,label=above:{\small  $i^{c''}_x$}] at ($(f2) + (.65,.4)$) (i2) {};
\node[cir,right of=l2,label=below:{\small  $f^{c''}_x$}] (f3x) {};

\draw[-] (ux) -- (vx)
(ux) -- (Tx) 
(ux) -- (Fx)
(Tx) -- (Fx)
(Tx) -- (a1)
(Tx) -- (b1)
(Tx) -- (c1)
(a1) -- (d1)
(d1) -- (g1)
(d1) -- (j1)
(g1) -- (j1)
(j1) -- (t1x)
(b1) -- (e1)
(e1) -- (h1)
(e1) -- (k1)
(h1) -- (k1)
(k1) -- (t2x)
(c1) -- (f1)
(f1) -- (i1)
(f1) -- (l1)
(i1) -- (l1)
(l1) -- (t3x) 
(Fx) -- (a2)
(Fx) -- (b2)
(Fx) -- (c2)
(a2) -- (d2)
(d2) -- (g2)
(d2) -- (j2)
(g2) -- (j2)
(j2) -- (f1x)
(b2) -- (e2)
(e2) -- (h2)
(e2) -- (k2)
(h2) -- (k2)
(k2) -- (f2x)
(c2) -- (f2)
(f2) -- (i2)
(f2) -- (l2)
(i2) -- (l2)
(l2) -- (f3x); 

\draw ($(c1) + (-.3,-.65)$) rectangle ($(a1) + (.3,.3)$);
\draw ($(c2) + (-.3,-.65)$) rectangle ($(a2) + (.3,.3)$);
\end{tikzpicture}
\caption{The variable gadget $G_x$ for a variable $x$ contained in clauses $c,c'$ and $c''$ (a rectangle indicates that the corresponding set of vertices induces a clique).}
\label{fig:tdclawvargad}
\end{figure}
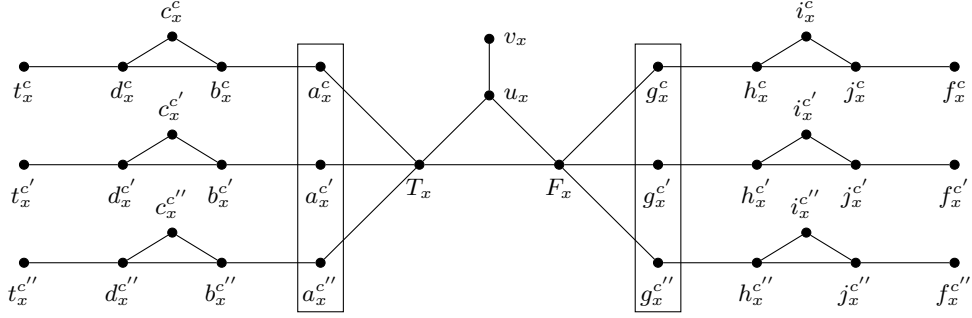

\begin{figure}
\centering
\begin{subfigure}[b]{.45\textwidth}
\centering
\begin{tikzpicture}[node distance=1.3cm]
\node[cir,label=below:{\small  $u_c$}] (u) at (0,0) {};
\node[cir,right of=u,label=below:{\small  $a^y_c$}] (b) {};
\node[cir,above of=b,label=below:{\small  $a^x_c$}] (a) {};
\node[cir,below of=b,label=below:{\small  $a^z_c$}] (c) {};

\node[cir,right of=a,label=below:{\small  $c^x_c$}] (g1) {};
\node[cir,label=above:{\small  $b^x_c$}] at ($(a) + (.65,.4)$) (d1) {};
\node[cir,right of=g1,label=below:{\small  $d^x_c$}] (j1) {};
\node[cir,right of=j1,label=below:{\small  $t^x_c$}] (txc) {};

\node[cir,right of=b,label=below:{\small  $c^y_c$}] (g2) {};
\node[cir,label=above:{\small  $b^y_c$}] at ($(b) + (.65,.4)$) (d2) {};
\node[cir,right of=g2,label=below:{\small  $d^y_c$}] (j2) {};
\node[cir,right of=j2,label=below:{\small  $t^y_c$}] (tyc) {};

\node[cir,right of=c,label=below:{\small  $c^z_c$}] (g3) {};
\node[cir,label=above:{\small  $b^z_c$}] at ($(c) + (.65,.4)$) (d3) {};
\node[cir,right of=g3,label=below:{\small  $d^z_c$}] (j3) {};
\node[cir,right of=j3,label=below:{\small  $t^z_c$}] (tzc) {};

\draw[-] (u) -- (a)
(u) -- (b)
(u) -- (c)
(a) -- (d1)
(a) -- (g1)
(d1) -- (g1)
(g1) -- (j1)
(j1) -- (txc)
(b) -- (d2)
(b) -- (g2)
(d2) -- (g2)
(g2) -- (j2)
(j2) -- (tyc)
(c) -- (d3)
(c) -- (g3)
(d3) -- (g3)
(g3) -- (j3)
(j3) -- (tzc);

\draw ($(c) + (-.25,-.55)$) rectangle ($(a) + (.25,.2)$);
\end{tikzpicture}
\caption{The graph $G^T_c$.}
\end{subfigure}
\hspace*{.5cm}
\begin{subfigure}[b]{.45\textwidth}
\centering
\begin{tikzpicture}[node distance=1.3cm]
\node[cir,label=below:{\small  $v_c$}] (vc) at (0,0) {};
\node[cir,right of=vc,label=below:{\small  $w_c$}] (a) {};
\node[cir,left of=vc,label=below:{\small  $g^y_c$}] (c) {};
\node[cir,above of=c,label=below:{\small $g^x_c$}] (b) {};
\node[cir,below of=c,label=below:{\small  $g^z_c$}] (d) {};
\node[cir,left of=b,label=below:{\small  $f^x_c$}] (fxc) {};
\node[cir,left of=c,label=below:{\small  $f^y_c$}] (fyc) {};
\node[cir,left of=d,label=below:{\small  $f^z_c$}] (fzc) {};

\draw[-] (vc) -- (a)
(vc) -- (b)
(vc) -- (c)
(vc) -- (d)
(b) -- (fxc)
(c) -- (fyc)
(d) -- (fzc);

\draw ($(d) + (-.25,-.55)$) rectangle ($(b) + (.25,.2)$);
\end{tikzpicture}
\caption{The graph $G^F_c$.}
\end{subfigure}
\caption{The clause gadget $G_c$ for a clause $c = x \lor y \lor z$ is the disjoint union of $G_c^T$ and $G_c^F$ (a rectangle indicates that the corresponding set of vertices is a clique).}
\label{fig:tdclawclausegad}
\end{figure}
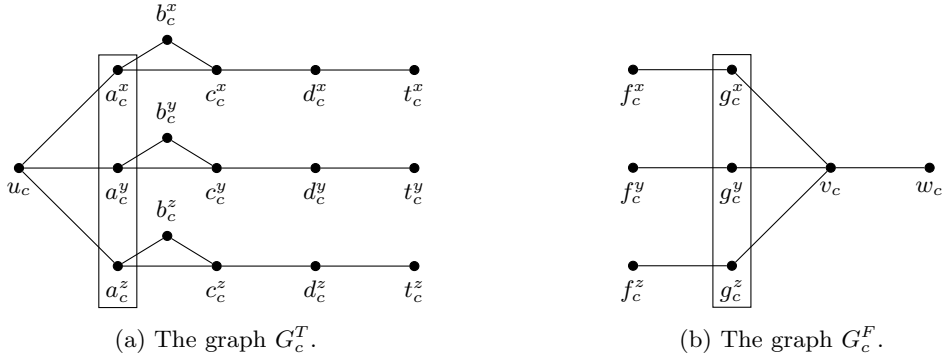

\begin{claim}[\cite{GALBY202118}]
\label{clm:tdclaw1}
The following statements are equivalent.
\begin{itemize}
\item[(i)] $\Phi$ is satisfiable.
\item[(ii)] $\gamma_t(G) = 14|X| + 8|C|$.
\item[(iii)] $ct_{\gamma_t}(G) > 1$.
\end{itemize}
\end{claim} 

Now assume that $\Phi$ is satisfiable and consider a truth assignment satisfying $\Phi$. We construct a minimum TD set $D$ of $G$ as follows. For every variable $x \in X$ appearing in clauses $c,c'$ and $c''$, if $x$ is true then we include $\{u_x,T_x\} \cup \{d^p_x,t^p_x,h^p_x,j^p_x~|~p \in \{c,c',c''\}\}$ in $D$; otherwise, we include $\{u_x,F_x\} \cup \{b^p_x,d^p_x,j^p_x,f^p_x~|~p \in \{c,c',c''\}\}$ in $D$. For every clause $c \in C$ containing variables $x,y$ and $z$, exactly one variable is true, say $x$ without loss of generality, in which case we include $\{v_c,g^x_c\} \cup \{c_c^x,a_c^x\} \cup \{c_c^t,d_c^t~|~ t \in \{y,z\}\}$ in $D$. It is not difficult to see that the constructed set $D$ is indeed a TD set and since $|D| = 14|X| + 8|C|$, we conclude by \Cref{clm:tdclaw1} that $D$ is minimum. Now consider a clause $c \in C$ containing variables $x,y$ and $z$ and assume without loss of generality that $x$ is true (and thus $y$ and $z$ are false). Then $D \cup \{a_c^y\}$ contains the $P_4$ $c_c^xa_c^xa_c^yc_c^y$ and thus, $ct_{\gamma_t}(G) \leq 2$ by \Cref{theorem:1totcontracdom}; but $ct_{\gamma_t}(G) > 1$ by \Cref{clm:tdclaw1} and so, $ct_{\gamma_t}(G) = 2$. Conversely, if $ct_{\gamma_t}(G) = 2$ then $\Phi$ is satisfiable by \Cref{clm:tdclaw1}. Since $G$ is $K_{1,3}$-free, the lemma follows.
\end{proof}

\begin{lemma}
\label{lem:p6td2}
{\sc Contraction Number($\gamma_t$,2)} is $\mathsf{coNP}$-hard on $P_6$-free graphs.
\end{lemma}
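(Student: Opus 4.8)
The plan is to avoid building a fresh reduction and instead to assemble three results already in hand. The pivotal observation is that $P_6$-free graphs form a class on which deciding $ct_{\gamma_t}(G) = 3$ — equivalently, its complement $ct_{\gamma_t}(G) \le 2$ — is solvable in polynomial time: this is exactly \Cref{lem:p6kp3td} specialised to $k = 0$ (which itself rests on the base case of \Cref{lem:ectdp6kp3}). Consequently the hypothesis of \Cref{obs:hard12} is satisfied with $\mathcal{G}$ the class of $P_6$-free graphs and $\pi = \gamma_t$, so it suffices to establish that {\sc Contraction Number($\gamma_t$,1)} is $\mathsf{NP}$-hard on $P_6$-free graphs; the desired $\mathsf{coNP}$-hardness of {\sc Contraction Number($\gamma_t$,2)} will then follow at once.

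To obtain the $\mathsf{NP}$-hardness of the $k=1$ problem I would simply read it off the monogenic dichotomy of \Cref{thm:dictd1}. Indeed $P_6$ is connected on six vertices, whereas $P_5+tK_1$ and $P_4+tP_3$ are disconnected with every connected component of order at most five; hence $P_6 \not\subseteq_i P_5+tK_1$ and $P_6 \not\subseteq_i P_4+tP_3$ for every $t \ge 0$. Thus $H = P_6$ falls squarely in the hard regime of \Cref{thm:dictd1}, which yields that {\sc Contraction Number($\gamma_t$,1)} is $\mathsf{NP}$-hard on $P_6$-free graphs.

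Putting these together, \Cref{obs:hard12} converts this $\mathsf{NP}$-hardness into the $\mathsf{coNP}$-hardness of {\sc Contraction Number($\gamma_t$,2)} on $P_6$-free graphs, completing the proof. I expect essentially no technical obstacle here: the whole substance is the reduction packaged inside \Cref{obs:hard12}, whose correctness hinges precisely on the polynomial-time decidability of $ct_{\gamma_t} \le 2$ on the class, which \Cref{lem:ectdp6kp3} (via \Cref{lem:p6kp3td}) supplies. The one point worth verifying is that the $k=1$ entry of the dichotomy is genuine $\mathsf{NP}$-hardness rather than merely $\mathsf{coNP}$-hardness; this holds because {\sc Contraction Number($\gamma_t$,1)} coincides with {\sc 1-Edge Contraction($\gamma_t$)} and is therefore naturally a membership ($\mathsf{NP}$) problem, so that the translation through \Cref{obs:hard12} lands on the correct side of the complexity classes.
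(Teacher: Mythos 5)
Your proposal is correct and follows essentially the same route as the paper: the paper likewise combines the fact that $ct_{\gamma_t}(G) \leq 2$ for every $P_6$-free graph (the base case of \Cref{lem:ectdp6kp3}, making {\sc Contraction Number($\gamma_t$,3)} polynomial-time solvable on this class) with the $\mathsf{NP}$-hardness of {\sc Contraction Number($\gamma_t$,1)} on $P_6$-free graphs from \Cref{thm:dictd1}, and then applies \Cref{obs:hard12}. Your additional checks (that $P_6 \not\subseteq_i P_5+tK_1$ and $P_6 \not\subseteq_i P_4+tP_3$, and that the $k=1$ entry of the dichotomy is $\mathsf{NP}$-hardness) are details the paper leaves implicit, and they are verified correctly.
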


\begin{proof}
If $G$ is a $P_6$-free graph then $ct_{\gamma_t}(G) \leq 2$ as shown in the proof of \Cref{lem:ectdp6kp3}; and since {\sc Contraction Number($\gamma_t$,1)} is $\mathsf{NP}$-hard on $P_6$-free graphs by \Cref{thm:dictd1}, the lemma follows from \Cref{obs:hard12}.
\end{proof}

\begin{lemma}
\label{lem:2p4td2}
{\sc Contraction Number($\gamma_t$,2)} is $\mathsf{NP}$-hard on $2P_4$-free graphs.
\end{lemma}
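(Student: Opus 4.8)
The plan is to establish NP-hardness of {\sc Contraction Number($\gamma_t$,2)} on $2P_4$-free graphs by a reduction from {\sc Positive Exactly 3-Bounded 1-In-3 3-Sat}, following the same strategy as in \Cref{lem:clawtd2} but redesigning the gadgets so that the resulting graph excludes $2P_4$ rather than $K_{1,3}$. The reason the previous construction fails here is that the variable and clause gadgets of \Cref{fig:tdclawvargad,fig:tdclawclausegad} are long and sparse: two disjoint such gadgets easily contain two vertex-disjoint induced $P_4$'s. To obtain $2P_4$-freeness, the natural idea is to make the construction ``dense'' by adding a large clique (or a small number of universal-type vertices) that every induced $P_4$ must intersect, thereby preventing two disjoint $P_4$'s from coexisting. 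Concretely, I would introduce a central clique $Q$ to which most gadget vertices are made adjacent, so that any induced $P_4$ either lives entirely within the ``local'' neighbourhood of a single gadget or uses a vertex of $Q$; since $Q$ is a clique, two disjoint $P_4$'s cannot both avoid it, and careful adjacency control forces $2P_4$-freeness.

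The correctness argument then splits, as usual, into two directions tied to \Cref{theorem:1totcontracdom}. First I would exhibit, mirroring \Cref{clm:tdclaw1}, an analogue computing $\gamma_t(G)$ exactly as a function of $|X|$ and $|C|$ and showing that $ct_{\gamma_t}(G) > 1$ precisely when $\Phi$ is satisfiable (i.e.\ no minimum TD set contains a $P_3$). Here I would verify that from a satisfying $1$-in-$3$ assignment one builds a minimum TD set of the prescribed size containing no induced or non-induced $P_3$, while from any TD set of that size one reads off a $1$-in-$3$ assignment. Second, assuming $\Phi$ is satisfiable, I would show $ct_{\gamma_t}(G) = 2$ by producing a TD set of size $\gamma_t(G)+1$ containing a $P_4$ (via adding a single ``opposite-truth-value'' vertex in some clause gadget, exactly as in the proof of \Cref{lem:clawtd2}), and conversely that $ct_{\gamma_t}(G)=2$ forces satisfiability through the characterisation. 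The combination gives $\Phi$ satisfiable $\iff$ $ct_{\gamma_t}(G)=2$.

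The main obstacle I anticipate is not the satisfiability equivalence, which should transfer almost verbatim from \cite{GALBY202118}, but rather the \textbf{structural verification that the modified graph is $2P_4$-free} while simultaneously preserving the total domination number and the absence of a $P_3$ in minimum TD sets. Adding the central clique $Q$ risks creating short paths (friendly triples or $P_3$'s inside minimum TD sets) that would collapse $ct_{\gamma_t}(G)$ to $1$, or risks lowering $\gamma_t(G)$ because vertices of $Q$ may dominate too much. The delicate balance is therefore in choosing exactly which gadget vertices to join to $Q$ and how large $Q$ should be: large enough to kill all disjoint-$P_4$ pairs, yet sparse enough in its external adjacencies that the domination-number computation and the $P_3$-freeness of minimum TD sets survive. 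I expect the bulk of the proof to be a careful but routine case analysis over the possible positions of two would-be disjoint induced $P_4$'s, showing each case contradicts the clique structure of $Q$, together with a recomputation of $\gamma_t$ in the presence of $Q$ paralleling \Cref{clm:tdclaw1}.
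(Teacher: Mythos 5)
There is a genuine gap: your proposal never actually produces the reduction. The entire content of the lemma lies in exhibiting a concrete graph that is simultaneously (a) $2P_4$-free, (b) has a total domination number that can be computed from $\Phi$, and (c) has the property that minimum TD sets contain no $P_3$ exactly when $\Phi$ is satisfiable while some TD set of size $\gamma_t(G)+1$ contains a $P_4$. You describe only a template ("a central clique $Q$ to which most gadget vertices are made adjacent, with careful adjacency control"), and you yourself flag the verification of (a)--(c) for this modified construction as the main obstacle without resolving it. In particular, attaching a large clique to the gadgets of \Cref{lem:clawtd2} is not an innocuous change: vertices of $Q$ (or gadget vertices that become adjacent to $Q$) can dominate large portions of the graph, which would invalidate the counting argument behind the analogue of \Cref{clm:tdclaw1}, and nothing in your sketch shows this can be repaired. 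A hardness proof whose construction is left as a design problem is not a proof.

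For comparison, the paper takes a much lighter route: it does not modify the $K_{1,3}$-free construction at all, but instead reuses verbatim the reduction of \cite{GALBY202118} (Theorem 5) for {\sc Contraction Number($\gamma_t$,1)}, which reduces from {\sc 3-Sat} rather than the 1-in-3 variant. There, each variable gadget is a paw on $\{x,\overline{x},u_x,v_x\}$ and the clause vertices form a clique, with clause vertices joined to their literal vertices. The equivalence ``$\Phi$ satisfiable $\iff \gamma_t(G)=2|X| \iff ct_{\gamma_t}(G)>1$'' (\Cref{clm:td2p4}) is imported from that paper, so the only new work is the forward direction: from a satisfying assignment one builds a minimum TD set $D$ and then, for a clause $c$ with two positive literals $x,y$, the set $(D \setminus (V(G_x)\cup V(G_y))) \cup \{x,u_x,y,u_y,c\}$ has size $\gamma_t(G)+1$ and contains a $P_4$, giving $ct_{\gamma_t}(G)=2$ via \Cref{theorem:1totcontracdom}. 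The $2P_4$-freeness you were trying to engineer comes for free in this construction: a paw contains no induced $P_4$, distinct variable gadgets are anticomplete to each other, so every induced $P_4$ must use a clause vertex, and since the clause vertices form a clique, no two vertex-disjoint induced $P_4$'s can coexist. Your "dense clique that every $P_4$ must hit" intuition is exactly the mechanism at work here, but the paper realises it with an existing, already-analysed gadget set rather than by grafting a clique onto the 1-in-3 construction.
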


\begin{proof}
We use the same reduction as in \cite[Theorem 5]{GALBY202118}. More precisely, we reduce from {\sc 3-Sat}: given an instance $\Phi$ of this problem, with variable set $X$ and clause set $C$, we construct an instance $G$ of {\sc Contraction Number($\gamma_t$,2)} as follows. For every variable $x \in X$, we introduce the gadget $G_x$ consisting of a triangle on vertex set $\{x,\overline{x},u_x\}$ and an additional vertex $v_x$ adjacent to $u_x$ (that is, $G_x$ is a paw); the vertices $x$ and $\overline{x}$ are referred to as \emph{literal vertices}. For every clause $c \in C$, we introduce a \emph{clause vertex}, denoted by $c$, and add an edge between $c$ and every literal vertex whose corresponding literal appears in the clause $c$. Finally, we add an edge between every two clause vertices so that the set of clause vertices induces a clique. We let $G$ be the resulting graph. We next show that $\Phi$ is satisfiable if and only if $ct_{\gamma_t}(G) = 2$. To do so, we will rely on the following key results proved in \cite{GALBY202118}.

\begin{claim}[\cite{GALBY202118}]
\label{clm:td2p4}
The following statements are equivalent.
\begin{itemize}
\item[(i)] $\Phi$ is satisfiable.
\item[(ii)] $\gamma_t(G) = 2|X|$.
\item[(iii)] $ct_{\gamma_t}(G) > 1$.
\end{itemize}
\end{claim}

Now assume that $\Phi$ is satisfiable and consider a truth assignment satisfying $\Phi$. We construct a minimum TD set $D$ of $G$ as follows. For every variable $x \in X$, if $x$ is true then we include $\{u_x,x\}$ in $D$; otherwise, we include $\{u_x,\overline{x}\}$ in $D$. It is not difficult to see that the constructed set $D$ is indeed a TD set of $G$ and since $|D| = 2|X|$, we conclude by \Cref{clm:td2p4} that $D$ is minimum. Now consider a clause $c \in C$ containing variables $x,y$ and $z$, and assume without loss of generality that $x$ and $y$ both appear positive in $c$. Then the set $(D \setminus (V(G_x) \cup V(G_y))) \cup \{x,u_x,y,u_y,c\}$ is a TD set of $G$ of size $\gamma_t(G) + 1$ containing the $P_4$ $u_x,x,c,u_y$ and thus, $ct_{\gamma_t}(G) \leq 2$ by \Cref{theorem:1totcontracdom}; but $ct_{\gamma_t}(G)> 1$ by \Cref{clm:td2p4} and so, $ct_{\gamma_t}(G) =2$. Conversely, if $ct_{\gamma_t}(G) = 2$ then $\Phi$ is satisfiable by \Cref{clm:td2p4}. Since $G$ is readily seen to be $2P_4$-free, the lemma follows.
\end{proof}

\begin{lemma}
\label{lem:p5p2td2}
{\sc Contraction Number($\gamma_t$,2)} is $\mathsf{coNP}$-hard on $(P_5+P_2)$-free graphs.
\end{lemma}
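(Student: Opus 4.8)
The goal is to prove that \textsc{Contraction Number}$(\gamma_t,2)$ is $\mathsf{coNP}$-hard on $(P_5+P_2)$-free graphs. The plan is to invoke \Cref{obs:hard12}, which reduces this to establishing two facts about the class of $(P_5+P_2)$-free graphs: first, that \textsc{Contraction Number}$(\gamma_t,3)$ is polynomial-time solvable on this class, and second, that \textsc{Contraction Number}$(\gamma_t,1)$ is $\mathsf{NP}$-hard on it. For the first ingredient, I would note that $P_5+P_2$ is an induced subgraph of $P_6+kP_3$ for suitable $k$ (indeed $P_5 + P_2 \subseteq_i P_6$, so $(P_5+P_2)$-free graphs form a subclass in the relevant sense, or at worst one embeds $P_5+P_2$ into $P_6+P_3$). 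Hence the class of $(P_5+P_2)$-free graphs is contained in a class $(P_6+kP_3)$-free where \Cref{lem:p6kp3td} applies, giving polynomial-time solvability of \textsc{Contraction Number}$(\gamma_t,3)$.

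The second and substantive ingredient is the $\mathsf{NP}$-hardness of \textsc{Contraction Number}$(\gamma_t,1)$ on $(P_5+P_2)$-free graphs. Here I would appeal to \Cref{thm:dictd1}: that dichotomy asserts the problem is $\mathsf{(co)NP}$-hard precisely when $H$ is \emph{not} an induced subgraph of $P_5+tK_1$ nor of $P_4+tP_3$. Since $P_5+P_2$ is neither an induced subgraph of any $P_5+tK_1$ (because $P_2$ is an edge, not an isolated vertex) nor of any $P_4+tP_3$ (because $P_5 \not\subseteq_i P_4 + tP_3$), the graph $H = P_5+P_2$ falls into the hard regime of \Cref{thm:dictd1}, so \textsc{Contraction Number}$(\gamma_t,1)$ is $\mathsf{NP}$-hard on $(P_5+P_2)$-free graphs. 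I would double-check these two non-containments carefully, as they are exactly what places $(P_5+P_2)$-free on the hard side of the $k=1$ dichotomy.

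With both ingredients in hand, \Cref{obs:hard12} immediately yields that \textsc{Contraction Number}$(\gamma_t,2)$ is $\mathsf{coNP}$-hard on $(P_5+P_2)$-free graphs, completing the proof. The main obstacle is not conceptual but verificational: one must confirm that $P_5+P_2$ is simultaneously (a) an induced subgraph of some $P_6+kP_3$ so that the $k=3$ algorithm applies, and (b) \emph{not} an induced subgraph of any $P_5+tK_1$ or $P_4+tP_3$ so that the $k=1$ problem is hard. These membership checks hinge on small-graph containment arguments that are easy to get backwards, so the care lies in pinning down each $\subseteq_i$ relation precisely. Everything else is a direct application of the previously established machinery (\Cref{lem:p6kp3td}, \Cref{thm:dictd1}, and \Cref{obs:hard12}).
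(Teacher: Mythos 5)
Your proposal is correct and takes essentially the same route as the paper: establish polynomial-time solvability of \textsc{Contraction Number($\gamma_t$,3)} on $(P_5+P_2)$-free graphs via the containment $P_5+P_2 \subseteq_i P_6+P_3$ and \Cref{lem:p6kp3td}, invoke \Cref{thm:dictd1} for the $\mathsf{NP}$-hardness of \textsc{Contraction Number($\gamma_t$,1)} on this class, and combine the two through \Cref{obs:hard12}. One small slip worth flagging: your parenthetical claim that $P_5+P_2 \subseteq_i P_6$ is false ($P_5+P_2$ has seven vertices), but your fallback embedding $P_5+P_2 \subseteq_i P_6+P_3$ is the correct one and is exactly what the paper uses.
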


\begin{proof}
Since $P_5+P_2 \subseteq_i P_6+P_3$, {\sc Contraction Number($\gamma_t$,3)} is polynomial-time solvable on $(P_5+P_2)$-free graphs by \Cref{lem:p6kp3td}; but {\sc Contraction Number($\gamma_t$,1)} is $\mathsf{NP}$-hard on $(P_5+P_2)$-free graphs by \Cref{thm:dictd1} and thus, the lemma follows from \Cref{obs:hard12}.
\end{proof}

\begin{lemma}
\label{lem:cyclestd}
Let $G$ be a graph such that $\gamma_t(G) \geq 3$ and let $H$ be the graph obtained by 4-subdividing every edge of $G$. Then $ct_{\gamma_t}(G) = ct_{\gamma_t}(H)$.
\end{lemma}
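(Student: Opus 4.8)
The plan is to reduce everything to the Huang--Xu characterisation (\Cref{theorem:1totcontracdom}) through the value $\gamma_t(H)$. Writing $m=|E(G)|$, the first step is to prove $\gamma_t(H)=\gamma_t(G)+2m$. For the upper bound I would start from a minimum TD set $D$ of $G$, fix an edge cover $M$ of $G[D]$, and build a TD set of $H$ consisting of $D$ together with, on each subdivided edge $e=uv$ (with internal path $u,p_1,p_2,p_3,p_4,v$), the pair $\{p_1,p_4\}$ when $e\in M$, a pair $\{p_1,p_2\}$ or $\{p_3,p_4\}$ on an edge chosen to dominate an endpoint lying outside $D$, and the self-contained pair $\{p_2,p_3\}$ otherwise; a short check shows this is a TD set of size $\gamma_t(G)+2m$. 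For the lower bound, note first that any TD set of $H$ contains at least two internal vertices of every subdivided edge (otherwise $p_2$ or $p_3$ is undominated), and then project an arbitrary TD set $D_H$ of $H$ onto a set $D_G\subseteq V(G)$, placing $u$ into $D_G$ exactly when the internal vertex adjacent to the \emph{other} endpoint of some edge incident to $u$ lies in $D_H$. A charging argument, assigning each element of $D_G$ either to a branch vertex of $D_H$ or to a surplus internal vertex (using that an internal vertex adjacent to a branch vertex can belong to $D_H$ only if that branch vertex supplies its witness), gives $|D_G|\le |D_H|-2m$, hence $\gamma_t(H)\ge\gamma_t(G)+2m$.

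With the value established, and since $ct_{\gamma_t}\in\{1,2,3\}$ for both graphs, it suffices to prove $ct_{\gamma_t}(G)=1\iff ct_{\gamma_t}(H)=1$ and $ct_{\gamma_t}(G)\le2\iff ct_{\gamma_t}(H)\le2$. For the first equivalence I would exploit the reformulation that a TD set contains no $P_3$ precisely when it induces a perfect matching on itself, so that $ct_{\gamma_t}\ge2$ forces every minimum TD set to have even size. As $2m$ is even, $\gamma_t(G)$ and $\gamma_t(H)$ share the same parity; when this parity is odd, both invariants equal $1$ at once. In the even case I would transfer the existence of a matching-type minimum TD set across the subdivision: the lift construction above turns a matching minimum TD set of $G$ into one of $H$ (using only the pairs $\{p_2,p_3\}$ and arranging $M$ so that no $P_3$ appears), while the projection turns a minimum TD set of $H$ containing a $P_3$ into a minimum TD set of $G$ containing a $P_3$.

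The second equivalence ($ct_{\gamma_t}\le2$, i.e.\ the existence of a TD set of size $\gamma_t+1$ containing a $P_4$, $K_{1,3}$ or $2P_3$) is handled in the same two directions. In the forward direction the lift sends a witnessing TD set of $G$ of size $\gamma_t(G)+1$ to one of $H$ of size $\gamma_t(H)+1$ carrying the same configuration. In the backward direction the projection, applied to a TD set of $H$ of size $\gamma_t(H)+1$ containing one of these configurations, produces a TD set of $G$ of size at most $\gamma_t(G)+1$ in which a $P_4$, $K_{1,3}$ or $2P_3$ survives (a configuration buried inside a single subdivided edge is first rerouted, using the freedom in choosing the internal pairs, to one projecting onto branch vertices). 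Combining the two equivalences with \Cref{theorem:1totcontracdom} and the bound $ct_{\gamma_t}\le3$ yields $ct_{\gamma_t}(G)=ct_{\gamma_t}(H)$.

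The main obstacle is the projection direction: I must guarantee that $D_G$ is genuinely \emph{total}, meaning every vertex of $D_G$ has a neighbour in $D_G$ rather than merely that $D_G$ dominates $V(G)$, and that the relevant path-configuration is preserved. The delicate case is a $P_3$ or $P_4$ lying entirely on the internal vertices of one subdivided edge, which has no obvious counterpart in $G$. The key is to normalise $D_H$ first, without increasing its size or destroying the configuration, so that every subdivided edge carries one of the few canonical pairs identified in the lower-bound analysis; after this normalisation both the projection and the charging argument become transparent.
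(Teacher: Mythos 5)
Your overall architecture coincides with the paper's: establish $\gamma_t(H)=\gamma_t(G)+2|E(G)|$ (the paper simply cites \cite{GALBY202118} rather than reproving it), construct a lift and a projection, settle the $k=1$ equivalence, and reduce $k=2$ to transferring a $P_4$, $K_{1,3}$ or $2P_3$ across the subdivision via \Cref{theorem:1totcontracdom}. Your global projection with a charging argument is a workable alternative to the paper's edge-by-edge desubdivision (the observation that every subdivided edge carries at least two internal vertices of any TD set is correct), and the parity remark disposing of the odd case of the $k=1$ equivalence is a genuine shortcut the paper does not use (it imports \Cref{clm:4sub} wholesale). The genuine gap is that the backward direction for $k=2$ --- which constitutes essentially the entire substance of the paper's proof --- is compressed into the sentence that one should ``normalise $D_H$ so that every subdivided edge carries a canonical pair'' and ``reroute'' configurations buried inside a single edge. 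That normalisation is the hard part, not an afterthought: the paper must (i) dispose of the $K_{1,3}$ case separately, (ii) show one may assume no edge has all four internal vertices in $D$ (eliminating ``accommodated'' edges by size-preserving exchanges split according to the degrees of the endpoints), (iii) show one may then assume $D$ contains a $2P_3$ rather than only a $P_4$ --- necessary because a $P_4$ of the form $ue_1e_2e_3$ projects only to the single edge $uv$ --- and (iv) enforce that no $P_3$ has the form $e_1e_2e_3$ and that every $P_3$ of the form $ue_1e_2$ comes with a companion $f_1\in D$ at the other endpoint. Each step is an exchange argument with its own case analysis, and without carrying them out the claim that a configuration ``survives'' projection is unsupported.

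Two concrete defects compound this. First, your projection places $u$ in $D_G$ \emph{exactly} when some far-side internal vertex of an incident edge lies in $D_H$, so $D_G$ need not contain the branch vertices of $D_H$; then even the easy cases fail as stated: a $K_{1,3}$ in $D_H$ centred at a branch vertex $u$ with leaves $e^1_1,e^2_1,e^3_1$ certifies $x_1,x_2,x_3\in D_G$ but not $u$ itself, and a $P_3$ of the form $f_4ve_1$ certifies the two outer branch vertices but possibly not $v$, leaving no $P_3$ in the projection. The paper's $T^{-1}$ never deletes branch vertices, and that is precisely what makes \Cref{obs:DFDF} deliver the preserved configurations; you should take $D_G\supseteq D_H\cap V(G)$, which your charging tolerates, since whenever a certified endpoint $u$ lies outside $D_H$ the vertex $e_1$ of the certifying edge still needs a neighbour in $D_H$, forcing a surplus internal vertex on that edge. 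Second, in the even-parity case of the $k=1$ equivalence you propose to lift a \emph{matching-type} (i.e., $P_3$-free) minimum TD set from $G$ to $H$; this proves nothing, because $ct_{\gamma_t}(H)\geq 2$ requires that \emph{every} minimum TD set of $H$ avoid a $P_3$, not that one such set exists. Only the transfers of $P_3$-containing minimum TD sets matter, and the backward such transfer runs into exactly the unproven normalisation above --- which is presumably why the paper does not reprove this equivalence at all and instead invokes \Cref{clm:4sub}.
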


\begin{proof}
It was shown in \cite[Lemma 7]{GALBY202118} that $\gamma_t(H) = \gamma_t(G) + 2|E(G)|$. Let us show how to construct from a TD set $D$ of $G$ a TD set $T(D)$ of $H$ of size $|D| + 2|E(G)|$ (see \Cref{fig:GtoH}). For every edge $e= uv \in E(G)$, we will denote by $ue_1e_2e_3e_4v$ the $P_6$ in $H$ replacing the edge $e$. Firstly, we include in $T(D)$ every vertex of $D$.  Then for every edge $e=uv \in E(G)$, if $D\cap \{u,v\} = \varnothing$, we further include $\{e_2,e_3\}$ in $T(D)$; if $|D \cap \{u,v\}| =1 $, say $u \in D$ without loss of generality, we further include $\{e_3,e_4\}$ in $T(D)$; and if $u,v \in D$, we further include $\{e_1,e_4\}$ in $T(D)$. It is not difficult to see that the constructed set $T(D)$ is indeed a TD set of $H$ and that $|T(D)| = |D| + 2|E(G)|$. 

\begin{figure}
\centering
\begin{tikzpicture}[node distance=.5cm]
\node[circ,label=above:{\small $u$}] (u1) at (0,3) {};
\node[circ,label=above:{\small $v$}] (v1) at (2,3) {};
\draw[-] (u1) -- (v1) node[midway,above] {\small $e$};

\draw[-Implies,line width=.6pt,double distance=2pt] (2.75,3) -- (3.25,3); 

\node[circ,label=above:{\small $u$}] (ub1) at (4,3) {};
\node[circ,label=above:{\small $e_1$},right of=ub1] (e11) {};
\node[circ,red,label=above:{\small $e_2$},right of=e11] (e21) {};
\node[circ,red,label=above:{\small $e_3$},right of=e21] (e31) {};
\node[circ,label=above:{\small $e_4$},right of=e31] (e41) {};
\node[circ,label=above:{\small $v$},right of=e41] (vb1) {};
\draw[-] (ub1) -- (e21)
(e21) -- (e31)
(e31) -- (vb1);

\node[circ,red,label=above:{\small $u$}] (u2) at (0,2) {};
\node[circ,label=above:{\small $v$}] (v2) at (2,2) {};
\draw[-] (u2) -- (v2) node[midway,above] {\small $e$};

\draw[-Implies,line width=.6pt,double distance=2pt] (2.75,2) -- (3.25,2); 

\node[circ,red,label=above:{\small $u$}] (ub2) at (4,2) {};
\node[circ,label=above:{\small $e_1$},right of=ub2] (e12) {};
\node[circ,label=above:{\small $e_2$},right of=e12] (e22) {};
\node[circ,red,label=above:{\small $e_3$},right of=e22] (e32) {};
\node[circ,red,label=above:{\small $e_4$},right of=e32] (e42) {};
\node[circ,label=above:{\small $v$},right of=e42] (vb2) {};
\draw[-] (ub2) -- (e32)
(e32) -- (e42)
(e42) -- (vb2);

\node[circ,red,label=above:{\small $u$}] (u3) at (0,1) {};
\node[circ,red,label=above:{\small $v$}] (v3) at (2,1) {};
\draw[-] (u3) -- (v3) node[midway,above] {\small $e$};

\draw[-Implies,line width=.6pt,double distance=2pt] (2.75,1) -- (3.25,1); 

\node[circ,red,label=above:{\small $u$}] (ub3) at (4,1) {};
\node[circ,red,label=above:{\small $e_1$},right of=ub3] (e13) {};
\node[circ,label=above:{\small $e_2$},right of=e13] (e23) {};
\node[circ,label=above:{\small $e_3$},right of=e23] (e33) {};
\node[circ,red,label=above:{\small $e_4$},right of=e33] (e43) {};
\node[circ,red,label=above:{\small $v$},right of=e43] (vb3) {};
\draw[-] (ub3) -- (e13)
(e13) -- (e43)
(e43) -- (vb3);

\node[draw=none] at (1,4) {\small $G$};
\node[draw=none] at (5.25,4) {\small $H$};
\end{tikzpicture}
\caption{Constructing a total dominating set of $H$ from a total dominating set of $G$ (vertices in red belong to the corresponding total dominating set).}
\label{fig:GtoH}
\end{figure}
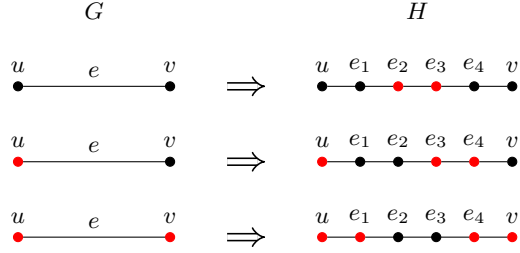

Conversely, given a TD set $D$ of $H$, let us show how to construct from $D$ a TD set $T^{-1}(D)$ of $G$ of size at most $|D| - 2|E(G)|$. To this end, we show how, given a graph $F$, a graph $F'$ obtained from $F$ by 4-subdividing one edge $uv \in E(F)$ and a TD set $D_{F'}$ of $F'$, we can construct from $D_{F'}$ a TD set $D_F$ of $F$ of size at most $|D_{F'}| - 2$. Then by iterating this procedure on $H$, we will obtain the desired TD set $T^{-1}(D)$ of $G$. 

Let $ue_1e_2e_3e_4v$ be the path in $F'$ corresponding to the 4-subdivision of the edge $uv \in E(F)$. If $e_1 \in D_{F'}$ and $u \notin D_{F'}$ then necessarily $e_2 \in D_{F'}$, for $e_1$ would otherwise not be dominated. Similarly, if $e_4 \in D_{F'}$ and $v \notin D_{F'}$ then necessarily $e_3 \in D_{F'}$, for $e_4$ would otherwise not be dominated. Thus, if $e_1,e_4 \in D_{F'}$, we let $D_F = (D_{F'} \setminus \{e_1,e_2,e_3,e_4\}) \cup \{u,v\}$. Suppose next that $e_4 \notin D_{F'}$. Then $e_2 \in D_{F'}$, for $e_3$ would otherwise not be dominated; and if $v \notin D_{F'}$ then $e_3 \in D_{F'}$, for $e_4$ would otherwise not be dominated. Thus, if $e_1 \in D_{F'}$ and $e_4 \notin D_{F'}$, then either $v \in D_{F'}$ in which case we let $D_F =D_{F'} \setminus \{e_1,e_2,e_3,e_4\}$, or $v \in D_{F'}$ in which case we let $D_F = (D_{F'} \setminus \{e_1,e_2,e_3,e_4\}) \cup \{v\}$. We proceed symmetrically if $e_1 \notin D_{F'}$ and $e_4 \in D_{F'}$. Finally, if $e_1,e_4 \notin D_{F'}$ then $e_2,e_3 \in D_{F'}$ and so, we may take $D_F = D_{F'} \setminus \{e_1,e_2,e_3,e_4\}$. Now let us observe that the TD set $T^{-1}(D)$ constructed as such satisfies the following property.

\begin{observation}
\label{obs:DFDF}
For every edge $e =uv \in E(G)$, if $e_1 \in D$ ($e_4 \in D$, respectively) then $v \in T^{-1}(D)$ ($u \in T^{-1}(D)$, respectively).
\end{observation}

We next show that $ct_{\gamma_t}(G) = ct_{\gamma_t}(H)$. As the following was shown in \cite[Lemma 7]{GALBY202118}, it in fact suffices to show that $ct_{\gamma_t}(G) = 2$ if and only if $ct_{\gamma_t}(H) = 2$. 

\begin{claim}[\cite{GALBY202118}]
\label{clm:4sub}
$ct_{\gamma_t}(G) = 1$ if and only if $ct_{\gamma_t}(H) = 1$.
\end{claim}

Assume first that $ct_{\gamma_t}(G) = 2$ and let $D$ be a TD set of $G$ of size $\gamma_t(G) + 1$ containing a $P_4$, a $2P_3$ or a $K_{1,3}$ (see \Cref{theorem:1totcontracdom}). Suppose first that $D$ contains a $P_4$ with edge set $e^1=uv$, $e^2 = vw$ and $e^3 =wt$. Then by construction, the TD set $T(D)$ of $G$ contains the $2P_3$ $e^1_4ve^2_1,e^2_4we^3_1$. Second, suppose that $D$ contains a $2P_3$ with edge set $e^1=uv$, $e^2=vw$, $e^3=xy$ and $e^4=yz$. Then by construction, the TD set $T(D)$ of $G$ contains the $2P_3$ $e^1_4ve^2_1,e^3_4ye^4_1$. Suppose finally that $D$ contains a $K_{1,3}$ on edge set $e^1 =uv$, $e^2 = wv$ and $e^3= tv$. Then by construction, the TD set $T(D)$ contains the $K_{1,3}$ $v,e^1_4,e^2_4,e^3_4$. In all three cases, we conclude by \Cref{theorem:1totcontracdom} that $ct_{\gamma_t}(H) \leq 2$ and by \Cref{clm:4sub} that in fact $ct_{\gamma_t}(H) = 2$.

Conversely assume that $ct_{\gamma_t}(H) = 2$ and denote by $\mathcal{D}$ the set of TD sets of $H$ of size $\gamma_t(H)+1$ containing a $P_4$, a $2P_3$ or a $K_{1,3}$ (note that $\mathcal{D} \neq \varnothing$ by \Cref{theorem:1totcontracdom}). Observe that if there exists $D \in \mathcal{D}$ such that $T^{-1}(D)$ contains a $P_4$, a $2P_3$ or a $K_{1,3}$, then since $ct_{\gamma_t}(G) > 1$ by \Cref{clm:4sub}, it must be that $|T^{-1}(D)| = \gamma_t(G) + 1$ and thus, $ct_{\gamma_t}(G) = 2$ by \Cref{theorem:1totcontracdom}. Now if there exists $D \in \mathcal{D}$ such that $D$ contains a $K_{1,3}$ $u,e^1_1,e^2_1,e^3_1$ where $u$ is the vertex of degree three and $e^i = ux_i$ for every $i \in [3]$, then by \Cref{obs:DFDF}, the TD set $T^{-1}(D)$ contains the $K_{1,3}$ $u,x_1,x_2,x_3$ and so, $ct_{\gamma_t}(G) = 2$ by the above.

Assume henceforth that no TD set in $\mathcal{D}$ contains a $K_{1,3}$. We contend that there then exists a TD set in $\mathcal{D}$ containing a $2P_3$. To prove this claim, let us first show that there exists $D \in \mathcal{D}$ such that for every edge $e \in E(G)$, $\{e_1,e_2,e_3,e_4\} \not\subseteq D$. In the following, given a TD set $D \in \mathcal{D}$ and an edge $e \in E(G)$, if $\{e_1,e_2,e_3,e_4\} \subseteq D$ then $D$ is said to \emph{accommodate} the edge $uv$. Consider a TD set $D \in \mathcal{D}$ accommodating the minimum number of edges amongst every TD set in $\mathcal{D}$, and suppose to the contrary that $D$ accommodates an edge $e=uv \in E(G)$. Let us show that $\mathcal{D} \setminus \{D\}$ contains a TD set accommodating fewer edges than $D$, which would contradict the choice of $D$. First note that one of $u$ and $v$ does not belong to $D$: indeed, if $\{u,v\} \subseteq D$ then $D \setminus \{e_3\}$ is a minimum TD set of $H$ containing the $P_3$ $ue_1e_2$, a contradiction to \Cref{theorem:1totcontracdom}. Furthermore, since $\gamma_t(G) \geq 3$, at least one of $u$ and $v$ has degree at least two in $G$. 

Suppose first that there exists $x \in \{u,v\} \setminus D$ such that $d_G(x) \geq 2$, say $x = v$ without loss of generality, and let $f = vw$ be an edge of $G$. Note that since $v \notin D$ by assumption, necessarily $f_2 \in D$, for $f_1$ would otherwise not be dominated. Now if $f_1 \notin D$ then $f_3 \in D$ as $f_2$ should be dominated; but then, the TD set $(D \setminus \{e_4\}) \cup \{f_1\}$ belongs to $\mathcal{D}$ (note indeed that it contains $e_1e_2e_3$ and $f_1f_2f_3$) and accommodates fewer edges than $D$, a contradiction to the choice of $D$. Thus, it must be that $f_1 \in D$; but then, the TD set $(D \setminus \{e_3\}) \cup \{v\}$ belongs to $\mathcal{D}$ (note indeed that it contains the $P_4$ $e_4vf_1f_2$) and accommodates fewer edges than $D$, a contradiction to the choice of $D$. 

Suppose second that for every $x \in \{u,v\} \setminus D$, $d_G(x) = 1$. By the above, $\{u,v\} \setminus D \neq \varnothing$ and at least one of $u$ and $v$ has degree at least two in $G$: let us assume without loss of generality that $u \notin D$ and $d_G(v) \geq 2$ (note that then $d_G(u) = 1$ and $v \in D$). Observe that since $D \setminus \{e_3\}$ is a minimum TD set of $H$, it cannot contain a $P_3$ by \Cref{theorem:1totcontracdom}. Now among the neighbours of $v$, there must be one of degree at least two, for $G$ is otherwise a star thereby contradicting the fact that $\gamma_t(G) \geq 3$. Thus, let $f= vw$ be an edge of $G$ such that $d_G(w) \geq 2$. Then since $f_1 \notin D$ ($D \setminus \{e_3\}$ would otherwise contain the $P_3$ $e_4vf_1$), necessarily $f_3 \in D$ as $f_2$ should be dominated; and since $f_3$ should be dominated but $D \setminus \{e_3\}$ cannot contain a $P_3$, necessarily $|D \cap \{f_2,f_4\}| = 1$. Now if $f_2 \in D$ then the TD set $(D \setminus \{e_4\}) \cup \{f_1\}$ belongs to $\mathcal{D}$ (note indeed that it contains the $P_4$ $vf_1f_2f_3$) and accommodates fewer edges than $D$ (recall that in this case $f_4 \notin D$), a contradiction to the choice of $D$. Now if $f_4 \in D$ then $w \notin D$, for $D \setminus \{e_3\}$ would otherwise contain the $P_3$ $f_3f_4w$; but then, by considering the TD set $(D \setminus \{e_3,e_4\}) \cup \{f_1,f_2\}$ in place of $D$, we may argue as in the previous case (recall indeed that $w \notin D$ and $d_G(w) \geq 2$) and conclude similarly to a contradiction. Therefore, $D$ accommodates no edge.

Let us next show that among those TD sets of $\mathcal{D}$ accommodating no edge, there exists a TD set $D$ such that $D$ contains a $2P_3$. Indeed, let $D \in \mathcal{D}$ be a TD set which accommodates no edge and suppose that $D$ contains no $2P_3$. Let us show how obtain from $D$ a TD set with a $2P_3$. Note that since $D$ contains no $2P_3$, $D$ contains a $P_4$ by assumption. Now suppose first that $D$ contains a $P_4$ of the form $ue_1e_2e_3$, where $e = uv \in E(G)$. Since $D$ does not accommodate the edge $uv$, $e_4 \notin D$ and so, there must exist $f = vw \in E(G)$ such that $f_1 \in D$ ($v$ would otherwise not be dominated). But then, either $v \notin D$ in which case $f_2 \in D$ ($f_1$ would otherwise not be dominated) and thus, the TD set $(D \setminus \{e_3\}) \cup \{v\}$ contains the $2P_3$ $ue_1e_2, vf_1f_2$; or $v \in D$ in which case the TD set $(D \setminus \{e_3\}) \cup \{e_4\}$ contains the $2P_3$ $ue_1e_2,e_4vf_1$. Second, suppose that $D$ contains a $P_4$ of the form $e_4vf_1f_2$, where $e = uv, f= vw \in E(G)$. Then we may assume that $f_3 \notin D$ (we revert to the previous case otherwise) which implies that $w \in D$ for otherwise, $f_4$ would not be dominated. But then, either $f_4 \in D$ in which case the TD set $(D \setminus \{f_2\}) \cup \{f_3\}$ contains the $2P_3$ $e_4vf_1,f_3f_4w$; or $f_4 \notin D$ in which case the TD set $(D \setminus \{f_2\}) \cup \{f_4\}$ contains the $2P_3$ $e_4vf_1,f_4wt$ where $t \in N_H(w) \cap D$.

Finally, let us show that among those TD sets in $\mathcal{D}$ accommodating no edge and containing a $2P_3$, there exists a TD set $D$ such that (1) no $P_3$ in $D$ is of the form $e_1e_2e_3$ where $e \in E(G)$, and (2) if $D$ contains a $P_3$ of the form $ue_1e_2$, where $e=uv \in E(G)$, then $v$ has a neighbour $f_1 \in D \setminus \{e_4\}$ where $f=vw \in E(G)$. To this end, let $D \in \mathcal{D}$ be a TD set accommodating no edge and containing a $2P_3$. Suppose that $D$ contains a $P_3$ of the form $e_1e_2e_3$ where $e = uv \in E(G)$. Since $D$ does not accommodate the edge $uv$, $e_4 \notin D$, which implies that $v$ must have a neighbour $f_1 \in D$ where $f = vw \in E(G)$. But then, either $w \in D$ in which case we may consider the TD set $(D \setminus \{e_3\}) \cup \{e_4\}$ in place of $D$; or $w \notin D$ in which case $f_2 \in D$ ($f_1$ would otherwise not be dominated) and we may consider the TD set $(D \setminus \{e_3\}) \cup \{e_4\}$ in place of $D$. By iterating this process, we eventually reach a TD set $D \in \mathcal{D}$ accommodating no edge, containing a $2P_3$ and satisfying (1). Now suppose that this TD set $D$ contains a $P_3$ of the form $ue_1e_2$, where $e =uv \in E(G)$, and suppose that $N_H(v) \cap D = \{e_4\}$. Then since $D$ does not accommodate the edge $uv$, necessarily $e_3 \notin D$ and so, $v \in D$ as $e_4$ would otherwise not be dominated. Thus, if there exists an edge $f = vw \in E(G)$, then we may consider the TD set $(D \setminus \{e_4\}) \cup \{f_1\}$ in place of $D$. Assume therefore that no such edge exists, that is, $d_G(v) = 1$. Then surely, there exists an edge $f = uw \in E(G)$ since $\gamma_t(G) \geq 3$. But then, either $f_1 \in D$ in which case we may consider the TD set $D'=(D \setminus \{e_2\}) \cup \{e_3\}$ in place of $D$ (note indeed that the path $ve_4e_3$ in $D'$ then satisfies (2) as $f_1 \in D'$); or $f_1 \notin D$ in which case we may consider the TD set $(D \setminus \{e_2\}) \cup \{f_1\}$ (note that we here simply discard the path $ue_1e_2$). By iterating this process, we eventually reach a TD set in $\mathcal{D}$ accommodating no edge, containing a $2P_3$ and satisfying both (1) and (2). 

To conclude, let $D \in \mathcal{D}$ be a TD set accommodating no edge, containing a $2P_3$ and satisfying both (1) and (2). Then by (1), $D$ contains only $P_3$s of the form $f_4ve_1$ where $f= uv, e=vw \in E(G)$, or the form $ue_1e_2$ where $e=uv \in E(G)$. Now if $D$ contains a $P_3$ of the form $f_4ve_1$ where $f = uv, e= vw \in E(G)$, then $uvw$ is a $P_3$ of $T^{-1}(D)$ by \Cref{obs:DFDF}. Similarly, if $D$ contains a $P_3$ of the form $ue_1e_2$ where $e=uv \in E(G)$, then by (2), $v$ has a neighbour $f_1 \in D \setminus \{e_4\}$ where $f = vw$, and thus, $uvw$ is a $P_3$ of $T^{-1}(D)$ by \Cref{obs:DFDF}. Since any two $P_3$s $uvw$ and $u'v'w'$ in $T^{-1}(D)$ corresponding to two distinct $P_3$s in $D$ have at most two common vertices (it may be indeed that $\{u,w\} \cap \{u',w'\} \neq \varnothing$), we conclude that $T^{-1}(D)$ contains a $2P_3$ if $\{u,w\} \cap \{u',w'\} = \varnothing$, and a $P_4$ otherwise. Therefore, $ct_{\gamma_t}(G) \leq 2$ by \Cref{theorem:1totcontracdom} and since $ct_{\gamma_t}(G) > 1$ by \Cref{clm:4sub}, in fact $ct_{\gamma_t}(G) = 2$.
\end{proof}

By 4-subdividing an instance of {\sc Contraction Number($\gamma_t,k$)} sufficiently many times, the following ensues from \Cref{lem:cyclestd}.

\begin{lemma}
\label{lem:cyclestd1}
For every $k \in \{2,3\}$ and $\ell \geq 3$, {\sc Contraction Number($\gamma_t,k$)} is $\mathsf{NP}$-hard on $\{C_3,\ldots,C_\ell\}$-free graphs.
\end{lemma}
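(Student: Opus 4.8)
The plan is to reduce from the unrestricted {\sc Contraction Number($\gamma_t,k$)} problem, which for each fixed $k\in\{2,3\}$ is $\mathsf{NP}$-hard (this is precisely what the hardness results of this section, together with their $k=3$ counterparts, establish). The only gadget needed is the $4$-subdivision of \Cref{lem:cyclestd}: it leaves $ct_{\gamma_t}$ unchanged while strictly increasing the girth, so iterating it a bounded number of times destroys all short cycles without affecting the answer.

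Concretely, given an instance $G$, I would first dispose of a degenerate case. Whether $\gamma_t(G)=2$ can be decided in polynomial time by testing all vertex pairs (recall $\gamma_t(G)\ge 2$ always), and when it holds the answer is forced: no set of two vertices contains a $P_3$ and no set of three vertices contains a $P_4$, $K_{1,3}$ or $2P_3$, so $ct_{\gamma_t}(G)=3$ by \Cref{theorem:1totcontracdom}. Hence we may assume $\gamma_t(G)\ge 3$. Now set $H^{(0)}=G$ and, for $i\ge 1$, let $H^{(i)}$ be obtained from $H^{(i-1)}$ by $4$-subdividing every edge. Two facts drive the argument. First, $4$-subdividing an edge only lengthens existing cycles and creates no new ones, and each edge becomes a path with five edges; hence each round multiplies the girth by $5$, giving $\mathrm{girth}(H^{(i)})=5^{i}\cdot\mathrm{girth}(G)$. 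Second, since $\gamma_t(H^{(i)})=\gamma_t(H^{(i-1)})+2|E(H^{(i-1)})|\ge \gamma_t(H^{(i-1)})\ge 3$, the hypothesis of \Cref{lem:cyclestd} is met at every step, so $ct_{\gamma_t}(H^{(i)})=ct_{\gamma_t}(H^{(i-1)})$ for all $i$.

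I would then fix the constant $m=\lceil\log_5 \ell\rceil$, so that $\mathrm{girth}(H^{(m)})\ge 3\cdot 5^{m}>\ell$; equivalently $H^{(m)}$ has no cycle of length at most $\ell$, i.e. $H^{(m)}$ is $\{C_3,\dots,C_\ell\}$-free (the shortest cycle of any graph is induced). Chaining the two facts yields $ct_{\gamma_t}(H^{(m)})=ct_{\gamma_t}(G)$, so $H^{(m)}$ is a \yes-instance of {\sc Contraction Number($\gamma_t,k$)} exactly when $G$ is, which is the correctness of the reduction. For polynomiality, one round replaces each edge by four vertices and five edges, so $|E(H^{(i)})|=5|E(H^{(i-1)})|$ and the size grows by a factor $O(5^{i})$; since $\ell$ is a fixed constant defining the target class rather than part of the input, $m$ is a constant and the total blow-up $5^m$ is constant, so $H^{(m)}$ is computable from $G$ in polynomial time.

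The only point requiring care is the bookkeeping that keeps the reduction polynomial and \Cref{lem:cyclestd} applicable throughout: one must confirm that $\ell$ (hence $m$) is treated as a fixed parameter of the class, and that the invariant $\gamma_t\ge 3$ survives all $m$ rounds. Both are immediate from the girth growth and the identity $\gamma_t(H)=\gamma_t(G)+2|E(G)|$, so there is no genuine combinatorial obstacle beyond invoking \Cref{lem:cyclestd} iteratively.
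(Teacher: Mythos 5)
Your proof is correct and takes essentially the paper's route: the paper derives \Cref{lem:cyclestd1} precisely by repeatedly $4$-subdividing the $\mathsf{NP}$-hard instances of {\sc Contraction Number($\gamma_t,k$)} (\Cref{lem:clawtd2,lem:clawtd3}) and invoking \Cref{lem:cyclestd}, and your bookkeeping (girth multiplied by exactly $5$ per round, $m=\lceil\log_5\ell\rceil$ rounds, constant blow-up since $\ell$ is a fixed parameter of the class, and $\gamma_t\geq 3$ preserved via $\gamma_t(H)=\gamma_t(G)+2|E(G)|$) simply spells out the paper's ``sufficiently many times.'' One small quibble in your degenerate case: when $\gamma_t(G)=2$ the target value $\gamma_t(G)-1=1$ is unattainable (every TD set has size at least two), so strictly speaking such graphs are \no-instances for $k=3$ as well rather than satisfying $ct_{\gamma_t}(G)=3$ --- but since this case is polynomial-time detectable and its answer is forced either way, the reduction is unaffected.
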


The last result of this section concerns {\sc Contraction Number($\gamma_t$,3)}. Combining \Cref{lem:cyclestd} with the following result, we obtain in particular that if {\sc Contraction Number($\gamma_t$,3)} is polynomial-time solvable on $H$-free graphs, then $H$ must be a linear forest.

\begin{lemma}
\label{lem:clawtd3}
{\sc Contraction Number($\gamma_t$,3)} is $\mathsf{NP}$-hard on $K_{1,3}$-free graphs. 
\end{lemma}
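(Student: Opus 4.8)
The plan is to reduce from {\sc Positive Exactly 3-Bounded 1-In-3 3-Sat}, as in the $k=2$ case (\Cref{lem:clawtd2}), and to build from a formula $\Phi$ a claw-free graph $G = G(\Phi)$ whose variable and clause gadgets are assembled out of cliques and paw-like pieces, so that no induced $K_{1,3}$ can arise. The target equivalence I would aim for is $\Phi$ \emph{satisfiable} $\Leftrightarrow ct_{\gamma_t}(G) = 3$, i.e.\ $G$ is a \yes-instance for {\sc Contraction Number($\gamma_t$,3)} exactly when $\Phi$ admits a $1$-in-$3$ assignment. Recall that by \Cref{theorem:1totcontracdom}, $ct_{\gamma_t}(G) = 3$ is equivalent to the conjunction of two conditions: (a) no minimum TD set of $G$ contains a $P_3$, and (b) no TD set of $G$ of size $\gamma_t(G)+1$ contains a $P_4$, $K_{1,3}$ or $2P_3$.

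First I would prove a counting/rigidity lemma in the spirit of \Cref{clm:tdclaw1}, pinning down $\gamma_t(G)$ as an explicit function of $|X|$ and $|C|$ that is attained precisely when $\Phi$ is satisfiable, and describing the minimum TD sets tightly enough to show that each of them induces a matching whose edges are pairwise at distance greater than two. By \Cref{clm:condct2} this rigidity rules out the possibility that any minimum TD set certifies $ct_{\gamma_t}(G) \le 2$, and in particular it establishes condition~(a). The heart of the reduction, and the step I expect to be the main obstacle, is condition~(b) in full: in the satisfiable case I must verify that one cannot enlarge the domination to size $\gamma_t(G)+1$ while creating a $P_4$, $K_{1,3}$ or $2P_3$ anywhere. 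This is the genuinely universal part of the argument — it says that a valid $1$-in-$3$ assignment makes the domination so tight that every near-minimum TD set stays ``spread out'' — and I expect it to require a careful gadget-local case analysis of how a single extra vertex, or a local exchange of vertices, can sit relative to the forced matching edges, ruling out each forbidden configuration in turn.

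The converse direction should be essentially constructive: from the \emph{absence} of a $1$-in-$3$ assignment I would extract the ``slack'' that exhibits a witness for $ct_{\gamma_t}(G) \le 2$, namely either a minimum TD set containing a $P_3$ (giving $ct_{\gamma_t}(G)=1$) or a size-$(\gamma_t(G)+1)$ TD set containing one of the three configurations (giving $ct_{\gamma_t}(G)=2$), mirroring how the extra vertex $a^y_c$ produced a $P_4$ in \Cref{lem:clawtd2}. The principal design challenge — and the reason the $k=2$ construction cannot be used verbatim — is precisely that there satisfiability \emph{itself} already produced such a $P_4$; I would therefore redesign the clause gadget so that, under a valid assignment, forming any forbidden configuration costs two extra vertices rather than one, thereby shifting $ct_{\gamma_t}$ up by exactly one while preserving both claw-freeness and the forced-matching structure of minimum TD sets. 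Finally, I would confirm that $G$ is $K_{1,3}$-free, which should follow directly from using only clique attachments and paws in the gadgets.
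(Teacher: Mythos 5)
Your high-level strategy coincides with the paper's: the paper reduces from the same problem ({\sc Positive Exactly 3-Bounded 1-In-3 3-Sat}), uses \Cref{theorem:1totcontracdom} exactly as you describe, proves a counting claim ($\Phi$ is satisfiable if and only if $\gamma_t(G) = 14|X|+4|C|$), shows that tightness forces minimum TD sets into a rigid ``far-apart edges'' structure, and then settles conditions (a) and (b) by gadget-local case analysis. Your diagnosis that the $k=2$ construction of \Cref{lem:clawtd2} cannot be reused because there a satisfying assignment already buys a $P_4$ for one extra vertex is also precisely the paper's reason for redesigning the gadgets.

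As a proof, however, the proposal has a genuine gap: it never exhibits the gadgets, and the entire content of the lemma lives in that construction. The paper replaces both the variable gadget (now built from six short paws attached through two cliques to a central paw, \Cref{fig:vargadclawtd3}) and the clause gadget (now a disjoint union $G_c^T$, $G_c^F$ of pendant edges on cliques, \Cref{fig:clausegadclawtd3}), and then verifies through a chain of claims (\Cref{clm:sizeclawtd3} through \Cref{clm:ct3clawtd3}) that simultaneously: the tight value $14|X|+4|C|$ is attained exactly under $1$-in-$3$ assignments; tightness excludes every $P_3$ from every minimum TD set; and \emph{no} TD set of size $\gamma_t(G)+1$ contains a $P_4$, $K_{1,3}$ or $2P_3$ --- this last point alone requiring a multi-page case analysis over which gadget absorbs the extra vertex and how domination duties then propagate between variable and clause gadgets. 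Your proposal compresses all of this into ``I would redesign the clause gadget so that forming any forbidden configuration costs two extra vertices rather than one,'' which is not a design but a restatement of the property to be achieved; nothing in the proposal shows that a claw-free gadget with this property exists, nor that it can be made compatible with the counting equivalence and the rigidity of minimum TD sets at the same time. Until concrete gadgets are specified and those claims are verified, the argument does not go through.
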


\begin{proof}
We reduce from {\sc Positive Exactly 3-Bounded 1-In-3 3-Sat} (see \Cref{sec:prelim} for a precise definition of this problem): given an instance $\Phi$ of this problem, with variable set $X$ and clause set $C$, we construct an instance $G$ of {\sc Contraction Number($\gamma_t$,3)} as follows. For every variable $x \in X$ appearing in clauses $c,c'$ and $c''$, we introduce the gadget $G_x$ depicted in \Cref{fig:vargadclawtd3} (where a rectangle indicates that the corresponding set of vertices is a clique). In the following, we denote by $P_x$ the paw induced by $\{T_x,F_x,v_x,u_x\}$ and we may refer to the vertices of $P_x$ as $P_x(1),\ldots,P_x(4)$ where $P_x(1) = T_x$ and $P_x(2) = F_x$. For every clause $c \in C$ containing variables $x,y$ and $z$, we introduce the gadget $G_c$ which is the disjoint union of the graphs $G_c^T$ and $G_c^F$ depicted un \Cref{fig:clausegadclawtd3} (where a rectangle indicates that the corresponding set of vertices is a clique) and further add for every $\ell \in \{x,y,x\}$, an edge between $P_{\ell,T}^c(1)$ $t^c_\ell$, and an edge between $P_{x,F}^c(2)$ and $f^c_\ell$. In the following, we denote by $K_c^T$ the clique induced by $p^x_c,p^y_c$ and $p^z_c$, and by $K_c^F$ the clique induced by $q^x_c,q^y_x$ and $q^z_c$. We let $G$ be the resulting graph. We next show that $ct_{\gamma_t}(G) = 3$ if and only if $\Phi$ is satisfiable through a series of claims.

\begin{figure}
\centering
\begin{tikzpicture}
\shortPaw{0}{0}{$P_{x,T}^{c''}(1)$}{$P_{x,T}^{c''}(2)$}{}{};
\node[circ,label=below:{\small $p_x^{c''}$}] (qc) at (2,0) {};
\draw (qc) -- (1,0);

\shortPaw{0}{1.5}{$P_{x,T}^{c'}(1)$}{$P_{x,T}^{c'}(2)$}{}{};
\node[circ,label=below:{\small  $p_x^{c'}$}] (qc') at (2,1.5) {};
\draw (qc') -- (1,1.5);

\shortPaw{0}{3}{$P_{x,T}^c(1)$}{$P_{x,T}^c(2)$}{}{};
\node[circ,label=below:{\small  $p_x^c$}] (qc'') at (2,3) {};
\draw (qc'') -- (1,3);

\draw (1.7,-.75) rectangle (2.3,3.25);

\node[circ,label=below:{\small $T_x$}] (tx) at (3.5,1.5) {};
\node[circ,label=below:{\small $F_x$}] (fx) at (4.5,1.5) {};
\node[circ,label=right:{\small $v_x$}] (vx) at (4,1.9) {};
\node[circ,label=right:{\small $u_x$}] (ux) at (4,2.3) {};
\draw (tx) -- (fx) -- (vx) -- (tx)
(vx) -- (ux);

\node[circ,label=below:{\small  $q_x^{c''}$}] (pc) at (6,0) {};
\shortPaw{7}{0}{$P_{x,F}^{c''}$}{$P_{x,F}^{c''}$}{}{};
\draw (pc) -- (7,0);

\node[circ,label=below:{\small  $q_x^{c'}$}] (pc') at (6,1.5) {};
\shortPaw{7}{1.5}{$P_{x,F}^{c'}(1)$}{$P_{x,F}^{c'}(2)$}{}{};
\draw (pc') -- (7,1.5);

\node[circ,label=below:{\small  $q_x^c$}] (pc'') at (6,3) {};
\shortPaw{7}{3}{$P_{x,F}^c(1)$}{$P_{x,F}^c(2)$}{}{};
\draw (pc'') -- (7,3);

\draw (5.7,-.75) rectangle (6.3,3.25);

\draw (3.5,1.5) -- (qc)
(3.5,1.5) -- (qc')
(3.5,1.5) -- (qc'')
(4.5,1.5) -- (pc)
(4.5,1.5) -- (pc')
(4.5,1.5) -- (pc'');
\end{tikzpicture}
\caption{The variable gadget $G_x$ for a variable $x$ contained in clause $c,c'$ and $c''$ (a rectangle indicates that the corresponding set of vertices induces a clique).}
\label{fig:vargadclawtd3}
\end{figure}
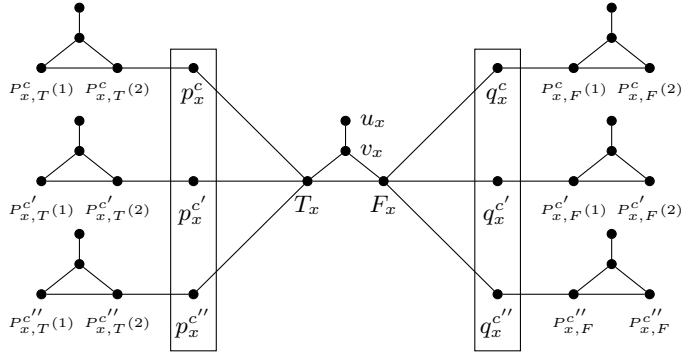

\begin{figure}
\centering
\begin{subfigure}[b]{.45\textwidth}
\centering
\begin{tikzpicture}
\node[circ,label=below:{\small $t^x_c$}] (tx) at (0,0) {};
\node[circ,label=below:{\small $p^x_c$}] (px) at (1,0) {};
\draw (tx) -- (px);

\node[circ,label=below:{\small $t^y_c$}] (ty) at (0,1) {};
\node[circ,label=below:{\small $p^y_c$}] (py) at (1,1) {};
\draw (ty) -- (py);

\node[circ,label=below:{\small $t^z_c$}] (tz) at (0,2) {};
\node[circ,label=below:{\small $p^z_c$}] (pz) at (1,2) {};
\draw (tz) -- (pz);

\draw (.75,-.55) rectangle (1.25,2.25);
\end{tikzpicture}
\caption{The graph $G_c^T$.}
\end{subfigure}
\hspace{.5cm}
\begin{subfigure}[b]{.45\textwidth}
\centering
\begin{tikzpicture}
\node[circ,label=below:{\small $u_c$}] (uc) at (0,1) {};
\node[circ,label=below:{\small $v_c$}] (vc) at (1,1) {};
\draw (uc) -- (vc);

\node[circ,label=below:{\small $q_c^x$}] (qx) at (2,0) {};
\node[circ,label=below:{\small $f_c^x$}] (fx) at (3,0) {};
\draw (qx) -- (fx);

\node[circ,label=below:{\small $q_c^y$}] (qy) at (2,1) {};
\node[circ,label=below:{\small $f_c^y$}] (fy) at (3,1) {};
\draw (qy) -- (fy);

\node[circ,label=below:{\small $q_c^z$}] (qz) at (2,2) {};
\node[circ,label=below:{\small $f_c^z$}] (fz) at (3,2) {};
\draw (qz) -- (fz);

\draw (1.75,-.55) rectangle (2.25,2.25);

\draw (vc) -- (qx)
(vc) -- (qy)
(vc) -- (qz);
\end{tikzpicture}
\caption{The graph $G_c^F$.}
\end{subfigure}
\caption{The clause gadget $G_c$ for a clause $c= x \lor y \lor z$ is the disjoint union of $G_c^T$ and $G_c^F$ (a rectangle indicates that the corresponding set of vertices is a clique).}
\label{fig:clausegadclawtd3}
\end{figure}
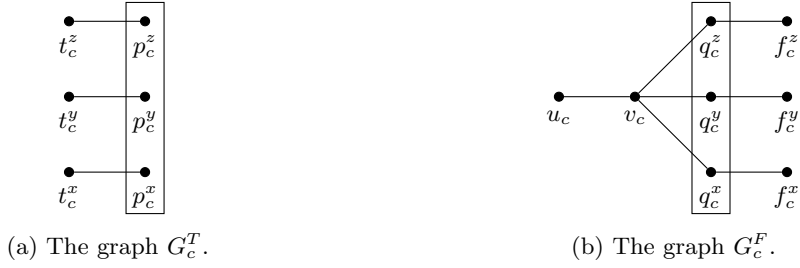

\begin{claim}
\label{clm:sizeclawtd3}
For every TD set $D$ of $G$, the following hold.
\begin{itemize}
\item[(i)] For every clause $c \in C$ and for every $R \in \{T,F\}$, $|D \cap V(G_c^R)| \geq 2$. Furthermore, $v_c \in D$. 
\item[(ii)] For every variable $x \in X$ and every paw $P$ of $G_x$, $|D \cap V(P)| \geq 2$ and $P(3) \in D$. 
\end{itemize}
\end{claim}

\begin{claimproof}
To prove (i), let $c \in C$ be a clause containing variables $x,y$ and $z$. Then since $u_c$ should be dominated, necessarily $v_c \in D$. Furthermore, $D \cap \{u_c,q_c^x,q_c^y,q_c^z\} \neq \varnothing$ as $v_c$ should be dominated. Thus, $|D \cap V(G_c^F)| \geq 2$. Now since every vertex of $K_c^T$ must be dominated, either $D \cap V(K_c^T) = \varnothing$ in which case $\{t_c^x,t_c^y,t_c^z\} \subseteq D$; or $D \cap V(K_c^T) \neq \varnothing$, say $p_c^x \in D$ without loss of generality, in which case $D \cap \{t_c^x,p_c^y,p_c^z\} \neq \varnothing$ as $p_c^x$ should be dominated as well. In both cases, $|D \cap V(G_c^T)| \geq 2$.

To prove (ii), let $x \in X$ be a variable and let $P$ be a paw in $G_x$. Then since $P(4)$ should be dominated, necessarily $P(3) \in D$; and since $P(3)$ should be dominated, necessarily $D \cap \{P(1),P(2),P(4)\} \neq \varnothing$. Thus, $|D \cap V(P)| \geq 2$.
\end{claimproof}

The following is an immediate consequence of \Cref{clm:sizeclawtd3}(ii).

\begin{observation}
\label{obs:cliquesgx}
For every TD set $D$ of $G$ and every variable $x \in X$ contained in clauses $c,c'$ and $c''$, if $|D \cap V(G_x)| = 14$ then $D \cap \{q_x^\ell,p_x^\ell~|~\ell \in \{c,c',c''\}\} = \varnothing$.
\end{observation}

The following is an immediate consequence of \Cref{clm:sizeclawtd3}(i).

\begin{observation}
\label{obs:larger3}
Let $D$ be a TD set of $G$. If $|D \cap V(G_c^F)| = 2$ for some clause $c \in C$ containing variables $x,y$ and $z$, then $D \cap \{f_c^x,f_c^y,f_c^z\} = \varnothing$.
\end{observation}

\begin{claim}
\label{clm:truthvar}
Let $D$ be a TD set of $G$ and let $x \in X$ be a variable appearing in clauses $c,c'$ and $c''$. If $|D \cap V(G_x)| = 14$ then the following hold.
\begin{itemize}
\item[(i)] If there exists $\ell \in \{c,c',c''\}$ such that $P_{x,T}^\ell(1) \in D$ then $T_x \in D$.
\item[(ii)] It there exists $\ell \in \{c,c',c''\}$ such that $P_{x,F}^\ell(2) \in D$ then $F_x \in D$.
\end{itemize}
\end{claim}

\begin{claimproof}
Assume that $|D \cap V(G_x)| = 14$. If there exists $\ell \in \{c,c',c''\}$ such that $P_{x,T}^\ell(1) \in D$, then $P_{x,T}^\ell(2) \notin D$ by \Cref{clm:sizeclawtd3}(i); and since $D \cap \{p_x^c,p_x^{c'},p_x^{c''}\} = \varnothing$ by \Cref{obs:cliquesgx}, necessarily $T_x \in D$ as $p_x^\ell$ should be dominated. Item (ii) follows by symmetry.
\end{claimproof}

\begin{claim}
\label{clm:phisatclawtd3}
$\Phi$ is satisfiable if and only if $\gamma_t(G) = 14|X| + 4|C|$. 
\end{claim}

\begin{claimproof}
Assume first that $\Phi$ is satisfiable and consider a truth assignment satisfying $\Phi$. We construct a TD set $D$ of $G$ as follows. For every variable $x \in X$ appearing in clauses $c,c'$ and $c''$, if $x$ is true then we include $\{v_x,T_x\} \cup \{P_{x,R}^\ell(1),P_{x,R}^\ell(3)~|~ \ell \in \{c,c',c''\} \text{ and } R \in \{T,F\}\}$ in $D$; otherwise, we include  $\{v_x,F_x\} \cup \{P_{x,R}^\ell(2),P_{x,R}^\ell(3)~|~ \ell \in \{c,c',c''\} \text{ and } R \in \{T,F\}\}$ in $D$. For every clause $c \in C$ containing variables $x,y$ and $z$, exactly one variable is true, say $x$ without loss of generality, in which case we include $\{v_c,q_c^x\} \cup \{p_c^y,p_c^z\}$ in $D$. It is easy to see that the constructed set $D$ is indeed a TD set of $G$ and since $|D| = 14|X| + 4|C|$, we conclude by \Cref{clm:sizeclawtd3} that $D$ is minimum.

Conversely, assume that $\gamma_t(G) = 14|X| + 4|C|$ and let $D$ be a minimum TD set of $G$. Consider a clause $c \in C$ containing variables $x,y$ and $z$. Since $|D \cap V(G_c^T)| = 2$ by \Cref{clm:sizeclawtd3}(i), at least one of $t_c^x,t_c^y$ and $t_c^z$ is not dominated by a vertex in $V(G_c^T)$, say $N(t_c^x) \cap D \subseteq V(G_x)$ without loss of generality. Then since $P_{x,T}^c(1) \in D$, it follows from \Cref{clm:truthvar}(i) that $T_x \in D$ and so, $F_x \notin D$ by \Cref{clm:sizeclawtd3}(ii). But then, $P_{x,F}^c(2) \notin D$ by \Cref{clm:truthvar}(ii) and since $f_c^x \notin D$ by \Cref{obs:larger3}, necessarily $q_c^x \in D$ as $f_c^x$ should be dominated. It then follows from \Cref{clm:sizeclawtd3}(i) that $D \cap \{q_c^y,q_c^z,f_c^y,f_c^z\} = \varnothing$, which implies that $P_{y,F}^c(2),P_{z,F}^c(2) \in D$ (one of $f_c^y$ and $f_c^z$ would otherwise not be dominated) and so, $F_y,F_z \in D$ by \Cref{clm:truthvar}(ii). Thus, the truth assignment obtained by setting a variable $x$ to true if $T_x \in D$ and to false otherwise, satisfies $\Phi$.
\end{claimproof}

\begin{claim}
\label{clm:larger}
Let $D$ be a TD set of $G$. If there exists a clause $c \in C$ containing variables $x,y$ and $z$ such that $p_c^\ell,q_c^\ell \notin D$ for some $\ell \in \{x,y,z\}$, then $|D \cap V(G_\ell)| > 14$.
\end{claim}

\begin{claimproof}
Let $c \in C$ be a clause containing variables $x,y$ and $z$, and suppose that $p_c^\ell,q_c^\ell \notin D$ for some $\ell \in \{x,y,z\}$. Then $P_{\ell,T}^c(1),P_{\ell,F}^c(2) \in D$ since $t_c^\ell$ and $f_c^\ell$ should be dominated. Thus, if $|D \cap V(G_\ell)| =14$ then by \Cref{clm:truthvar}(i) and (ii), $T_\ell,F_\ell \in D$, a contradiction to \Cref{clm:sizeclawtd3}(ii). \end{claimproof}

\begin{claim}
\label{clm:larger2}
Let $D$ be a TD set of $G$. If there exists a clause $c \in C$ such that $|D \cap V(G_c^T)| = |D \cap V(G_c^F)| = 2$ and $t_c^\ell \in D$ for some variable $\ell$ contained in $c$, then there exists a variable $v \neq \ell$ contained in $c$ such that $|D \cap V(G_v)| > 14$.
\end{claim}

\begin{claimproof}
Let $c \in C$ be a clause containing variables $x,y$ and $z$, and assume that $|D \cap V(G_c^T)| = |D \cap V(G_c^F)| = 2$ and $t_c^\ell \in D$ for some variable $\ell \in \{x,y,z\}$. Then $D \cap V(G_c^T) = \{t_c^\ell,p_c^\ell\}$: indeed, $D \cap V(K_c^T) \neq \varnothing$ for one of $p_c^x,p_c^y$ and $p_c^z$ would otherwise not be dominated; and if $p_c^v \in D$ for some $v \neq \ell$, then $p_c^v$ is not dominated. Now by \Cref{clm:sizeclawtd3}(i), $|D \cap V(K_c^F)| \leq 1$ and so, there exists a variable $v \in \{x,y,z\} \setminus \{\ell\}$ such that $q_c^v \notin D$; and since $P_c^v \notin D$ by the above, the result then follows from \Cref{clm:larger}.
\end{claimproof}

\begin{claim}
\label{clm:ct3clawtd3}
$ct_{\gamma_t}(G) = 3$ if and only if $\gamma_t(G) = 14|X| + 4|C|$.
\end{claim}

\begin{claimproof}
Assume first that $ct_{\gamma_t}(G) = 3$ and let $D$ be a minimum TD set of $G$. Let us show that $|D| = 14|X| + 4|C|$. To this end, consider first a variable $x \in X$ contained in clauses $c,c'$ and $c''$. Since $D$ contains no $P_3$ by \Cref{theorem:1totcontracdom}(i), it is clear that $|D \cap V(P)| \leq 2$ for every paw $P$ contained in $G_x$; and we conclude by \Cref{clm:sizeclawtd3} that, in fact, equality holds. Now if there exists $\ell \in \{c,c',c''\}$ such that $p_x^\ell \in D$ then $(D \setminus V(P_{x,T}^\ell)) \cup \{P_{x,T}^\ell(1),P_{x,T}^\ell(2),P_{x,T}^\ell(3)\}$ is a TD set of $G$ of size $|D| + 1 = \gamma_t(G) +1$ containing the $P_4$ $P_{x,T}^\ell(1)P_{x,T}^\ell(3)P_{x,T}^\ell(2)p_x^\ell$, a contradiction to \Cref{theorem:1totcontracdom}(ii). Thus, $D \cap \{p_x^c,p_x^{c'},p_x^{c''}\} = \varnothing$ and by symmetry, we conclude that $D \cap \{q_x^c,q_x^{c'},q_x^{c''}\} = \varnothing$ as well. Therefore, $|D \cap V(G_x)| = 14$. Second, consider a clause $c \in C$ containing variables $x,y$ and $z$. Suppose to the contrary that there exists $\ell \in \{x,y,z\}$ such that $t_c^\ell \in D$. If $P_{\ell,T}^c(1) \in D$ as well, then $D$ contains the $P_3$ $t_c^\ell P_{\ell,T}^c(1)P_{\ell,T}^c(3)$ by \Cref{clm:sizeclawtd3}(ii), a contradiction to \Cref{theorem:1totcontracdom}(i). Thus, $P_{\ell,T}^c(1) \notin D$; but then, by \Cref{clm:sizeclawtd3}(ii), $D \cup \{P_{\ell,T}^c(1)\}$ contains the $P_4$ $t_c^\ell P_{x,T}^c(1) P_{x,T}^c(3)w$ where $w \in D \cap \{P_{x,T}^c(2),P_{x,T}^c(4)\}$, a contradiction to \Cref{theorem:1totcontracdom}. Thus, $D \cap \{t_c^x,t_c^y,t_c^z\} = \varnothing$. Now if there exists $\ell \in \{x,y,z\}$ such that $f_c^\ell \in D$, then either $P_{\ell,F}^c(2) \in D$ in which case $D$ contains the $P_3$ $f_c^\ell P_{x,F}^c(2) P_{x,F}^c(3)$ by \Cref{clm:sizeclawtd3}(ii); or $P_{\ell,F}^c(2) \notin D$ in which case, by \Cref{clm:sizeclawtd3}(ii), $D \cup P_{\ell,F}^c(2)$ contains the $P_4$ $f_c^\ell P_{\ell,F}^c(2)P_{\ell,F}^c(3) w$ where $t \in D \cap \{P_{\ell,F}^c(1),P_{\ell,F}^c(4)\}$, a contradiction in both cases to \Cref{theorem:1totcontracdom}. Thus, $D \cap \{f_c^x,f_c^y,f_c^z\} = \varnothing$. Finally, since $D$ contains no $P_3$ by \Cref{theorem:1totcontracdom}(i), it is clear that $|D \cap V(K_c^T)| \leq 2$ and $|D \cap \{u_c,v_c,q_c^x,q_c^y,q_c^x\}| \leq 2$, and so, by \Cref{clm:sizeclawtd3}(i), $|D \cap V(G_c^T)| = |D \cap V(G_c^F)| = 2$. Therefore, $\gamma_t(G) = |D| = 14|X| + 4|C|$.\\

Conversely, assume that $\gamma_t(G) = 14|X| + 4|C|$ and consider a minimum TD set $D$ of $G$. Let us show that $D$ contains no $P_3$, which by \Cref{theorem:1totcontracdom}(i), would imply that $ct_{\gamma_t}(G) > 1$. Since for every clause $c \in C$, $|D \cap V(G_c^T)| = |D \cap V(G_c^F)| = 2$ by \Cref{clm:sizeclawtd3}(i), clearly $D \cap V(G_c)$ contains no $P_3$. Similarly, for every variable $x \in X$ appearing in clauses $c,c'$ and $c''$, $D \cap V(G_x)$ contains no $P_3$: indeed, by \Cref{clm:sizeclawtd3}(ii), $|D \cap V(P)| =2$ for every paw $P$ of $G_x$ and $D \cap \{q_x^\ell,p_x^\ell~|~\ell \in \{c,c',c''\}\} = \varnothing$ by \Cref{obs:cliquesgx}. Thus, if $D$ contains a $P_3$, then there must exist a clause $c \in C$ such that (1) $f_c^\ell \in D$ for some variable $\ell$ contained in $c$, or (2) $t_c^\ell \in D$ for some variable $\ell$ contained in $c$. However, by \Cref{obs:larger3}, (1) cannot hold; and by \Cref{clm:larger2}, (2) cannot hold. Thus, $D$ contains no $P_3$. 

Now suppose for a contradiction that $G$ has a TD set $D$ of size $\gamma_t(G) + 1$ containing a $P_4$, a $K_{1,3}$ or a $2P_3$ (see \Cref{theorem:1totcontracdom}(ii)). Then by \Cref{clm:sizeclawtd3}, there exists either a variable $x \in X$ such that $|D \cap V(G_x)| =15$ or a clause $c \in C$ such that $|D \cap V(G_c)| =5$. We next distinguish these two cases.\\

\noindent
\textbf{Case 1.} \emph{There exists a variable $x \in X$ such that $|D \cap V(G_x)| = 15$.} Let $c,c',c'' \in C$ be the three clauses in which $x$ is contained. First note that, since by \Cref{clm:sizeclawtd3}(ii), for every variable $v \in X \setminus \{x\}$, $|D \cap V(P)| = 2$ for every paw $P$ in $G_x$ and $q_v^\ell,p_v^\ell \notin D$ for every clause $\ell$ containing $v$, clearly $D \cap V(G_v)$ contains no $P_4$, $K_{1,3}$ or $2P_3$. Similarly, for every clause $\ell \in C$, $|D \cap V(G_\ell^T)| = |D \cap V(G_\ell^F)| = 2$ by \Cref{clm:sizeclawtd3}(i), and so, $D \cap V(G_\ell)$ contains no $P_4$, $K_{1,3}$ or $2P_3$. Since for every clause $\ell \in C \setminus \{c,c',c''\}$ and every variable $v \in X$ appearing in $\ell$, $f_\ell^v,t_\ell^v \notin D$ by \Cref{obs:larger3} and \Cref{clm:larger2}, it follows that $D \cap (V(G_\ell) \cup V(G_v))$ contains no $P_4$, $K_{1,3}$ or $2P_3$. Now suppose that there exists a clause $\ell \in \{c,c',c''\}$ containing, apart from $x$, variables $y$ and $z$, such that $t_\ell^v \in D$ for some $v \in \{y,z\}$ (note that by \Cref{clm:larger2}, $t_\ell^x \notin D$). We contend that $t_\ell^v$ cannot be part of a $P_4$, a $K_{1,3}$ or a $2P_3$. To prove this claim, let us show that $P_{v,T}^\ell(1) \notin D$ (since $|D \cap V(G_\ell^T)| = 2$, this would indeed imply our claim). Observe first that $D \cap V(G_\ell^T) = \{t_\ell^v,p_\ell^v\}$: indeed, $D \cap V(K_\ell^T) \neq \varnothing$ for one of $p_\ell^x,p_\ell^y$ and $p_\ell^z$ would otherwise not be dominated; and if $p_\ell^u \in D$ for some $u \neq v$, then $p_\ell^u$ is not dominated, a contradiction. By \Cref{clm:larger}, it must then be that $q_\ell^u \in D$ for $u \in \{y,z\} \setminus \{v\}$, as $p_\ell^u \notin D$ and $|D \cap V(G_u)| = 14$. It then follows from \Cref{clm:sizeclawtd3}(i) that $q_\ell^v \notin D$, which implies that $P_{v,F}^\ell(2) \in D$, as $f_\ell^v$ should be dominated. But then, $F_\ell \in D$ by \Cref{clm:truthvar}(ii), and since then, $T_\ell \notin D$ by \Cref{clm:sizeclawtd3}(ii), $P_{v,T}^\ell(1) \notin D$ by \Cref{clm:truthvar}(i), as claimed. Since for every $\ell \in \{c,c',c''\}$, $f_\ell^x \notin D$ by \Cref{obs:larger3} and $t_\ell^x \notin D$ by \Cref{clm:larger2}, it follows that any $P_4$, $K_{1,3}$ or $2P_3$ of $D$ is in fact contained in $D \cap V(G_x)$.

Now suppose that $|D \cap V(P)| = 3$ for some paw $P$ of $G_x$. Then by \Cref{clm:sizeclawtd3}(i), $D \cap \{p_x^\ell,q_x^\ell~|~ \ell \in \{c,c',c''\}\} = \varnothing$ and $|D \cap V(P')| =2$ for any paw $P' \neq P$ of $G_x$. Thus, $D \cap V(G_x)$ contains no $P_4$, $K_{1,3}$ or $2P_3$. Suppose next that $D \cap \{p_x^c,p_x^{c'},p_x^{c''}\} \neq \varnothing$ (the case where $D \cap \{q_x^c,q_x^{c'},q_x^{c''}\} \neq \varnothing$ is symmetric). Then by \Cref{clm:sizeclawtd3}(ii), in fact $|D  \cap \{p_x^c,p_x^{c'},p_x^{c''}\}| =1$; furthermore, $D \cap \{q_x^c,q_x^{c'},q_x^{c''}\} = \varnothing$ and $|D \cap V(P)| =2$ for every paw $P$ of $G_x$. Now assume without loss of generality that $p_x^c \in D$. If $T_x \notin D$, then it is easy to that $D$ at most one $P_3$ (namely $p_x^cP_{x,T}^c(2)P_{x,T}^c(3)$ if $P_{x,T}^c(2) \in D$). Similarly, if $P_{x,T}^c(2) \notin D$ then $D$ contains at most one $P_3$ (namely $p_x^cT_xv_x$ if $T_x \in D$). Thus, it must be that $T_x,P_{x,T}^c(2) \in D$ which by \Cref{clm:sizeclawtd3}(ii), implies that $F_x,P_{x,T}^c(1) \notin D$. Since $t_c^x$ should be dominated, it follows that $p_c^x \in D$. Furthermore, $P_{x,F}^c(2) \notin D$ by \Cref{clm:truthvar}(ii) and so, $q_c^x \in D$, as $f_c^x$ should be dominated. Since by \Cref{clm:sizeclawtd3}(i), $|D \cap V(G_c^T)| = |D \cap V(G_c^F)| = 2$ and $v_c \in D$, it follows that there exists a variable $v \neq x$ contained in $c$ such that $q_c^v,p_c^v \notin D$; but $|D \cap V(G_v)| =14$, a contradiction to \Cref{clm:larger}.\\

\noindent
\textbf{Case 2.} \emph{There exists a clause $c \in C$ such that $|D \cap V(G_c)| = 5$.} Let $x,y$ and $z$ be the variables contained in $c$. First note that, since by \Cref{clm:sizeclawtd3}(ii), for every variable $v \in X$, $|D \cap V(P)| = 2$ for every paw $P$ of $G_v$ and $q_v^\ell,p_v^\ell \notin D$ for every clause $\ell$ containing $v$, clearly $D \cap V(G_v)$ contains no $P_4$, $K_{1,3}$ or $2P_3$. Similarly, for every clause $\ell \in C \setminus \{c\}$, $|D \cap V(G_\ell^T)| = |D \cap V(G_\ell)^F)| =2 $ by \Cref{clm:sizeclawtd3}(i) and so, $D \cap V(G_\ell)$ contains no $P_4$, $K_{1,3}$ or $2P_3$. Since for every clause $\ell \in C \setminus \{c\}$ and every variable $v \in X$ appearing in $\ell$, $f_\ell^v,t_\ell^v \notin D$ by \Cref{obs:larger3} and \Cref{clm:larger2}, it follows that $D \setminus (V(G_c) \cup V(G_x) \cup V(G_y) \cup V(G_z)$ contains no $P_4$, $K_{1,3}$ or $2P_3$. 

Suppose first that $|D \cap V(G_c^F)| = 3$ (note that then $|D \cap V(G_c^T)| =2$). Suppose further that $D \cap \{t_c^x,t_c^y,t_c^z\} \neq \varnothing$, say $t_c^x \in D$ without loss of generality. We contend that $t_c^x$ cannot be part of a $P_4$, $K_{1,3}$ or $2P_3$ in $D$. To prove this claim, let us show that $P_{x,T}^(1) \notin D$ (since $|D \cap V(G_c^T)| =2$, this would indeed prove our claim). Observe first that $D \cap V(G_c^T) = \{p_c^x,t_c^x\}$, for if $p_c^x \notin D$, then one of $p_c^y$ and $p_c^z$ is not dominated. Since $|D \cap V(K_c^F)| \leq 2$ by \Cref{clm:sizeclawtd3}(i), it follows that either (1) $q_c^v \notin D$ for some $v \in \{y,z\}$, or (2) $q_c^x \notin D$. If (1) holds then $p_c^v,q_c^v \notin D$ by the above; but $|D \cap V(G_v)| = 14$, a contradiction to \Cref{clm:larger}. Thus, (2) holds; in particular, $D \cap V(K_c^F) = \{q_c^y,q_c^z\}$. Since $v_c \in D$ by \Cref{clm:sizeclawtd3}(i), it follows that $P_{x,F}^c(2) \in D$ as $f_c^x$ should be dominated. Then by \Cref{clm:truthvar}(ii), $F_x \in D$ which, by \Cref{clm:sizeclawtd3}(ii), implies that $T_x \notin D$; and so, by \Cref{clm:truthvar}(i), $P_{x,T}^c(1) \notin D$, as claimed. It follows that any $P_4$, $K_{1,3}$ or $2P_3$ in $D$ is in fact contained in $V(G_c^F) \cup V(G_x) \cup V(G_y) \cup V(G_z)$. Since for any $\ell \in \{x,y,z\}$, $D \cap V(G_\ell)$ contains no $P_3$ by \Cref{clm:sizeclawtd3}(ii) and \Cref{obs:cliquesgx}, it follows that $D \cap \{f_c^x,f_c^y,f_c^z\} \neq \varnothing$; and since $v_c \in D$ by \Cref{clm:sizeclawtd3}(i) and $v_c$ should be dominated, in fact $|D \cap \{f_c^x,f_c^y,f_c^z\}| = 1$. Assume without loss of generality that $f_c^x \in D$. Then $q_c^x \in D$: indeed, if $q_c^x \notin D$ then $D \cap V(G_c^F)$ contains no $P_3$; and since for every $\ell \in \{x,y,z\}$, $D \cap V(G_\ell)$ contains no $P_3$ by \Cref{clm:sizeclawtd3}(ii) and \Cref{obs:cliquesgx}, clearly $D \cap (V(G_c^F) \cup V(G_x) \cup V(G_y) \cup V(G_z))$ then contains no $P_4$, $K_{1,3}$ or $2P_3$. Similarly, $P_{x,F}^c(2) \in D$: indeed, if $P_{x,F}^c(2) \notin D$ then since for any $\ell \in \{x,y,z\}$, $D \cap V(G_\ell)$ contains no $P_3$, clearly $D \cap (V(G_c^F) \cup V(G_x) \cup V(G_y) \cup V(G_z))$ then contains no $P_4$, $K_{1,3}$ or $2P_3$. It then follows from \Cref{clm:truthvar}(ii) that $F_x \notin D$ which, by \Cref{clm:sizeclawtd3}(ii), implies that $T_x \notin D$ and so, $P_{x,T}^c(1) \notin D$ by \Cref{clm:truthvar}(i). Since $t_c^x$ should be dominated, it follows that $p_c^x \in D$ which implies that $|D \cap \{p_c^y,p_c^z\}| \leq 1$, as $|D \cap V(G_c^T)| =2$. Since $D \cap \{q_c^y,q_c^z\} = \varnothing$ by the above, there then exists $v \in \{y,z\}$ such that $p_c^v,q_c^v \notin D$; but $|D \cap V(G_v)| =14$, a contradiction to \Cref{clm:larger}. Thus, $|D \cap V(G_c^F)| <3$.

Second, suppose that $|D \cap V(G_c^T)| = 3$. Since then, $|D \cap V(G_c^F)| =2$, it follows from \Cref{obs:larger3} that $D \cap \{f_c^x,f_c^y,f_c^z\} = \varnothing$ and so, any $P_4$, $K_{1,3}$ or $2P_3$ in $D$ is in fact contained in $D \cap (V(G_c^T) \cup V(G_x) \cup V(G_y) \cup V(G_z)$. But then $D \cap \{t_c^x,t_c^y,t_c^z\} \neq \varnothing$: indeed, if $D \cap \{t_c^x,t_c^y,t_c^z\} = \varnothing$, since for any $\ell \in \{x,y,z\}$, $D \cap V(G_\ell)$ contains no $P_3$ by \Cref{clm:sizeclawtd3}(ii) and \Cref{obs:cliquesgx}, clearly $D \cap (V(G_c^T) \cup V(G_x) \cup V(G_y) \cup V(G_z)$ then contains no $P_4$, $K_{1,3}$ or $2P_3$. Assume without loss of generality that $t_c^x \in D$. Then $p_c^x \in D$: indeed, if $p_c^x \notin D$ then $D \cap V(G_c^T)$ contains no $P_3$ and since for any $\ell \in\{x,y,z\}$, $D \cap V(G_\ell)$ contains no $P_3$, clearly $D \cap (V(G_c^T) \cup V(G_x) \cup V(G_y) \cup V(G_z)$ contains no $P_4$, $K_{1,3}$ or $2P_3$. Similarly, $P_{x,T}^c(1) \in D$: if not, then since for any $\ell \in \{x,y,z\}$, $D \cap V(G_\ell)$ contains no $P_3$, clearly $D \cap (V(G_c^T) \cup V(G_x) \cup V(G_y) \cup V(G_z)$ contains no $P_4$, $K_{1,3}$ or $2P_3$. It then follows from \Cref{clm:truthvar}(i) that $T_x \in D$ which, by \Cref{clm:sizeclawtd3}(ii), implies that $F_x \notin D$ and so, $P_{x,F}^c(2) \notin D$ by \Cref{clm:truthvar}(ii). Since $f_c^x$ should be dominated, necessarily $q_c^x \in D$ and so, $D \cap \{q_c^y,q_c^z\} = \varnothing$ by \Cref{clm:sizeclawtd3}(i). Since $|D \cap \{p_c^y,p_c^z\}| \leq 1$ by the above, it follows that there exists $v \in \{y,z\}$ such that $p_c^v,q_c^v \notin D$; but $|D \cap V(G_v)| =14$, a contradiction to \Cref{clm:larger} which concludes the proof.
\end{claimproof}

Now by Claims \ref{clm:phisatclawtd3} and \ref{clm:ct3clawtd3}, $\Phi$ is satisfiable if and only if $ct_{\gamma_t}(G) =3$; and since $G$ is easily seen to be $K_{1,3}$-free, the lemma follows.
\end{proof}

%------------------------------------------------------------------------------------------------------------------------------------------------------------------------------------

\subsection{Proof of \Cref{thm:dictd2}}
\label{sec:dictd2}

Let $H$ be a graph. If $H$ contains a cycle then {\sc Contraction Number($\gamma_t$,2)} is $\mathsf{NP}$-hard by \Cref{lem:cyclestd1}. Assume henceforth that $H$ is a forest. If $H$ contains a vertex of degree at least three then {\sc Contraction Number($\gamma_t$,2)} is $\mathsf{NP}$-hard on $H$-free graphs by \Cref{lem:clawtd2}. Suppose therefore that $H$ is a linear forest. If $H$ contains a connected component on at least six vertices then {\sc Contraction Number($\gamma_t$,2)} is $\mathsf{coNP}$-hard on $H$-free graphs by \Cref{lem:p6td2}. Thus we may assume that every connected component of $H$ has size at most five. If $H$ contains at least two connected component on at least four vertices then {\sc Contraction Number($\gamma_t$,2)} is $\mathsf{NP}$-hard on $H$-free graphs by \Cref{lem:2p4td2}. Assume therefore that $H$ contains at most one connected component of size at least four and at most five. Now if $H$ contains a $P_5$ and at least one other connected component on at least two vertices then {\sc Contraction Number($\gamma_t$,2)} is $\mathsf{coNP}$-hard on $H$-free graphs by \Cref{lem:p5p2td2}; and if $H$ contains a $P_5$ and every other connected component has size one then {\sc Contraction Number($\gamma_t$,2)} is polynomial-time solvable on $H$-free graphs by \Cref{lem:eastd2}. Otherwise $H$ contains at most one connected component of size at most four while every other connected component has size at most three in which case {\sc Contraction Number($\gamma_t$,2)} is polynomial-time solvable on $H$-free graphs by \Cref{lem:easytd2}.

%------------------------------------------------------------------------------------------------------------------------------------------------------------------------------------

\section{Semitotal Domination}
\label{sec:STD}

In this section, we consider the {\sc Contraction Number($\gamma_{t2}$,$k$)} problem for $k =2,3$. In \Cref{sec:stdeasy}, we cover those polynomial-time solvable cases and in \Cref{sec:stdhard}, we examine hard cases. The proof of \Cref{thm:dicstd2} can be found in \Cref{sec:dicstd2}. 

%------------------------------------------------------------------------------------------------------------------------------------------------------------------------------------

\subsection{Polynomial-time solvable cases}
\label{sec:stdeasy}

The algorithms for {\sc Contraction Number($\gamma_{t2}$,2)} and {\sc Contraction Number($\gamma_{t2}$,3)} outlined thereafter will rely on the following key result.

\begin{lemma}
\label{lem:hp3std}
Let $H$ be a graph. If {\sc 2-Edge Contraction($\gamma_{t2}$)} is polynomial-time solvable on $H$-free graphs then it is polynomial-time solvable on $(H+P_3)$-free graphs.
\end{lemma}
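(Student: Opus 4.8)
The plan is to mirror the inductive step in the proof of \Cref{lem:ectdp6kp3}, with $H$ now playing the role of $P_6+(k-1)P_3$ and $H+P_3$ that of $P_6+kP_3$. The heart of the argument is the following bounding claim: \emph{if $G$ is an $(H+P_3)$-free graph that contains an induced copy of $H$ and is a \no-instance for {\sc 2-Edge Contraction($\gamma_{t2}$)}, then $\gamma_{t2}(G)$ is bounded by a function $f$ depending on $|V(H)|$ only.} Granting this claim, the algorithm is as follows. First test whether $G$ is $H$-free; if so, apply the assumed polynomial-time algorithm for $H$-free graphs. Otherwise, decide whether $G$ admits an SD set of size at most $f(|V(H)|)$ by brute force: if it does not, then by the contrapositive of the bounding claim $G$ is a \yes-instance and we output \yes; if it does, then $\gamma_{t2}(G)\le f(|V(H)|)$ is bounded and we decide whether $ct_{\gamma_{t2}}(G)\le 2$ by testing the two conditions of \Cref{thm:friendlytriple} over the polynomially many vertex subsets of size at most $\gamma_{t2}(G)+1$ (searching for a friendly triple in a minimum SD set, or an ST-configuration in an SD set of size $\gamma_{t2}(G)+1$). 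Each step runs in time $|V(G)|^{O(|V(H)|)}$, so for fixed $H$ the whole procedure is polynomial.

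To establish the bounding claim, I would fix $A\subseteq V(G)$ with $G[A]\cong H$, let $B=N(A)$ and $C=V(G)\setminus(A\cup B)$. Since $C$ is anticomplete to $A$ and $G$ is $(H+P_3)$-free, $C$ contains no induced $P_3$ and is therefore a disjoint union of cliques; let $\mathcal{K}$ be its maximal cliques, each satisfying $N[K]\subseteq K\cup B$. Fix a minimum SD set $D$ maximising $|D\cap B|$. Because $G$ is a \no-instance, \Cref{lem:edge} forces $D$ to be independent and every vertex of $D$ to have a unique witness, so the witness relation is a perfect matching on $D$ pairing vertices at mutual distance two. Independence immediately yields $|D\cap K|\le 1$ for every $K\in\mathcal{K}$, and since each $K$ must be dominated, one obtains the dichotomy (analogous to \Cref{obs:DK}): either $K$ is dominated by a vertex of $D\cap(N(K)\cap B)$, or $D\cap N[K]=\{x_K\}$ for a single internally-dominating vertex $x_K\in K$.

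Splitting $\mathcal{K}$ into the cliques met by $D\cap B$ and the remaining family $\overline{\mathcal{K}_D}$ (those contributing an $x_K$), I would bound the two contributions separately, following the template of Claims \ref{clm:KD}--\ref{clm:barKD}. For the first family, a counting argument shows that having more than $O(|A|)$ vertices of $D$ near these cliques forces three of them to share a common neighbour in $A$, hence to lie pairwise within distance two, producing a doubly-witnessed vertex and thus $ct_{\gamma_{t2}}(G)\le 2$ by \Cref{lem:edge}(ii) --- a contradiction. For the second family, the private-neighbourhood and independent-transversal arguments of Claims \ref{clm:PN} and \ref{clm:AcK} transpose: each $x_K$ has a nonempty private neighbourhood, and controlling the adjacencies between the private neighbourhoods of distinct cliques again bounds $|\overline{\mathcal{K}_D}|$. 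Finally, the part of $D$ dominating no vertex of $C$ is bounded in terms of $|A|$ by the replacement argument concluding \Cref{lem:ectdp6kp3}, the semitotal version requiring only a minor adjustment (augmenting by a constant number of vertices of $A$) to reinstate witnesses. Summing the three bounds gives $\gamma_{t2}(G)\le f(|V(H)|)$.

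The main obstacle is bounding $|\overline{\mathcal{K}_D}|$, the semitotal analogue of the delicate Claims \ref{clm:sets}--\ref{clm:barKD}: one must show that too many internally-dominated cliques force an SD set of size $\gamma_{t2}(G)+1$ containing an ST-configuration, contradicting the \no-instance hypothesis via \Cref{thm:friendlytriple}. The complication relative to the total-domination case is that the target object is now one of the configurations $O_1,\dots,O_7$ rather than a $P_4$, $K_{1,3}$ or $2P_3$, so when merging private neighbours one must simultaneously preserve domination \emph{and} reinstate a witness (a vertex within distance two) for every affected vertex; these distance-two conditions are exactly what make the bookkeeping heavier than in \Cref{lem:ectdp6kp3}. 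Working in our favour, the independence of $D$ guaranteed by \Cref{lem:edge} eliminates all edge-based cases, so the configuration to be produced will typically be an $O_4$ or $O_7$, obtained precisely along the lines of \Cref{lem:noedge}.
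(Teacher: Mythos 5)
Your algorithmic shell, the $A/B/C$ decomposition, and the opening move (invoking \Cref{lem:edge} so that every minimum SD set of a \no-instance is independent and has unique witnesses) coincide with the paper's proof, and your treatment of the cliques dominated from $B$ and of the leftover part $D \setminus N[\mathcal{K}]$ is essentially the paper's. However, there is a genuine gap exactly where you flag ``the main obstacle'': you never actually bound the family of internally-dominated cliques. You propose to transpose the delicate total-domination machinery (\Cref{clm:PN}, \Cref{clm:AcK} and Claims~\ref{clm:sets}--\ref{clm:barKD}), concede that the distance-two bookkeeping is ``heavier'', and leave it there. Since the whole lemma rests on showing that a \no-instance has $\gamma_{t2}$ bounded in terms of $|V(H)|$, and this family is the only part of $D$ you have not bounded, the proof is incomplete at its decisive step; moreover it is not at all clear that the TD machinery (built around edges $x_Ky_K$ inside cliques and swaps of private neighbours) transposes, precisely because of the witness constraints you mention.

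The step you postponed is in fact short, and you already had the ingredients (independence plus the witness matching you yourself observed). Let $x_K$ denote the unique vertex of $D$ inside a clique $K$ with $D \cap V(K) \neq \varnothing$. No two such representatives can be at distance two: a common neighbour $z$ of $x_K$ and $x_{K'}$ necessarily lies in $B$ and cannot be in $D$ (else $x_K,z,x_{K'}$ is a friendly triple in a minimum SD set), so taking a neighbour $t \in A$ of $z$, the set $D \cup \{z\}$, of size $\gamma_{t2}(G)+1$, contains the $O_5$ $z,x_K,x_{K'},t$ if $t \in D$, and the $O_6$ $x_K,z,x_{K'},u$ with $u \in N(t) \cap D$ otherwise, contradicting \Cref{thm:friendlytriple}(ii). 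Consequently the unique witness of each $x_K$ lies in $A \cup B$; the map sending $x_K$ to its witness is injective (a shared witness would itself have two witnesses, contradicting \Cref{lem:edge}); and no two of these witnesses lying in $B$ can share a neighbour in $A$ (they would then be within distance two of each other and each would acquire a second witness). This pins the number of internally-dominated cliques at $2|A|$ outright --- no private neighbourhoods, no independent transversals, no iterative procedure, and no analogue of Claims~\ref{clm:sets}--\ref{clm:barKD} --- and, combined with the bounds for the other two parts, yields the paper's explicit $f(|V(H)|) = 8|V(H)|$. The moral is that \Cref{lem:edge} does not merely ``eliminate edge-based cases'', as you put it: it makes the witness structure rigid enough that the semitotal problem is strictly easier than the total-domination one, not harder.
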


\begin{proof}
Let $G$ be an $(H+P_3)$-free graph. We aim to show that if $G$ contains an induced $H$ and $G$ is a \no-instance for {\sc 2-Edge contraction($\gamma_{t2}$)}, then $\gamma_{t2}(G)$ is bounded by some function $f$ of $|V(H)|$ (and $|V(H)|$ only). Assuming for now that this claim is correct, the following algorithm solves the {\sc 2-Edge Contraction($\gamma_{t2}$)} problem on $(H+P_3)$-free graphs.\\

\begin{itemize}
\item[1.] If $G$ contains no induced $H$ then use the algorithm for $H$-free graphs to determine whether $G$ is a \yes-instance for {\sc 2-Edge Contraction($\gamma_{t2}$)} or not.
\item[2.] Check whether there exists an SD set of $G$ of size at most $f(|V(H)|)$.
\begin{itemize}
\item[2.1] If the answer is no then output \yes.
\item[2.2] Check whether there exists a minimum SD set of $G$ containing a friendly triple, or an SD set of $G$ of size $\gamma_{t2}(G)+1$ containing an ST-configuration (see \Cref{thm:friendlytriple}) and output the answer accordingly.
\end{itemize}
\end{itemize}

\bigskip 

Now observe that checking whether $G$ is $H$-free can be done in time $|V(G)|^{O(|V(H)|)}$ and that step 2 can be done in time $|V(G)|^{O(f(|V(H)|)}$ by simple brute force. Since {\sc 2-Edge Contraction(\allowbreak $\gamma_{t2}$)} is polynomial-time solvable on $H$-free graphs by assumption, the above algorithm indeed runs in polynomial time (for fixed $H$). We next show that $f(|V(H)|) = 8|V(H)|$.\\

Assume henceforth that $G$ contains an induced $H$. Let $A \subseteq V(G)$ be a set of vertices such that $G[A]$ is isomorphic to $H$, let $B \subseteq V(G) \setminus A$ be the set of vertices at distance one from $A$ and let $C = V(G) \setminus (A \cup B)$. Note that since $G$ is $(H+P_3)$-free, $C$ is a disjoint union of cliques. In the following, we denote by $\mathcal{K}$ the set of maximal cliques in $C$. 

Now observe that if $G$ is a \no-instance for {\sc 2-Edge Contraction($\gamma_{t2}$)}, then no minimum SD set of $G$ contains an edge by \Cref{lem:edge} and thus, given a minimum SD set $D$ of $G$, $\mathcal{K}$ can be partitioned into two sets: $\mathcal{K}^D_1 = \{K \in \mathcal{K}~|~|D \cap V(K)| =1\}$ and $\mathcal{K}^D_0 = \{K \in \mathcal{K}~|~D \cap V(K) = \varnothing\}$.

\begin{claim}
\label{clm:KD1}
If $G$ is a \no-instance for {\sc 2-Edge Contraction($\gamma_{t2}$)} and $D$ is a minimum SD set of $G$ then $$|(D \cap N[\mathcal{K}^D_1]) \cup w_D(D \cap N[\mathcal{K}^D_1])| \leq 4|A|.$$
\end{claim}

\begin{claimproof}
Assume that $G$ is a \no-instance for {\sc 2-Edge Contraction($\gamma_{t2}$)} and let $D$ be a minimum SD set of $G$. For every $K \in \mathcal{K}^D_1$, denote by $x_K \in V(K) \cap D$ the unique vertex in $K$ belonging to $D$. Observe that for every $K \in \mathcal{K}^D_1$, $N(x_K) \cap B \neq \varnothing$: indeed, if $N(x_K) \cap B = \varnothing$ for some $K \in \mathcal{K}^D_1$, then there must exist $y \in V(K)$ such that $y$ is adjacent to the unique witness $z$ for $x_K$ (recall that by \Cref{lem:edge}, $|w_D(x_K)| =1$ and $x_K$ is at distance exactly two from its witness); but then $(D \setminus \{x_K\}) \cup \{y\}$ is a minimum SD set of $G$ containing the edge $yz$, a contradiction to \Cref{lem:edge}. We now contend that there exist no two cliques $K,K' \in \mathcal{K}^D_1$ such that $d(x_K,x_{K'}) = 2$. Indeed, suppose to the contrary that there exist two such cliques $K,K' \in \mathcal{K}^D_1$. Let $z \in B$ be a common neighbour of $x_K$ and $x_{K'}$, and let $t \in A$ be a neighbour of $z$. Then $z \notin D$ ($x_K,z,x_{K'}$ would otherwise be a friendly triple) which implies, in particular, that $t$ is not dominated by $z$. Since $t$ should be dominated nonetheless, it follows that either $t \in D$ in which case $D \cup \{z\}$ contains the $O_5$ $z,x_K,x_{K'},t$; or $t$ has a neighbour $u \in D$ (note that $u \neq x_K,x_{K'}$ by construction) in which case $D \cup \{z\}$ contains the $O_6$ $x_K,z,x_{K'},u$, a contradiction in both cases to \Cref{thm:friendlytriple}. 

It follows that for any clique $K \in \mathcal{K}^D_1$, any witness for $x_K$ can only belong to $A \cup B$: let us denote by $W_A = A \cap w_D(\{x_K~|~K \in \mathcal{K}^D_1\})$ and by $W_B = B \cap w_D(\{x_K~|~K \in \mathcal{K}^D_1\})$. Further note that since by \Cref{lem:edge}, $|w_D(x)| = 1$ for any $x \in D$, $$|W_B| + |W_A| = |w_D(\{x_K~|~K \in \mathcal{K}^D_1\})| = |\mathcal{K}^D_1|.$$ Now if there exist two vertices $u,v \in W_B$ such that $u$ and $v$ have a common neighbour in $A$, then $d(u,v) \leq 2$, a contradiction to \Cref{lem:edge} as, by construction, $u$ and $v$ both have a witness in $\{x_K~|~K \in \mathcal{K}^D_1\}$. Thus, no two vertices in $W_B$ have a common neighbour in $A$, which implies that $|W_B| \leq |A|$. Since $|W_A| \leq |A|$, it follows that $|\{w_K~|~K \in \mathcal{K}^D_1\}| \leq 2|A|$. Now note that $|(D \cap N[\mathcal{K}^D_1]) \cup w_D(D \cap N[\mathcal{K}^D_1])| = |\mathcal{K}^D_1| + |w_D(\{x_K~|~ K\in \mathcal{K}^D_1\})|$: indeed, for every $K \in \mathcal{K}^D_1$, any vertex in $D \cap N[K]$ is at distance at most two from $x_K$, and by \Cref{lem:edge}, $|w_D(x)| = 1$ for every $x \in D$. Thus, the lemma follows.
\end{claimproof}

\begin{claim}
\label{clm:KD0}
If $G$ is a \no-instance for {\sc 2-Edge Contraction($\gamma_{t2}$)} and $D$ is a minimum SD set of $G$, then $$|(D \cap N[\mathcal{K}^D_0]) \cup w_D(D \cap N[\mathcal{K}^D_0])| \leq 2|A|.$$
\end{claim}

\begin{claimproof}
Assume that $G$ is a \no-instance for {\sc 2-Edge Contraction($\gamma_{t2}$)} and let $D$ be a minimum SD set of $G$. Since for every $K \in \mathcal{K}^D_0$, $D \cap V(K) = \varnothing$ by definition, necessarily $D \cap N[\mathcal{K}^D_0] = D \cap N(\mathcal{K}^D_0) \subseteq B$; in particular, every vertex in $D \cap N(\mathcal{K}^D_0)$ has at least one neighbour in $A$. Since no two vertices $x,y \in D \cap N(\mathcal{K}^D_0)$ such that $y \notin w_D(x)$, have a common neighbour in $A$, and for every $x \in D$, $|w_D(x)| = 1$ by \Cref{lem:edge}, denoting by 
$$ D_1 = \{x~|~ x \in D \cap N(\mathcal{K}^D_0) \text{ and } w_D(x) \cap (D \cap N(\mathcal{K}^D_0)) = \varnothing\}$$
and by 
$$ D_2 = \{\{x,y\}~|~ x,y \in D \cap N(\mathcal{K}^D_0) \text{ and } y \in w_D(x)\}$$
it follows that $|D_1| + |D_2| \leq |A|$. But 
$$|(D \cap N(\mathcal{K}^D_0)) \cup w_D(D \cap N(\mathcal{K}^D_0))| = 2|D_1| + 2|D_2|$$
and thus, the upperbound follows.
\end{claimproof}

To conclude, assume that $G$ is a \no-instance for {\sc 2-Edge Contraction($\gamma_{t2}$)} and let $D$ be a minimum SD set of $G$. Then $D' = D \setminus ((D \cap N[\mathcal{K}]) \cup (w_D(D \cap N[\mathcal{K}])))$ has size at most $2|A|$: indeed, since $D'$ dominates only vertices in $A \cup B$ and no vertex in $D \setminus D'$ is witnessed by a vertex $D'$ by \Cref{lem:edge}, if $|D'| > 2|A|$ then $(D \setminus D') \cup (A \cup N_A)$ where $N_A$ contains one neighbour for each vertex in $A$, is an SD set of $G$ of size strictly smaller than that of $D$, a contradiction to the minimality of $D$. Since 
$$|(D \cap N[\mathcal{K}]) \cup (w_D(D \cap N[\mathcal{K}]))| \leq \sum_{i \in \{0,1\}}  |(D \cap N[\mathcal{K}^D_i]) \cup (w_D(D \cap N[\mathcal{K}^D_i]))| \leq 6|A|$$
by Claims~\ref{clm:KD1} and \ref{clm:KD0}, we conclude that $\gamma_{t2}(G) = |D| \leq 8|A| = 8|V(H)|$, as claimed.
\end{proof}

\begin{lemma}
\label{lem:p8std2}
If $G$ is a $P_8$-free graph then $ct_{\gamma_{t2}}(G) \leq 2$.
\end{lemma}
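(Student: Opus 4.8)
The plan is to mirror the base case ($P_6$-free) of \Cref{lem:ectdp6kp3}, replacing the edges of a minimum total dominating set by the \emph{witness pairs} of a minimum semitotal dominating set and $P_6$-freeness by $P_8$-freeness. As in that base case, we may assume $\gamma_{t2}(G) \geq 3$. Let $D$ be a minimum SD set of $G$. By \Cref{lem:edge}, if $D$ contains an edge or if some vertex of $D$ has at least two witnesses, then already $ct_{\gamma_{t2}}(G) \leq 2$; hence we may assume that $D$ is independent and that every vertex of $D$ has a unique witness. Consequently $D$ is partitioned into witness pairs, each consisting of two vertices at distance exactly two joined by a common neighbour lying outside $D$.

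First I would fix one witness pair $\{u,w\}$ with common neighbour $s \notin D$ and let $z$ be a closest vertex of $D \setminus \{u,w\}$ to $\{u,w\}$. A short argument pins down $d(z,\{u,w\}) = 3$: it cannot be $1$ since $D$ is independent, it cannot be $2$ since that would give $u$ or $w$ a second witness, and it is at most $3$ because the vertex at distance two from $\{u,w\}$ on a shortest path to $z$ must be dominated by a vertex of $D$ within distance three of $\{u,w\}$. Writing $w - a - b - z$ for such a shortest path (say with $d(z,w) = 3$ and $a,b \notin D$) and letting $z'$ be the unique witness of $z$, with common neighbour $r \notin D$, I obtain the eight vertices $u,s,w,a,b,z,r,z'$ lying on the walk $u - s - w - a - b - z - r - z'$.

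The engine of the proof is the interplay between the two $P_3$'s $u - s - w$ and $z - r - z'$ (each obtained by adjoining the common neighbour of a witness pair) and $P_8$-freeness. Adjoining $s$ to $D$ yields an SD set of size $\gamma_{t2}(G)+1$ containing the $P_3$ $u - s - w$; if some vertex of $D \setminus \{u,w\}$ lies within distance two of $s$, this $P_3$ extends to an $O_5$ (a third $D$-neighbour of $s$) or to an $O_6$ (a $D$-vertex at distance exactly two from $s$), and then $ct_{\gamma_{t2}}(G) \leq 2$ by \Cref{thm:friendlytriple}; the symmetric statement holds for $r$. In the remaining case every vertex of $D \setminus \{u,w\}$ is at distance at least three from $s$ (and symmetrically for $r$), which forces the eight listed vertices to be distinct and rules out almost all chords of the path $u - s - w - a - b - z - r - z'$: no two vertices of $D$ are adjacent, $d(w,b) = d(a,z) = 2$, and $s$ is non-adjacent to $b,z,r,z'$. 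Since $G$ is $P_8$-free this path is not induced, so at least one of the few admissible chords (essentially $sa$, $ua$, $ar$, $br$, $bz'$) is present; I would then treat these in turn, each time adjoining the appropriate common neighbour to $D$ to exhibit one of $O_4, O_5, O_6, O_7$ in an SD set of size $\gamma_{t2}(G)+1$. For instance, a chord $bz'$ makes $b$ a common neighbour of $z$ and $z'$, so $D \cup \{b\}$ contains the $O_6$ on $z,b,z',w$ (with $w$ at distance two from the centre $b$), giving $ct_{\gamma_{t2}}(G) \leq 2$.

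I expect the main obstacle to be precisely this final chord analysis: one must verify that the distance and uniqueness-of-witness constraints really do eliminate every chord except a short explicit list, and that each surviving chord produces a bona fide ST-configuration inside an SD set of size $\gamma_{t2}(G)+1$, rather than merely a friendly triple in a non-minimum set. Book-keeping the degenerate cases---where $a=s$, $b=r$, or $z'$ coincides with an already-named vertex, each of which only shortens the path and hence lands in an easier sub-case---is where most of the care is needed, and \Cref{lem:noedge} may be invoked to streamline the passage from an exhibited ST-configuration back to the bound $ct_{\gamma_{t2}}(G) \leq 2$.
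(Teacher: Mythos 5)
Your reduction and setup match the paper's proof of \Cref{lem:p8std2} almost exactly: both invoke \Cref{lem:edge} to assume $D$ is independent with unique witnesses, both pin down a second pair of $D$-vertices at distance exactly three from the first, and both work with the same eight-vertex walk (your $u,s,w,a,b,z,r,z'$ are the paper's $u,a,v,x,y,w,b,t$). Your handling of the chords $ua$ and $bz'$ is also sound: one added vertex yields an $O_6$. Two problems remain, one cosmetic and one genuine. Cosmetically, the chord $ar$ is not admissible under your own case assumption, since $w-a-r$ would place $w \in D\setminus\{z,z'\}$ at distance two from $r$; listing it is harmless but indicates the book-keeping is not yet tight.

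The genuine gap is the chords $sa$ and $br$ (the paper's $ax$ and $by$), which are precisely the cases that can actually survive all the exclusions. There, adjoining a single common neighbour fails in exactly the way you feared: if, say, only $sa$ is present, then $D\cup\{a\}$ contains the single edge $aw$ with $u$ and $z$ at distance two from $a$ --- a friendly triple in a non-minimum set, but no $O_4$ ($a$ lies on no $P_3$ in $D\cup\{a\}$), no $O_6$ ($a$ has only one $D$-neighbour), and no $O_7$ (only one edge); moreover the walk is now non-induced, so $P_8$-freeness yields no further contradiction and your argument stalls. The paper resolves these cases with a mechanism absent from your sketch: a dichotomy on private neighbours. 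If every private neighbour of $z'$ is adjacent to $r$ or $b$, one performs the exchange $(D\setminus\{z'\})\cup\{r,b\}$, which is an SD set of size $\gamma_{t2}(G)+1$ containing the $O_4$ $r,z,b,w$ (edges $rz$, $zb$, and $d(b,w)=2$ via $a$); symmetrically for $u$ with $\{s,a\}$. When both exchanges fail, one obtains private neighbours $c$ of $z'$ (nonadjacent to $r,b$) and $d$ of $u$ (nonadjacent to $s,a$) and launches a second round of induced-$P_8$ hunting along paths through $c$ and $d$, in both the ``two chords'' and ``one chord'' subcases. This private-neighbour swap-and-extend argument is roughly half of the paper's proof and is the missing idea; without it, your chord analysis cannot be completed.
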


\begin{proof}
Let $G$ be a $P_8$-free graph and let $D$ be a minimum SD set of $G$. If $D$ contains an edge or there exists $x \in D$ such that $|w_D(x)| \geq 2,$ then $ct_{\gamma_{t2}}(G) \leq 2$ by \Cref{lem:edge}. Thus, assume that $D$ is an independent set and for every $x \in D$, $|w_D(x)| = 1$. Let $u,v \in D$ be two vertices at distance two and further let $w \in D\setminus \{u,v\}$ be a closest vertex from $\{u,v\}$, that is, $d(w,\{u,v\}) = \min_{x \in D \setminus \{u,v\}} d(x,\{u,v\})$. Then $d(w,\{u,v\}) > 2$ by assumption, and since $d(w,\{u,v\}) \leq 3$ as shown in the proof of \Cref{lem:edge}, in fact $d(w,\{u,v\}) = 3$. Now assume without loss of generality that $d(w,\{u,v\}) = d(w,v)$, and let $t \in D$ be the witness for $w$. Note that if there exists $a \in N(u) \cap N(v)$ and $b \in N(w)$ such that $at \in E(G)$, then $D \cup \{a\}$ contains the $O_6$ $u,a,v,w$ and thus, $ct_{\gamma_{t2}}(G) \leq 2$. Assume henceforth that no common neighbour of $u$ and $v$ is adjacent to a neighbour of $w$ (or $t$), and that, symmetrically, no common neighbour of $w$ and $t$ is adjacent to a neighbour of $u$ (or $v$). Let $P=vxyw$ be a shortest path from $v$ to $w$. Then by the aforementioned assumption, $u$ is nonadjacent to $x$, and $t$ is nonadjacent to $y$ (also note that $u$ is nonadjacent to $y$, and $t$ is nonadjacent to $x$, by minimality of the distance between $v$ and $w$). Let $a$ be a common neighbour of $u$ and $v$, and let $b$ be a common neighbour of $w$ and $t$. Then both $a$ and $x$ are nonadjacent to $b$ by assumption; and similarly, $y$ is nonadjacent to $a$. It follows that $ax \in E(G)$ or $by \in E(G)$ for otherwise, $uavxywbt$ would induce a $P_8$; we next distinguish two cases depending on these adjacencies.

Suppose first that $ax,by \in E(G)$. Observe that if every private neighbour of $t$ is adjacent to $b$ or $y$, then $(D \setminus \{t\}) \cup \{b,y\}$ is an SD set of $G$ containing the $O_4$ $b,w,y,v$ and thus, $ct_{\gamma_{t2}}(G) \leq 2$ by \Cref{lem:edge}. Assume therefore that $t$ has a private neighbour $c$ which is nonadjacent to both $b$ and $y$, and that, symmetrically, $u$ has a private neighbour $d$ which is nonadjacent to both $a$ and $x$. Let us show that then $G$ contains an induced $P_8$. Indeed, since by assumption $d$ is nonadjacent to $b$, and $c$ is nonadjacent to $a$, it must be that $dc \in E(G)$, $dy \in E(G)$ or $cx \in E(G)$ for otherwise, $dyaxybtc$ would induce a $P_8$. However, if $dc \in E(G)$ then $vaudctbw$ induces a $P_8$. Thus, $dy \in E(G)$ or $cx \in E(G)$, say the latter holds without loss of generality, in which case $wbtcxaud$ induces a $P_8$. 
 
Second, suppose that either $a$ is nonadjacent to $x$, or $b$ is nonadjacent to $y$, say $ax \in E(G)$ and $by \notin E(G)$ (the other case is symmetric). As previously, we show that if $t$ has a private neighbour $c$ which is nonadjacent to both $b$ and $y$, and that $u$ has a private neighbour $d$ which is nonadjacent to both $a$ and $x$, then $G$ contains an induced $P_8$ (observe that if this does not hold, then $ct_{\gamma_{t2}}(G) \leq 2$ as mentioned above). Since $c$ is nonadjacent to $a$ by assumption, $c$ must be adjacent to $x$ for otherwise, $ctbwyxau$ would induce a $P_8$. But then, $c$ must be adjacent to $d$ for otherwise, $wbtcxaud$ induces a $P_8$; however, if $dc \in E(G)$ then $wbtcduav$ induces a $P_8$, a contradiction which concludes the proof.
\end{proof}

\begin{lemma}
\label{lem:2p4std2}
If $G$ is a $2P_4$-free graph then $ct_{\gamma_{t2}}(G) \leq 2$.
\end{lemma}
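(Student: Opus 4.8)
The plan is to prove the statement by contradiction. We may assume $\gamma_{t2}(G)\geq 3$ (otherwise $ct_{\gamma_{t2}}(G)$ is not defined), and suppose that $ct_{\gamma_{t2}}(G)>2$. Fix a minimum SD set $D$. Since $ct_{\gamma_{t2}}(G)>2$, \Cref{lem:edge} guarantees that $D$ is independent and that every $x\in D$ has a unique witness $w_D(x)$, which moreover lies at distance exactly two from $x$. As the witness relation is symmetric, it is a perfect matching on $D$: the set $D$ partitions into pairs $\{a_i,b_i\}$ with $d(a_i,b_i)=2$, and any two distinct pairs are at distance at least three (a cross-distance of at most two would create a second witness). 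Since $\gamma_{t2}(G)\geq 3$ and $|D|$ is even, there are $m\geq 2$ pairs. For each $i$, fix a common neighbour $c_i$ of $a_i$ and $b_i$; then $c_i\notin D$ and $a_ic_ib_i$ is an induced $P_3$.

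Next I would separate the pairs using the configuration $O_6$. For each $i$, the set $D\cup\{c_i\}$ is an SD set of size $\gamma_{t2}(G)+1$ whose only edges are $c_ia_i$ and $c_ib_i$. Hence if some vertex of $D\setminus\{a_i,b_i\}$ were at distance exactly two from $c_i$, then $a_i,c_i,b_i$ together with that vertex would form an $O_6$ and \Cref{thm:friendlytriple} would give $ct_{\gamma_{t2}}(G)\leq 2$, a contradiction. Thus every $c_i$ is at distance at least three from $D\setminus\{a_i,b_i\}$, from which a short computation shows that the gadgets $\{a_i,b_i,c_i\}$ are pairwise anticomplete; in particular $\{a_1,c_1,b_1\}\cup\{a_2,c_2,b_2\}$ induces a $2P_3$. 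I would also record that adding a single vertex to the independent set $D$ can never create an ST-configuration: a vertex with at least two neighbours in $D$ must, by the distance-three separation, be a common neighbour of a single pair, and therefore only produces a $P_3$ centred at that vertex, which (as above) cannot be completed to an ST-configuration, whereas each of $O_1,\dots,O_7$ has at least two edges. This observation shows that no ST-configuration shortcut is available from $D$, so I would aim directly at producing an induced $2P_4$ in $G$.

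The remaining and hardest step is to upgrade two of the anticomplete induced $P_3$'s into two anticomplete induced $P_4$'s, which immediately contradicts $2P_4$-freeness. Inspecting neighbourhoods, every neighbour of $a_i$ (resp.\ $b_i$) is either a common neighbour of $a_i,b_i$ or a private neighbour; the intended move is to extend $a_1c_1b_1$ by a private neighbour $x_1$ of $a_1$ chosen non-adjacent to $c_1$, and similarly extend $a_2c_2b_2$ by some $x_2$, giving the candidate $2P_4$ on $\{x_1,a_1,c_1,b_1\}\cup\{x_2,a_2,c_2,b_2\}$. The main obstacle is that this candidate need not be induced: since two gadgets may be at distance exactly three, $x_1$ can be adjacent to the \emph{other} midpoint $c_2$, the two extension vertices $x_1,x_2$ can be adjacent, and a gadget may be \emph{rigid}, offering no private neighbour avoiding its midpoint. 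I expect these interaction and degeneracy cases to form the technical heart of the argument. They would be handled by rerouting and by exploiting connectivity: when $x_1\sim c_2$ the path $a_1x_1c_2a_2$ is itself an induced $P_4$, to be completed by a disjoint anticomplete $P_4$ drawn from $\{b_1,c_1\}$, $\{b_2,c_2\}$ and further common or private neighbours; rigid gadgets would be treated separately by analysing the common neighbours of $a_i$ and $b_i$ to produce alternative induced $P_4$'s; and whenever $m\geq 3$ an untouched third pair supplies the slack needed to avoid all interactions. In every case one extracts an induced $2P_4$, contradicting the hypothesis and thereby establishing $ct_{\gamma_{t2}}(G)\leq 2$.
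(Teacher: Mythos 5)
Your structural first half is sound and in fact reorganises the paper's setup nicely: the perfect-matching structure on $D$ given by \Cref{lem:edge}, the distance-three separation between pairs, and the $O_6$ argument forcing each common neighbour $c_i$ to lie at distance at least three from $D\setminus\{a_i,b_i\}$ are all correct, and yield the anticomplete $P_3$ gadgets. (The paper instead fixes one pair $u,v$ and a closest third vertex $w$ of $D$, which it shows sits at distance exactly three, together with the witness $t$ of $w$ --- essentially your two-pair configuration.) Note in passing that one of the interaction cases you flag, $x_1$ adjacent to the other midpoint $c_2$, is vacuous under your own separation: such an edge would put $c_2$ at distance two from $a_1\in D$, which your $O_6$ step already excludes.

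The genuine gap is the endgame. From the (correct) observation that adding a \emph{single} vertex to $D$ creates no ST-configuration, you conclude that no ST-configuration shortcut is available and commit to extracting an induced $2P_4$ ``in every case''. But \Cref{thm:friendlytriple} applies to arbitrary SD sets of size $\gamma_{t2}(G)+1$, and the moves you discard --- removing a vertex of $D$ and adding one or two others --- are precisely what the paper's proof relies on: it produces a minimum SD set containing an edge, e.g.\ $(D\setminus\{u\})\cup\{a\}$, so that \Cref{lem:edge} applies, or size-$(\gamma_{t2}(G)+1)$ sets such as $(D\setminus\{u\})\cup\{a,y\}$ and $(D\setminus\{t\})\cup\{p,b\}$ containing an $O_7$. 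Your claim that an induced $2P_4$ always exists is false under the structure you derived. Concretely, take $D=\{a_1,b_1,a_2,b_2\}$ with $c_i$ the unique common neighbour of $a_i,b_i$, let $x_i$ be the unique private neighbour of $a_i$ and $y_i$ that of $b_i$ (nonadjacent to $c_i$), and make $\{x_1,y_1\}$ complete to $\{x_2,y_2\}$ with no other edges. One checks that $\gamma_{t2}=4$, that $D$ is an independent minimum SD set with unique witnesses, and that all your separation properties hold; your forced extensions satisfy $x_1x_2\in E(G)$, yet the graph contains no induced $2P_4$ at all: every induced $P_4$ meets $M=\{x_1,y_1,x_2,y_2\}$ (the complement of $M$ induces $2P_3$), and for the only nonadjacent pairs of $M$, say $x_1$ and $y_1$, the unique $P_4$ through $x_1$ avoiding $x_2,y_2$ is $x_1a_1c_1b_1$, whose vertex $b_1$ is adjacent to $y_1$. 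So no rerouting can succeed, and the contradiction with $ct_{\gamma_{t2}}(G)>2$ must come from a swap: here $(D\setminus\{a_2\})\cup\{x_1,x_2\}$ is an SD set of size $\gamma_{t2}(G)+1$ containing the $O_4$ $a_1,x_1,x_2,b_1$ (note $d(x_2,b_1)=2$ via $y_1$). Similarly, your ``rigid'' case is not resolved by alternative $P_4$s but by a swap: if every private neighbour of $a_1$ is adjacent to $c_1$, then $(D\setminus\{a_1\})\cup\{c_1\}$ is a minimum SD set containing the edge $c_1b_1$, and \Cref{lem:edge} finishes. Without reinstating these remove-and-add arguments, the proof cannot be completed along the lines you propose.
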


\begin{proof}
The proof is similar to that of \Cref{lem:p8std2}. Let $G$ be a $2P_4$-free graph and let $D$ be a minimum SD set of $G$. If $D$ contains an edge or there exists $x \in D$ such that $|w_D(x)| \geq 2$, then $ct_{\gamma_{t2}}(G) \leq 2$ by \Cref{lem:edge}. Thus, assume that $D$ is an independent set and for every $x \in D$, $|w_D(x)| = 1$. Let $u,v \in D$ be two vertices at distance two and further let $w \in D\setminus \{u,v\}$ be a closest vertex from $\{u,v\}$, that is, $d(w,\{u,v\}) = \min_{x \in D \setminus \{u,v\}} d(x,\{u,v\})$. Then $d(w,\{u,v\}) > 2$ by assumption, and since $d(w,\{u,v\}) \leq 3$ as shown in the proof of \Cref{lem:edge}, in fact $d(w,\{u,v\}) = 3$. Now assume without loss of generality that $d(w,\{u,v\}) = d(w,v)$, and let $t \in D$ be the witness for $w$. Note that if there exists $a \in N(u) \cap N(v)$ and $b \in N(w)$ such that $at \in E(G)$, then $D \cup \{a\}$ contains the $O_6$ $u,a,v,w$ and thus, $ct_{\gamma_{t2}}(G) \leq 2$ by \Cref{thm:friendlytriple}. Assume henceforth that no common neighbour of $u$ and $v$ is adjacent to a neighbour of $w$ (or $t$), and that, symmetrically, no common neighbour of $w$ and $t$ is adjacent to a neighbour of $u$ (or $v$). Let $P=vxyw$ be a shortest path from $v$ to $w$. Then by the aforementioned assumption, $u$ is nonadjacent to $x$, and $t$ is nonadjacent to $y$ (also note that $u$ is nonadjacent to $y$, and $t$ is nonadjacent to $x$, by minimality of the distance between $v$ and $w$). Let $a$ be a common neighbour of $u$ and $v$, and let $b$ be a common neighbour of $w$ and $t$. Then $a$ and $x$ are nonadjacent to $b$ by assumption; and similarly, $y$ is nonadjacent to $a$. Now if $b$ is nonadjacent to $y$, then every private neighbour $p$ of $u$ must be adjacent to $a$ or $y$ for otherwise, $puav$ and $ywbt$ would induce a $2P_4$ (note that $p$ is nonadjacent to $b$ by assumption); but then, $(D \setminus \{u\}) \cup \{a,y\}$ is an SD set of $G$ containing the $O_7$ $a,v,y,w$ and thus, $ct_{\gamma_{t2}}(G) \leq 2$ by \Cref{thm:friendlytriple}. Assume therefore that $b$ is adjacent to $y$. If every private neighbour of $u$ is adjacent to $a$, then $(D \setminus \{u\}) \cup \{a\}$ is a minimum SD set of $G$ containing an edge and thus, $ct_{\gamma_{t2}}(G) \leq 2$ by \Cref{lem:edge}. Thus, we may assume that $u$ has a private neighbour $p$ nonadjacent to $a$. Then every private neighbour $p'$ of $t$ must be adjacent to $b$ or $p$ for otherwise, $puav$ and $wbtp'$ would induce a $2P_4$ (note that $p$ is nonadjacent to $b$, and $p'$ is nonadjacent to $a$ by assumption). By symmetry, we conclude that every private neighbour of $w$ is adjacent to $p$ or $b$. Now if, in fact, no private neighbour of $w$ is adjacent to $p$, then $D' = (D \setminus \{w\}) \cup \{b\}$ is a minimum SD set of $G$, as every private neighbour of $w$ w.r.t. $D$ is then adjacent to $b$; but $D'$ contains an edge and thus, $ct_{\gamma_{t2}}(G) \leq 2$ by \Cref{lem:edge}. Thus, we may assume that at least one private neighbour of $w$ is adjacent to $p$; in particular, $d(w,p) = 2$. But then, $(D \setminus \{t\}) \cup \{p,b\}$ is an SD of $G$ containing the $O_7$ $b,w,p,u$ and thus, $ct_{\gamma_{t2}}(G) \leq 2$ by \Cref{thm:friendlytriple}.
\end{proof}

Since for any graph $G$, $G$ is a \yes-instance for {\sc 2-Edge Contraction($\gamma_{t2}$)} if and only if $G$ is a \no-instance for {\sc Contraction Number($\gamma_{t2}$,3)}, the following ensues from Lemmas~\ref{lem:hp3std}, \ref{lem:p8std2} and \ref{lem:2p4std2}.

\begin{corollary}
\label{lem:easystd3}
{\sc Contraction Number($\gamma_{t2}$,3)} is polynomial-time solvable on $H$-free graphs if $H \subseteq_i P_8+kP_3$ for some $k \geq 0$, or $H \subseteq 2P_4+kP_3$ for some $k \geq 0$.
\end{corollary}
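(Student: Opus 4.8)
The plan is to leverage the duality recorded just above the statement: since $ct_{\gamma_{t2}}(G) \le 3$ holds for every graph $G$, a graph $G$ is a \yes-instance for {\sc 2-Edge Contraction($\gamma_{t2}$)} if and only if $ct_{\gamma_{t2}}(G) \le 2$, which happens exactly when $G$ is a \no-instance for {\sc Contraction Number($\gamma_{t2}$,3)}. Consequently, {\sc Contraction Number($\gamma_{t2}$,3)} is polynomial-time solvable on a graph class precisely when {\sc 2-Edge Contraction($\gamma_{t2}$)} is, and it suffices to prove that the latter is polynomial-time solvable on $(P_8+kP_3)$-free graphs and on $(2P_4+kP_3)$-free graphs for every $k \ge 0$.

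I would establish both families by induction on $k$, with \Cref{lem:hp3std} serving as the inductive engine. For the first family, the base case $k=0$ is immediate: by \Cref{lem:p8std2} every $P_8$-free graph satisfies $ct_{\gamma_{t2}}(G) \le 2$, so on $P_8$-free graphs {\sc 2-Edge Contraction($\gamma_{t2}$)} is the trivial problem that always answers \yes{} and is thus (vacuously) polynomial-time solvable. For the inductive step, assuming solvability on $(P_8+(k-1)P_3)$-free graphs, I set $H = P_8+(k-1)P_3$ and apply \Cref{lem:hp3std}, which lifts solvability from $H$-free graphs to $(H+P_3)$-free graphs; since $H+P_3 = P_8+kP_3$, this closes the induction. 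The second family is handled in exactly the same way, using \Cref{lem:2p4std2} (every $2P_4$-free graph has $ct_{\gamma_{t2}}(G) \le 2$) for the base case and $H = 2P_4+(k-1)P_3$, so that $H+P_3 = 2P_4+kP_3$, in the step.

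To finish, I would pass from these two ``maximal'' excluded subgraphs to an arbitrary $H$ with $H \subseteq_i P_8+kP_3$ or $H \subseteq_i 2P_4+kP_3$. The key point is monotonicity of the excluded-subgraph classes under the induced-subgraph order: if $H \subseteq_i F$, then every $H$-free graph is also $F$-free, so the class of $H$-free graphs is contained in the class for which a polynomial-time algorithm has just been constructed. Running that algorithm on an $H$-free instance and complementing its output therefore decides {\sc Contraction Number($\gamma_{t2}$,3)} in polynomial time.

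The entire substance of this argument lives in the three cited lemmas (in particular the boundedness-of-$\gamma_{t2}$ argument underlying \Cref{lem:hp3std}); the corollary itself is a routine assembly, so I expect no genuine obstacle. The only points requiring a moment's care are that the duality genuinely relies on the a priori bound $ct_{\gamma_{t2}} \le 3$ — without it, ``$=3$'' and ``$\le 2$'' need not be complementary — and that the two base cases are constant-answer instances demanding no computation at all.
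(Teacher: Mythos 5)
Your proposal is correct and takes essentially the same route as the paper: the paper obtains the corollary from exactly this duality (valid precisely because $ct_{\gamma_{t2}} \leq 3$ holds unconditionally) combined with \Cref{lem:hp3std}, \Cref{lem:p8std2} and \Cref{lem:2p4std2}, leaving the induction on $k$ and the closure under induced subgraphs implicit. Your write-up simply makes explicit the base cases, the inductive step, and the monotonicity argument that the paper's one-line derivation leaves to the reader.
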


\begin{lemma}
\label{lem:easystd2}
{\sc Contraction Number($\gamma_{t2}$,2)} is polynomial-time solvable on $H$-free graphs if $H \subseteq_i P_5+tK_1$ for some $t \geq 0$, or $H \subseteq_i P_3+tP_2$ for some $t \geq 0$.
\end{lemma}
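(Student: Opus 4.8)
The plan is to mirror the argument used for {\sc Contraction Number($\gamma_t$,2)} in \Cref{lem:easytd2}, exploiting the fact that for any connected graph $G$ the quantity $ct_{\gamma_{t2}}(G)$ is a well-defined integer in $\{1,2,3\}$ (it is bounded above by three and is at least one, since contracting no edge leaves $\gamma_{t2}$ unchanged). Consequently, $ct_{\gamma_{t2}}(G) = 2$ holds if and only if $ct_{\gamma_{t2}}(G) \neq 3$ and $ct_{\gamma_{t2}}(G) \neq 1$; equivalently, $G$ is a \yes-instance for {\sc Contraction Number($\gamma_{t2}$,2)} precisely when it is a \no-instance for both {\sc Contraction Number($\gamma_{t2}$,3)} and {\sc Contraction Number($\gamma_{t2}$,1)}. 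The strategy is therefore to run two known polynomial-time subroutines in sequence: first decide whether $ct_{\gamma_{t2}}(G) = 3$, and if not, decide whether $ct_{\gamma_{t2}}(G) = 1$, outputting the answer accordingly.

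The only structural ingredient I would record is the elementary observation that if $F \subseteq_i F'$ then every $F$-free graph is also $F'$-free (a host containing an induced $F'$ would contain an induced $F$). I would then treat the two families of $H$ separately. Assume first that $H \subseteq_i P_5+tK_1$ for some $t \geq 0$, and let $G$ be $H$-free. Since $P_5+tK_1 \subseteq_i P_8+tP_3$ (embed the $P_5$ into the $P_8$ and take one vertex from each $P_3$ for the $tK_1$), we have $H \subseteq_i P_8+tP_3$, so $G$ is $(P_8+tP_3)$-free; by \Cref{lem:easystd3} we can decide {\sc Contraction Number($\gamma_{t2}$,3)} on $G$ in polynomial time. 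If the answer is yes, we output \no. Otherwise $ct_{\gamma_{t2}}(G) \leq 2$, and since $H \subseteq_i P_5+tK_1$ forces $G$ to be $(P_5+tK_1)$-free, we invoke \Cref{thm:dicstd1} to decide {\sc Contraction Number($\gamma_{t2}$,1)} in polynomial time, outputting \no if $ct_{\gamma_{t2}}(G)=1$ and \yes otherwise.

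The case $H \subseteq_i P_3+tP_2$ is handled identically. Here I would use that $P_3+tP_2 \subseteq_i P_8+tP_3$ (embed $P_3$ into the $P_8$ part and each $P_2$ into a distinct $P_3$), so that $G$ is again $(P_8+tP_3)$-free and \Cref{lem:easystd3} decides {\sc Contraction Number($\gamma_{t2}$,3)}; and since $G$ is $(P_3+tP_2)$-free, the second branch of \Cref{thm:dicstd1} decides {\sc Contraction Number($\gamma_{t2}$,1)}. The correctness and polynomial running time then follow exactly as in the first case.

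I do not expect a genuine obstacle in this lemma: all the real work is delegated to \Cref{lem:easystd3} and \Cref{thm:dicstd1}, and the proof is a clean assembly of these two routines. The only points requiring care are (i) the direction of the induced-subgraph implication above (forbidding the \emph{smaller} graph is the \emph{stronger} condition, which is what licenses running the $(P_8+tP_3)$-free algorithm on an $H$-free input), and (ii) the trichotomy $ct_{\gamma_{t2}}(G) \in \{1,2,3\}$, which is exactly what makes ``equal to $2$'' decidable by ruling out the values $1$ and $3$; without this boundedness the reduction to the two established dichotomies would not close.
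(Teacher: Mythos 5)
Your proposal is correct and follows essentially the same route as the paper: both reduce the problem to deciding {\sc Contraction Number($\gamma_{t2}$,3)} via \Cref{lem:easystd3} (after observing $H \subseteq_i P_8+tP_3$) and then, on \no{}-instances, deciding {\sc Contraction Number($\gamma_{t2}$,1)} via \Cref{thm:dicstd1}, using the trichotomy $ct_{\gamma_{t2}}(G) \in \{1,2,3\}$ to conclude. The only differences are cosmetic (you spell out the induced-subgraph embeddings and handle the $P_5+tK_1$ case explicitly, while the paper does $P_3+tP_2$ and declares the other symmetric).
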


\begin{proof}
Assume that $H \subseteq_i P_3+tP_2$ for some $t \geq 0$ (the case where $H \subseteq_i P_5+tK_1$ for some $t \geq 0$ is handled similarly) and let $G$ be an $H$-free graph. Since $H$ is a fortiori an induced subgraph of $P_8+tP_3$, we may use the polynomial-time algorithm for $(P_8+tP_3)$-free graphs given by \Cref{lem:easystd3} to determine whether $G$ is a \yes-instance for {\sc Contraction Number($\gamma_{t2}$,3)} or not. If the answer is yes then we output \no; otherwise, we use the polynomial-time algorithm for $(P_3+tP_2)$-free graphs given by \Cref{thm:dicstd1} to determine whether $G$ is a \yes-instance for {\sc Contraction Number($\gamma_{t2}$,1)} or not, and output the answer accordingly.
\end{proof}

%------------------------------------------------------------------------------------------------------------------------------------------------------------------------------------

\subsection{Hardness results}
\label{sec:stdhard}

In this section, we show that {\sc Contraction Number($\gamma_{t2}$,2)} and {\sc Contraction Number($\gamma_{t2}$,3)} are $\mathsf{NP}$-hard on a number of monogenic graph classes. We first consider the case $k=2$. 

Firstly, since for any graph $G$ such that $ct_{\gamma_{t2}}(G) \leq 2$, $G$ is a \yes-instance for {\sc Contraction Number($\gamma_{t2}$,1)} if and only if $G$ is a \no-instance for {\sc Contraction Number($\gamma_{t2}$,2)}, the following ensues from \Cref{thm:dicstd1} and \Cref{lem:p8std2}.

\begin{lemma}
\label{lem:linforeststd2}
{\sc Contraction Number($\gamma_{t2}$,2)} is $\mathsf{(co)NP}$-hard on $P_6$-free graphs, $2P_3$-free graphs and $(P_4+P_2)$-free graphs.
\end{lemma}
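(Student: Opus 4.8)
The plan is to exploit the fact that on each of the three classes the contraction number $ct_{\gamma_{t2}}$ is always at most $2$, so that, up to complementing the answer, deciding {\sc Contraction Number($\gamma_{t2}$,2)} coincides with deciding {\sc Contraction Number($\gamma_{t2}$,1)}, whose hardness is already furnished by \Cref{thm:dicstd1}.

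First, I would observe that each of $P_6$, $2P_3$ and $P_4+P_2$ is an induced subgraph of $P_8$: writing $P_8 = v_1 v_2 \cdots v_8$, six consecutive vertices induce $P_6$, the set $\{v_1,v_2,v_3\} \cup \{v_5,v_6,v_7\}$ induces $2P_3$, and the set $\{v_1,v_2,v_3,v_4\} \cup \{v_6,v_7\}$ induces $P_4+P_2$. Hence each of the three classes is contained in the class of $P_8$-free graphs, and \Cref{lem:p8std2} gives $ct_{\gamma_{t2}}(G) \leq 2$ for every graph $G$ in any of them. As $ct_{\gamma_{t2}}(G) \geq 1$ by definition, we have $ct_{\gamma_{t2}}(G) \in \{1,2\}$ throughout.

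Next, I would verify that \Cref{thm:dicstd1} is applicable, i.e. that none of $P_6$, $2P_3$, $P_4+P_2$ is an induced subgraph of $P_5+tK_1$ or of $P_3+tP_2$ for any $t \geq 0$. This is a short check on component sizes: $P_6$ (resp. $P_4$) is connected of order $6 > 5$ (resp. $4 > 3$) and so cannot embed as an induced subgraph into a forest whose largest component is $P_5$ (resp. $P_3$); and neither $P_5+tK_1$ nor $P_3+tP_2$ contains two vertex-disjoint, mutually anticomplete induced copies of $P_3$, so $2P_3$ embeds into neither. Thus \Cref{thm:dicstd1} guarantees that {\sc Contraction Number($\gamma_{t2}$,1)} is $\mathsf{NP}$-hard on each of the three classes.

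Finally, I would combine these two facts. On each class, every instance $G$ satisfies $ct_{\gamma_{t2}}(G) \in \{1,2\}$, so $G$ is a \yes-instance of {\sc Contraction Number($\gamma_{t2}$,1)} (that is, $ct_{\gamma_{t2}}(G) = 1$) if and only if $G$ is a \no-instance of {\sc Contraction Number($\gamma_{t2}$,2)} (that is, $ct_{\gamma_{t2}}(G) \neq 2$). The identity map is therefore a polynomial-time reduction from {\sc Contraction Number($\gamma_{t2}$,1)} to the complement of {\sc Contraction Number($\gamma_{t2}$,2)}; since the former is $\mathsf{NP}$-hard, the latter is $\mathsf{NP}$-hard, whence {\sc Contraction Number($\gamma_{t2}$,2)} is $\mathsf{coNP}$-hard. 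This is exactly the mechanism of \Cref{obs:hard12}, which applies here because the bound $ct_{\gamma_{t2}} \leq 2$ makes {\sc Contraction Number($\gamma_{t2}$,3)} trivially polynomial-time solvable on these classes. I do not expect any genuine obstacle in this argument; the only points requiring care are the direction of the equivalence (yielding $\mathsf{coNP}$- rather than $\mathsf{NP}$-hardness) and the routine induced-subgraph verifications needed to invoke \Cref{lem:p8std2} and \Cref{thm:dicstd1}.
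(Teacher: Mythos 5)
Your proposal is correct and takes essentially the same route as the paper: since $P_6$, $2P_3$ and $P_4+P_2$ are all induced subgraphs of $P_8$, \Cref{lem:p8std2} forces $ct_{\gamma_{t2}}(G) \leq 2$ on all three classes, so being a \yes-instance of {\sc Contraction Number($\gamma_{t2}$,1)} is equivalent to being a \no-instance of {\sc Contraction Number($\gamma_{t2}$,2)}, and \Cref{thm:dicstd1} supplies the hardness of the former. The only nit is that your "component sizes" check skips the sub-case $P_4+P_2 \not\subseteq_i P_5+tK_1$, which follows by the same argument you give for $2P_3$ (the $P_2$ component needs an edge, and all edges of $P_5+tK_1$ lie in the $P_5$, which cannot host both components disjointly and anticompletely).
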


\begin{lemma}
\label{lem:clawstd2}
{\sc Contraction Number($\gamma_{t2}$,2)} is $\mathsf{NP}$-hard in $K_{1,3}$-free graphs.
\end{lemma}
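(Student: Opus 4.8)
The plan is to reduce from {\sc Positive Exactly 3-Bounded 1-In-3 3-Sat}, mirroring the strategy used for {\sc Contraction Number($\gamma_t$,2)} on $K_{1,3}$-free graphs in \Cref{lem:clawtd2} but adapted to the semitotal setting. First I would build, for each variable and each clause, a $K_{1,3}$-free gadget in which the local structure forces a minimum SD set to select exactly one of two ``truth'' configurations; the natural approach is to reuse the claw-free variable and clause gadgets of \Cref{fig:tdclawvargad} and \Cref{fig:tdclawclausegad} (with the truth/false assignment encoded by which of $T_x$ or $F_x$ is selected) and to verify that the relevant counting still holds for $\gamma_{t2}$ rather than $\gamma_t$. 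Concretely, I would establish an analogue of \Cref{clm:tdclaw1}: an equivalence stating that $\Phi$ is satisfiable if and only if $\gamma_{t2}(G)$ equals a fixed target value expressed in terms of $|X|$ and $|C|$, and that this in turn is equivalent to $ct_{\gamma_{t2}}(G)>1$.

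The heart of the argument is the two directions linking satisfiability to $ct_{\gamma_{t2}}(G)=2$. For the forward direction, given a satisfying assignment I would exhibit an explicit minimum SD set $D$ realising the target size, then, focusing on a clause $c$ with true literal $x$ and false literals $y,z$, add one well-chosen vertex so that $D$ plus that vertex contains an ST-configuration (most likely an $O_4$ or $O_5$ from \Cref{fig:conf}); combined with the fact that no minimum SD set of $G$ contains a friendly triple, \Cref{thm:friendlytriple} then yields $ct_{\gamma_{t2}}(G)=2$. For the converse, I would invoke the satisfiability$\Leftrightarrow ct_{\gamma_{t2}}(G)>1$ equivalence to conclude that $ct_{\gamma_{t2}}(G)=2$ forces $\Phi$ satisfiable. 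Throughout, I would lean on \Cref{lem:edge} and \Cref{lem:noedge} to control the shape of minimum SD sets: \Cref{lem:edge} guarantees that in a \no-instance for {\sc 2-Edge Contraction($\gamma_{t2}$)} no minimum SD set contains an edge and each vertex has a unique witness, which rigidifies the gadget analysis considerably.

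The main obstacle I anticipate is that the semitotal domination number behaves differently from the total domination number on these gadgets: the witness-at-distance-two relaxation means a vertex need not have a neighbour inside $D$, so the size lower bounds analogous to \Cref{clm:sizeclawtd3} must be recomputed, and the pendant/paw structures that forced two vertices per gadget for $\gamma_t$ may admit cheaper SD sets. I would therefore expect to either modify the gadgets (for instance lengthening certain paths or attaching pendant structures) so that domination plus the witness constraint still pins down the selection, or to prove fresh counting claims bounding $|D \cap V(G_x)|$ and $|D \cap V(G_c)|$ from below. The distance-two witness condition is also what makes the ST-configurations $O_4,\dots,O_7$ available, so I would use it to my advantage in the forward direction while carefully ensuring it does not let a minimum SD set secretly contain a friendly triple (which would collapse to $ct_{\gamma_{t2}}(G)=1$).

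Finally, once the equivalence is in hand, closing the proof is routine: I would observe that the constructed graph $G$ is $K_{1,3}$-free by inspection of the gadgets (every vertex's neighbourhood structure avoids an independent triple in its neighbours, the cliques being the only places where many neighbours meet), so the stated hardness follows. If reusing the $\gamma_t$ gadgets verbatim fails to produce the right $\gamma_{t2}$ rigidity, the fallback plan is to design semitotal-specific gadgets from scratch, still claw-free, keeping the same high-level reduction skeleton from {\sc Positive Exactly 3-Bounded 1-In-3 3-Sat}.
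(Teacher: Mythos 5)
Your high-level skeleton matches the paper's proof exactly: reduce from {\sc Positive Exactly 3-Bounded 1-In-3 3-Sat}, establish the three-way equivalence ($\Phi$ satisfiable $\iff$ $\gamma_{t2}(G)$ equals a fixed target $\iff$ $ct_{\gamma_{t2}}(G) > 1$), and in the forward direction add a single vertex to an explicit minimum SD set so as to create an ST-configuration (indeed an $O_4$), whence $ct_{\gamma_{t2}}(G) = 2$ by \Cref{thm:friendlytriple}. However, there is a genuine gap at the heart of the plan, namely the construction itself. Your primary proposal is to reuse the total-domination gadgets of \Cref{fig:tdclawvargad} and \Cref{fig:tdclawclausegad} and ``verify that the relevant counting still holds for $\gamma_{t2}$.'' It does not: the rigidity of those gadgets comes from the total-domination requirement that every vertex of $D$ has a \emph{neighbour} in $D$, so that pendant vertices (such as $w_c$ in $G_c^F$, or the degree-one tips of the paw structures) force specific neighbours into $D$, which in turn must be supported by further neighbours in $D$. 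Under semitotal domination a witness at distance two suffices, this forcing collapses, the target value $14|X|+8|C|$ is no longer the right count, and the entire analogue of \Cref{clm:tdclaw1} would have to be re-derived with no guarantee that the modified graph still encodes 1-in-3 satisfiability faithfully. You correctly anticipate this obstacle, but your fallback (``design semitotal-specific gadgets from scratch'') leaves the core of the reduction unspecified, so the proposal does not actually contain a proof.

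The missing idea, which the paper exploits, is that the right gadgets already exist in the literature: the $\mathsf{NP}$-hardness proof of {\sc Contraction Number($\gamma_{t2}$,1)} on $K_{1,3}$-free graphs (\Cref{thm:dicstd1}, proved in \cite{semitotcon}) is itself a reduction from the very same satisfiability problem, built from long paws rather than paws precisely so that the distance-two witness condition still forces two vertices per long paw, and its key claim --- $\Phi$ satisfiable $\iff \gamma_{t2}(G) = 14|X|+|C| \iff ct_{\gamma_{t2}}(G) > 1$ (\Cref{clm:stdclaw1}) --- can be cited verbatim. With that construction (\Cref{fig:stdclawvargad} and \Cref{fig:stdclawclausegad}), the only new work is the forward direction: from a satisfying assignment one builds the explicit minimum SD set $D$ and checks that $D \cup \{t_c^x\}$ contains the $O_4$ $t_c^y,t_c^x,P_{x,T}^c(1),P_{x,T}^c(4)$, so $ct_{\gamma_{t2}}(G) \leq 2$ by \Cref{thm:friendlytriple}, with equality since $ct_{\gamma_{t2}}(G) > 1$. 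As a minor point, your appeal to \Cref{lem:edge} and \Cref{lem:noedge} is unnecessary for this lemma; once the equivalence of \Cref{clm:stdclaw1} is available, no structural analysis of minimum SD sets is needed beyond it.
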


\begin{proof}
We use the same construction as in \cite[Theorem 7]{semitotcon}. More precisely, we reduce from {\sc Positive Exactly 3-Bounded 1-In-3 3-Sat} (see \Cref{sec:prelim} for a precise definition of this problem): given an instance $\Phi$ of this problem, with variable set $X$ and clause set $C$, we construct an instance $G$ of {\sc Contraction Number($\gamma_{t2}$,2)} as follows. For every variable $x \in X$ contained in clauses $c,c'$ and $c''$, we introduce the gadget $G_x$ depicted in \Cref{fig:stdclawvargad} (where a rectangle indicates that the corresponding set of vertices is a clique). For every clause $c \in C$ containing variables $x,y$ and $z$, we introduce the gadget $G_c$, which is the disjoint union of the graph $G_c^T$ and the graph $G_c^F$ depicted in \Cref{fig:stdclawclausegad}, and further add the following edges. 
\begin{itemize}
\item For every $p \in \{x,y,z\}$,  we connect $P_{p,F}^c(2)$ to $f_c^{ab}$ if and only if $p \in \{a,b\}$.
\item For every $p \in \{x,y,z\}$, we connect $P_{p,T}^c(1)$ to $t_c^p$ and further connect $P_{p,T}^c(1)$ to $w_c^{ab}$ if and only if $p \in \{a,b\}$.   
\end{itemize}
We let $G$ be the resulting graph. Let us show that $\Phi$ is satisfiable if and only if $ct_{\gamma_{t2}}(G) =2$. To do so, we will rely on the following key result shown in \cite{semitotcon}.

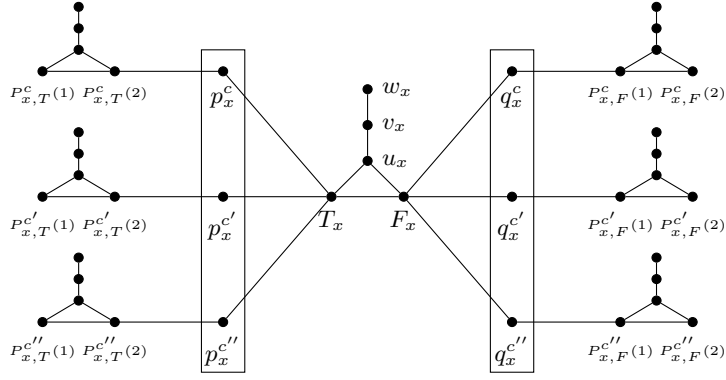
\begin{figure}
\centering
\begin{tikzpicture}[scale=.95]
\node[circ,label=below:{\small $q_x^c$}] (ac) at (2,1.75) {};
\longPaw{3.5}{1.75}{}{$P_{x,F}^c(1)$}{$P_{x,F}^c(2)$};
\draw (ac) -- (3.5,1.75);

\node[circ,label=below:{\small $q_x^{c'}$}] (ac') at (2,0) {};
\longPaw{3.5}{0}{}{$P_{x,F}^{c'}(1)$}{$P_{x,F}^{c'}(2)$};
\draw (ac') -- (3.5,0);

\node[circ,label=below:{\small $q_x^{c''}$}] (ac'') at (2,-1.75) {};
\longPaw{3.5}{-1.75}{}{$P_{x,F}^{c''}(1)$}{$P_{x,F}^{c''}(2)$};
\draw (ac'') -- (3.5,-1.75);

\node[circ,label=below:{\small $p_x^c$}] (dc) at (-2,1.75) {};
\longPaw{-4.5}{1.75}{}{$P_{x,T}^c(1)$}{$P_{x,T}^c(2)$};
\draw (dc) -- (-3.5,1.75);

\node[circ,label=below:{\small $p_x^{c'}$}] (dc') at (-2,0) {};
\longPaw{-4.5}{0}{}{$P_{x,T}^{c'}(1)$}{$P_{x,T}^{c'}(2)$};
\draw (dc') -- (-3.5,0);

\node[circ,label=below:{\small $p_x^{c''}$}] (dc'') at (-2,-1.75) {};
\longPaw{-4.5}{-1.75}{}{$P_{x,T}^{c''}(1)$}{$P_{x,T}^{c''}(2)$};
\draw (dc'') -- (-3.5,-1.75);

\node[circ,label=below:{\small $T_x$ }] (a) at (-.5,0) {};
\node[circ,label=below:{\small $F_x$}] (b) at (.5,0) {};
\node[circ,label=right:{\small $u_x$}] (c) at (0,.5) {};
\node[circ,label=right:{\small $v_x$}] at (0,1) {};
\node[circ,label=right:{\small $w_x$}] (d) at (0,1.5) {};

\draw (a) -- (b) 
(a) -- (c) 
(b) -- (c) 
(c) -- (d);

\draw (a) -- (dc)
(a) -- (dc')
(a) -- (dc'')
(b) -- (ac)
(b) -- (ac')
(b) -- (ac'');

\draw (-2.3,-2.45) rectangle (-1.7,2.05);
\draw (1.7,-2.45) rectangle (2.3,2.05);
\end{tikzpicture}
\caption{The gadget $G_x$ for a variable $x \in X$ contained in clauses $c,c'$ and $c''$ (a rectangle indicates that the corresponding set of vertices induces a clique).}
\label{fig:stdclawvargad}
\end{figure}

\begin{figure}
\centering
\begin{subfigure}{.45\textwidth}
\centering
\begin{tikzpicture}
\node[circ,label=left:{\small $w_c^{yz}$}] (wyz) at (0,0) {};
\node[circ,label=right:{\small $w_c^{xy}$}] (wxy) at (2,0) {};
\node[circ,label=above:{\small $w_c^{xz}$}] (wxz) at (1,1.73) {};

\draw (wyz) -- (wxz) node[circ,midway,label=left:{\small $t_c^z$}]  (tz) {};
\draw (wyz) -- (wxy) node[circ,midway,label=below:{\small $t_c^y$}] (ty) {};
\draw (wxy) -- (wxz) node[circ,midway,label=right:{\small $t_c^x$}] (tx) {};

\draw (tz) -- (ty) -- (tx) -- (tz);

\draw (wyz) edge[bend left=50] (wxz);
\draw (wyz) edge[bend right=55] (wxy);
\draw (wxy) edge[bend right=50] (wxz);

\node[circ,label=above:{\small $u_c$}] (uc) at (1,.45) {};
\draw (uc) -- (tz)
(uc) -- (ty)
(uc) -- (tx); 
\end{tikzpicture}
\caption{The graph $G_c^T$.}
\end{subfigure}
\hspace*{.5cm}
\begin{subfigure}{.45\textwidth}
\centering
\begin{tikzpicture}
\node[circ,label=below:{\small $f_c^{yz}$}] (wyz) at (0,0) {};
\node[circ,label=below:{\small $f_c^{xy}$}] (wxy) at (2,0) {};
\node[circ,label=above:{\small $f_c^{xz}$}] (wxz) at (1,1.73) {};

\draw (wyz) -- (wxy) -- (wxz) -- (wyz);
\end{tikzpicture}
\caption{The graph $G_c^F$.}
\end{subfigure}
\caption{The gadget $G_c$ for a clause $c = x \lor y \lor z$ is the disjoint union of $G^T_c$ and $G^F_c$.}
\label{fig:stdclawclausegad}
\end{figure}
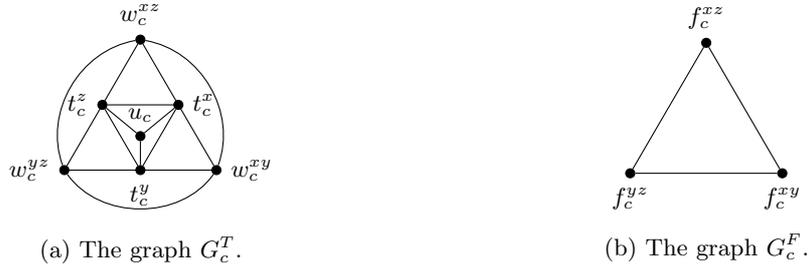

\begin{claim}[\cite{semitotcon}]
\label{clm:stdclaw1}
The following statements are equivalent.
\begin{itemize}
\item[(i)] $\Phi$ is satisfiable.
\item[(ii)] $\gamma_{t2}(G) = 14|X| + |C|$.
\item[(iii)] $ct_{\gamma_{t2}} (G) > 1$.
\end{itemize}
\end{claim}

Observe first that if $ct_{\gamma_{t2}}(G) =2$ then $\Phi$ is satisfiable by \Cref{clm:stdclaw1}. Assume, conversely, that $\Phi$ is satisfiable and consider a truth assignment satisfying $\Phi$. Let us show how to construct a minimum SD set $D$ of $G$ given this assignment. For every variable $x \in X$ appearing in clauses $c,c'$ and $c''$, if $x$ is true, we include $\{v_x,T_x\} \cup \{P_{x,R}^p(1),P_{x,R}^p(4)~|~R \in \{T,F\} \text{ and }p \in \{c,c',c''\}\}$ in $D$; otherwise, we include $\{v_x,F_x\} \cup \{P_{x,R}^p(2),P_{x,R}^p(4)~|~R \in \{T,F\} \text{ and }p \in \{c,c',c''\}\}$ in $D$. For every clause $c \in C$ containing variables $x,y$ and $z$, exactly one variable is set to true, say $x$ without loss of generality, in which case we add $t^y_c$ to $D$ (note that $w_c^{xz},t^x_c$ and $w_c^{xy}$ are already dominated by $P_{x,T}^c(1)$, which also serves as a witness for $t^y_c$). It is not difficult to see that the constructed set $D$ is indeed an SD set of $G$ and since $|D| = 14|X| + |C|$, we conclude by \Cref{clm:stdclaw1} that $D$ is minimum. Now consider a clause $c \in C$ containing variables $x,y$ and $z$, and assume without loss of generality that $x$ is true and $t^y_c \in D$. Then $D \cup \{t^x_c\}$ is an SD set of $G$ of size $\gamma_{t2}(G)+1$ containing the $O_4$ $t^y_c,t^x_c,P_{x,T}^c(1),P_{x,T}^c(4)$ and thus, $ct_{\gamma_{t2}}(G) \leq 2$ by \Cref{thm:friendlytriple}. But $ct_{\gamma_{t2}}(G) > 1$ by \Cref{clm:stdclaw1} and so, $ct_{\gamma_{t2}}(G) =2$. Since $G$ is $K_{1,3}$-free, the lemma follows.
\end{proof}

\begin{lemma}
\label{lem:c5std2}
{\sc Contraction Number($\gamma_{t2}$,2)} is $\mathsf{NP}$-hard on $\{C_k~|~k \geq 5\}$-free graphs.
\end{lemma}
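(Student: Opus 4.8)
The plan is to reduce from \textsc{Positive Exactly 3-Bounded 1-In-3 3-Sat}, as in the other $k=2$ hardness lemmas, but to engineer the gadgets so that the resulting graph $G$ contains no induced cycle of length at least five. First I would build $G$ from an instance $\Phi$ with variable set $X$ and clause set $C$: I attach to each variable a constant-size gadget and to each clause a constant-size gadget, joining literal vertices to the clause vertices where they occur, and---crucially---arranging that the clause vertices and the relevant junction vertices induce \emph{cliques}, so that any long cycle threading through a gadget immediately acquires a chord. The goal is an equivalence of the familiar shape ``$\Phi$ satisfiable $\Leftrightarrow \gamma_{t2}(G)$ equals a fixed value $\Leftrightarrow ct_{\gamma_{t2}}(G) > 1$'', together with the structural fact that this particular construction always satisfies $ct_{\gamma_{t2}}(G) \le 2$; combining the two yields ``$\Phi$ satisfiable $\Leftrightarrow ct_{\gamma_{t2}}(G) = 2$''.

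For the domination bookkeeping I would follow the template of \Cref{lem:clawstd2}. From a satisfying one-in-three assignment I construct a minimum SD set $D$ by selecting in each variable gadget the vertices encoding its truth value and in each clause gadget the vertices witnessing its unique true literal, and I verify that $|D|$ meets the target through a matching lower bound (each gadget is forced to contribute at least a fixed number of vertices). The same counting should show that no minimum SD set contains a friendly triple, so $ct_{\gamma_{t2}}(G) > 1$ by \Cref{thm:friendlytriple}(i); conversely, from any minimum SD set I would read off a consistent assignment. To secure $ct_{\gamma_{t2}}(G) \le 2$ when $\Phi$ is satisfiable, I enlarge $D$ by one carefully chosen vertex so that the resulting SD set of size $\gamma_{t2}(G)+1$ contains an ST-configuration (an $O_4$ or $O_5$), whence $ct_{\gamma_{t2}}(G) \le 2$ by \Cref{thm:friendlytriple}(ii); \Cref{lem:edge} would be used in its contrapositive form to rule out edges and double witnesses in minimum SD sets along the way, and if the gadgets can additionally be made $P_8$-free then $ct_{\gamma_{t2}}(G)\le 2$ would follow outright from \Cref{lem:p8std2}.

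The hard part will be meeting two competing requirements at once: the gadgets must be rich enough to enforce the exactly-one-true-literal constraint for semitotal domination, yet structured so as to forbid every induced $C_k$ with $k \ge 5$. Since long induced cycles arise precisely from long chordless paths running through several gadgets, the verification that $G$ is $\{C_k \mid k \ge 5\}$-free reduces to a case analysis of how a hole could weave through the inter-gadget edges. The cleanest way to discharge this, and the route I would take, is to make $G$ a \emph{split} (or at least chordal) graph: if all literal and clause vertices form a clique while the remaining vertices form an independent set attached only as pendant-type structures, then $G$ is $\{2K_2, C_4, C_5\}$-free and hence a fortiori $\{C_k \mid k \ge 5\}$-free, so the hole-freeness check collapses to confirming the split partition. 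I expect the genuinely delicate point to be reconciling this split structure with the friendly-triple and witness bookkeeping of \Cref{thm:friendlytriple}, i.e.\ ensuring the clique of literal/clause vertices does not create spurious friendly triples in a minimum SD set that would collapse $ct_{\gamma_{t2}}(G)$ to $1$.
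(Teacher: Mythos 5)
Your proposed route has a fatal structural flaw: the device you rely on to exclude long induced cycles---making $G$ a split graph (all literal and clause vertices in one clique, everything else an independent set of pendant-type structures)---lands you in a class where the problem is provably easy, so no such reduction can exist unless $\mathsf{P}=\mathsf{NP}$. Split graphs are $2K_2$-free, i.e.\ $2P_2$-free, and $2P_2 \subseteq_i P_3+P_2$; hence by \Cref{lem:easystd2} {\sc Contraction Number($\gamma_{t2}$,2)} is polynomial-time solvable on split graphs, and by \Cref{thm:dicstd1} so is {\sc Contraction Number($\gamma_{t2}$,1)}. In particular, your target equivalence ``$\Phi$ satisfiable $\Leftrightarrow ct_{\gamma_{t2}}(G) > 1$'' on a split graph $G$ would place {\sc Positive Exactly 3-Bounded 1-In-3 3-Sat} in $\mathsf{P}$. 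The same objection kills any $2K_2$-free (equivalently, any $P_5$-free) construction, so the ``delicate point'' you flagged is not a technicality to be engineered around but the reason the approach collapses: in a connected split graph every clique vertex is within distance two of every other vertex, so any minimum SD set of size at least three that contains an edge automatically contains a friendly triple, and this rigidity is exactly what the polynomial-time algorithms exploit. (Your fallback ``or at least chordal'' is not developed; the construction you actually describe---clique plus independent pendants---is split.)

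The paper's construction is instructive precisely because it is deliberately \emph{not} $2K_2$-free: it reduces from {\sc 3-Sat}, makes only the clause vertices a clique, and attaches each variable $x$ as a long paw (a triangle on $x,\overline{x},u_x$ with a pendant path $u_x v_x w_x$). The tails do double duty: the vertex $w_x$ forces $|D \cap V(G_x)| \geq 2$ for every SD set $D$, and---crucially---they create distance at least three between any edge of a minimum SD set lying inside a gadget and the rest of $D$, which is how the paper shows that when $\gamma_{t2}(G)=2|X|$ no minimum SD set contains a friendly triple (so $ct_{\gamma_{t2}}(G)>1$), while an extra clause vertex yields an $O_4$ and hence $ct_{\gamma_{t2}}(G) \leq 2$ by \Cref{thm:friendlytriple}. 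Such a distance-based escape is impossible in a split graph. Finally, excluding $\{C_k~|~k\geq 5\}$ is much weaker than chordality and is verified directly: any induced cycle uses at most two (adjacent) clause vertices and, inside a variable gadget, only the edge $x\overline{x}$ (the chord $x\overline{x}$ kills any longer detour through $u_x$), so induced cycles have length at most four---triangles and $C_4$'s are allowed and indeed present.
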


\begin{proof}
The reduction is similar to that of \Cref{lem:2p4td2}. More precisely, we reduce from {\sc 3-Sat}: given an instance $\Phi$ of this problem, with variable set $X$ and clause set $C$, we construct an instance $G$ of {\sc Contraction Number($\gamma_{t2}$,2}) as follows. For every variable $x \in X$, we introduce a long paw $G_x$ on vertex set $\{G_x(1) = x, G_x(2) = \overline{x}, G_x(3) = u_x, G_x(4) = v_x, G_x(5) = w_x\}$ and refer to the vertices $x$ and $\overline{x}$ as \emph{literal vertices}. For every clause $c \in C$, we introduce a \emph{clause vertex}, denoted by $c$, and add an edge between $c$ and every literal vertex whose corresponding literal appears in the clause $c$. Finally, we add an edge between every two clauses vertices so that the set of clause vertices induces a clique, denoted by $K$ in the following. We let $G$ be the resulting graph. We next show that $G$ is a \yes-instance for {\sc Contraction Number($\gamma_{t2}$,2)} if and only if $\Phi$ is satisfiable through a series of claims. Observe first that since for any variable $x \in X$, the vertex $w_x$ should be dominated and any witness for a vertex dominating $w_x$ can only belong to $G_x$, the following holds.

\begin{observation}
\label{obs:domwx}
For any SD set $D$ of $G$ and any variable $x \in X$, $|D \cap V(G_x)| \geq 2$.
\end{observation}

\begin{claim}
\label{clm:phisatc5std2}
$\Phi$ is satisfiable if and only if $\gamma_{t2}(G) = 2|X|$.
\end{claim}

\begin{claimproof}
Assume first that $\Phi$ is satisfiable and consider a truth assignment satisfying $\Phi$. We construct an SD set $D$ of $G$ as follows. For every variable $x \in X$, if $x$ is true then we include $v_x$ and $x$ to $D$; otherwise, we include $v_x$ and $\overline{x}$ to $D$. It is easy to see that the constructed set $D$ is indeed an SD set of $G$ and since $|D| = 2|X|$, we conclude by \Cref{obs:domwx} that $D$ is minimum.

Conversely assume that $\gamma_{t2}(G) = 2|X|$ and let $D$ be a minimum SD set of $G$. Then by \Cref{obs:domwx}, $D \cap V(K) = \varnothing$ and thus, every clause vertex must have a neighbouring literal vertex in $D$. Therefore, the truth assignment obtained by setting a variable $x$ to true when $x \in D$ and to false otherwise, satisfies $\Phi$.
\end{claimproof}

\begin{claim}
\label{clm:c5cstd2}
$ct_{\gamma_{t2}}(G) = 2$ if and only if $\gamma_{t2}(G) = 2|X|$.
\end{claim}

\begin{claimproof}
Assume first that $ct_{\gamma_{t2}}(G) = 2$ and consider a minimum SD set $D$ of $G$. If there exists a variable $x \in X$ such that $|D \cap V(G_x)| \geq 3$, then $D \cap V(G_x)$ contains a friendly triple, a contradiction to \Cref{thm:friendlytriple}. Thus, $|D \cap V(G_x)| \leq 2$ for every variable $x \in X$, and we conclude by \Cref{obs:domwx} that, in fact, equality holds. Now suppose that $D \cap V(K) \neq \varnothing$, say $c \in D \cap V(K)$, and let $x \in X$ be a variable occurring in $c$, say $x$ appears positive in $c$ without loss of generality. Then the set $(D \setminus V(G_x)) \cup \{x,v_x\}$ is a minimum SD set of $G$ containing the friendly triple $c,x,v_x$, a contradiction to \Cref{thm:friendlytriple}. Thus, $D \cap V(K) = \varnothing$ and so, $\gamma_{t2}(G) = |D| = 2|X|$.

Conversely, assume that $\gamma_{t2}(G) = 2|X|$ and let $D$ be a minimum SD set of $G$. Then by \Cref{obs:domwx}, $D \cap V(K) = \varnothing$ and $|D \cap V(G_x)| = 2$ for every variable $x \in X$. Since for every variable $x \in X$, $w_x$ should be dominated, it follows that if for some variable $x \in X$, $D \cap V(G_x)$ contains an edge $e$, then $D \cap \{x,\overline{x}\} = \varnothing$; but then, $d(e,D \setminus e) \geq 3$ and thus, $e$ cannot be part of a friendly triple. Therefore, $D$ contains no friendly triple. Consider now a clause $c \in C$ and let $x,y \in X$ be two variables occurring in $c$, say $x$ and $y$ both appear positive in $c$ without loss of generality. Then the set $(D \setminus (V(G_x) \cup V(G_y))) \cup \{c,x,v_x,y,v_y\}$ is an SD set of $G$ of size $\gamma_{t2}(G) +1$ containing the $O_4$ $x,c,y,v_y$ and thus, $ct_{\gamma_{t2}}(G) = 2$ by \Cref{thm:friendlytriple}. 
\end{claimproof}

Now Claims~\ref{clm:phisatc5std2} and \ref{clm:c5cstd2}, $\Phi$ is satisfiable if and only if  $ct_{\gamma_{t2}}(G) = 2$ if and only if $\Phi$ is satisfiable. Since $G$ is easily seen to contain no induced cycle of length at least five, the lemma follows.
\end{proof}

\begin{lemma}
\label{lem:c3c4std2}
For every $\ell \in \{3,4\}$, {\sc Contraction Number($\gamma_{t2}$,2)} is $\mathsf{NP}$-hard on $C_\ell$-free graphs.
\end{lemma}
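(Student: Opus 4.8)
The plan is to give, for each $\ell \in \{3,4\}$, a polynomial reduction from {\sc 3-Sat} modelled on the one used for \Cref{lem:c5std2}, but with the variable and clause gadgets redesigned so that the resulting graph $G$ contains no induced $C_\ell$. As in \Cref{lem:c5std2}, the target equivalence is that $\Phi$ is satisfiable if and only if $ct_{\gamma_{t2}}(G)=2$, and the whole argument is driven by the characterisation in \Cref{thm:friendlytriple}. Concretely, I would keep the three-layered skeleton of that proof: first, a lower bound forcing $|D\cap V(G_x)|\ge 2$ for every variable gadget (via a pendant path whose end vertex can only be dominated and witnessed inside the gadget, cf.\ \Cref{obs:domwx}); second, a size claim stating that $\gamma_{t2}(G)$ attains its minimum possible value $2|X|$ exactly when $\Phi$ is satisfiable; and third, a friendly-triple claim stating that $ct_{\gamma_{t2}}(G)>1$ if and only if $\gamma_{t2}(G)$ is minimum. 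Combined, these give that $\Phi$ is satisfiable iff $ct_{\gamma_{t2}}(G)>1$, and in the satisfiable case I would exhibit an explicit SD set of size $\gamma_{t2}(G)+1$ carrying an $O_4$ (obtained, as in \Cref{lem:clawstd2} and \Cref{lem:c5std2}, by adding one neighbour of a literal vertex dominating a satisfied clause), which yields $ct_{\gamma_{t2}}(G)\le 2$ by \Cref{thm:friendlytriple} and hence $ct_{\gamma_{t2}}(G)=2$.

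For $\ell=4$ I would retain the clause clique $K$ (a clique contains no \emph{induced} $C_4$) and the long-paw variable gadget (whose only cycle is a triangle), and only repair the induced $4$-cycles that arise through clause--literal adjacencies: the offending configuration is $c\,x\,\overline{x}\,c'$ where $c$ contains the literal $x$, $c'$ contains $\overline{x}$, and $x\not\sim c'$, $\overline{x}\not\sim c$. I would destroy every such $C_4$ by subdividing each clause--literal edge once (equivalently, inserting a private connector vertex between each literal vertex and each incident clause vertex); this does not affect the three claims above after a routine recount of the vertices, while a short case analysis over all $4$-subsets of $V(G)$ confirms $C_4$-freeness.

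For $\ell=3$ both the triangle inside each long paw and the clause clique must disappear. I would replace the literal-selection triangle $\{x,\overline{x},u_x\}$ by its triangle-free counterpart in which $x$ and $\overline{x}$ are each joined to $u_x$ but not to one another (so that $d(x,\overline{x})=2$ is preserved for witnessing), and replace the clause clique $K$ by a triangle-free \emph{hub}: a single apex vertex adjacent to all clause vertices, which keeps every pair of clause vertices within distance two (the property actually used when building the ST-configuration) without creating any triangle. After these substitutions I would re-derive the forcing lemma, the size claim and the friendly-triple claim; triangle-freeness of $G$ then follows by inspecting the neighbourhoods of the hub, of the vertices $u_x$, and of the clause vertices.

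The main obstacle is the completeness direction together with girth control: I must show that whenever $\Phi$ is \emph{unsatisfiable} a minimum SD set is forced to place three vertices inside some variable gadget, thereby creating a friendly triple and collapsing $ct_{\gamma_{t2}}(G)$ to $1$ (so that $ct_{\gamma_{t2}}(G)\ne 2$), and conversely that in the satisfiable case no minimum SD set already contains a friendly triple. Re-establishing these structural claims once the gadgets have been altered to kill $C_\ell$ is the delicate part, since the modifications change exactly the local distances on which the friendly-triple and ST-configuration ($O_4$/$O_5$) analyses of \Cref{thm:friendlytriple} rely; in particular I must check that the hub (for $\ell=3$) and the subdivided clause edges (for $\ell=4$) do not accidentally create a distance-two witness that turns an intended non-friendly minimum SD set into a friendly one. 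The counting adjustments and the enumeration of induced triangles and $4$-cycles are routine once the gadgets are fixed.
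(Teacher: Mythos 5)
There is a genuine gap: both of your gadget modifications break the domination mechanism that the whole reduction rests on, and the needed repairs are not the ``routine recount'' you claim. For $\ell=4$, subdividing every clause--literal edge destroys the encoding of satisfaction. In the construction of \Cref{lem:c5std2}, the equivalence $\gamma_{t2}(G)=2|X|$ iff $\Phi$ satisfiable works because a size-$2|X|$ SD set is forced to consist of exactly two vertices per variable gadget, so each clause vertex must be dominated by an \emph{adjacent} literal vertex. After subdivision, a clause vertex $c$ is at distance two from every literal vertex, so no size-$2|X|$ set dominates $c$ at all; worse, each connector $s_{c,x}$ can then only be dominated by $x$ itself, so a set avoiding clause and connector vertices would need \emph{every} literal vertex of \emph{every} clause, i.e.\ the subdivision encodes ``all literals true'' rather than ``some literal true''. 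For $\ell=3$, deleting the edge $x\overline{x}$ makes the variable gadget impossible to cover with two vertices one of which is a literal: $w_x$ forces $v_x$ or $w_x$ into $D$, and with $x\overline{x}$ gone the vertex $\overline{x}$ then needs $u_x$ (or itself), so $\{x,v_x\}$ no longer dominates the gadget while $\{u_x,w_x\}$ does --- meaning minimum SD sets decouple entirely from truth assignments. Your apex hub makes this worse: a single vertex ($h$, or any clause vertex through $h$) dominates all clause vertices, so $\gamma_{t2}(G)\leq 2|X|+O(1)$ independently of satisfiability and the hardness encoding collapses. Your correct instinct that ``re-establishing the structural claims after altering the gadgets is the delicate part'' is exactly where the plan fails.

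The paper avoids these traps by not patching \Cref{lem:c5std2} at all. It reduces instead from {\sc Positive Not-All-Equal 3-Sat}, uses a variable gadget built on a chordless $5$-cycle $F_x T_x v_x w_x u_x$ with a pendant $r_x$ on $w_x$ (so the gadget itself is simultaneously $C_3$- and $C_4$-free, and $\{w_x,T_x\}$ or $\{w_x,F_x\}$ covers it with a witness at distance two via $v_x$), and introduces \emph{two} clause vertices $c$ and $\overline{c}$ per clause, adjacent to the $T$-vertices and $F$-vertices respectively; the NAE condition is precisely what makes both clause vertices dominatable by a size-$2|X|$ set, and the pigeonhole ``two $T$'s or two $F$'s per clause'' is what supplies the $O_4$ (e.g.\ $F_x,\overline{c},F_y,w_y$) witnessing $ct_{\gamma_{t2}}(G)\leq 2$. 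The only difference between the two instances is a switch: the clause vertex sets $V_C$ and $V_{\overline{C}}$ are left independent for the $C_3$-free graph $G_3$ and made cliques for the $C_4$-free graph $G_4$, with a short case analysis (Case 1 vs.\ Case 2 in the proof of the claim $ct_{\gamma_{t2}}(G_\ell)=2 \Leftrightarrow \gamma_{t2}(G_\ell)=2|X|$) handling the structural consequences of each choice. If you want to salvage your plan, you would need a variable gadget that is coverable by two vertices including a ``truth'' vertex without using a triangle or an induced $C_4$ --- which is essentially what forces the paper's $C_5$-based design and the move away from plain {\sc 3-Sat}.
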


\begin{proof}
We reduce from {\sc Positive Not-All-Equal 3-Sat} which is a $\mathsf{NP}$-complete variant \cite{schaefer} of the {\sc 3-Sat} problem where given a formula in which all literals are positive, the problem is to determine whether there exists a truth assignment such that in no clause, all three literals have the same truth value. Given an instance $\Phi$ of this problem, with variable set $X$ and clause set $C$, we construct two instances $G_3$ and $G_4$ of {\sc Contraction Number($\gamma_{t2}$,2)}, one $C_3$-free and the other $C_4$-free, respectively. For both graphs, we start as follows. For every variable $x \in X$, we introduce the gadget $G_x$ depicted in \Cref{fig:vargadc3c4std2}, which has two distinguished \emph{truth vertices} $T_x$ and $F_x$. For every clause $c \in C$ containing variables $x,y$ and $z$, we introduce two \emph{clause vertices} $c$ and $\overline{c}$, and add for every $v \in \{x,y,z\}$, an edge between $c$ and $T_v$, and an edge between $\overline{c}$ and $F_v$. We denote by $V_C = \{c~|~ c \in C\}$ the set of \emph{positive} clause vertices, and by $V_{\overline{C}} = \{\overline{c}~|~c \in C\}$ the set of \emph{negated} clause vertices. This concludes the construction of $G_3$. For the graph $G_4$, we further add edges so that $V_C$ is a clique and $V_{\overline{C}}$ is a clique.

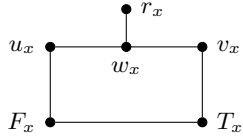
\begin{figure}
\centering
\begin{tikzpicture}
\node[circ,label=left:{\small $F_x$}] (f) at (0,0) {};
\node[circ,label=right:{\small $T_x$}] (t) at (2,0) {};
\node[circ,label=left:{\small $u_x$}] (u) at (0,1) {};
\node[circ,label=below:{\small $w_x$}] (w) at (1,1) {};
\node[circ,label=right:{\small $v_x$}] (v) at (2,1) {};
\node[circ,label=right:{\small $r_x$}] (r) at (1,1.5) {};

\draw (f) -- (t) -- (v) -- (u) -- (f)
(w) -- (r);
\end{tikzpicture}
\caption{The variable gadget $G_x$.}
\label{fig:vargadc3c4std2}
\end{figure}

We next show that for every $\ell \in \{3,4\}$, $ct_{\gamma_{t2}}(G_\ell) =2$ if and only if $\Phi$ is satisfiable through a series of claims. Observe first that since for every variable $x \in X$, the vertex $r_x$ should be dominated and any witness for a vertex dominating $r_x$ can only belong to $G_x$, the following holds.

\begin{observation}
\label{obs:domrx}
Let $\ell \in \{3,4\}$. Then for every SD set $D$ of $G_\ell$ and every variable $x \in X$, $|D \cap V(G_x)| \geq 2$.
\end{observation}

\begin{claim}
\label{clm:phisatc3c4std2}
Let $\ell \in \{3,4\}$. Then $\Phi$ is satisfiable if and only if $\gamma_{t2}(G_\ell) = 2|X|$.
\end{claim}

\begin{claimproof}
Assume first that $\Phi$ is satisfiable and consider a truth assignment satisfying $\Phi$. We construct an SD set $D$ for $G_\ell$. For every variable $x \in X$, if $x$ is true, then we include $w_x$ and $T_x$ in $D$; otherwise, we include $w_x$ and $F_x$ in $D$. It is not difficult to see that the constructed set $D$ is indeed an SD set for both $G_\ell$ and since $|D| = 2|X|$, we conclude by \Cref{obs:domrx} that $D$ is minimum.

Conversely, assume that $\gamma_{t2}(G_\ell) = 2|X|$ and let $D$ be a minimum SD set of $G_\ell$. Then by \Cref{obs:domrx}, $D \cap V_C = D \cap V_{\overline{C}} = \varnothing$ and thus, every (positive or negated) clause vertex must have a neighbouring truth vertex in $D$. Therefore, the truth assignment obtained by setting a variable $x$ to true when $T_x \in D$ and to false otherwise, satisfies $\Phi$.
\end{claimproof}

\begin{claim}
\label{clm:c3c4cstd2}
Let $\ell \in \{3,4\}$. Then $ct_{\gamma_{t2}}(G_\ell) = 2$ if and only if $\gamma_{t2}(G_\ell) = 2|X|$.
\end{claim}

\begin{claimproof}
Assume that $ct_{\gamma_{t2}}(G_\ell) = 2$ and let $D$ be a minimum SD set of $G_\ell$. If there exists a variable $x \in X$ such that $|D \cap V(G_x)| \geq 3$, then $D \cap V(G_x)$ contains a friendly triple, a contradiction to \Cref{thm:friendlytriple}. Thus, $|D \cap V(G_x)| \leq 2$ for every variable $x \in X$ and we conclude by \Cref{obs:domrx} that, in fact, equality holds. Furthermore, we may assume that for every variable $x \in X$, $D \cap \{F_x,T_x\} \neq \varnothing$ and $w_x \in D$, for it otherwise suffices to consider e.g. $(D \setminus V(G_x)) \cup \{F_x,w_x\}$ in place of $D$. Now suppose to the contrary that $D \cap V_C \neq \varnothing$. We distinguish cases depending on the value of $\ell$.

\textbf{Case 1.} \emph{$\ell = 3$.} Let $c \in V_C$ be a vertex in $D$ and let $x,y,z \in X$ be the three variables occurring in $c$. If there exists $v \in \{x,y,z\}$ such that $T_v \in D$, then $c,T_v,w_v$ is a friendly triple, a contradiction to \Cref{thm:friendlytriple}. Thus, for every $v \in \{x,y,z\}$, $T_v \notin D$ and so, $F_v \in D$ by the aforementioned assumption. But then, $(D \setminus \{c\}) \cup \{T_x\}$ is a minimum SD of $G_3$ (recall that for $\ell =3$, $V_C$ is an independent set) containing the friendly triple $F_x,T_x,w_x$, a contradiction to \Cref{thm:friendlytriple}.

\textbf{Case 2.} \emph{$\ell =4$.} Then $|D \cap V_C| = 1$: indeed, if there exist $c,c' \in D \cap V_C$ then $d(\{c,c'\}, D \setminus \{c,c'\}) \leq 2$ (recall that by assumption, $D \cap \{t_v,F_v\} \neq \varnothing$ for every variable $v \in X$) and so, $D$ contains a friendly triple, a contradiction to \Cref{thm:friendlytriple}. Now if there exists a variable $x \in X$ such that $T_x \in D$ then letting $c \in C$ be a clause containing $x$, the set $(D \setminus V_C) \cup \{c\}$ is a minimum SD set of $G_4$ containing the friendly triple $c,T_x,w_x$, a contradiction to \Cref{thm:friendlytriple}. It follows that for every variable $x \in X$, $F_x \in D$; but then, it suffices to consider, in place of $D$, the SD set $(D \setminus \{F_x\}) \cup \{T_x\}$ for some variable $x \in X$, and argue as previously.

Thus, in both cases, $D \cap V_C = \varnothing$; and by symmetry, we conclude that $D \cap V_{\overline{C}} = \varnothing$ as well. Therefore, $\gamma_{t2}(G_\ell) = |D| = 2|X|$.\\

Conversely, assume that $\gamma_{t2}(G_\ell) = 2|X|$ and let $D$ be a minimum SD set of $G_\ell$. Then by \Cref{obs:domrx}, $D \cap V_C = D \cap V_{\overline{C}} = \varnothing$ and $|D \cap V(G_x)| = 2$ for every variable $x \in X$. Since for every variable $x \in X$, $r_x$ should be dominated, it follows that if for some variable $x \in X$, $D \cap V(G_x)$ contains an edge $e$, then $D \cap \{F_x,T_x\} = \varnothing$; but then, $d(e, D \setminus e) \geq 3$ and thus, $e$ cannot be part of a friendly triple. Therefore, $D$ contains no friendly triple. Now let us assume, for the following, that for every variable $x \in X$, $D \cap \{F_x,T_x\} \neq \varnothing$ and $w_x \in D$ (if this does not hold for a variable $x \in X$, we may consider e.g. $(D \setminus V(G_x)) \cup \{F_x,w_x\}$ in place of $D$). Consider a clause $c \in C$ and let $x,y,z \in X$ be the three variables occurring in $c$. By assumption, either $|D \cap \{T_x,T_y,T_z\}| \geq 2$ or $|D \cap \{F_x,F_y,F_z\}| \geq 2$, say $F_x,F_y \in D$ without loss of generality. Then the set $D \cup \{\overline{c}\}$ is an SD set of $G_\ell$ of size $\gamma_{t2}(G) +1$ containing the $O_4$ $F_x,\overline{c},F_y,w_y$ and so, $ct_{\gamma_{t2}}(G) = 2$ by \Cref{thm:friendlytriple}.
\end{claimproof}

Now by Claims~\ref{clm:phisatc3c4std2} and \ref{clm:c3c4cstd2}, we conclude that for every $\ell \in \{3,4\}$, $ct_{\gamma_{t2}}(G_\ell) = 2$ if and only if $\Phi$ is satisfiable. Since for every $\ell \in \{3.4\}$, $G_\ell$ is easily seen to contain no induced $C_\ell$, the lemma follows.
\end{proof}

The following now ensues from Lemmas~\ref{lem:c5std2} and \ref{lem:c3c4std2}.

\begin{corollary}
\label{lem:cyclesstd2}
For every $\ell \geq 3$, {\sc Contraction Number($\gamma_{t2}$,2)} is $\mathsf{NP}$-hard on $C_\ell$-free graphs.
\end{corollary}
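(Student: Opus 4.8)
The plan is to prove the corollary by a simple case analysis on $\ell$, in each case exhibiting {\sc Contraction Number($\gamma_{t2}$,2)} as $\mathsf{NP}$-hard on a graph class that is contained in the class of $C_\ell$-free graphs. First, for $\ell \in \{3,4\}$ there is essentially nothing to do: \Cref{lem:c3c4std2} already asserts that {\sc Contraction Number($\gamma_{t2}$,2)} is $\mathsf{NP}$-hard on $C_\ell$-free graphs for exactly these two values, so these two cases follow immediately by invoking that lemma.

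For the remaining range $\ell \geq 5$, I would appeal to \Cref{lem:c5std2}, which establishes $\mathsf{NP}$-hardness on $\{C_k \mid k \geq 5\}$-free graphs. The key observation is that any graph containing no induced cycle of length at least five is, a fortiori, $C_\ell$-free for each individual $\ell \geq 5$; in other words, the class of $\{C_k \mid k \geq 5\}$-free graphs is a subclass of the class of $C_\ell$-free graphs. Since $\mathsf{NP}$-hardness on a subclass immediately yields $\mathsf{NP}$-hardness on any superclass—the reduction underlying \Cref{lem:c5std2} outputs instances that already belong to the smaller class, and hence also to the larger one—it follows at once that {\sc Contraction Number($\gamma_{t2}$,2)} is $\mathsf{NP}$-hard on $C_\ell$-free graphs for every $\ell \geq 5$. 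Combining this with the two cases above covers every $\ell \geq 3$.

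The only point requiring care is the direction of the class containment: hardness transfers from the more restricted class to the less restricted one, and never the reverse. Here the restricted class ($\{C_k \mid k \geq 5\}$-free) forbids strictly more induced subgraphs than the target class ($C_\ell$-free for a single fixed $\ell$), so the containment—and therefore the transfer of hardness—runs in the correct direction. No genuine obstacle arises, and the corollary follows by assembling \Cref{lem:c5std2} and \Cref{lem:c3c4std2} over all $\ell \geq 3$.
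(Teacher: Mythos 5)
Your proposal is correct and follows exactly the paper's argument: the paper derives the corollary directly from \Cref{lem:c3c4std2} (covering $\ell \in \{3,4\}$) and \Cref{lem:c5std2} (covering all $\ell \geq 5$ via the containment of $\{C_k \mid k \geq 5\}$-free graphs in $C_\ell$-free graphs). Your additional remarks on the direction of hardness transfer simply make explicit what the paper leaves implicit.
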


We next investigate the complexity of the {\sc Contraction Number($\gamma_{t2}$,3)} problem on several monogenic graph classes. 

\begin{lemma}
{\sc Contraction Number($\gamma_{t2}$,3)} is $\mathsf{NP}$-hard on $K_{1,3}$-free graphs.
\end{lemma}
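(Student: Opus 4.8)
The plan is to reduce from {\sc Positive Exactly 3-Bounded 1-In-3 3-Sat}, transposing the total-domination construction of \Cref{lem:clawtd3} to the semitotal setting and analysing it through \Cref{thm:friendlytriple}. Since $ct_{\gamma_{t2}}(G) \leq 3$ always holds, proving $ct_{\gamma_{t2}}(G) = 3$ reduces to establishing two things at once: (a) no minimum SD set of $G$ contains a friendly triple, so that $ct_{\gamma_{t2}}(G) > 1$, and (b) no SD set of $G$ of size $\gamma_{t2}(G)+1$ contains an ST-configuration, so that $ct_{\gamma_{t2}}(G) > 2$. For the construction I would keep the $K_{1,3}$-free skeleton already used in \Cref{lem:clawstd2}: for each variable $x$ occurring in clauses $c,c',c''$ a gadget $G_x$ assembled from long paws hung off two cliques, so that each pendant long paw forces exactly two of its vertices into any SD set and supplies its own witness; and for each clause $c$ the disjoint union of a ``true'' gadget $G_c^T$ and a ``false'' gadget $G_c^F$, wired to the variable gadgets by the same incidence edges. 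Replacing the rectangle-cliques by cliques keeps the whole graph claw-free, exactly as in the $k=2$ case.

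First I would prove a size lemma in the spirit of \Cref{clm:sizeclawtd3}: using that the pendant tips of the long paws and the apex vertices $w_x$ must be dominated \emph{and} witnessed from within their own gadget, I would lower-bound $|D \cap V(G_x)|$ for each variable gadget and $|D \cap V(G_c^T)|$, $|D \cap V(G_c^F)|$ for each clause gadget. These bounds pin down a target value $\gamma_{t2}(G) = 14|X| + b|C|$ for the appropriate constant $b$ determined by the clause gadget. I would then establish the satisfiability characterisation $\Phi \text{ satisfiable} \iff \gamma_{t2}(G) = 14|X| + b|C|$ by the usual two-way argument: a satisfying $1$-in-$3$ assignment yields an SD set meeting the target size, and conversely a minimum SD set of that size forces, inside each clause gadget, exactly one incident variable to be selected ``true'', from which one reads off a valid $1$-in-$3$ assignment.

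The core of the proof, and the step I expect to be the main obstacle, is the equivalence $ct_{\gamma_{t2}}(G) = 3 \iff \gamma_{t2}(G) = 14|X| + b|C|$. For the forward implication, assuming a minimum SD set of the target size, condition (a) follows because the long-paw-and-clique structure keeps selected vertices pairwise at distance more than two across distinct gadgets, so no friendly triple can form; in particular every minimum SD set is edge-free and gives every vertex a unique witness, which lets me invoke \Cref{lem:edge} and, crucially, \Cref{lem:noedge}. The latter collapses the seven-way ST-configuration analysis: any SD set of size $\gamma_{t2}(G)+1$ witnessing $ct_{\gamma_{t2}}(G) \leq 2$ would have to be either a minimal SD set or contain an $O_6$, and I would rule out both by showing that the single extra vertex can only be absorbed inside a clause gadget, where producing such a configuration would require two distinct clauses to be satisfied by the same variable simultaneously --- contradicting the exact $1$-in-$3$ structure. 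The subtlety here, absent in the total-domination case, is that ST-configurations involve vertices at distance exactly two (the dashed edges) and a possible identification (the serpentine line of $O_3$), so the extra vertex can create witness links rather than genuine edges; the long-paw forcing is what blocks these. The reverse implication is the easier direction: when $\gamma_{t2}(G) > 14|X| + b|C|$ (equivalently $\Phi$ unsatisfiable), I would exhibit directly either a minimum SD set containing a friendly triple or an SD set of size $\gamma_{t2}(G)+1$ containing an ST-configuration, giving $ct_{\gamma_{t2}}(G) \leq 2$. Chaining the two characterisations yields $\Phi \text{ satisfiable} \iff ct_{\gamma_{t2}}(G) = 3$, and since $G$ is $K_{1,3}$-free by construction, the lemma follows.
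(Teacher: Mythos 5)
Your skeleton does match the paper's proof: the same reduction from {\sc Positive Exactly 3-Bounded 1-In-3 3-Sat}, the long-paw variable gadgets of \Cref{lem:clawstd2}, a size claim pinning the target value $14|X|+b|C|$, the two chained characterisations ($\Phi$ satisfiable $\iff$ $\gamma_{t2}(G)$ equals the target $\iff$ $ct_{\gamma_{t2}}(G)=3$), and the use of independence/unique-witness facts together with \Cref{lem:edge} and \Cref{lem:noedge}. However, both places where the real work lies contain genuine gaps. First, the construction: the clause gadgets of \Cref{lem:clawtd3} do \emph{not} transpose to the semitotal setting, because a vertex of $D$ now dominates itself and only needs a witness at distance two. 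Concretely, in the gadget $G_c^T$ of \Cref{fig:clausegadclawtd3}, the single vertex $p_c^x$ semitotally dominates all of $\{p_c^x,p_c^y,p_c^z,t_c^x\}$ and is witnessed by $P_{x,T}^c(1)$ at distance two across the incidence edge, so the bound $|D\cap V(G_c^T)|\geq 2$ that drives the whole counting in the total-domination proof simply disappears. This is precisely why the paper builds an entirely different $G_c^T$ (three triangles $t_c^\ell u_c^\ell v_c^\ell$ attached to the two apex cliques $\{u_c^x,u_c^y,u_c^z,a_c\}$ and $\{v_c^x,v_c^y,v_c^z,b_c\}$) and adds the forcing vertex $w_c$ to $G_c^F$ (\Cref{fig:clausegadclawstd3}), so that \Cref{clm:sizeclawstd3} and the exchange arguments hold under semitotal semantics. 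Without such gadgets neither your size lemma nor the constant $b$ exists, and the claw-freeness of the clause side has to be re-verified as well.

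Second, the hard direction. You propose to rule out ST-configurations by arguing that the extra vertex ``can only be absorbed inside a clause gadget'' and that this would force ``two distinct clauses to be satisfied by the same variable simultaneously'', contradicting the 1-in-3 structure. Both halves fail. Nothing confines the extra vertex to a clause gadget: SD sets of size $\gamma_{t2}(G)+1$ with $|D\cap V(G_x)|=15$ for some variable gadget $G_x$ are perfectly possible, and eliminating them is Case~2 of \Cref{clm:clawcstd3}, roughly half of the paper's analysis (with subcases on whether $D$ meets the cliques $\{p_x^\ell,q_x^\ell\}$ or some long paw carries three vertices of $D$, where \Cref{lem:noedge} is invoked to force an $O_6$ and a contradiction is then extracted). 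Moreover, ``one variable satisfying two clauses'' is not a contradiction in {\sc Positive Exactly 3-Bounded 1-In-3 3-Sat}: a true variable satisfies all three clauses it occurs in, so this cannot be the engine of the proof. The contradictions in the paper are never of that form; they are always either ``some specified vertex is left undominated'' or ``both endpoints of the candidate edge are at distance at least three from the rest of $D$, so the edge lies in no ST-configuration''. As it stands, the central step of your argument is not merely unelaborated but based on a mechanism that cannot work.
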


\begin{proof}
We reduce from {\sc Positive Exactly 3-Bounded 1-In-3 3-Sat} (see \Cref{sec:prelim} for a precise definition of this problem): given an instance $\Phi$ of this problem, with variable set $X$ and clause set $C$, we construct an instance $G$ of {\sc Contraction Number ($\gamma_{t2}$,3)} as follows. For every variable $x \in X$, we introduce the gadget $G_x$ depicted in \Cref{fig:stdclawvargad} (it is the same gadget as the one used in the proof of \Cref{lem:clawstd2}). In the following, we denote by $P_x$ the long paw induced by $\{T_x,F_x,u_x,v_x,w_x\}$ and we may refer to the vertices of $P_x$ as $P_x(1),\ldots,P_x(5)$ where $P_x(1) = t_x$ and $P_x(2) = F_x$. For every clause $c \in C$ containing variables $x,y$ and $z$, we introduce the gadget $G_c$ which is the disjoint union of the graph $G_c^T$ and the graph $G_c^F$ depicted in \Cref{fig:clausegadclawstd3} (where a rectangle indicates that the corresponding set of vertices is a clique), and further add for every $v \in \{x,y,z\}$ an edge between $t_c^v$ and $P_{v,T}^c(1)$, and an edge between $f_c^v$ and $P_{v,F}^c(2)$. Note that $\{u_c^x,u_c^y,u_c^z,a_c\}$ and $\{v_c^x,v_c^y,v_c^z,b_c\}$ induce cliques; in particular, $G_c^T$ is $K_{1,3}$-free. We let $G$ be the resulting graph. We next show that $ct_{\gamma_{t2}}(G) = 3$ if and only if $\Phi$ is satisfiable through a series of claims.

\begin{figure}
\centering
\begin{subfigure}[b]{.45\textwidth}
\centering
\begin{tikzpicture}[scale=.8]
\node[circ,label=below:{\small $t_c^x$}] (tx) at (0,0) {};
\node[circ,label=below:{\small $u_c^x$}] (ux) at (1,0) {};
\node[circ,label=left:{\small $v_c^x$}] (vx) at (.5,.76) {};
\draw (tx) -- (ux) -- (vx) -- (tx);

\node[circ,label=below:{\small $v_c^y$}] (vy) at (3,0) {};
\node[circ,label=below:{\small $t_c^y$}] (ty) at (4,0) {};
\node[circ,label=right:{\small $u_c^y$}] (uy) at (3.5,.76) {};
\draw (ty) -- (uy) -- (vy) -- (ty);

\node[circ,label=left:{\small $u_c^z$}] (uz) at (1.5,2.5) {};
\node[circ,label=right:{\small $v_c^z$}] (vz) at (2.5,2.5) {};
\node[circ,label=above:{\small $t_c^z$}] (tz) at (2,3.26) {};
\draw (tz) -- (uz) -- (vz) -- (tz);

\draw (ux) -- (uy)
(uy) edge[bend right=15] (uz)
(uz) edge[bend right=15] (ux);

\draw (vx) -- (vy)
(vy) edge[bend right=15] (vz)
(vz) edge[bend right=15] (vx);

\node[circ,label=left:{\small $a_c$}] (a) at (1.75,1.35) {};
\draw (a) -- (ux)
(a) edge[bend right=10] (uy)
(a) -- (uz);

\node[circ,label=right:{\small $b_c$}] (b) at (2.25,1.35) {};
\draw (b) edge[bend left=10] (vx)
(b) -- (vy)
(b) -- (vz);
\end{tikzpicture}
\caption{The graph $G_c^T$.}
\end{subfigure}
\hspace*{.5cm}
\begin{subfigure}[b]{.45\textwidth}
\centering
\begin{tikzpicture}
\node[circ,label=below:{\small $w_c$}] (uc) at (0,1) {};
\node[circ,label=below:{\small $u_c$}] (vc) at (1,1) {};
\node[circ,label=below:{\small $v_c$}] at (.5,1) {};
\draw (uc) -- (vc);

\node[circ,label=below:{\small $q_c^x$}] (qx) at (2,0) {};
\node[circ,label=below:{\small $f_c^x$}] (fx) at (3,0) {};
\draw (qx) -- (fx);

\node[circ,label=below:{\small $q_c^y$}] (qy) at (2,1) {};
\node[circ,label=below:{\small $f_c^y$}] (fy) at (3,1) {};
\draw (qy) -- (fy);

\node[circ,label=below:{\small $q_c^z$}] (qz) at (2,2) {};
\node[circ,label=below:{\small $f_c^z$}] (fz) at (3,2) {};
\draw (qz) -- (fz);

\draw (1.75,-.55) rectangle (2.25,2.25);

\draw (vc) -- (qx)
(vc) -- (qy)
(vc) -- (qz);
\end{tikzpicture}
\caption{The graph $G_c^F$ (the rectangle indicates that the corresponding set of vertices is a clique).}
\end{subfigure}
\caption{The clause gadget $G_c$ for a clause $c= x \lor y \lor z$ is the disjoint union of $G_c^T$ and $G_c^F$.}
\label{fig:clausegadclawstd3}
\end{figure}
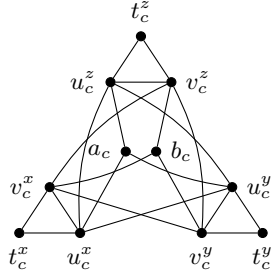
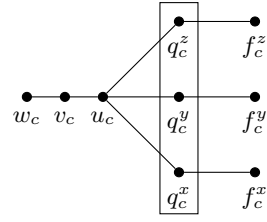

\begin{claim}
\label{clm:sizeclawstd3}
For any SD set $D$ of $G$, the following hold.
\begin{itemize}
\item[(i)] For every variable $x \in X$ and every long paw $P$ contained in $G_x$, $|D \cap V(P)| \geq 2$ and $D \cap \{P(4),P(5)\} \neq \varnothing$. In particular, $|D \cap V(G_x)| \geq 14$.
\item[(ii)]  For every clause $c \in C$ containing variables $x,y$ and $z$, $|D \cap \{u_c,v_c,w_c,q_c^x,q_c^y,q_c^z\}| \geq 2$ and $\min(|D \cap \{u_c^x,u_c^y,u_c^z,a_c\}|, |D \cap \{v_c^x,v_c^y,v_c^z,b_c\}|) \geq 1$. In particular, $|D \cap V(G_c)| \geq 4$. 
\end{itemize}
\end{claim}

\begin{claimproof}
Let $D$ be an SD set of $G$. Since in a long paw $P$, the vertex $P(5)$ should be dominated and any witness for a vertex dominating $P(5)$ can only belong to $P$, item (i) follows. To prove (ii), consider a clause $c \in C$ containing variables $x,y$ and $z$. Since $w_c$ must be dominated and any witness for a vertex dominating $w_c$ can only belong to $\{u_c,v_c,w_c,q_c^x,q_c^y,q_c^z\}$, the first part of item (ii) follows. Now since $a_c$ should be dominated, $D \cap \{u_c^x,u_c^y,u_c^z,a_c\} \neq \varnothing$; and similarly, since $b_c$ should be dominated, $D \cap \{v_c^x,v_c^y,v_c^z,b_c\} \neq \varnothing$ which proves the second part of item (ii).
\end{claimproof}

The following is an immediate consequence of \Cref{clm:sizeclawstd3}(i).

\begin{observation}
\label{obs:emptycliques}
For every SD set $D$ of $G$ and every variable $x \in X$ contained in clauses $c,c'$ and $c''$, if $|D \cap V(G_x)| =14$ then $D \cap \{q_x^\ell,p_x^\ell~|~ \ell \in \{c,c',c''\}\} = \varnothing$.
\end{observation}

\begin{claim}
\label{obs:nofs}
Let $D$ be an SD set of $G$ and let $c \in C$ be a clause containing variables $x,y$ and $z$. If $|D \cap V(G_c^F)| =2$ then $D \cap \{f_c^x,f_c^y,f_c^z\} = \varnothing$. Similarly, if $|D \cap V(G_c^T)| = 2$ then $D \cap \{t_c^x,t_c^y,t_c^z\} = \varnothing$.
\end{claim}

\begin{claimproof}
If $|D \cap V(G_c^F)| =2$ then by \Cref{clm:sizeclawstd3}(ii), $|D \cap V(G_c^F)| = |D \cap \{u_c,v_c,w_c,q_c^x,q_c^y,q_c^z\}| = 2$ and thus, $D \cap \{f_c^x,f_c^y,f_c^z\} = \varnothing$. Now if $|D \cap V(G_c^T)| =2$ then by \Cref{clm:sizeclawstd3}(ii), $|D \cap \{u_c^x,u_c^y,u_c^z,a_c\}| = |D \cap \{v_c^x,v_c^y,v_c^z,b_c\}| = 1$ and thus, $D \cap \{t_c^x,t_c^y,t_c^z\} = \varnothing$.
\end{claimproof}

\begin{claim}
\label{clm:TxFx}
Let $D$ be an SD set of $G$. If for some variable $x \in X$ contained in clauses $c,c'$ and $c''$, $|D \cap V(G_x)| = 14$ then the following hold.
\begin{itemize}
\item[(i)] If there exists $\ell \in \{c,c',c''\}$ such that $P_{x,T}^\ell(1) \in D$ then $T_x \in D$.
\item[(ii)] If there exists $\ell \in \{c,c',c''\}$ such that $P_{x,F}^\ell(2) \in D$ then $F_x \in D$.
\end{itemize}
In particular, if $P_{x,T}^\ell(1) \in D$ for some $\ell \in \{c,c',c''\}$, then $D \cap \{P_{x,F}^c(2),P_{x,F}^{c'}(2),P_{x,F}^{c''}(2)\} = \varnothing$. Similarly, if $P_{x,F}^\ell(2) \in D$ for some $\ell \in \{c,c',c''\}$, then $D \cap \{P_{x,T}^c(1),P_{x,T}^{c'}(1),P_{x,T}^{c''}(1)\} = \varnothing$. 
\end{claim}

\begin{claimproof}
Let $x \in X$ be a variable such that $|D \cap V(G_x)| = 14$ and let $c,c',c'' \in C$ be the clauses in which $x$ appears. Then by \Cref{obs:emptycliques}, $D \cap \{p_x^\ell,q_x^\ell~|~ \ell \in \{c,c',c''\}\} = \varnothing$ and by \Cref{clm:sizeclawstd3}(i), $|D \cap \{P(1),P(2)\}| \leq 1$ for every long paw $P$ of $G_x$. Thus if $P_{x,T}^\ell(1) \in D$ for some $\ell \in \{c,c',c''\}$, then $T_x \in D$ as $p_x^\ell$ should be dominated; but then, $F_x \notin D$ which implies that $\{P_{x,F}^c(1),P_{x,F}^{c'}(1),P_{x,F}^{c''}(1)\} \subseteq D$ (one of $q_x^c,q_x^{c'}$ and $q_x^{c''}$ would otherwise not be dominated) and so, $D \cap \{P_{x,F}^c(2),P_{x,F}^{c'}(2),P_{x,F}^{c''}(2)\} = \varnothing$. Item (ii) follows by symmetry.
\end{claimproof}

\begin{claim}
\label{clm:phisatclawstd3}
$\Phi$ is satisfiable if and only if $\gamma_{t2}(G) = 14|X| + 4|C|$.
\end{claim}

\begin{claimproof}
Assume first that $\Phi$ is satisfiable and consider a truth assignment satisfying $\Phi$. We construct an SD set $D$ of $G$ as follows. For every variable $x \in X$ contained in clauses $c,c'$ and $c''$, if $x$ is true then we include $\{v_x,T_x\} \cup \{P_{x,R}^\ell(1),P_{x,R}^\ell(4)~|~R \in \{T,F\} \text{ and }\ell \in \{c,c',c''\}\}$ in $D$; otherwise we include $\{v_x,F_x\} \cup \{P_{x,R}^\ell(2),P_{x,R}^\ell(4)~|~R \in \{T,F\} \text{ and }\ell \in \{c,c',c''\}\}$ in $D$. For every clause $c \in C$ containing variables $x,y$ and $z$, exactly one variable is true, say $x$ without loss of generality, in which case we include $\{w_c,q_c^x\} \cup \{v_c^y,u_c^z\}$ in $D$. Now it is not difficult to see that the constructed set $D$ is indeed an SD set of $G$ and since $|D| = 14|X| + 4|C|$, we conclude by \Cref{clm:sizeclawstd3} that $D$ is minimum. 

Conversely, assume that $\gamma_{t2}(G) = 14|X| + 4|C|$ and let $D$ be a minimum SD set of $G$. Consider a clause $c \in C$ and let $x,y,z \in X$ be the variables contained in $c$. Then $|D \cap \{q_c^x,q_c^y,q_c^z\}| \leq 1$: indeed, $D \cap \{w_c,v_c\} \neq \varnothing$ as $w_c$ should be dominated, and $|D \cap \{u_c,v_c,w_c,q_c^x,q_c^y,q_c^z\}| = 2$ by \Cref{clm:sizeclawstd3}(i). Since $D \cap \{f_c^x,f_c^y,f_c^z\} = \varnothing$ by \Cref{obs:nofs}, it follows that at least two vertices amongst $f_c^x,f_c^y$ and $f_c^z$ are not dominated by a vertex in $G_c^F$, say $f_c^y$ and $f_c^z$ without loss of generality. Then $P_{y,F}^c(2),P_{z,F}^c(2) \in D$ as $f_c^y$ and $f_c^z$ should be dominated, and so $F_y,F_z \in D$ by \Cref{clm:TxFx}(ii). It then follows from \Cref{clm:sizeclawstd3}(i) that $T_y,T_z \notin D$ which, by \Cref{clm:TxFx}(i), implies that $P_{y,T}^c(1),P_{z,T}^c(1) \notin D$; in particular, $t_c^y$ and $t_c^z$ must be dominated by vertices in $G_c^T$. But then, $D \cap \{t_c^x,v_c^x,u_c^x\} = \varnothing$: indeed, since $|D \cap V(G_c^T)| = 2$ by assumption, if $D \cap \{t_c^x,v_c^x,u_c^x\} \neq \varnothing$ then one of $t_c^y$ and $t_c^z$ is not dominated. Since $t_c^x$ should be dominated nonetheless, it follows that $P_{x,T}^c(1) \in D$ and thus, $T_x \in D$ by \Cref{clm:TxFx}(i). Therefore, the truth assignment obtained by setting a variable $x$ to true when $T_x \in D$ and to false otherwise, satisfies $\Phi$.
\end{claimproof}

\begin{claim}
\label{clm:notruedge}
Let $D$ be an SD set of $G$ and let $c \in C$ be a clause containing variables $x,y$ and $z$. If $|D \cap V(G_c)| = 4$ and $|D \cap V(G_\ell)| =14 $ for every $\ell \in \{x,y,z\}$ then $D \cap V(G_c^T)$ is an independent set.
\end{claim}

\begin{claimproof}
Assume that $|D \cap V(G_c)| = 4$ and $|D \cap V(G_\ell)| =14 $ for every $\ell \in \{x,y,z\}$ (note that then $|D \cap V(G_c^T)| = |D \cap V(G_c^F)| = 2$ by \Cref{clm:sizeclawstd3}(ii)). Suppose to the contrary that $D \cap V(G_c^T)$ contains an edge $uv$. Then $\{u,v\} = \{u_c^\ell,v_c^\ell\}$ for some $\ell \in \{x,y,z\}$: indeed, $\{u,v\} \cap \{u_c^x,u_c^y,u_c^z,a_c\} \neq \varnothing$ and $\{u,v\} \cap \{v_c^x,v_c^y,v_c^z,b_c\} \neq \varnothing$ by \Cref{clm:sizeclawstd3}(ii), and any edge between $\{u_c^x,u_c^y,u_c^z,a_c\}$ and $\{v_c^x,v_c^y,v_c^z,b_c\}$ connect vertices with the same superscript. Assume without loss of generality that $\{u,v\} = \{u_c^x,v_c^x\}$. Then $P_{y,T}^c(1),P_{z,T}^c(1) \in D$ as $t_c^y$ and $t_c^z$ should be dominated, which, by \Cref{clm:TxFx}, implies that $P_{y,F}^c(2),P_{z,F}^c(2) \notin D$. By \Cref{obs:nofs}, it must then be that $q_c^y,q_c^z \in D$ for one of $f_c^y$ and $f_c^z$ would otherwise not be dominated; but $|D \cap V(G_c^F)| = 2$ and thus, $w_c$ is not dominated.
\end{claimproof}

\begin{claim}
\label{clm:notruedge1}
Let $D$ be an SD set of $G$ and let $c \in C$ be a clause containing variables $x,y,z \in X$. If $|D \cap V(G_c)| = 4$ and $|D \cap V(G_\ell)| =14 $ for every $\ell \in \{x,y,z\}$ then $D \cap V(G_c^F)$ is an independent set.
\end{claim}

\begin{claimproof}
Assume that $|D \cap V(G_c)| = 4$ and $|D \cap V(G_\ell)| =14 $ for every $\ell \in \{x,y,z\}$ (note that then $|D \cap V(G_c^T)| = |D \cap V(G_c^F)| = 2$ by \Cref{clm:sizeclawstd3}(ii)). Suppose for a contradiction that $D \cap V(G_c^F)$ contains an edge $uv$. Then, since $w_c$ should be dominated, $\{u,v\} \subseteq \{u_c,v_c,w_c\}$; in particular $D \cap \{q_c^\ell,f_c^\ell~|~\ell \in \{x,y,z\}\} = \varnothing$ by Claims~\ref{clm:sizeclawstd3}(ii) and \ref{obs:nofs}. But then, $\{P_{\ell,F}^c(2)~|~\ell \in \{x,y,z\}\} \subseteq D$ as $f_c^x,f_c^y$ and $f_c^z$ should be dominated, which, by \Cref{clm:TxFx}, implies that $\{P_{\ell,T}^c(1)~|~ \ell \in \{x,y,z\}\} \subseteq D$. But $|D \cap V(G_c^T)| = 2$ and so, one of $t_c^x,t_c^y$ and $t_c^z$ is not dominated.
\end{claimproof}

\begin{claim}
\label{clm:edgeinvar}
Let $D$ be an SD set of $G$ and let $x \in X$ be a variable. If $|D \cap V(G_x)| = 14$ and $D \cap V(G_x)$ contains an edge $uv$ then $\{u,v\} = \{P(3),P(4)\}$ for some long paw $P$ of $G_x$.
\end{claim}

\begin{claimproof}
Assume that $|D \cap V(G_x)| =14$ and $D \cap V(G_x)$ contains an edge $uv$. Then by \Cref{obs:emptycliques}, $u,v \in V(P)$ for some long paw $P$ of $G_x$. But $|D \cap V(P)| =2$ and $D \cap \{P(4),P(5)\} \neq \varnothing$ by \Cref{clm:sizeclawstd3}(i) and so, $D \cap V(P) = \{u,v\} \subseteq \{P(3),P(4),P(5)\}$. Now suppose for a contradiction that $P(3) \notin D$. If $P = P_x$ then both $T_x$ and $F_x$ are not dominated since by \Cref{obs:emptycliques}, $q_x^\ell,p_x^\ell \notin D$ for every clause $\ell$ containing $x$. If $P = P_{x,T}^\ell$ for some clause $\ell$ containing $x$, then $P_{x,T}^\ell(2)$ is not dominated since $p_x^\ell \notin D$ by \Cref{obs:emptycliques}. Finally if $P = P_{x,F}^\ell$ for some clause $\ell$ containing $x$, then $P_{x,F}^\ell(1)$ is not dominated since $q_x^\ell \notin D$ by \Cref{obs:emptycliques}. Thus, $P(3) \in D$ and so, $\{u,v\} = \{P(3),P(4)\}$.  
\end{claimproof}

\begin{claim}
\label{clm:extremities}
Let $D$ be an SD set of $G$ and let $c \in C$ be a clause containing variables $x,y$ and $z$. If $|D \cap V(G_c)| = 4$ and $|D \cap V(G_\ell)| =14$ for every $\ell \in \{x,y,z\}$, then for every $\ell \in \{x,y,z\}$, the following hold.
\begin{itemize}
\item[(i)] $D \cap \{u_c^\ell,v_c^\ell\} = \varnothing$ if and only if $P_{\ell,T}^c(1) \in D$.
\item[(ii)] $q_c^\ell \notin D$ if and only if $P_{\ell,F}^c(2) \in D$.
\item[(iii)] $P_{\ell,F}^c(2) \in D$ if and only if $P_{\ell,T}^c(1) \notin D$
\end{itemize}
\end{claim}

\begin{claimproof}
Assume that $|D \cap V(G_c)| = 4$ and $|D \cap V(G_\ell)| =14$ for every $\ell \in \{x,y,z\}$. Now consider $\ell \in \{x,y,z\}$ and let us first prove (i). If $D \cap \{u_c^\ell,v_c^\ell\} = \varnothing$ then since $t_c^\ell \notin D$ by \Cref{obs:nofs}, necessarily $P_{\ell,T}^c(1) \in D$ as $t_c^\ell$ should be dominated. Conversely, assume that $P_{\ell,T}^c(1) \in D$ and suppose for a contradiction that $D \cap \{u_c^\ell,v_c^\ell\} \neq \varnothing$. Then since $|D \cap V(G_c^T)| = 2$ by \Cref{clm:sizeclawstd3}(ii), it must be that $D \cap \{u_c^p,v_c^p\} = \varnothing$ for some $p \neq \ell$; and since $t_c^p \notin D$ by \Cref{obs:nofs}, necessarily $P_{p,T}^c(1) \in D$ as $t_c^p$ should be dominated. Then $P_{\ell,F}^c(2),P_{p,F}^c(2) \notin D$ by \Cref{clm:TxFx} and since $f_c^\ell,f_c^p \notin D$ by \Cref{obs:nofs}, it follows that $q_c^\ell,q_c^p \in D$ as $f_c^\ell$ and $f_c^p$ should be dominated. But $|D \cap V(G_c^F)| =2$ by \Cref{clm:sizeclawstd3} and thus, $w_c$ is not dominated. 

To prove (ii), observe that if $q_c^\ell \notin D$ then since $f_c^\ell \notin D$ by \Cref{obs:nofs}, necessarily $P_{\ell,F}^c(2) \in D$ for $f_c^\ell$ would otherwise not be dominated. Conversely, assume that $P_{\ell,F}^c(2) \in D$ for some $\ell \in \{x,y,z\}$, and suppose for a contradiction that $q_c^\ell \in D$. Then since $|D \cap V(G_c^F)|=2$ by \Cref{clm:sizeclawstd3}(ii) and $w_c$ must be dominated, $q_c^p \notin D$ for every $p \neq \ell$; and since by \Cref{obs:nofs}, $f_c^p \notin D$ for every $p \neq \ell$, necessarily $P_{p,F}^c(2) \in D$ for every $p \neq \ell$ ($f_c^p$ would otherwise not be dominated). It then follows from \Cref{clm:TxFx} that $D \cap \{P_{x,T}^c(1),P_{y,T}^c(1),P_{z,T}^c(1)\} = \varnothing$; but $|D \cap V(G_c^T)| = 2$ by \Cref{clm:sizeclawstd3}(ii) and so, one of $t_c^x,t_c^y$ and $t_c^z$ is not dominated.

Let us finally prove (iii). If $P_{\ell,F}^c(2) \in D$ then by \Cref{clm:TxFx}, $P_{\ell,T}^c(1) \notin D$. Conversely, assume that $P_{\ell,T}^c(1) \notin D$. Then by item (i), $D \cap \{u_c^\ell,v_c^\ell\} \neq \varnothing$ which implies that $D \cap \{u_c^p,v_c^p\} = \varnothing$ for some $p \neq \ell$, since $|D \cap V(G_c^T)| = 2$ by \Cref{clm:sizeclawstd3}(ii). It then follows from item (i) that $P_{p,T}^c(1) \in D$ and so, $P_{p,F}^c(2) \notin D$ by \Cref{clm:TxFx}. Thus, by item (ii), $q_c^p \in D$; but $|D \cap V(G_c^F)| =2$ by \Cref{clm:sizeclawstd3}(ii) and $w_c$ must be dominated, and so $q_c^\ell \notin D$ which, by item (ii), implies that $P_{\ell,F}^c(2) \in D$.
\end{claimproof}

\begin{claim}
\label{clm:novargadedge}
Let $D$ be an SD set of $G$ and let $x \in X$ be a variable contained in clauses $c,c'$ and $c''$. If $|D \cap V(G_\ell)| =4$ for every $\ell \in \{c,c''c''\}$ and $|D \cap V(G_v)|=14$ for every variable $v$ appearing in $c,c'$ or $c''$, then $D \cap V(G_x)$ is an independent set.
\end{claim}

\begin{claimproof}
Assume that $|D \cap V(G_\ell)| =4$ for every $\ell \in \{c,c''c''\}$ and $|D \cap V(G_v)|=14$ for every variable $v$ appearing in $c,c'$ or $c''$. Suppose for a contradiction that $D \cap V(G_x)$ contains an edge $uv$. Then by \Cref{clm:edgeinvar}, $\{u,v\}= \{P(3),P(4)\}$ for some long paw $P$ of $G_x$. If $P=P_x$ then since $D \cap \{q_x^\ell,p_x^\ell~|~\ell \in \{c,c',c''\}\} = \varnothing$ by \Cref{obs:emptycliques}, necessarily $P_{x,T}^c(2),P_{x,F}^c(1) \in D$ (one of $p_x^c$ and $q_x^c$ would otherwise not be dominated); but then by \Cref{clm:sizeclawstd3}(i), $P_{x,T}^c(1),P_{x,F}^c(2) \notin D$, a contradiction to \Cref{clm:extremities}(iii). Thus $P = P_{x,R}^\ell$ for some $\ell \in \{c,c',c''\}$ and $R \in \{T,F\}$. Assume that $R = T$ (the case where $R=F$ is symmetric). Then since $D \cap \{q_x^\ell,p_x^\ell~|~\ell \in \{c,c',c''\}\} = \varnothing$ by \Cref{obs:emptycliques}, necessarily $T_x \in D$ ($p_x^\ell$ would otherwise not be dominated) and so, $F_x \notin D$ by \Cref{clm:sizeclawstd3}(i). But then by \Cref{clm:TxFx}(ii), $P_{x,F}^\ell(2) \notin D$, a contradiction to \Cref{clm:extremities}(ii) as $P_{x,T}^\ell(1) \notin D$.\end{claimproof}

\begin{claim}
\label{clm:inter}
If $\gamma_{t2}(G) = 14|X| + 4|C|$ then for every minimum SD set $D$ of $G$, the following hold.
\begin{itemize}
\item[(i)] $D$ is an independent set.
\item[(ii)] For every $u \in D$, $|w_D(u)|=1$.
\end{itemize}
\end{claim}

\begin{claimproof}
Assume that $\gamma_{t2}(G) = 14|X| + 4|C|$ and let $D$ be a minimum SD set of $G$. Then by Claims~\ref{obs:nofs}, \ref{clm:notruedge}, \ref{clm:notruedge1} and \ref{clm:novargadedge}, $D$ contains no edge. Let us next show that for every $u \in D$, $|w_D(u)| = 1$. Consider first a clause $c \in C$ containing variables $x,y,z \in X$ and let $u \in D \cap V(G_c)$. Note that since $|D \cap V(G_c^T)| = |D \cap V(G_c^F)| = 2$ by \Cref{clm:sizeclawstd3}(ii), if $|w_D(u)| \geq 2$ then $d(u, D \setminus V(G_c)) \leq 2$. Now if $u \in \{u_c^\ell,v_c^\ell\}$ for some $\ell \in \{x,y,z\}$, then by \Cref{clm:extremities}(i), $P_{\ell,T}^c(1) \notin D$ and thus, $d(u,D \setminus V(G_c)) > 2$. Similarly, if $u = q_c^\ell$ for some $\ell \in \{x,y,z\}$, then by \Cref{clm:extremities}(ii), $P_{\ell,F}^c(2) \notin D$ and thus, $d(u,D \setminus V(G_c)) > 2$. Since $u \notin \{f_c^\ell,t_c^\ell~|~\ell \in \{x,y,z\}\}$ by \Cref{obs:nofs} and $d(w, D \setminus V(G_c)) \geq 3$ for every $w \in \{a_c,b_c\}$, it follows that $|w_D(u)|=1$. Consider next a variable $x \in X$ contained in clauses $c,c',c'' \in C$, and suppose to the contrary that there exists a vertex $u \in D \cap V(G_x)$ such that $|w_D(u)| \geq 2$. Since $|D \cap V(P)| =2$ for ever long paw $P$ in $G_x$ and $D \cap \{q_x^\ell,p_x^\ell~|~\ell \in \{c,c',c''\}$, necessarily $u \in \{P(1),P(2)\}$ for some long paw $P$ of $G_x$. Suppose first that $u = T_x$ (the case where $u = F_x$ is symmetric). Then by \Cref{obs:emptycliques}, there must exist $\ell \in \{c,c',c''\}$ such that $P_{x,T}^\ell(2) \in D$; in particular, $P_{x,T}^\ell(1) \notin D$ by \Cref{clm:sizeclawstd3}(i). But $F_x \notin D$ by \Cref{clm:sizeclawstd3}(i) and so, $P_{x,F}^\ell(2) \notin D$ by \Cref{clm:TxFx}(ii), a contradiction to \Cref{clm:extremities}(iii). Suppose next that $u \in \{P_{x,R}^\ell(1), P_{x,R}^\ell(2)\}$ for some $\ell \in \{c,c',c'\}$ and $R \in \{T,F\}$. Assume that $R = F$ (the case where $R = T$ is symmetric). Then $u$ must have a witness in $D \cap V(G_\ell^F)$: indeed, if $u$ had two witnesses in $D \cap V(G_x)$ then one of them would be $F_x$ and so, $|w_D(F_x)| \geq 2$ which is excluded by the previous case. However, $f_\ell^x \notin D$ by \Cref{obs:nofs}, and if $q_\ell^x \in D$ then $P_{x,F}^\ell(2) \notin D$ by \Cref{clm:extremities}(ii); thus $d(u,D \cap V(G_\ell^F)) > 2$, a contradiction. Therefore, every vertex in $D \cap V(G_x)$ has a unique witness.
\end{claimproof}

\begin{claim}
\label{clm:notmin}
Let $D$ be an SD set of $G$ and let $x \in X$ be a variable. If $|D \cap V(G_x)| =14$ then for every clause $c \in C$ containing $x$, the following hold.
\begin{itemize}
\item[(i)] If $D \cap \{P_{x,T}^c(1),P_{x,T}^c(3)\} \neq \varnothing$ then $P_{x,F}^c(2) \notin D$.
\item[(ii)] If $D \cap \{P_{x,F}^c(2),P_{x,F}^c(3)\} \neq \varnothing$ then $P_{x,T}^c(1) \notin D$.
\end{itemize}
\end{claim}

\begin{claimproof}
Assume that $|D \cap V(G_x)| = 14$ and let $c \in C$ be a clause containing $x$. Suppose that $D \cap \{P_{x,T}^c(1),P_{x,T}^c(3)\} \neq \varnothing$. Then by \Cref{clm:sizeclawstd3}(i), $P_{x,T}^c(2) \notin D$ which, by \Cref{obs:emptycliques}, implies that $T_x \in D$ ($p_x^c$ would otherwise not be dominated); but then $F_x \notin D$ by \Cref{clm:sizeclawstd3}(i) and so, $P_{x,F}^c(2) \notin D$ by \Cref{clm:TxFx}(ii). Item (ii) follows by symmetry.
\end{claimproof}

\begin{claim}
\label{clm:clawcstd3}
$ct_{\gamma_{t2}}(G) =3$ if and only if $\gamma_{t2}(G) = 14|X| + 4|C|$.
\end{claim}

\begin{claimproof}
Assume first that $ct_{\gamma_{t2}}(G) = 3$ and let $D$ be a minimum SD set of $G$. Consider a variable $x \in X$ contained in clauses $c,c'$ and $c''$. If $|D \cap V(P)| \geq 3$ for some long paw $P$ of $G_x$ then $D \cap V(P)$ contains a friendly triple, a contradiction to \Cref{thm:friendlytriple}. Thus $|D \cap V(P)| \leq 2$ for every long paw $P$ of $G_x$ and we conclude by \Cref{clm:sizeclawstd3}(i) that, in fact, equality holds. Let us assume, in the following, that for every long paw $P$ of $G_x$, $D \cap \{P(1),P(2)\} \neq \varnothing$ and $P(4) \in D$ (if this does not hold for a long paw $P$ then it suffices to consider, e.g., $(D \setminus V(P)) \cup \{P(1),P(4)\}$ in place of $D$). Now suppose that $D \cap \{p_x^c,p_x^{c'},p_x^{c''}\} \neq \varnothing$, say $p_x^c \in D$ without loss of generality. Then $T_x \notin D$ ($D$ would otherwise contain the friendly triple $p_x^c,T_x,v_x$) and so, $(D \setminus \{p_x^c\}) \cup \{T_x\}$ is a minimum SD set of $G$ containing the friendly triple $T_x,F_x,w_x$, a contradiction to \Cref{thm:friendlytriple}. Thus $D \cap \{p_x^c,p_x^{c'},p_x^{c''}\} = \varnothing$ and by symmetry, we conclude that $D \cap \{q_x^c,q_x^{c'},q_x^{c''}\} = \varnothing$ as well. Therefore, $|D \cap V(G_x)| = 14$. 

Next, consider a clause $c \in C$ and let $x,y,z \in X$ be the variables contained in $c$. Then $|D \cap \{u_c,v_c,w_c,q_c^x,q_c^y,q_c^z\}| \leq 2$ for $D \cap \{u_c,v_c,w_c,q_c^x,q_c^y,q_c^z\}$ would otherwise contain a friendly triple; and we conclude by \Cref{clm:sizeclawstd3} that, in fact, equality holds. Now suppose that $D \cap \{f_c^x,f_c^y,f_c^z\} \neq \varnothing$, say $f_c^x \in D$ without loss of generality. Then $q_c^x \notin D$: indeed, $D$ would otherwise contain the friendly triple $f_c^x,q_c^x,t$, where $t \in D \cap \{u_c,v_c,w_c,q_c^y,q_c^z\}$, a contradiction to \Cref{thm:friendlytriple}. But by assumption, $P_{x,F}^c(2)$ is not a private neighbour of $f_c^x$ and the vertex in $D \cap \{P_{x,F}^c(1),P_{x,F}^c(2)\}$ is witnessed by $P_{x,F}^c(4)$, which implies that $D' = (D \setminus \{f_c^x\}) \cup \{q_c^x\}$ is a minimum SD set of $G$ where $|D' \cap \{u_c,v_c,w_c,q_c^x,q_c^y,q_c^z\}| \geq 3$, a contradiction to the above. Thus, $D \cap \{f_c^x,f_c^y,f_c^z\} = \varnothing$ and so, $|D \cap V(G_c^F)| = 2$. Suppose next that $|D \cap V(G_c^T)| \geq 3$. If $D \cap \{t_c^x,t_c^y,t_c^z\} = \varnothing$, say $t_c^x \in D$ without loss of generality, then surely one of $u_c^x$ and $v_c^x$ does not belong to $D$ ($D$ would otherwise contain the friendly triple $t_c^x,u_c^x,v_c^x$), say $u_c^x \notin D$ without loss of generality. Since by assumption, $P_{x,T}^c(1)$ is not a private neighbour of $t_c^x$ and the vertex in $D \cap \{P_{x,T}^c(1),P_{x,T}^c(2)\}$ is witnessed by $P_{x,T}^c(4)$, the set $D' = (D \setminus \{t_c^x\} \cup \{u_c^x\}$ is a minimum SD of $G$ (note that $u_c^x$ is within distance two of every vertex in $D \cap V(G_c^T)$). By repeating this argument if necessary, we obtain a minimum SD set $D''$ of $G$ such that $D'' \cap \{t_c^x,t_c^y,t_c^z\} = \varnothing$ and $|D'' \cap V(G_c^T)| \geq 3$; but then, it is easy to see that $D'' \cap (\{a_c,b_c\} \cup \{u_c^\ell,v_c^\ell~|~ \ell \in \{x,y,z\}\}$ contains a friendly triple, a contradiction to \Cref{thm:friendlytriple}. Thus, $|D \cap V(G_c^T)| \leq 2$ and we conclude by \Cref{clm:sizeclawstd3}(ii) that, in fact, equality holds. Therefore, $|D \cap V(G_c)| = 4$ and thus, $\gamma_{t2}(G) = |D| = 14|X| + 4|C|$.\\

Conversely, assume that $\gamma_{t2}(G) = 14|X| + 4|C|$. Observe first that since by \Cref{clm:inter}, every minimum SD set of $G$ is an independent set, no minimum SD set of $G$ contains a friendly triple; thus by \Cref{thm:friendlytriple}(i), $ct_{\gamma_{t2}}(G) >1$. Now suppose to the contrary that $ct_{\gamma_{t2}}(G) =2$ and let $D$ be an SD set of size $\gamma_{t2}(G) +1$ containing an ST configuration (see \Cref{thm:friendlytriple}(ii)). Then by \Cref{clm:sizeclawstd3}, there exists either a clause $c \in C$ such that $|D \cap V(G_c)| =5$ or a variable $x \in X$ such that $|D \cap V(G_x)| =15$. We next distinguish these two cases.\\

\noindent
\textbf{Case 1.} \emph{There exists a clause $c \in C$ such that $|D \cap V(G_c)| =5$.} Let $x,y,z \in X$ be the three variables appearing in $c$. Then for any clause $\ell \in C \setminus \{c\}$, $D \cap V(G_\ell)$ contains no edges by Claims~\ref{clm:notruedge} and \ref{clm:notruedge1}; and for any variable $v \in X \setminus \{x,y,z\}$, $D \cap V(G_v)$ contains no edge by \Cref{clm:novargadedge}. Thus by \Cref{obs:nofs}, $D \setminus (V(G_c) \cup V(G_x) \cup V(G_y) \cup V(G_z))$ contains no edge and a fortiori, no ST-configurations. Further observe if for some $\ell \in \{x,y,z\}$, $D \cap V(G_\ell)$ contains an edge $uv$ then by \Cref{clm:edgeinvar}, $\{u,v\} = \{P(3),P(4)\}$ for some long paw $P$ in $G_\ell$. Now if $P \neq P_{\ell,T}^c,P_{\ell,F}^c$ then by \Cref{obs:emptycliques} and \Cref{obs:nofs}, $d(\{u,v\}, D \setminus \{u,v\}) \geq 3$ and so $uv$ cannot be part of an ST-configuration in $D$.; and if $P= P_{\ell,T}^c$ (resp. $P= P_{\ell,F}^c$) then $d(\{u,v\}, D\setminus \{u,v\}) \geq 3$ unless $t_c^\ell \in D$ (resp. $f_c^\ell \in D$). Thus any edge $uv$ of an ST-configuration in $D$ satisfies one of the following.
\begin{itemize}[leftmargin=16mm]
\item[(aT)] $\{u,v\} = \{P_{\ell,T}^c(3),P_{\ell,T}^c(4)\}$ for some $\ell \in \{x,y,z\}$ and $t_c^\ell \in D$.
\item[(aF)] $\{u,v\} = \{P_{\ell,F}^c(3),P_{\ell,F}^c(4)\}$ for some $\ell \in \{x,y,z\}$ and $f_c^\ell \in D$.
\item[(bT)] $\{u,v\} = \{P_{\ell,T}^(1),t_c^\ell\}$ for some $\ell \in \{x,y,z\}$.
\item[(bF)] $\{u,v\} = \{P_{\ell,F}^(2),f_c^\ell\}$ for some $\ell \in \{x,y,z\}$.
\item[(cT)] $u,v \in V(G_c^T)$.
\item[(cF)] $u,v \in V(G_c^F)$.
\end{itemize}
Note that since for any $\ell \in \{x,y,z\}$ and any $R \in \{T,F\}$, $|D \cap \{P_{\ell,R}^c(1),P_{\ell,R}^c(2),P_{\ell,R}^c(3)\}| \leq 1$ by \Cref{clm:sizeclawstd3}(i), if there is an edge satisfying (aR) and an edge satisfying (bR), then these two edges are vertex-disjoint. We now distinguish cases depending on whether $|D \cap V(G_c^T)| =3$ or $|D \cap V(G_c^F)| = 3$.\\

\textbf{Case 1.a.} \emph{$|D \cap V(G_c^T)| =3.$} Then $D \cap \{f_c^x,f_c^y,f_c^z\} = \varnothing$ by \Cref{obs:nofs}, which implies that no edge of an ST-configuration in $D$ satisfies (aF) or (bF). Now if $D \cap V(G_c^F)$ contains an edge $uv$ then $\{u,v\} \subseteq \{w_c,v_c,u_c\}$ as $w_c$ should be dominated; but then $d(\{u,v\},D \setminus \{u,v\}) \geq 3$ and so, $uv$ cannot be part of an ST-configuration in $D$. Thus no edge of an ST-configuration in $D$ satisfies (cF). Further note that since $|D \cap \{t_c^x,t_c^y,t_c^z\}| \leq 1$ by \Cref{clm:sizeclawstd3}(ii), there is at most one edge of an ST-configuration in $D$ satisfying (aT) or (bT).

Suppose first that $D \cap V(G_c^T)$ is an independent set and let $uv$ be an edge of an ST-configuration in $D$. If $uv$ satisfies (aT) for some $\ell \in \{x,y,z\}$, then $D \cap \{P_{\ell,T}^(1),P_{\ell,T}^c(2)\} = \varnothing$ by \Cref{clm:sizeclawstd3}(i) and so, $uv$ cannot be part of a $P_3$. Similarly, if $uv$ satisfies (bT) for some $\ell \in \{x,y,z\}$, then $D \cap \{P_{\ell,T}^c(2),P_{\ell,T}^c(3)\} = \varnothing$ by \Cref{clm:sizeclawstd3}(i) and so, $uv$ cannot be part of a $P_3$. It follows that $D$ contains no $P_3$, which implies that $D$ must contain an $O_3$ or an $O_7$; in particular, $D$ contains two vertex-disjoint edges. However, any such edge would satisfy (aT) or (bT), a contradiction to the above.

Suppose next that $D \cap V(G_c^T)$ contains an edge. We claim that there exists at most one variable $\ell \in \{x,y,z\}$ such that $D \cap \{P_{\ell,T}^c(1),P_{\ell,T}^c(3)\} \neq \varnothing$. Indeed, suppose for a contradiction that there are two such variables, say $x$ and $y$ without loss of generality. Then $P_{x,F}^c(2),P_{y,F}^c(2) \notin D$ by \Cref{clm:notmin}, which implies that $|D \cap \{q_c^x,q_c^y,f_c^x,f_c^y\}| \geq 2$ as $f_c^x$ and $f_c^y$ should be dominated; but $|D \cap V(G_c^F)| = 2$ and so, $w_c$ is not dominated, a contradiction. Thus assume, without loss of generality, that $D \cap \{P_{x,T}^c(1),P_{x,T}^c(3)\} = D \cap \{P_{y,T}^c(1),P_{y,T}^c(3)\} = \varnothing$ (note that $D \cap (\{t_c^x\} \cup V(P_{x,T}^c))$ and $D \cap (\{t_c^y\} \cup V(P_{y,T}^c))$ can then contain no edge). Since $t_c^x$ and $t_c^y$ should be dominated, necessarily $D \cap \{t_c^x,u_c^x,v_c^x\} \neq \varnothing$ and $D \cap \{t_c^y,u_c^y,v_c^y\} \neq \varnothing$. Now if $D \cap \{t_c^z,u_c^z,v_c^z\} \neq \varnothing$ then since $D \cap V(G_c^T)$ contains an edge by assumption, it must be that $u_c^a,u_c^b \in D$ or $v_c^a,v_c^b \in D$ for two variables $a,b \in \{x,y,z\}$. In the first case, since $D \cap \{v_c^x,v_c^y,v_c^z\} \neq \varnothing$ by \Cref{clm:sizeclawstd3}(ii), necessarily $D \cap \{t_c^x,t_c^y,t_c^z\} = \varnothing$; and in the second case, since $D \cap \{u_c^x,u_c^y,u_c^z\} \neq \varnothing$ by \Cref{clm:sizeclawstd3}(ii), necessarily $D \cap \{t_c^x,t_c^y,t_c^z\} = \varnothing$. Thus, in both cases, we conclude that $D$ contains no edge satisfying (aT) or (bT); but then $D$ contains only one edge, a contradiction. Thus $D \cap \{t_c^z,v_c^z,v_c^z\} = \varnothing$ which implies, in particular, $D \cap (\{t_c^z\} \cup V(P_{z,T}^c))$ contains no edge. Since by assumption $\{t_c^x\} \cup V(P_{x,T}^c)$ and $\{t_c^y\} \cup V(P_{y,T}^c)$ contain no edge as well, it follows that $D \cap V(G_c^T)$ must contain a $P_3$ ($D$ would otherwise contain only one edge). However, since $D \cap \{t_c^x,u_c^x,v_c^x\} \neq \varnothing$ and $D \cap \{t_c^y,u_c^y,v_c^y\} \neq \varnothing$, any such $P_3$ must then be contained in $\{u_c^x,v_c^x,u_c^y,v_c^y\}$ by \Cref{clm:sizeclawstd3}(ii); but $d(\{u_c^x,v_c^x,u_c^y,v_c^y\}, D \setminus V(G_c^T)) \geq 3$, a contradiction.\\

\textbf{Case 1.b.} \emph{$|D \cap V(G_c^F)| = 3$.} Then $D \cap \{t_c^x,t_c^y,t_c^z\} = \varnothing$ by \Cref{obs:nofs}, which implies that no edge of an ST-configuration satisfies (aT) or (bT). Let us next show that no edge of an ST-configuration in $D$ satisfies (cT). Suppose that $D \cap V(G_c^T)$ contains an edge $uv$. Then by \Cref{clm:szieclawstd3}(ii), $\{u,v\} = \{u_c^\ell,v_c^\ell\}$ for some $\ell \in \{x,y,z\}$, say $x$ without loss of generality. If $P_{x,T}^c(1) \notin D$ then $d(\{u,v\}, D \setminus \{u,v\} \geq 3$ and so, $uv$ cannot be part of an ST-configuration in $D$. Thus suppose that $P_{x,T}^c(1) \in D$. Since $t_c^y$ and $t_c^z$ should be dominated and $D \cap \{t_c^y,u_c^y,v_c^y,t_c^z,u_c^z,v_c^z\} = \varnothing$, necessarily $P_{y,T}^c(1),P_{z,T}^c(1) \in D$. Then $D \cap \{P_{x,F}^c(2),P_{y,F}^c(2),P_{z,F}^c(2)\} = \varnothing$ by \Cref{clm:notmin} and so, $|D \cap \{f_c^\ell,q_c^\ell~|~\ell \in \{x,y,z\}\}| \geq 3$ as $f_c^x,f_c^y$ and $f_c^z$ should be dominated. But $|D \cap V(G_c^F)| =3$ and so, $w_c$ is not dominated, a contradiction. Thus no edge of an ST-configuration satisfies (cT). Now observe that $|D \cap \{f_c^x,f_c^y,f_c^z\}| \leq 1$: indeed, if, say, $t_c^x,t_c^y \in D$ then $D \cap \{u_c,q_c^x,q_c^y,q_c^z,f_c^z\} \neq \varnothing$ as $q_c^z$ should be dominated; but then $D \cap \{w_c,v_c\} = \varnothing$ and so, $w_c$ is not dominated, a contradiction. This implies, in particular, that there is at most one edge of an ST-configuration in $D$ satisfying (aF) or (bF).

Now suppose that $D \cap V(G_c^F)$ is an independent set and let $uv$ be an edge of an ST-configuration in $D$. If $uv$ satisfies (aF) for some $\ell \in \{x,y,z\}$, then $D \cap \{P_{\ell,F}^(1),P_{\ell,F}^c(2)\} = \varnothing$ by \Cref{clm:sizeclawstd3}(i) and so, $uv$ cannot be part of a $P_3$. Similarly, if $uv$ satisfies (bF) for some $\ell \in \{x,y,z\}$, then $D \cap \{P_{\ell,T}^c(1),P_{\ell,T}^c(3)\} = \varnothing$ by \Cref{clm:sizeclawstd3}(i) and so, $uv$ cannot be part of a $P_3$. It follows that $D$ contains no $P_3$, which implies that $D$ must contain an $O_3$ or an $O_7$; in particular, $D$ contains two vertex-disjoint edges. However, any such edge would satisfy (aF) or (bF), a contradiction to the above.

Suppose next that $D \cap V(G_c^F)$ contains an edge. We claim that there exists at least one variable $\ell \in \{x,y,z\}$ such that $D \cap \{P_{\ell,F}^c(2),P_{\ell,F}^c(3)\} = \varnothing$. Indeed, if there is no such variable then by \Cref{clm:notmin}, $D \cap \{P_{x,T}^c(1),P_{y,T}^c(1),P_{z,T}^c(1)\} = \varnothing$; but $|D \cap V(G_c^T)| =2$ and so, one of $t_c^x,t_c^y$ and $t_c^z$ is not dominated.Thus assume, without loss of generality, that $D \cap \{P_{x,F}^c(2),P_{x,F}^c(3)\} = \varnothing$ (note that $D \cap (\{f_c^x\} \cup V(P_{x,F}^c))$ can then contain no edge). Since $f_c^x$ should be dominated, necessarily $D \cap \{q_c^x,f_c^x\} \neq \varnothing$. Now if $q_c^x,f_c^x \in D$ then since $|D \cap \{w_c,v_c\}| =1$, $D \cap V(G_c^F)$ contains only one edge and so, $D$ contains no ST-configuration, a contradiction. Suppose next that $f_c^x \in D$. Since $D \cap \{w_c,v_c\} \neq \varnothing$ and $D \cap V(G_c^F)$ contains an edge by assumption, it must then be that $D \cap \{q_c^y,f_c^y,q_c^z,f_c^z\} = \varnothing$; but then $D \cap V(G_c^F)$ contains only one edge and so, $D$ contains no ST-configuration, a contradiction. Suppose finally that $q_c^x \in D$. Since $D \cap \{w_c,v_c\} \neq \varnothing$ and $D \cap V(G_c^F)$ contains an edge, necessarily $D \cap \{f_c^y,f_c^z\} = \varnothing$. Now if $D \cap \{q_c^y,q_c^z\} \neq \varnothing$ in which case $D \cap V(G_c^F)$ contains only one edge and so, $D$ contains no ST-configuration. Thus it must be that $|D \cap \{w_c,v_c,u_c\}| =2$; but then $d(D \cap V(G_c^F),D \setminus V(G_c^F)) \geq 3$ and so, $D$ contains no ST-configuration.\\

\noindent
\textbf{Case 2.} \emph{There exists a variable $x \in X$ such that $|D \cap V(G_x)| = 15$.} Let $c,c',c'' \in C$ be the three clauses in which $x$ appears. If for some variable $\ell \neq x$ appearing in $c,c'$ or $c''$, $D \cap V(G_\ell)$ contains an edge $uv$ then by \Cref{clm:edgeinvar}, $\{u,v\} = \{P(3),P(4)\}$ for some long paw $P$ of $G_\ell$. But then $d(\{u,v\},D \setminus \{u,v\}) \geq 3$ by \Cref{obs:nofs} and so, $uv$ cannot be part of an ST-configuration in $D$. Now for any clause $\ell \in C \setminus \{c,c',c''\}$, $D \cap V(G_\ell)$ contains no edges by Claims~\ref{clm:notruedge} and \ref{clm:notruedge1}; and for any variable $\ell \in X$ not appearing in $c,c'$ or $c''$, $D \cap V(G_v)$ contains no edge by \Cref{clm:novargadedge}. Thus by \Cref{obs:nofs}, any ST-configuration in $D$ is in fact contained in $D \cap (V(G_x) \cup V(G_c) \cup V(G_{c'}) \cup V(G_{c''}))$. 

Now suppose that for some clause $\ell \in \{c,c',c''\}$, $D \cap V(G_\ell)$ contains an edge $uv$. If $u,v \in V(G_\ell^F)$ then, since $w_\ell$ must be dominated, necessarily $\{u,v\} \subseteq \{w_c,v_c,u_c\}$. But then $d(\{u,v\}, D\setminus \{u,v\}) \geq 3$ and so, $uv$ cannot be part of an ST-configuration in $D$. Thus assume that $u,v \in V(G_\ell^T)$ and let $y,z \in X$ be the other two variables appearing in $\ell$. Then by \Cref{clm:sizeclawstd3}(ii), $\{u,v\} = \{u_\ell^a,v_\ell^a\}$ for some variable $a \in \{x,y,z\}$. Now if $P_{a,T}^\ell(1) \notin D$ then $d(\{u,v\}, D\setminus \{u,v\}) \geq 3$ and so, $uv$ cannot be part of an ST-configuration in $D$. Thus suppose that $P_{a,T}^\ell(1) \in D$. Since for every $b \neq a$, $t_\ell^b$ should be dominated, it follows that $P_{b,T}^\ell(1) \in D$ as well. Then by \Cref{clm:notmin}, $P_{y,F}^\ell(2),P_{z,F}^\ell(2) \notin D$ and so $|D \cap \{q_\ell^y,f_\ell^y,q_\ell^z,f_\ell^z\}| \geq 2$ since $f_\ell^y$ and $f_\ell^z$ should be dominated, But $|D \cap V(G_\ell^F)| =2$ and so, $w_c$ is not dominated, a contradiction. Thus no edge of an ST-configuration in $D$ is contained in $D \cap (V(G_c) \cup V(G_{c'}) \cup V(G_{c''})$. By \Cref{obs:nofs}, it follows that every edge of an ST-configuration in $D$ is contained in $G_x$. We next distinguish cases depending on whether $D \cap \{q_x^\ell,p_x^\ell~|~\ell \in \{c,c',c''\}\} \neq \varnothing$ or $|D \cap V(P)| =3$ for some long paw $P$ of $G_x$.\\

\textbf{Case 2.a.} \emph{$D \cap \{q_x^\ell,p_x^\ell~|~\ell \in \{c,c',c''\}\} \neq \varnothing$.} Then by \Cref{clm:sizeclawstd3}(i), $|D \cap \{q_x^\ell,p_x^\ell~|~\ell \in \{c,c',c''\}\}| = 1$. Now suppose that $D \cap \{q_x^c,q_x^{c'},q_x^{c''}\} \neq \varnothing$ (the case where $D \cap \{p_x^c,p_x^{c'},p_x^{c''}\} \neq \varnothing$ is symmetric), say $q_x^c \in D$ without loss of generality. Suppose first that $F_x \in D$. Since $f_c^x \notin D$ by \Cref{obc:nofs} and $P_{x,F}^c(2)$ must be dominated, $D \cap \{P_{x,F}^c(1),P_{x,F}^c(2),P_{x,F}^c(3)\} \neq \varnothing$ and so, by \Cref{clm:sizeclawstd3}(i), $D'=D \setminus \{q_x^c\}$ is an SD set of $G$. Now since $D'$ is in fact minimum, it follows from \Cref{clm:inter} that every vertex in $D'$ has a unique witness; in particular, $D' \cap \{P_{x,F}^c(1),P_{x,F}^{c'}(1),P_{x,F}^{c''}(1)\} = \varnothing$ as $F_x \in D$. On the other hand, since $D$ is not minimal, $D$ must contain an $O_6$ by \Cref{lem:noedge} and \Cref{clm:inter}; in particular, $D \cap V(G_x)$ contains a $P_3$ $Q$. Now since $|D \cap V(P)| = 2$ for every long paw $P$ of $G_x$ and $D \cap \{q_x^{c'},q_x^{c''},p_x^c,p_x^{c'},p_x^{c''}\} = \varnothing$, necessarily $q_x^c \in V(Q)$. However $N(q_x^c) \cap D = \{F_x\}$; and since $D \cap \{v_x,w_x\} \neq \varnothing$ by \Cref{clm:sizeclawstd3}(i), we also have that $N(F_x) \cap D = \{q_c^x\}$, a contradiction. 

Second, suppose that $F_x \notin D$. Then $D \cap V(G_c)$ contains no $P_3$: indeed, by \Cref{clm:sizeclawstd3}(i), $D \cap \{F_x,q_x^{c'},q_x^{c''},p_x^c,p_x^{c'},p_x^{c''}\} = \varnothing$ and for every long paw $P$ of $G_x$, $|D \cap \{P(4),P(5)\}|  \geq 1$ and $|D \cap \{P(1),P(2),P(3)\}| \leq 1$. It follows that $D$ contains an $O_3$ or an $O_7$; in particular, $D \cap V(G_x)$ contains two vertex-disjoint edges. Now suppose that some long paw $P$ of $G_x$ contains an edge $uv$. Then since $D \cap \{P(4),P(5)\} \neq \varnothing$ by \Cref{clm:sizeclawstd3}(i), necessarily $\{u,v\} \subseteq \{P(3),P(4),P(5)\}$. Thus, if $P \neq P_x,P_{x,F}^c$ then $d(\{u,v\}, D \setminus \{u,v\}) \geq 3$ by \Cref{obs:nofs} and so, $uv$ cannot be part of an ST-configuration in $D$. It follows that $D \cap V(P_x)$ must contain an edge, as $D \cap (\{q_x^c\} \cup V(P_{x,F}^c))$ can contain at most one edge of an ST-configuration in $D$; in particular, $D \cap V(P_x) = \{v_x,u_x\}$. Note that since $p_x^c$ should be dominated and $T_x \notin D$, $P_{x,T}^c(2) \in D$ and so, $P_{x,T}^c(1) \notin D$.  Now, similarly, $D \cap (\{q_x^c\} \cup V(P_{x,F}^c))$ must contain an edge $uv$; and either $q_x^c \in \{u,v\}$ in which case $D \cap \{P_{x,F}^c(1),P_{x,F}^c(2),P_{x,F}^(3)\} = \{P_{x,F}^c(1)\}$, or $u,v \in V(P_{x,F}^c)$ in which case $D \cap V(P_x) = \{P_{x,F}^c(3),P_{x,F}^c(4)\}$. In both cases, we conclude that $P_{x,F}^c(2) \notin D$. Since, as shown above, $P_{x,T}^c(1) \notin D$ as well, it follows from \Cref{obs:nofs} that $q_c^x \in D$ and $D \cap \{u_c^x,v_c^x\} \neq \varnothing$ as $f_c^x$ and $t_c^x$ should be dominated, respectively. Now $|D \cap V(G_c^T)| = |D \cap V(G_c^F)| = 2$ and so, by \Cref{clm:sizeclawstd3}(ii), there must exist a variable $\ell \neq x$ contained in $c$ such that $D \cap \{q_c^\ell,u_c^\ell,v_c^\ell\} = \varnothing$. But $f_c^\ell$ and $t_c^\ell$ should be dominated and so, by \Cref{obs:nofs}, $P_{\ell,F}^c(2),P_{\ell,T}^c(1) \in D$, a contradiction to \Cref{clm:TxFx}.\\

\textbf{Case 2.b.} \emph{$|D \cap V(P)|= 3$ for some long paw $P$ of $G_x$.} Suppose first that $P = P_x$. If $D \cap V(P) = \{P(3),P(4),P(5)\}$ then $D \setminus \{P(3)\}$ is an SD set of $G$ and so, by \Cref{lem:noedge} and \Cref{clm:inter}, $D$ must contain an $O_6$; in particular, $D \cap V(G_x)$ contains a $P_3$ $Q$. Since $D \cap \{q_x^\ell,p_x^\ell~|~\ell \in \{c,c',c''\}\} = \varnothing$ and for any long paw $P' \neq P$, $|D \cap V(P')| =2$, necessarily $Q = P(3),P(4),P(5)$; but then, by \Cref{obs:nofs} and because $D \cap \{q_x^\ell,p_x^\ell~|~\ell \in \{c,c',c''\}\} = \varnothing$, $d(P(4), D \setminus \{P(3),P(4),P(5)\}) \geq 3$, a contradiction. Suppose next that $|D \cap \{P(3),P(4),P(5)\}| = 2$. If $P(4),P(5) \in D$ then $D \setminus \{P(5)\}$ is an SD set of $G$ and so, by \Cref{lem:noedge} and \Cref{clm:inter}, $D$ must contain an $O_6$; but $D$ contains no $P_3$ in this case, a contradiction. Now suppose that $D \cap \{P(3),P(4),P(5)\} = \{P(3),P(4)\}$ and assume that $P(1)\in D$ (the case where $P(2) \in D$ is symmetric). Then $D'  = D \setminus \{P(3)\}$ is a minimum SD set of $G$ and so, $D$ must contain an $O_6$ by \Cref{lem:noedge} and \Cref{clm:inter}; in particular, $D \cap V(G_c)$ contains a $P_3$ $Q$. Since $D \cap \{q_x^\ell,p_x^\ell~|~\ell \in \{c,c',c''\}\} = \varnothing$ and for any long paw $P' \neq P$, $|D \cap V(P')| =2$, necessarily $Q = P(1),P(3),P(4)$. However, by \Cref{clm:inter}(ii), $w_{D'}(P(1)) = \{P(4)\}$ and so $d(P(3), D \setminus \{P(1),P(3),P(4)\}) \geq d(P(1),D' \setminus \{P(1),P(4)\}) \geq 3$, a contradiction. Suppose finally that $D \cap \{P(3),P(4),P(5)\} = \{P(3),P(5)\}$ and assume that $P(1) \in D$ (the case where $P(2) \in D$ is symmetric). Then, since $D \cap \{q_x^\ell,p_x^\ell~|~\ell \in \{c,c',c''\}\} = \varnothing$ and for every long paw $P' \neq P$, $|D \cap V(P')| =2$,  $D \cap V(G_x)$ contains no $P_3$. It follows that $D$ contains an $O_3$ or $O_6$; in particular, $D \cap V(G_x)$ must contain an edge $uv$ distinct from $P(1)P(3)$. Now $uv$ can only be contained in some long paw $P' \neq P$; and since $|D \cap \{P'(4),P'(5)\}| \geq 1$ and $|D \cap \{P'(1),P'(2),P'(3)\}| \leq 1$ by \Cref{clm:sizeclawstd3}(i), in fact $\{u,v\} \subseteq \{P'(3),P'(4),P'(5)\}$. But then, by \Cref{obs:nofs} and because $D \cap \{q_x^\ell,p_x^\ell~|~\ell \in \{c,c',c''\}\} = \varnothing$, $d(\{u,v\},D \setminus \{u,v\}) \geq 3$, a contradiction. Thus, it must be that $|D \cap \{P(3),P(4),P(5)\}| = 1$; and since $P(5)$ must be dominated and any vertex dominating $P(5)$ must have a witness, in fact $D \cap \{P(3),P(4),P(5)\} = P(4)$. Since $D \cap \{q_x^\ell,p_x^\ell~|~\ell \in \{c,c',c''\}\} = \varnothing$ and for every long paw $P' \neq P$, $|D \cap V(P')| =2$, it follows that $D$ contains no $P_3$ and so, $D$ contains an $O_3$ or an $O_6$. In particular, $D$ must contain an edge $uv$ distinct from $P(1)P(2)$; but then, we conclude, as in the previous case, that $\{u,v\} \subseteq \{P'(3),P'(4),P'(5)\}$ for some long paw $P' \neq P$ and $d(\{u,v\}, D \setminus \{u,v\}) \geq 3$, a contradiction which concludes the proof.
\end{claimproof}

Now by Claims~\ref{clm:phisatclawstd3} and \ref{clm:clawcstd3}, $\Phi$ is satisfiable if and only if $ct_{\gamma_{t2}}(G) = 3$; and since $G$ is $K_{1,3}$-free, the lemma follows.
\end{proof}

\begin{lemma}
\label{lem:std33p4}
{\sc Contraction Number($\gamma_{t2}$,3)} is $\mathsf{NP}$-hard on $3P_4$-free graphs.
\end{lemma}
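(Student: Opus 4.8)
The plan is to give a polynomial-time reduction from {\sc Positive Exactly 3-Bounded 1-In-3 3-Sat}, in the spirit of the preceding construction for $K_{1,3}$-free graphs but redesigned so that the resulting graph $G$ contains no three pairwise vertex-disjoint induced $P_4$'s. The obstacle to $3P_4$-freeness in the $K_{1,3}$-free construction is the long paws: each already carries an induced $P_4$, so with $|X|$ variables one immediately finds arbitrarily many disjoint copies. I would therefore route all long-range adjacencies through exactly two cliques---a ``true'' clause-clique $K_T$ holding one vertex per clause and a ``false'' clause-clique $K_F$ holding another, à la \Cref{lem:c3c4std2}---arranged so that \emph{every} induced $P_4$ of $G$ meets $K_T\cup K_F$. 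Since $K_T$ and $K_F$ are cliques, three pairwise vertex-disjoint induced $P_4$'s would, by pigeonhole, force two of them to contain vertices of the same clique; those two vertices are adjacent, contradicting that the two copies are nonadjacent. Hence $G$ is $3P_4$-free, and establishing the hypothesis ``every induced $P_4$ meets $K_T\cup K_F$'' is a self-contained case check on how an induced path can traverse the gadgets.

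With the construction in hand I would prove two equivalences, mirroring the claim-by-claim structure of the $K_{1,3}$-free proof. First, a counting lemma: each variable gadget forces a fixed number of vertices into any SD set and each clause gadget forces a fixed number, so that $\Phi$ is satisfiable if and only if $\gamma_{t2}(G)$ attains the induced lower bound $N$ (a fixed linear function of $|X|$ and $|C|$). The forward direction builds a canonical SD set from an exactly-one-true assignment; the converse reads off a $1$-in-$3$ assignment from a minimum SD set, using the gadget-local domination constraints to argue that exactly one ``truth vertex'' per variable lies in $D$ and that the true literal chosen in each clause is consistent across the two clause-cliques. Second, I would show $ct_{\gamma_{t2}}(G)=3$ if and only if $\gamma_{t2}(G)=N$.

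For the ``$ct=3$'' equivalence the easy direction establishes $ct_{\gamma_{t2}}(G)>1$ by proving that every minimum SD set is independent with all witnesses unique and at distance exactly two (an analogue of \Cref{lem:edge} and of \Cref{clm:inter}), so no minimum SD set contains a friendly triple and \Cref{thm:friendlytriple}(i) gives $ct>1$. The crux---and the step I expect to be the main obstacle---is ruling out, when $\Phi$ is satisfiable, any SD set of size $N+1$ containing an ST-configuration, which by \Cref{thm:friendlytriple}(ii) is exactly what forces $ct>2$. Here I would first invoke \Cref{lem:noedge} to reduce the task to excluding an $O_6$, and then carry out the exhaustive analysis over $O_1,\dots,O_7$ of \Cref{fig:conf}: using the gadget structure to show that every edge placed inside an SD set is confined to one fixed position of a variable gadget whose endpoints are at distance at least three from the rest of $D$ (so it cannot participate in any configuration), and that the cliques $K_T$ and $K_F$ cannot simultaneously host the vertices needed for an $O_5$, $O_6$ or $O_7$. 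This structural dissection, split according to whether the surplus vertex lands in a variable gadget or a clause gadget and by which configuration is attempted, is long but routine once the right invariants---the exact size of $D$ on each gadget, together with the forbidden cross-adjacencies through $K_T$ and $K_F$---are in place; engineering these invariants so that they simultaneously kill all seven configurations is the delicate part of the argument.
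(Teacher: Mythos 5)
Your high-level plan (a reduction from {\sc Positive Exactly 3-Bounded 1-In-3 3-Sat}, two global cliques plus a pigeonhole argument for $3P_4$-freeness, and the chain ``$\Phi$ satisfiable iff $\gamma_{t2}(G)=N$ iff $ct_{\gamma_{t2}}(G)=3$'') coincides in outline with the paper's, and your pigeonhole argument is exactly the one the paper uses. But the concrete architecture you propose --- variable gadgets carrying truth vertices $T_x,F_x$, attached to a single pair of clause cliques $K_T,K_F$ \`a la \Cref{lem:c3c4std2} --- cannot be made to work, and the failure is not a matter of ``delicate engineering''. In a 1-in-3 satisfying assignment every clause has exactly \emph{two} false variables, so whatever vertex performs the false-side check for a clause (a vertex of $K_F$ adjacent to $F_x,F_y,F_z$, or any pairwise check vertex adjacent to two $F$'s) will, in the canonical minimum SD set $D$ built from the assignment, have two neighbours in $D$, say $F_y$ and $F_z$. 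These are then at distance two from each other, so each has at least two witnesses: its in-gadget witness (which your own counting lemma forces to exist, since each variable gadget must contribute two $D$-vertices within distance two of each other) and the other false vertex. By \Cref{lem:edge}(ii) this yields $ct_{\gamma_{t2}}(G)\leq 2$, i.e., your reduction would prove ``satisfiable $\Rightarrow ct_{\gamma_{t2}}\leq 2$'', the opposite of what \Cref{lem:std33p4} requires. The obstruction is robust: domination is monotone, so any single-layer check for ``at most one true'' must have a vertex adjacent to at least two false-side $D$-vertices; and even if you redesign the gadgets so that $F_y,F_z$ have no in-gadget witnesses, adding the $K_F$-vertex to $D$ creates a $P_3$ whose centre lies in a clique containing the check vertices of all other clauses, each at distance two from that clause's own $D$-vertices --- an $O_6$ --- so $ct_{\gamma_{t2}}(G)\leq 2$ again by \Cref{thm:friendlytriple}.

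The ingredient you are missing is the paper's \emph{doubling} idea, which is precisely what dissolves this tension: the paper has no variable gadgets at all. Each clause gadget consists of two copies $H_{c,1},H_{c,2}$; the variables of $c$ are represented by vertices $v_{c,1}^\ell,v_{c,2}^\ell$ inside the gadget; and every check vertex ($t_{c,i}^{a,b}$ for the 1-in-3 test, $u_{c,i}^{\ell,a}$ for consistency across clauses sharing $\ell$) is adjacent to the candidate vertices of \emph{one} copy together with the complementary candidate of the \emph{other} copy, so that in a canonical solution each check vertex has exactly one neighbour in $D$. Consistency between clauses is enforced through the $u$-vertices rather than through a shared truth vertex. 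As a result every minimum SD set is independent, every vertex has a unique witness (the pair $v_{c,1}^\ell,v_{c,2}^\ell$ at distance exactly two), and $D$-vertices of distinct clauses are pairwise at distance at least three --- and these are exactly the invariants needed so that no SD set of size $\gamma_{t2}(G)+1$ can host any of $O_1,\dots,O_7$. The two cliques used for the $3P_4$-freeness pigeonhole are then the unions $K_1=\bigcup_c K_{c,1}$ and $K_2=\bigcup_c K_{c,2}$ of the two copy-layers, not a ``true'' and a ``false'' clique; with your $K_T/K_F$ layout the invariants you correctly identify as necessary are unattainable.
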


\begin{proof}
We reduce from {\sc Positive Exactly 3-Bounded 1-In-3 3-Sat} (see \Cref{sec:prelim} for a precise definition of this problem): given an instance $\Phi$ of this problem, with variable set $X$ and clause set $C$, we construct an instance $G$ of {\sc Contraction Number($\gamma_{t2}$,3)} as follows. For every clause $c \in C$, we introduce the gadget $G_c$ which is constructed as follows. Let $x,y,z \in X$ be the variables appearing in $c$. For every $\ell \in \{x,y,z\}$, denote by $a_\ell$ and $b_\ell$ the two other clauses in $C$ in which $\ell$ appears. To construct the gadget $G_c$ for $c$, we first construct the auxiliary graph $H_c$ depicted in \Cref{fig:hc} (where a rectangle indicates that the corresponding set is a clique). More precisely, $H_c$ consists of two cliques $V_c= \{p_c,q_c,v_c^x,v_c^y,v_c^z\}$ and $K_c = \{u_c^{\ell,a_\ell},u_c^{\ell,b_\ell}~|~ \ell \in \{x,y,z\}\} \cup \{t_c^{x,y},t_c^{x,z},t_c^{y,z}\}$, and the following edges.
\begin{itemize}
\item For every $\ell \in \{x,y,z\}$, $u_c^\ell$ is complete to $\{u_c^{\ell,a_\ell},u_c^{\ell,b_\ell}\}$.
\item For every $\ell \in \{x,y,z\}$, $u_c^\ell$ is adjacent to $t_c^{a,b}$ if and only if $\ell \in \{a,b\}$.
\end{itemize}  

\begin{figure}
\centering
\begin{tikzpicture}
\node[circ,label=below:{\small $t_c^{x,y}$}] (txy) at (2,0) {};
\node[circ,label=below:{\small $t_c^{x,z}$}] (txz) at (4,0) {};
\node[circ,label=below:{\small $t_c^{y,z}$}] (tyz) at (6,0) {};
\node[circ,label=above:{\small $p_c$}] (q) at (5,2.75) {};
\node[circ,label=above:{\small $v_c^x$}] (vx) at (2,2) {};
\node[circ,label=above:{\small $v_c^y$}] (vy) at (4,2) {};
\node[circ,label=above:{\small $v_c^z$}] (vz) at (6,2) {};
\node[circ,label=above:{\small $q_c$}] (p) at (3,2.75) {};
\node[circ,label=below:{\small $u_c^{x,a_x}$}] (ax) at (.5,1) {};
\node[circ,label=below:{\small $u_c^{x,b_x}$}] (bx) at (1.5,1) {};
\node[circ,label=below:{\small $u_c^{y,a_y}$}] (ay) at (3.5,1) {};
\node[circ,label=below:{\small $u_c^{y,b_y}$}] (by) at (4.5,1) {};
\node[circ,label=below:{\small $u_c^{z,a_z}$}] (az) at (6.5,1) {};
\node[circ,label=below:{\small $u_c^{z,b_z}$}] (bz) at (7.5,1) {};

\draw (vx) -- (ax)
(vx) -- (bx)
(vx) -- (txy);
\draw (vx) edge[bend right=25] (txz);

\draw (vy) -- (ay)
(vy) -- (by)
(vy) -- (txy)
(vy) -- (tyz);

\draw (vz) -- (az) 
(vz) -- (bz) 
(vz) -- (tyz);
\draw (vz) edge[bend left=25] (txz);

\draw (0,-.65) rectangle (8,1.25);
\draw (1.7,1.75) rectangle (6.3,3.25);
\end{tikzpicture}
\caption{The graph $H_c$ (a rectangle indicates that the corresponding set of vertices is a clique).}
\label{fig:hc}
\end{figure}
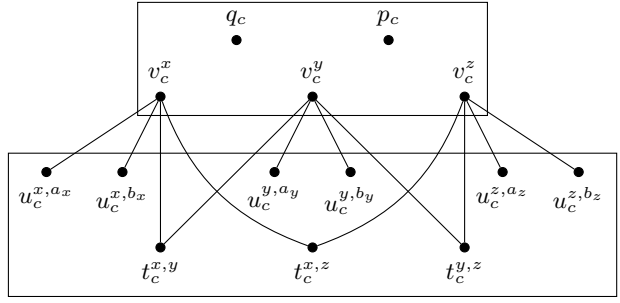

The gadget $G_c$ then consists of two disjoint copies of $H_c$, denoted by $H_{c,1}$ and $H_{c,2}$, with the following additional edges. For $i \in [2]$, let us denote by $\{p_{c,i},q_{c,i},v_{c,i}^x,v_{c,c}^y,v_{c,i}^z\} \cup \{u_{c,i}^{\ell,a_\ell},u_{c,i}^{\ell,b_\ell}~|~ \ell \in \{x,y,z\}\} \cup \{t_{c,i}^{x,y},t_{c,i}^{x,z},t_{c,i}^{y,z}\}$ the vertex set of $H_{c,i}$, and by $K_{c,i}$ (resp. $V_{c,i}$) the copy of $K_c$ (resp. $V_c$) in $H_{c,i}$.
\begin{itemize}
\item[(1)] For every $\ell \in \{x,y,z\}$, $v_{c,1}^\ell$ is adjacent to $t_{c,2}^{a,b}$ if and only if $\ell \notin \{a,b\}$.
\item[(2)] For every $\ell \in \{x,y,z\}$, $v_{c,2}^\ell$ is adjacent to $t_{c,1}^{a,b}$ if and only if $\ell \notin \{a,b\}$.
\item[(3)] $p_{c,1}$ is complete to $V_{c,2} \setminus \{p_{c,2}\}$.
\item[(4)] $p_{c,2}$ is complete to $V_{c,1} \setminus \{p_{c,1}\}$.
\end{itemize}
This completes the construction of $G_c$ (see \Cref{fig:std33p4} for an illustration - the edges of (1) and (2) have been omitted for clarity). 

\begin{figure}
\centering
\begin{tikzpicture}[scale=.9]
\node[circ,label=below:{\small $t_{c,1}^{x,y}$}] (txy) at (2,0) {};
\node[circ,label=below:{\small $t_{c,1}^{x,z}$}] (txz) at (4,0) {};
\node[circ,label=below:{\small $t_{c,1}^{y,z}$}] (tyz) at (6,0) {};
\node[circ,label=above:{\small $q_{c,1}$}] (q) at (.5,2) {};
\node[circ,label=above:{\small $v_{c,1}^x$}] (vx) at (2,2) {};
\node[circ,label=above:{\small $v_{c,1}^y$}] (vy) at (4,2) {};
\node[circ,label=above:{\small $v_{c,1}^z$}] (vz) at (6,2) {};
\node[circ,label=above:{\small $p_{c,1}$}] (p) at (5,4) {};
\node[circ,label=below:{\small $u_{c,1}^{x,a_x}$}] (ax) at (.5,1) {};
\node[circ,label=below:{\small $u_{c,1}^{x,b_x}$}] (bx) at (1.5,1) {};
\node[circ,label=below:{\small $u_{c,1}^{y,a_y}$}] (ay) at (3.5,1) {};
\node[circ,label=below:{\small $u_{c,1}^{y,b_y}$}] (by) at (4.5,1) {};
\node[circ,label=below:{\small $u_{c,1}^{z,a_z}$}] (az) at (6.5,1) {};
\node[circ,label=below:{\small $u_{c,1}^{z,b_z}$}] (bz) at (7.5,1) {};

\draw (vx) -- (ax)
(vx) -- (bx)
(vx) -- (txy);
\draw (vx) edge[bend right=30] (txz);

\draw (vy) -- (ay)
(vy) -- (by)
(vy) -- (txy)
(vy) -- (tyz);

\draw (vz) -- (az) 
(vz) -- (bz) 
(vz) -- (tyz);
\draw (vz) edge[bend left=30] (txz);

\draw (0,-.7) rectangle (8,1.25);
\draw (.15,1.75) rectangle (6.35,2.75);

\node[circ,label=below:{\small $t_{c,2}^{x,y}$}] (txy1) at (2+8.5,0) {};
\node[circ,label=below:{\small $t_{c,2}^{x,z}$}] (txz1) at (4+8.5,0) {};
\node[circ,label=below:{\small $t_{c,2}^{y,z}$}] (tyz1) at (6+8.5,0) {};
\node[circ,label=above:{\small $q_{c,2}$}] (q2) at (16,2) {};
\node[circ,label=above:{\small $v_{c,2}^x$}] (vx1) at (2+8.5,2) {};
\node[circ,label=above:{\small $v_{c,2}^y$}] (vy1) at (4+8.5,2) {};
\node[circ,label=above:{\small $v_{c,2}^z$}] (vz1) at (6+8.5,2) {};
\node[circ,label=above:{\small $p_{c,2}$}] (p1) at (3+8.5,4) {};
\node[circ,label=below:{\small $u_{c,2}^{x,a_x}$}] (ax1) at (.5+8.5,1) {};
\node[circ,label=below:{\small $u_{c,2}^{x,b_x}$}] (bx1) at (1.5+8.5,1) {};
\node[circ,label=below:{\small $u_{c,2}^{y,a_y}$}] (ay1) at (3.5+8.5,1) {};
\node[circ,label=below:{\small $u_{c,2}^{y,b_y}$}] (by1) at (4.5+8.5,1) {};
\node[circ,label=below:{\small $u_{c,2}^{z,a_z}$}] (az1) at (6.5+8.5,1) {};
\node[circ,label=below:{\small $u_{c,2}^{z,b_z}$}] (bz1) at (7.5+8.5,1) {};

\draw (vx1) -- (ax1)
(vx1) -- (bx1)
(vx1) -- (txy1);
\draw (vx1) edge[bend right=30] (txz1);

\draw (vy1) -- (ay1)
(vy1) -- (by1)
(vy1) -- (txy1)
(vy1) -- (tyz1);

\draw (vz1) -- (az1) 
(vz1) -- (bz1) 
(vz1) -- (tyz1);
\draw (vz1) edge[bend left=30] (txz1);

\draw (8.45,-.7) rectangle (16.45,1.25);
\draw (10.15,1.75) rectangle (16.35,2.75);

\draw[very thick] (p) -- (.15,2.75);
\draw[very thick] (p) -- (6.35,2.75);
\draw[very thick] (p) -- (10.15,2.75);
\draw[very thick] (p) -- (16.35,2.75);

\draw[very thick] (p1) -- (.15,2.75);
\draw[very thick] (p1) -- (6.35,2.75);
\draw[very thick] (p1) -- (10.15,2.75);
\draw[very thick] (p1) -- (16.35,2.75);
\end{tikzpicture}
\caption{The graph $G_c$ (a rectangle indicates that the corresponding set of vertices is a clique, and thick edges between a vertex and a rectangle indicates that the vertex is complete to the corresponding set).}
\label{fig:std33p4}
\end{figure}
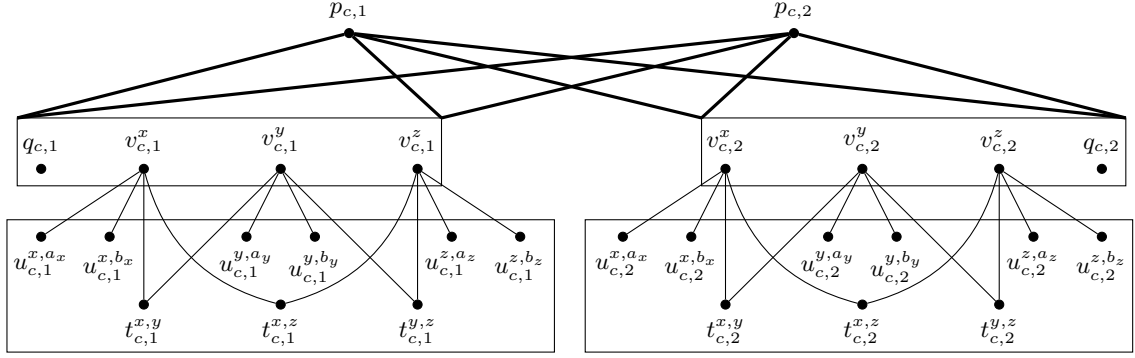

To complete the construction of $G$, we proceed as follows. For every two clauses $a,b \in C$ containing a same variable $u \in X$, we further add edges between $G_a$ and $G_b$ as described below. 
\begin{itemize}
\item $v_{a,1}^u$ is adjacent to $u_{b,2}^{v,a}$, for every variable $v \neq u$ appearing in $b$.
\item $v_{a,2}^u$ is adjacent to $u_{b,1}^{v,a}$, for every variable $v \neq u$ appearing in $b$.
\item $v_{b,1}^u$ is adjacent to $u_{a,2}^{v,b}$, for every variable $v \neq u$ appearing in $a$.
\item $v_{b,2}^u$ is adjacent to $u_{a,1}^{v,b}$, for every variable $v \neq u$ appearing in $a$.
\end{itemize}
Finally, we add edges so that $K_1 = \bigcup_{c \in C} K_{c,1}$ is a clique and $K_2 = \bigcup_{c \in C} K_{c,2}$ is a clique. We let $G$ be the resulting graph. We next show that $\Phi$ is satisfiable if and only if $ct_{\gamma_{t2}}(G) =3$. 

Before turning to the proof, let us briefly explain the idea behind the construction. In a clause gadget $G_c$, the vertices $v_{c,1}^\ell$ and $v_{c,2}^\ell$ should be seen as representing the variable $\ell$. As we will show, for any minimum SD set $D$ of $G$, $D \cap V(G_c)$ contains only one vertex from $V_{c,1} \setminus \{p_{c,1},q_{c,1}\}$ and one vertex from $V_{c,2} \setminus \{p_{c,2},q_{c,2}\}$, both corresponding to the same variable $\ell$. This variable $\ell$ can then be set to true to satisfy $c$. To ensure that the choice in every clause gadget is consistent (that is, if $\ell$ is chosen in $G_c$ then $\ell$ should be chosen in $G_a$, for every clause $a$ containing $\ell$), we make use of the vertices $u_{c,i}^{v,a}$, which should be seen as representing the clause $a$ in $G_c$. By construction, $u_{c,i}^{v,a}$ is adjacent to only $v_{c,i}^v$ (representing $v$) and $v_{a,(i+1) \mod 2}^u$ (representing $u$), for $u \neq v$. Thus, if $v_{c,i}^v$ is not in the SD set then $v_{a,i}^v$ is not in the SD set either, as $u_{c,i}^{v,a}$ would otherwise not be dominated.  

\begin{claim}
\label{clm:size3p4}
For any SD $D$ of $G$ and any clause $c \in C$, $|D \cap (V_{c,1} \cup V_{c,2})| \geq 2$.
\end{claim}

\begin{claimproof}
Let $D$ be an SD set of $G$ and let $c \in C$ be a clause. Since $q_{c,1}$ should be dominated, $D \cap (V_{c,1} \cup \{p_{c,2}\} \neq \varnothing$; and similarly, since $q_{c,2}$ should be dominated, $D \cap (V_{c,2} \cup \{p_{c,1}\}) \neq \varnothing$. Suppose first that $p_{c,1} \in D$. Then $D \cap (V_{c,2} \cup V_{c,1} \setminus \{p_{c,1}\}) \neq \varnothing$, as $p_{c,2}$ should be dominated, and so $|D \cap (V_{c,1} \cup V_{c,2})| \geq 2$. We conclude symmetrically if $p_{c,2} \in D$. Otherwise, $D \cap \{p_{c,1},p_{c,2}\} = \varnothing$ and so, $D \cap (V_{c,i} \setminus \{p_{c,i}\}) \neq \varnothing$ for every $i \in [2]$.  
\end{claimproof}

\begin{claim}
\label{clm:size3p41}
If $\gamma_{t2}(G) = 2|C|$ then for every minimum SD set $D$ of $G$ and every clause $c \in C$, $D \cap V(G_c) = \{v_{c,1}^\ell, v_{c,2}^\ell\}$ for some variable $\ell$ appearing in $c$. 
\end{claim}

\begin{claimproof}
Assume that $\gamma_{t2}(G)=2|C|$ and let $D$ be a minimum SD set of $G$. Consider a clause $c \in C$ containing variables $x,y$ and $z$. We contend that $D \cap \{p_{c,1},q_{c,1},p_{c,2},q_{c,2}\} = \varnothing$. Indeed, observe first that by \Cref{clm:size3p4}, $|D \cap \{p_{c,1},q_{c,1},p_{c,2},q_{c,2}\}| \leq 2$. Now if $|D \cap \{p_{c,1},q_{c,1},p_{c,2},q_{c,2}\}| =2$ then the vertices in $\{t_{c,i}^{x,y},t_{c,i}^{x,z},t_{c,i}^{y,z}~|~i \in [2]\}$ are not dominated, as $D \cap (K_1 \cup K_2) = \varnothing$ by \Cref{clm:size3p4}. Similarly, if $|D \cap \{p_{c,1},q_{c,1},p_{c,2},q_{c,2}\}|=1$ then by \Cref{clm:size3p4}, $|D \cap \{v_{c,i}^x,v_{c,i}^y,v_{c,i}^z~|~i \in [2]\}| =1$, say $v_{c,i}^\ell \in D$, and so the vertex $t_{c,i}^{a,b}$ with $\ell \notin \{a,b\}$, is not dominated. Thus, $D \cap \{p_{c,1},q_{c,1},p_{c,2},q_{c,2}\} = \varnothing$ which by \Cref{clm:size3p4}, implies that $|D \cap \{v_{c,i}^x,v_{c,i}^y,v_{c,i}^z~|~i \in [2]\}| = 2$; and since $q_{c,1}$ and $q_{c,2}$ should be dominated, in fact $|D \cap \{v_{c,1}^x,v_{c,1}^y,v_{c,1}^z\}| = |D \cap \{v_{c,2}^x,v_{c,2}^y,v_{c,2}^z\}| = 1$. Now if $v_{c,1}^u,v_{c,2}^v \in D$ for two distinct variables $u,v \in \{x,y,z\}$, then the vertex $t_{c,1}^{v,w}$ where $w \in \{x,y,z\} \setminus \{u,v\}$, is not dominated. Thus, $D \cap V(G_c) = \{v_{c,1}^\ell,v_{c,2}^\ell\}$ for some variable $\ell \in \{x,y,z\}$.
\end{claimproof}

\begin{claim}
\label{clm:phisat3p4}
$\Phi$ is satisfiable if and only if $\gamma_{t2}(G) = 2|C|$.
\end{claim}

\begin{claimproof}
Assume first that $\Phi$ is satisfiable and consider a truth assignment satisfying $\Phi$. We construct an SD set $D$ of $G$ as follows: for every clause $c \in C$, exactly one variable in $c$ is true, say $\ell$, in which case we add $v_{c,1}^\ell$ and $v_{c,2}^\ell$ to $D$. Let us show that the constructed set $D$ is indeed an SD set of $G$. Consider a clause $c \in C$ containing variables $x,y$ and $z$, and assume without loss of generality that $x$ is true. For every $\ell \in \{x,y,z\}$, let $a_\ell$ and $b_\ell$ be the two other clauses in which $\ell$ appears. Note that since by construction, $D \cap V(G_c) = \{v_{c,1}^x,v_{c,2}^x\}$ and $d(v_{c,1}^x,v_{c,2}^x) = 2$, $v_{c,1}^x$ and $v_{c,2}^x$ witness each other. Now it is clear that every vertex in $V_{c,1} \cup V_{c,2} \cup \{u_{c,i}^{x,a_x},u_{c,i}^{x,b_x}~|~ i \in [2]\} \cup \{t_{c,i}^{x,y},t_{c,i}^{x,z}~|~ i \in [2]\}$ is dominated. Furthermore, by construction, $v_{c,1}^x$ is adjacent to $t_{c,2}^{y,z}$, and $v_{c,2}^x$ is adjacent to $t_{c,1}^{y,z}$, and thus, these two vertices are dominated as well. There remains to show that for every $i \in [2]$ and every $\ell \neq x$, the vertices $u_{c,i}^{\ell,a_\ell}$ and $u_{c,i}^{\ell,b_\ell}$ are dominated. But this readily holds true: for every $\ell \in \{y,z\}$ and every $c_\ell \in \{a_\ell,b_\ell\}$, since $\ell$ is false, $D \cap V(G_{c_\ell}) = \{v_{c_\ell,1}^v,v_{c_\ell,2}^v\}$ for some variable $v \neq \ell$ appearing in $c_\ell$; and by construction, $v_{c_\ell,1}^v$ is adjacent to $u_{c,2}^{\ell,c_\ell}$ and $v_{c_\ell,2}^v$ is adjacent to $u_{c,1}^{\ell,c_\ell}$. Thus, $D$ is an SD set of $G$ and since $|D| = 2|C|$, we conclude by \Cref{clm:size3p4} that $D$ is minimum.

Conversely, assume that $\gamma_{t2}(G) = 2|C|$ and let $D$ be a minimum SD set of $G$. Then by \Cref{clm:size3p41}, $D \cup (K_1 \cup K_2) = \varnothing$ and for every clause $c \in C$, $D \cap V(G_c) = \{v_{c,1}^\ell,v_{c,2}^\ell\}$ for some variable $\ell$ contained in $c$. We claim that if $a,b \in C$ are two clauses containing a same variable $\ell$ and $D \cap V(G_a) = \{v_{a,1}^\ell,v_{a,2}^\ell\}$, then $D \cap V(G_b) = \{v_{b,1}^\ell,v_{b,2}^\ell\}$. Indeed, suppose for a contradiction that $D \cap V(G_b) = \{v_{b,1}^v,v_{b,2}^v\}$ for some variable $v \neq \ell$, and let $p,q \in X$ be the other two variables appearing in $a$. Then $u_{b,1}^{\ell,a}$ and $u_{b,2}^{\ell,a}$ are not dominated: indeed, $N(u_{b,1}^{\ell,a}) \setminus (K_1 \cup K_2) = \{v_{b,1}^\ell,v_{a,2}^p,v_{a,2}^q\}$ and $N(u_{b,2}^{\ell,a}) \setminus (K_1 \cup K_2)= \{v_{b,2}^\ell,v_{a,1}^p,v_{a,1}^q\}$, and so $D \cap N(u_{b,1}^{\ell,a}) = D \cap N(u_{b,2}^{\ell,a}) = \varnothing$. Thus, $V(G_b) = \{v_{b,1}^\ell,v_{b,2}^\ell\}$ as claimed. It follows that the truth assignment obtained by setting a variable $\ell$ to true if $v_{c,1}^\ell \in D$ for some clause $c$ containing $\ell$, and to false otherwise, satisfies $\Phi$.
\end{claimproof}

\begin{claim}
\label{clm:gt33p4}
$ct_{\gamma_{t2}}(G) = 3$ if and only if $\gamma_{t2}(G) = 2|C|$.
\end{claim}

\begin{claimproof}
Assume first that $ct_{\gamma_{t2}}(G) = 3$ and let $D$ be a minimum SD set of $G$. If there exists a clause $c \in C$ such that $|D \cap (V_{c,1} \cup V_{c,2})| \geq 3$ then $D \cap (V_{c,1} \cup V_{c,2})$ contains a friendly triple, a contradiction to \Cref{thm:friendlytriple}. Thus, $|D \cap (V_{c,1} \cup V_{c,2})| \leq 2$ for every clause $c \in C$, and we conclude by \Cref{clm:size3p4} that, in fact, equality holds. Now suppose that $D \cap K \neq \varnothing$, say $u \in D \cap (K_{c,1} \cup K_{c,2}$ for some clause $c \in C$. If $u$ has a neighbour $v \in D \cap (V_{c,1} \cup V_{c,2})$, then $u,v,w$ where $w \in D \cap (V_{c,1} \cup V_{c,2} \setminus \{v\})$, is a friendly triple, a contradiction to \Cref{them:friendlytriple}. Thus, $N(u) \cap (D \cup (V_{c,1} \cup V_{c,2})) = \varnothing$. Now let $v \in V_{c,1} \in V_{c,2}$ be a neighbour of $u$, and let $w_1,w_2 \in D \cap (V_{c,1} \cup V_{c,2})$. Then the vertices in $\{v,w_1,w_2\}$ are pairwise at distance at most two, and at least two of them must be adjacent. However, if $w_1w_2 \in E(G)$ then $D \cup \{v\}$ contains the $O_7$ $u,v,w_1,w_2$; and if $vw_i \in E(G)$ for some $i \in [2]$, then $D \cup \{v\}$ contains the $O_4$ $u,v,w_1,w_2$, a contradiction in both cases to \Cref{thm:friendlytriple}. Thus, $D \cap K = \varnothing$ and so, $\gamma_{t2}(G) = |D| = 2|C|$.\\

Conversely, assume that $\gamma_{t2}(G) = 2|C|$ and let $D$ be a minimum SD set of $G$. Then by \Cref{clm:size3p41}, $D \cap (K_1 \cup K_2) = \varnothing$ and for every clause $c \in C$, $D \cap V(G_c) = \{v_{c,1}^\ell,v_{c,2}^\ell\}$ for some variable $\ell$ contained in $c$. Since for every clause $c \in C$ and every variable $v \in X$ contained in $c$, $d(v_{c,1}^v,v_{c,2}^v) = 2$, it follows that $D$ is an independent set and thus, $D$ contains no friendly triple. Furthermore, as shown in the proof of \Cref{clm:phisat3p4}, if two clauses $a,b \in C$ contain a same variable $\ell$ and $D \cap V(G_a) = \{v_{a,1}^\ell,v_{a,2}^\ell\}$, then $D \cap V(G_b) = \{v_{b,1}^\ell,v_{b,2}^\ell\}$. It follows that for any two clauses $a,b \in C$ containing a same variable, $d(D \cap V(G_a),D\cap V(G_b)) \geq 3$; and since for any two clauses $a,b \in C$ with no common variable, $d(V_{a,1} \cup V_{a,2}, V_{b,1} \cup V_{b,2}) \geq 3$, every vertex in $D$ has a unique witness. Thus, the following hold.

\begin{observation}
\label{obs:min3p4}
If $\gamma_{t2}(G) = 2|C|$ then for every minimum SS set $D$ of $G$, the following hold.
\begin{itemize}
\item[(i)] $D$ is an independent set.
\item[(ii)] For every $x \in D$, $|w_D(x)| =1$.
\end{itemize}
\end{observation}

Now suppose for a contradiction that $G$ has an SD set $D$ of size $\gamma_{t2}(G)+1$ containing an ST-configuration (see \Cref{thm:friendlytriple}). Then by \Cref{clm:size3p4}, either there exists a clause $c \in C$ such that $|D \cap (V_{c,1} \cup V_{c,2})| =3$, or $|D \cap (K_1 \cup K_2)| = 1$. We next distinguish these two cases.\\

\noindent
\textbf{Case 1.} \emph{There exists a clause $c \in C$ such that $|D \cap (V_{c,1} \cup V_{c,2})| = 3$.} Let us first show that for every clause $a \in C \setminus \{c\}$, $D \cap (V_{a,1} \cup V_{a,2})$ contains no edge. Suppose for a contradiction that there exists a clause $a \in C \setminus \{a\}$ such that $D \cap (V_{a,1} \cup V_{a,2})$ contains an edge, and let $x,y,z \in X$ be the three variables appearing in $a$. Since by \Cref{clm:size3p4}, $|D \cap (V_{a,1} \cup V_{a,2})| =2$, necessarily $D \cap \{q_{a,i},v_{a,i}^x,v_{a,i}^y,v_{a,i}^z\} = \varnothing$ for some $i \in [2]$. But $D \cap K_i = \varnothing$ and $|D \cap \{v_{a,j}^x,v_{a,j}^y,v_{a,j}^z\}| \leq 2$ for $j \neq i$ and so, one at least of $t_{a,i}^{x,y},t_{a,i}^{x,z}$ and $t_{a,i}^{y,z}$ is not dominated, a contradiction.

Now let $x,y,z \in X$ be the three variables appearing in $c$. We contend that for every $i \in [2]$, $D \cap \{v_{c,i}^x,v_{c,i}^y,v_{c,i}^z\} \neq \varnothing$. Indeed, suppose for a contradiction that $D \cap \{v_{c,i}^x,v_{c,i}^y,v_{c,i}^z\} = \varnothing$ for some $i \in [2]$, say $i=1$ without loss of generality. Then since every vertex in $\{t_{c,1}^{x,y},t_{c,1}^{x,z},t_{c,1}^{y,z}\}$ must be dominated and $D \cap K_1 = \varnothing$, necessarily $v_{c,2}^\ell \in D$ for every $\ell \in \{x,y,z\}$. But then, $D \cap (V_{c,1} \cup V_{c,2}) = \{v_{c,2}^x,v_{c,2}^y,v_{c,2}^z\}$ and so, $q_{c,1}$ is not dominated, a contradiction. It follows that $D \cap \{p_{c,1},p_{c,2}\} \neq \varnothing$: indeed, if $p_{c,1},p_{c,2} \notin D$ then $D \cap (V_{c,1} \cup V_{c,2})$ contains only one edge; and since $D \setminus V_{c,1} \cup V_{c,2})$ contains no edge, as shown above, $D$ contains no ST-configuration, a contradiction to our assumption. Now assume without loss of generality that $p_{c,1} \in D$, and for $i \in [2]$, let $w_i \in D \cap \{v_{c,i}^x,v_{c,i}^y,v_{c,i}^z\}$. Then since $D \setminus \{p_{c,1}\}$ is an SD set of $G$, it follows from  \Cref{lem:noedge} and \Cref{obs:min3p4} that $D$ contains an $O_6$; but $w_1p_{c,1}w_2$ is the only $P_3$ contained in $D$ and $d(p_{c,1}, D \setminus \{w_1,w_2,p_{c,1}\}) \geq 3$, a contradiction.\\

\noindent
\textbf{Case 2.} \emph{$|D \cap (K_1 \cup K_2)| =1$.} Assume without loss of generality that $D \cap K_1 \neq \varnothing$. Let us first show that for every clause $c \in C$, $|D \cap (V_{c,1} \setminus \{p_{c,1}\})| = D \cap (V_{c,2} \setminus \{p_{c,2}\}| = 1$. Suppose for a contradiction that there exists a clause $c \in C$ such that $D \cap (V_{c,i} \setminus \{p_{c,i}\}) = \varnothing$ for some $i \in [2]$, and let $x,y,z \in X$ be the three variables appearing in $c$. Then $i \neq 2$: indeed, since $D \cap K_2 = \varnothing$ and $|D \cap \{v_{c,1}^,v_{c,1}^y,v_{c,1}^z\}| \leq 2$, if $i=2$ then one of $t_{c,2}^{x,y},t_{c,2}^{x,z},t_{c,2}^{y,z}$ is not dominated. Thus, it must be that $i =1$; but then, $D \cap \{p_{c,1},p_{c,2}\} \neq \varnothing$ as $q_{c,1}$ should be dominated, which implies that $|D \cap \{q_{c,2},v_{c,2}^x,v_{c,2}^y,v_{c,2}^z\}| \leq 1$, and so, one of $t_{c,2}^{x,y},t_{c,2}^{x,z}$ and $t_{c,2}^{y,z}$ is not dominated, a contradiction. Thus, $|D \cap (V_{c,1} \setminus \{p_{c,1}\})| = D \cap (V_{c,2} \setminus \{p_{c,2}\}| = 1$ for every clause $c \in C$; in particular, $D \cap (V_{c,1} \cup V_{c,2})$ contains no edge.

Now let $u \in D \cap K_1$, say $u \in K_{c,1}$ for some clause $c \in C$. Then $|N(u) \cap D| > 1$: if not then $u$ is the endvertex of at most one edge in $D$; and since $D \setminus \{u\}$ contains no edge, as shown above, $D$ then contains no ST-configuration. Suppose first that $u \in \{t_{c,1}^{x,y},t_{c,1}^{x,z},t_{c,1}^{y,z}\}$, say $u = t_{c,1}^{x,y}$ without loss of generality. Then since $|D \cap \{v_{c,1}^,v_{c,1}^y,v_{c,1}^z\}| \leq 1$ as shown above, it must be that $D \cap \{v_{c,1}^x,v_{c,1}^y\} \neq \varnothing$ and $v_{c,2}^z \in D$ ($u$ would otherwise has at most one neighbour in $D$); but then, $t_{c,2}^{x,y}$ is not dominated, a contradiction. Second, suppose that $u \in K_{c,1} \setminus \{t_{c,1}^{x,y},t_{c,1}^{x,z},t_{c,1}^{y,z}\}$, say $u = u_{c,1}^{x,a}$ where $a \neq c$ is a clause containing $x$. Then since $N(u_{c,1}^{x,a} \subseteq V_{c,1} \cup V_{a,2}$ and $|D \cap V_{c,1}| = |D \cap V_{a,2}| = 1$ as shown above, it must be that $D \cap V_{c,1} = \{v_{c,1}^x\}$ and $D \cap V_{a,2} = \{v_{a,2}^\ell\}$ for some variable $\ell \neq x$ contained in $a$; but then, $u_{a,2}^{x,c}$ is not dominated, a contradiction which concludes the proof.
\end{claimproof}

Now by Claims~\ref{clm:phisat3p4} and \ref{clm:gt33p4}, $\Phi$ is satisfiable if and only if $ct_{\gamma_{t2}}(G) =3$. There remains to show that $G$ is $3P_4$-free. To see this, observe that for any clause $c \in C$, $G[V_{c,1} \cup V_{c,2}]$ is $P_4$-free. Thus, if $G$ contains a $P_4$ $P$, then $V(P) \cap (K_1 \cup K_2) \neq \varnothing$; but $K_1$ are $K_2$ are both cliques and so, $G$ contains no induced $3P_4$.
\end{proof}

%------------------------------------------------------------------------------------------------------------------------------------------------------------------------------------

\subsection{Proof of \Cref{thm:dicstd2}}
\label{sec:dicstd2}

Let $H$ be a graph. If $H$ contains a cycle then {\sc Contraction Number($\gamma_{t2}$,2)} is $\mathsf{NP}$-hard on $H$-free graphs by \Cref{lem:cyclesstd2}. Assume henceforth that $H$ is a forest. If $H$ contains a vertex of degree at least three then {\sc Contraction Number($\gamma_{t2}$,2)} is $\mathsf{NP}$-hard on $H$-free graphs by \Cref{lem:clawstd2}. Suppose therefore that $H$ is a linear forest. If $H$ has a connected component on at least six vertices then {\sc Contraction Number($\gamma_{t2}$,2)} is $\mathsf{(co)NP}$-hard on $H$-free graphs by \Cref{lem:linforeststd2}. Thus we may assume that every connected component of $H$ has size at most five. Now suppose that $H$ has a connected component on at least four vertices, Then if every other connected component of $H$ has size one, {\sc Contraction Number($\gamma_{t2}$,2)} is polynomial-time solvable on $H$-free graphs by \Cref{lem:easystd2}; otherwise {\sc Contraction Number($\gamma_{t2}$,2)} is $\mathsf{(co)NP}$-hard on $H$-free graphs by \Cref{lem:linforeststd2}. Assume finally that every connected component of $H$ has at most three vertices. If $H$ has at least two connected component of size three then {\sc Contraction Number($\gamma_{t2}$,2)} is $\mathsf{(co)NP}$-hard on $H$-free graphs by \Cref{lem:linforeststd2}. Otherwise $H$ has at most one connected component of size three in which case {\sc Contraction Number($\gamma_{t2}$,2)} is polynomial-time solvable on $H$-free graphs by \Cref{lem:easystd2} which concludes the proof.

%------------------------------------------------------------------------------------------------------------------------------------------------------------------------------------

\bibliographystyle{plainnat}
\bibliography{bib}

\end{document}